\documentclass[a4paper,11pt]{article}
\usepackage{jheppub} 
\usepackage{empheq}
\usepackage{bm}
\usepackage{slashed}
\usepackage[T1]{fontenc} 
\usepackage[utf8]{inputenc}
\usepackage[all]{xy}
\usepackage{rotating}
\usepackage{amsmath,amssymb,amsfonts,amsthm} 
\usepackage{mathtools}
\usepackage{bbm}
\usepackage{dcolumn}
\usepackage{multirow}
\usepackage{mathrsfs}
\usepackage{subcaption}
\usepackage{tikz}
\usetikzlibrary{arrows,decorations.markings}
\usepackage{cleveref}
\crefname{figure}{Fig.}{Figs.}
\usepackage{graphicx,adjustbox}
\usepackage{hyperref}

\usepackage{tikz}
\usepackage{tikz-cd,quiver}
\usetikzlibrary{calc}
\usetikzlibrary{shapes.geometric,fit}
\usepackage{custom_macros}
\usepackage{stmaryrd}
\setcounter{tocdepth}{2}
\usepackage{simpler-wick}


\newcommand*\circled[1]{\tikz[baseline=(char.base)]{
  \node[shape=circle,draw,inner sep=1pt, minimum width=0.8cm ] (char) {#1};}}
\newcommand*\rectangled[1]{\tikz[baseline=(char.base)]{
  \node[shape=rectangle,draw,inner sep=1pt, minimum width=0.8cm, minimum height=0.8cm ] (char) {#1};}}


\graphicspath{{./figures_phys/}}


\def\sb{\mathsf{b}} 
\def\sc{\mathsf{c}} 

\newcommand{\Sb}{{\beta\gamma}}
\newcommand{\Ff}{\textrm{Ff}}
\newcommand{\cp}{\chi\psi}
\newcommand{\bc}{\mathsf{bc}}
\newcommand{\jet}{J_{\infty}}

\newcommand{\CVs}{\CV^{\sharp}}
\newcommand{\CVdel}{\CV_{\partial}}
\newcommand{\CVsdel}{\CV^{\sharp}_{\partial}}

\newcommand{\CVFf}{\CV_{\rm Ff}}
\newcommand{\CVbg}{\CV_{\beta \gamma}}
\newcommand{\CVbc}{\CV_{\mathsf{bc}}}
 
\newcommand{\Zhu}{\mathcal{R}} 
\newcommand{\Qzero}{|Q_0|} 
 
\newcommand{\QBRST}{\mathcal{Q}}
\newcommand{\JBRST}{\mathcal{J}}
 
\let\a=\alpha \let\b=\beta \let\g=\gamma \let\d=\delta \let\e=\epsilon 
     \let\k=\kappa
 \let\m=\mu    \let\r=\rho  \let\i=\iota
\let\s=\sigma       
\let\ps=\psi
      
    \let\F=\Phi

\let\ph=\phantom

\let\nn=\nonumber


\title{Affine $\CW$-algebras and Miura maps\\ from 3d $\mathcal N=4$ non-Abelian quiver gauge theories} 
\author[1]{Ioana Coman,}
\author[2,1,3]{Myungbo Shim,}
\author[1,4]{Masahito Yamazaki,}
\author[1]{and Yehao Zhou}

\affiliation[1]{Kavli Institute for the Physics and Mathematics of the Universe (WPI),\\University of Tokyo, Kashiwa, Chiba 277-8583, Japan}
\affiliation[2]{Yau Mathematical Sciences Center, Tsinghua University\\Beijing 100084, China}
\affiliation[3]{Department of Physics and Research Institute of Basic Science, Kyung Hee University,\\Seoul 02447, Korea}
\affiliation[4]{Trans-Scale Quantum Science Institute, 
University of Tokyo, Tokyo 113-0033, Japan}
\emailAdd{ioana.coman@ipmu.jp}
\emailAdd{mbshim@tsinghua.edu.cn}
\emailAdd{masahito.yamazaki@ipmu.jp}
\emailAdd{yehao.zhou@ipmu.jp}

\abstract{We study Vertex Operator Algebras (VOAs) obtained from 
the H-twist of 3d $\mathcal{N}=4$ linear quiver gauge theories.
We find that H-twisted VOAs can be regarded as the ``chiralization'' of the extended Higgs branch: many of the ingredients of the Higgs branch are naturally ``uplifted'' into the VOAs, while conversely the Higgs branch can be recovered as the associated variety of the VOA. We also discuss the connection of our VOA with affine $\mathcal{W}$-algebras.
For example, we construct an explicit homomorphism from an affine $\mathcal{W}$-algebra $\CW^{-n+1}(\mathfrak{gl}_n,f_{\min})$ into the H-twisted VOA for $T^{[2,1^{n-2}]}_{[1^n]}[\mathrm{SU}(n)]$ theories. Motivated by the relation with affine $\mathcal{W}$-algebras, we introduce 
a reduction procedure for the quiver diagram, and use this to give an algorithm to systematically construct novel free-field realizations for  VOAs associated with general linear quivers.}

\begin{document}

\maketitle 

\section{Introduction}

Over decades we have encountered a plethora of mathematical structures in the study of quantum field theories (QFTs), and this has facilitated a successful interplay between physics and mathematics.

Vertex operator algebras (VOAs) \cite{MR0996026,MR1651389,MR1849359} provide one of the mathematical characterizations of the chiral half of two-dimensional conformal field theories (CFTs) \cite{Moore:1988qv}. In addition, they often arise as the boundary edge modes of three-dimensional topological quantum field theories (TQFTs), as exemplified in the celebrated correspondence between the boundary current algebras and the bulk three-dimensional pure Chern-Simons theories \cite{Elitzur:1989nr,Witten:1988hf}.
Since we can obtain TQFTs from the
topological twists of the supersymmetric QFTs (SQFTs) \cite{Witten:1988ze}, it is natural to ask 
what VOAs we obtain when we consider
the boundary theories of the topological twists of three-dimensional 
SQFTs.

In this paper we study VOAs arising on the boundary of topological H-twists (also known as A-twists) of 3d $\mathcal{N}=4$ theories \cite{Costello:2018fnz} (see also \cite{Gaiotto:2016wcv}).
While there have been discussions of the boundary VOAs
in the literature (see e.g.\ \cite{Costello:2018swh,Creutzig:2021ext,Gang:2021hrd,Garner:2022vds,Garner:2022rwe,Yoshida:2023wyt}),
previous studied have either been limited to Abelian gauge theories or case-by-case analysis of non-Abelian gauge theories.
It is therefore left as an important problem to systematically analyze boundary VOAs for general non-Abelian gauge theories.

The goal of this paper is to study VOAs associated with non-Abelian quiver gauge theories. We explicitly write down detailed procedures for defining the H-twisted VOA for a general quiver gauge theory, as a BRST reduction of a product VOA, and then study the BRST cohomologies. 

One of the important ideas of this paper is that the H-twisted VOA should be regarded as a ``chiralization'' of the (extended) quiver variety, which is the (extended) Higgs branch of the theory. Many of the concepts of the geometry of the Higgs branch are naturally ``uplifted'' into the VOAs, and conversely the Higgs branch can be recovered as the ``associated variety'' of the VOA; see \cref{tab:dictionaries} for a dictionary, 
which will be explained in detail in the rest of this paper.

\begin{table}[!ht] 
    \centering
    \begin{tabular}{|c|c|}
    \hline
    Geometry of Quiver Variety & Vertex Operator Algebra \\
    \hline\hline
      $\mathrm{Hom}(V,V)$   & $X$ and $Y$\\
       $\mathrm{Hom}(W,V)$   & $I$ and $J$\\
       Cartan Matrix of $Q$ &  OPE coefficients of BRST invariant currents \\
       (Extended) moment maps & BRST invariant currents
       \\
Relaxed balance condition & Anomaly cancellation with extra Fermi multiplets
       \\
       Paths on $Q$ & BRST closed operators up to Fermion bilinears
       \\
       Path algebra on $Q$ & OPE structures of BRST cohomology
       \\
      Quiver reduction of $Q$ & Miura map
      \\
      \hline
    \end{tabular}
    \caption{Correspondences between the geometry of quiver varieties and VOAs.}
    \label{tab:dictionaries}
\end{table}

We also discuss the connection of our VOAs with affine $\mathcal{W}$-algebras.
We construct an explicit homomorphism from affine $\mathcal{W}$-algebras
into the H-twisted VOA for $T^{\sigma}_{[1^n]}[\mathrm{SU}(N)]$ theories
some for some choices of $\sigma$ and $N$ given in \cref{tab:pair of partition and W-algebra}.
Motivated by the relation with affine $\mathcal{W}$-algebras, we introduce a reduction procedure 
 implementing free-field realizations of the 
 VOAs, thereby recovering 
 many of the results on Miura transformations from the literature
 while also obtaining many new results in greater generality.
 
\begin{table}[ht!]
    \centering
    \begin{tabular}{|c|c|}
    \hline
    $(\s,N)$  & $\CW$-algebra 
    \\
    \hline\hline
   $ ([2,1^{N-2}],N)$ & $ \CW^{-N+1}(\mathfrak{gl}_{N},f_{\min})$
    \\
   $ ([2^2],4)$ & $\CW^{-3}(\mathfrak{gl}_{4},[2^2])$
    \\
   $ ([N-1,1],N)$ & $\CW^{-N+1}(\mathfrak{sl}_{N},f_{\mathrm{sub}})$
   \\
    $ ([1^N],N)$ & $V^{-N+1}(\mathfrak{sl}_N)$
    \\
      \hline
    \end{tabular}
    \caption{Correspondence between $T_{\rho}^{\sigma}[\SU(N)]$ theories associated with pairs of partitions $(\rho=[1^N],\s)$, and the $\CW$-algebras obtained from their boundary VOAs.}
    \label{tab:pair of partition and W-algebra}
\end{table}

While this paper is written primarily in physics language, many of our contents have bearings in mathematics
and require new mathematical results. For this reason, we prepared a companion paper \cite{math_draft}
written in the style of mathematics, and comment on the relations between the two manuscripts where appropriate.\footnote{A similar discussion for a particular example of the Jordan quiver has recently appeared in \cite{arakawa2023hilbert}.}

The rest of this paper is organized as follows (see \cref{fig.flow_chart} for interrelations between different sections).
In \cref{sec:3d_VOA} we begin by introducing 
H-twisted VOAs for general 3d $\mathcal{N}=4$ SQFTs.
We introduce the 3d $\mathcal{N}=4$
quiver gauge theories themselves in \cref{sec:general-quiver}, and highlight the case of linear quiver gauge theories and in particular $T_{\rho}^{\sigma}[\SU(N)]$ theories in \cref{sec:T_rho_sigma}.
In \cref{sec:VOA-quiver} we describe VOAs
associated with a general quiver gauge theories, and 
in \cref{sec:VOA_linear} we will specialize to the case of 
linear quiver gauge theories.
In \cref{sec:variety} we introduce the concept of the associated variety of a vertex algebra (VA), a variety which we conjecture to be identified with an extended Higgs branch.
In \cref{sec:[21..1]} we propose an explicit homomorphism from an affine $\mathcal{W}$-algebra into the H-twisted VOAs for 
$T^{[2, 1^{n-2}]}_{[1^n]}[\SU(n)]$ theories,
and verify our proposal for the $T^{[2,1^2]}_{[1^4]}[\SU(4)]$
theory in \cref{sec:[211]}.
In \cref{sec:[nn]} we discuss another example, specifically the 
$T^{[n,n]}_{[1^{2n}]}[\SU(2n)]$ theory, and construct a homomorphism from a rectangular affine $\mathcal{W}$-algebra into the corresponding H-twisted VOA. 
In \cref{sec:localization} we introduce the localization of quiver vertex algebras and 
make contact with our companion paper \cite{math_draft}.
One of the highlights of this paper is \cref{sec:quiver reduction},
in which we introduce
a reduction procedure for the quiver diagram, wherein we obtain a free-field realization of the VOAs for general quiver gauge theories. We will illustrate this idea with explicit examples in \cref{sec:reduction_examples}.
In \cref{sec:4d-VOA} we discuss the relations of our VOAs with those originating from 
4d $\mathcal{N}=2$ theories.
We conclude in \cref{sec:discussions}
with a summary and comments on future directions.
We also include \cref{sec:glossary} for a glossary of notations, and more appendices with technical materials.

\begin{figure}[htbp]
\centering 
\begin{tikzpicture}[scale=0.8,
every path/.style={thick, >={Stealth[scale=1.5]}},
every node/.style={scale=0.8}]
\tikzset{
   blob/.style={shape= rectangle,rounded corners=5pt,draw,align= center,thick},
   double-blob/.style={shape= rectangle,double,rounded corners=5pt,draw,align= center,thick}
 }

\node[blob] (S2) at (0,0) 
{VOAs for 3d $\mathcal{N}=4$ SQFTs \\ (\cref{sec:3d_VOA})};

\node[blob] (S3) at (6,-2) 
{Quiver gauge theories\\ (\cref{sec:general-quiver})};

\node[blob] (S4) at (6,-5) 
{Linear Quivers\\ and $T_{\rho}^{\sigma}[\SU(N)]$\\(\cref{sec:T_rho_sigma})};

\node[blob] (S5) at (0,-3) 
{VOAs associated with quivers\\ (\cref{sec:VOA-quiver})};

\node[blob] (S6) at (0,-6) 
{VOAs for linear quivers\\(\cref{sec:VOA_linear})};

\node[blob] (S7) at (-6,-3) 
{Higgs branch\\
and associated varieties\\ (\cref{sec:variety})};

\node[blob] (S8) at (0,-9) 
{VOAs for $T^{[2, 1^{n-2}]}_{[1^n]}[\SU(n)]$\\ (\cref{sec:[21..1]})};

\node[blob] (S9) at (5,-9) 
{VOAs for $T^{[n,n]}_{[1^{2n}]}[\SU(2n)]$\\ (\cref{sec:[nn]})};

\node[blob] (S10) at (-6,-6) 
{Localization of quiver VOAs\\ (\cref{sec:localization})};

\node[double-blob] (S11) at (-6,-9) 
{Quiver reductions\\ and Miura Transformations\\
(\cref{sec:quiver reduction})};

\node[blob] (S12) at (0,-13) 
{Examples of \\ quiver reductions \\(\cref{sec:reduction_examples})};

\node[blob] (S13) at (-6,-12) 
{Comparison with 4d $\mathcal{N}=2$ VOAs\\(\cref{sec:4d-VOA})};

\draw[->] (S2) -- (S5);
\draw[->] (S3) -- (S5);
\draw[->] (S2) -- (S7);
\draw[->] (S7) -- (S10);
\draw[->] (S5) -- (S10);
\draw[->] (S10) -- (S11);
\draw[->] (S11) -- (S12);
\draw[<->, dashed] (S8) -- (S12);
\draw[<->, dashed] (S9) -- (S12);
\draw[->] (S5) -- (S6);
\draw[->] (S6) -- (S8);
\draw[->] (S6) -- (S9);
\draw[->] (S3) -- (S4);
\draw[->] (S4) -- (S6);
\draw[->, dotted, thick] (S7.west) to [bend right=35] (S13.west);

\end{tikzpicture}
\caption{Approximate dependencies of the sections of this paper.}
\label{fig.flow_chart}
\end{figure}
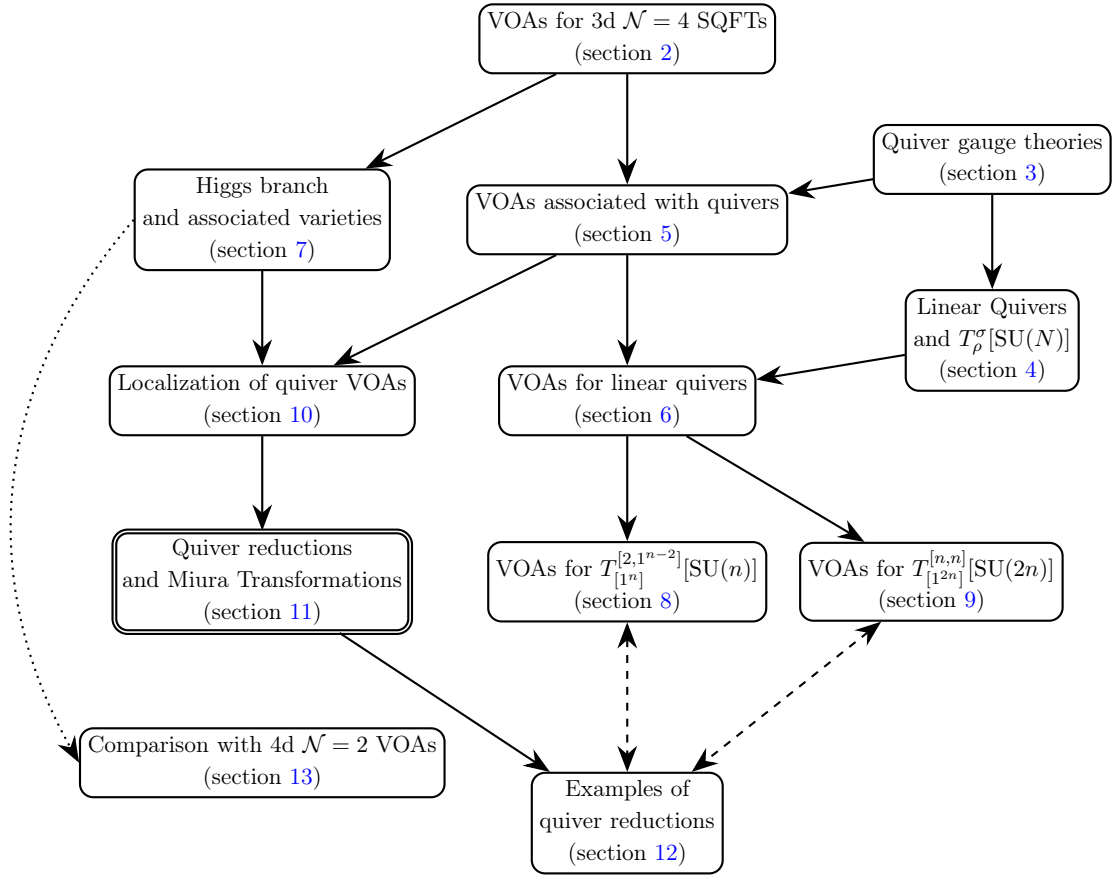

\section{VOAs from 3d \texorpdfstring{$\mathcal{N}=4$}{N=4} SQFTs}
\label{sec:3d_VOA}

\subsection{Setup}

Suppose that we are given a 3d $\mathcal{N}=4$ supersymmetric quantum field theory (SQFT) $\mathcal{T}$
with a Lagrangian description. This theory is defined by 
\begin{itemize}
    \item a gauge group $G$ ($\mathcal{N}=4$ vector multiplet),
    \item and a symplectic representation $M$ of $G$ ($\mathcal{N}=4$ hypermultiplet).
\end{itemize}

The vacuum moduli space of the theory consists of the Higgs branch $\mathcal{M}_H$,
the Coulomb branch $\mathcal{M}_C$, as well as the mixed branches.
This theory has $\SO(4)_R\simeq \SU(2)_H \times \SU(2)_C$ R-symmetry, 
where $\SU(2)_H$ ($\SU(2)_C)$ is the symmetry acting non-trivially on the Higgs branch (Coulomb branch);
the Higgs/Coulomb branches are hyperK\"{a}hler manifolds and the $\SU(2)_{H,C}$ symmetries rotate the triplet of
complex structures therein. The two branches $\mathcal{M}_H$ and $\mathcal{M}_C$, and hence their corresponding symmetries $\SU(2)_H$ and $\SU(2)_C$,
are exchanged by the celebrated 3d mirror symmetry \cite{Intriligator:1996ex}.

We consider topological twists of the theory $\mathcal{T}$. On a 3-manifold $M$, 
the $\SO(3)_M \simeq \SU(2)_M$ symmetry acting on the tangent space of $M$ is mixed either 
with the $\SU(2)_H$ or $\SU(2)_C$ symmetry,
and correspondingly we have an H-twisted TQFT or a C-twisted TQFT; these theories
are also called A-twisted and B-twisted TQFT, respectively.\footnote{The H/A-twist is the dimensional reduction of the Donaldson twist \cite{Witten:1988hf} in four dimensions.
The early references on C/B-twist include \cite{Blau:1996bx,Rozansky:1996bq}. We can also consider a holomorphic twist \cite{Closset:2012ru,Aganagic:2017tvx,Costello:2020ndc}
along a holomorphic curve, deformation of which gives rise to topological A/B-twists.}
In this paper we will for the most part concentrate on the H-twisted/A-twisted TQFTs.

Now the H-twisted VOA $\CV_H$ (or hereafter simply $\CV$), the central player of this paper, is obtained by 
considering the H-twisted TQFT on the geometry of the form $\mathbb{R}_{t \ge 0}\times \mathcal{C}$,
where $\mathcal{C}$ is a holomorphic curve\footnote{The holomorphic structure here is more generally taken as a transverse holomorphic foliation \cite{Klare:2012gn,Closset:2012ru,Aganagic:2017tvx}.} --- the VOA is defined on the holomorphic boundary $\{t=0 \}\times \mathcal{C}$.
In more detail, we choose an $\mathcal{N}=(0,4)$-preserving boundary condition consistent with the supercharge preserved under the H-twist,
with Neumann boundary conditions for both the vector and hyper multiplets \cite{Costello:2018fnz,Gaiotto:2017euk}.
In order to cancel boundary gauge anomalies, we also need to choose boundary Fermi multiplets $\textrm{Ff}[R]$ in a representation $R$ of the bulk gauge group $G$.

\subsection{VOA as BRST reduction}

It was proposed in \cite{Costello:2018fnz} (see also \cite{Garner:2022vds}) that the H-twisted VOA $\CV$ in question is determined from the BRST reduction
(to be discussed in more detail in the next subsection):
\begin{align}
&\CV = \CV[G, M, R] :=  \mathcal{V}^{\sharp}[M, R]  \Kquotient G = 
    \left\{ \mathcal{V}^{\sharp}[M, R] \otimes \CVbc[G] , \QBRST \right \} \,, 
    \label{def:BRSTcomplex}
    \\
&\mathcal{V}^{\sharp}=\mathcal{V}^{\sharp}[M, R]  :=    \CVbg[M] \otimes \CVFf[R]  \;,
\end{align}
where the VOAs  $\CVbg[M], \CVFf[R], \CVbc[G]$ are given respectively by
\begin{itemize}
\item $\CVbg[G]$ is a set of $\beta-\gamma$ bosons (symplectic bosons \cite{Goddard:1987td}) $(X_{i}, Y_{i})_{i=1, \dots, \textrm{dim}\, M}$ originating from 3d $\mathcal{N}=4$ hypermultiplets
in a representation $M$ of $G$, with non-trivial OPE given by 
\beq
X_{i}(z)Y_{j}(w)\sim\frac{\d_{i,j}}{z-w} \;.
\eeq
The stress-energy operator is 
\begin{align}
T_{\Sb}= \frac{1}{2} \sum_{i=1}^{\textrm{dim}\, M} \mathrm{Tr}(\normord{X_i \partial Y_i} - \normord{\partial X_i Y_i}) \;,
\end{align} 
and hence the conformal dimensions of $(X_i, Y_i)$ are $(1/2, 1/2)$,
and the total central charge is $c_{\beta\gamma}=-\textrm{dim}\, M$.

\item $\CVFf[R]$ is a set of free fermions $(\F_{i}, \Psi_{i})_{i=1, \dots, \textrm{dim}\, R}$ arising from the free Fermi multiplets on the boundary,  with OPE
\beq
\F_{i}(z) \Psi_{j}(w)\sim\frac{\d_{i,j}}{z-w} \;.
\eeq
We choose the stress-energy operator to be 
\begin{align}
T_{\Ff}=-\frac{1}{2}\sum_{i=1}^{\textrm{dim}\, R} \mathrm{Tr}(\normord{\F_{i}\partial \Psi_{i}}- \normord{\partial \F_{i} \Psi_{i}}) \;,
\end{align}
and hence the conformal dimensions of $(\F_i, \Psi_i)$ are $(1/2, 1/2)$,
and the total central charge is $c_{\Ff}=\textrm{dim}\, R$.

\item $\CVbc[G]$ is the $\mathsf{bc}$-ghost $(\mathsf{b}^{\alpha}, \mathsf{c}_{\alpha})_{\alpha=1, \dots, \textrm{dim}\, G}$ in the adjoint representation of $G$.
The OPEs are given by 
\beq
\mathsf{b}^{\alpha}(z)\mathsf{c}_{\beta}(w)\sim\frac{\delta^\alpha_{\beta}}{z-w} \;.
\eeq
We choose the stress-energy operator to be 
\begin{align}
    T_{\mathsf{bc}}&=-\sum_{\alpha=1}^{\textrm{dim}\, G} \mathrm{Tr}(\normord{\mathsf{b}^{\alpha} \partial \mathsf{c}_{\alpha}} ) \;,
\end{align}
and consequently conformal dimensions of $(\mathsf{b}, \mathsf{c})$ to be $(1, 0)$.
The total central charge is $c_{\bc}=-2 \,\textrm{dim}\, G$.

\end{itemize}
This trio defines a product VOA $\mathcal{V}^{\sharp}[M, R] \otimes \CVbc[G]
= \CVbg[M] \otimes \CVFf[R] \otimes \CVbc[G]$. We denote by $\QBRST$ a supercharge defined below, with respect to which the BRST reduction is performed. 

\bigskip In general there will be boundary gauge anomalies at $\{t=0\}\times\CC$ coming separately from the $\beta\gamma$-systems, the free fermions $\Ff$ and the $\bc$-ghosts. When combined, these contributions to the boundary gauge anomaly cancel within $\mathcal{V}^{\sharp}[M, R] \otimes \CVbc[G]$. This cancellation and its conditions will be discussed further in section \ref{subsection:anomaly}. Note however that in general, there is no unique way to cancel such anomalies, and the definition of the VOA depends on the specific choice of the boundary anomaly-canceling matter contents in a representation $R$; different choices of $R$
will lead to different VOAs with different characters.

\subsection{BRST reduction}\label{subsec:BRST reduction}
In this subsection we discuss the BRST reduction in \eqref{def:BRSTcomplex}. 
This is the VOA counterpart of imposing the Gauss law for the gauge symmetry.

Let us denote the Lie algebra associated to the group $G$ by $\mathfrak{g}$.
Since the 3d $\mathcal{N}=4$ SQFT $\CT$ has a gauge symmetry $G$, we have a corresponding $\mathfrak{g}$-current, which is promoted to a holomorphic current algebra $J_{\mathcal{V}^{\sharp}}^{\alpha} (z)$ on the the boundary VOA $\mathcal{V}^{\sharp}$,
satisfying the OPE
\begin{equation}
    J_{\mathcal{V}^{\sharp}}^{\alpha} \, (z)  J_{\mathcal{V}^{\sharp}}^{\beta} \, (w) 
    \sim 
    \frac{k_{\CVs} \delta_{\alpha,\beta}}{(z-w)^2} + \sum_\gamma \, i f^{\alpha\beta}{}_\gamma \, \frac{J_{\mathcal{V}^{\sharp}}^\gamma (w)}{z-w}  \;,
\end{equation}
where $f^{\alpha\beta}{}_\gamma$ are the structure constants for the Lie algebra $\mathfrak{g}$ and we have chosen a basis for $\mathfrak{g}$ such that the Killing form is diagonalized. We have adopted the standard notation in OPE that $\sim$ represents an equivalence up to non-singular terms. 
The integer $k_{\CVs}$ is the level of the current algebra. 

The $\mathsf{bc}$ ghost system $\mathcal{V}_{\mathsf{bc}}$ in itself has an affine $\mathfrak{g}$ current algebra 
\begin{equation}
    J_{\mathsf{bc}}^\alpha (z) = f^{\alpha\beta}{}_\gamma \, \sc_\beta \sb^\gamma (z) ~
\end{equation}
with level $k_{\bc}=2\mathsf{h}^\vee_\mathfrak{g}$,  
where $\mathsf{h}_{\mathfrak{g}}^\vee$ is the dual Coxeter number of the Lie algebra $\mathfrak{g}$, defined by the relation $f^{\alpha \gamma \delta} f^{\beta}{}_{\gamma \delta} = 2\mathsf{h}_{\mathfrak{g}}^\vee \delta^{\alpha \beta}$.  

Now suppose that the $\mathfrak{g}$-current algebra is critical, 
i.e.\, 
\begin{align} \label{k_k}
 k_{\CVs}=-2\mathsf{h}_{\mathfrak{g}}^\vee\;. 
\end{align}
so that we have $k_{\CVs} + k_{\bc} =0$.
(We will relate this condition to the anomaly cancellation condition below.)
We can then define the supercharge $\QBRST$ on the product VOA $\mathcal{V}^{\sharp}\otimes \mathcal{V}_{\mathsf{bc}}$ as
\begin{align}
    \label{Q_BRST}
    \QBRST  := \oint \frac{dz}{2\pi i} \, \JBRST(z) \;,
\end{align}
where the BRST current $\JBRST$ is defined by \cite{Karabali:1989dk,Beem:2013sza}
\begin{align}
    \JBRST(z) := \mathsf{c}_\alpha \left(J_{\mathcal{V}^{\sharp}}^{\alpha} + \frac{1}{2}J_{\mathsf{bc}}^{\alpha}  \right) (z) ~.
    \label{def:BRSTcurrent}
\end{align}
We can then verify the nilpotency of $\QBRST$, and use this as a supercharge in the BRST reduction of \eqref{def:BRSTcomplex}:
the boundary VOA  $\mathcal{V}$ is
given by the $\QBRST$-cohomology on the vertex operator $\mathcal{V}^{\sharp}\otimes \mathcal{V}_{\mathsf{bc}}$,
i.e.\ $\QBRST$-closed operators up to equivalence relation defined by $\QBRST$-exact terms.
In mathematical language, the BRST complex defines a semi-infinite complex $C^{\frac{\infty}{2}}(\mathcal{V}^{\sharp} \otimes \mathcal{V}^{\bullet}_{\mathsf{bc}}, \QBRST)$,
and this defines a semi-infinite cohomology \cite{MR0740035}
$H^{\frac{\infty}{2}+\bullet}(\mathcal{V}^\sharp)$, where the grading $\bullet$ is given by the ghost number; see 
\cref{sec:BRST reduction} for further explanation. 

\section{General quiver gauge theories}
\label{sec:general-quiver}

As a preparation for our discussions of VOAs, we first summarize some basics of 
quiver gauge theories themselves.
Readers who are familiar with quiver varieties can skip this section, and come back here when necessary for notations.

\subsection{Matter contents from quiver diagrams}

A quiver $Q$ is an oriented graph consisting of a finite set of nodes $Q_0$ and a finite set of arrows $Q_1$ together with two maps $\mathsf h,\mathsf t: Q_1\to Q_0$ sending an arrow to its head and tail respectively.

The quiver diagram encapsulates the matter contents in the following manner:

\begin{itemize}

\item A vertex $i\in Q_0$ of the quiver represents a 3d $\mathcal{N}=4$ vectormultiplet; the gauge group is $\U(v_i)$  if the rank at the $i^{\rm th}$ vertex is $v_i$.
The total gauge group is thus  $G=\prod_{i\in Q_0} \U(v_i)$.

\item An edge $a\in Q_1$ of the quiver represents a 3d $\mathcal{N}=4$ hypermultiplet. There are two types of hypermultiplets:
    \begin{itemize}
    \item An edge connecting two circled vertices of the quiver represents a bifundamental hypermultiplet with respect to the gauge groups on the two ends of the edge.
    
    \item An edge connecting a circular vertex to a squared vertex represents a fundamental hypermultiplet charged under the $\U(w_i)$ flavor symmetry (\cref{fig:general quiver node for anomaly}).
    \end{itemize}
\end{itemize}

 As is clear from these discussions, we will in general need to specify the dimension vectors $\mathbf{v}=\{ v_i \}$ and $\mathbf{w}=\{ w_i \}$.

\begin{figure}[ht!]
    \centering
    \begin{tikzpicture}[scale=.75, every node/.style={scale=0.8}]  
        \draw[] (-3.5,2.7) circle (.7);
        \draw[] (-6,2.7) circle (.7);
        \draw[] (-3.5,0) circle (.7);
        \draw[] (-3.5,.7) -- (-4.8,2);
        \draw[] (-3.5,-.7) -- (-3.5,-2);
        \draw[] (-3.5,.7) -- (-6,2);
        \draw[] (-3.5,.7) -- (-3.5,2);
        \draw[] (-4.15,-2) rectangle (-2.85,-3.3);
        \node at (-3.5,0) {$v_{i}$};
       \node at (-3.5,-2.65) {$w_{i}$};
        \node at (-4.8,2.65) {$\cdots$};
        \node at (-3.5,2.65) {$v_{k}$};
        \node at (-6,2.65) {$v_{j}$};  
        \node[circle,minimum size=1.5cm] (vi) at (-3.5,0){};
        \draw[] (vi) to [in=160, out=200, looseness=5](vi);
        \draw[] (vi) to [in=150, out=210, looseness=8](vi);
        \draw[dotted, thick] (-5.5,0) -- (-5.1,0);
    \end{tikzpicture}
    \caption{Local neighborhood of a general quiver around the $i^{\rm th}$ vertex}
    \label{fig:general quiver node for anomaly}
\end{figure}
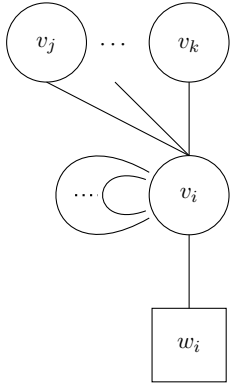

For later purposes we introduce the balancing condition \cite{Gaiotto:2008ak}
on the quiver. We will see in \cref{subsection:anomaly} that
this condition is equivalent to the vanishing of the $\SU$ gauge anomalies on the boundary.

We define the adjacency matrix $\mathsf Q$ to be
\begin{align}
    \mathsf Q_{i,j}:= \#(a\in Q_1:\mathsf h(a)=j,\mathsf t(a)=i) \;,
\end{align}
and the Cartan matrix $\mathsf C$ to be
\begin{align}
    \mathsf C_{i,j}:= 2\delta_{i,j}-\mathsf Q_{i,j}-\mathsf Q_{j,i} \;.
\end{align}
The \emph{balance condition} is then given by
\begin{equation}
   \label{general_balancing_condition}
   - \mathsf{C} \cdot \mathbf{v} + \mathbf{w}  = \mathbf{0} \;.  
\end{equation}
When this condition is satisfied the quiver is referred to be \emph{balanced}.
While not all the theories discussed in this paper are balanced, we can impose a weaker condition,
which we call the \emph{anomaly cancellation condition}
\begin{equation}
   \label{general_anomaly_cancellable_condition}
   \mathbf{d}:=- \mathsf{C} \cdot \mathbf{v} + \mathbf{w}  \in \mathbb{Z}_{\ge 0}^{Q_0} ~.
\end{equation}

\subsection{Higgs branches as quiver varieties}\label{sec:Higgs_branch}

Let us next describe the Higgs branch of the quiver gauge theories. 
In mathematical language, the Higgs branch is the
Nakajima quiver variety \cite{MR1302318,MR1604167} defined from the same quiver, whose description will be needed in the rest of this paper.

Define the space of representation of $Q$ with gauge dimension vector $\mathbf v$ and framing dimension vector $\mathbf w$ to be
\begin{align}
    \mathrm{Rep}(Q,\mathbf v,\mathbf w):=\bigoplus_{i\in Q_0}\mathrm{Hom}(W_i,V_i)\oplus \bigoplus_{a\in Q_1}\mathrm{Hom}(V_{\mathsf t(a)},V_{\mathsf h(a)}) \;,
\end{align}
where $V=\bigoplus_{i\in Q_0}V_i$ and $W=\bigoplus_{i\in Q_0}W_i$ are $Q_0$-graded vector spaces of dimensions $\dim V=\mathbf v$ and $\dim W=\mathbf w$ respectively. The space of representations $\mathrm{Rep}(\mathbf v,\mathbf w)$ naturally possesses an action of the complexified gauge group
\begin{align} \label{GL_Q}
\GL(\mathbf v):=\prod_{i\in Q_0}\GL(\mathbf v_i)\;,
\end{align}
and this action induces an Hamiltonian $\GL(\mathbf v)$-action on the cotangent bundle $T^*\mathrm{Rep}(\mathbf v,\mathbf w)$. Note that $T^*\mathrm{Rep}(\mathbf v,\mathbf w)$ is naturally isomorphic to the space of representations of the doubled quiver $\overline{Q}$, which has the same set of nodes as $Q$ while its set of arrows is $\overline{Q}_1=Q_1\sqcup Q^*_1$, where $Q^*_1$ is the orientation reverse of $Q_1$. Explicitly
\begin{multline}
    T^*\mathrm{Rep}(\mathbf v,\mathbf w)=\bigoplus_{i\in Q_0}\mathrm{Hom}(W_i,V_i)\oplus \bigoplus_{i\in Q_0}\mathrm{Hom}(V_i,W_i) \\
    \oplus\bigoplus_{a\in Q_1}\mathrm{Hom}(V_{\mathsf t(a)},V_{\mathsf h(a)})\oplus\bigoplus_{a\in Q_1}\mathrm{Hom}(V_{\mathsf h(a)},V_{\mathsf t(a)})\;.
\end{multline}
We introduce the notation for the elements in the direct summands in $T^*\mathrm{Rep}(Q,\mathbf v,\mathbf w)$ as follows
\begin{center}
\begin{tabular}{ | c | c | c | c |} 
 \hline
 $\mathrm{Hom}(V_{\mathsf t(a)},V_{\mathsf h(a)})$ & $\mathrm{Hom}(V_{\mathsf h(a)},V_{\mathsf t(a)})$ & $\mathrm{Hom}(W_i,V_i)$ & $\mathrm{Hom}(V_i,W_i)$  \\  [1ex] 
 \hline
$x_a$ & $y_a$ & $\alpha_i$ & $\beta_i$  \\ [0.5ex] 
 \hline
\end{tabular}
\label{tab:matther contents from quiver}
\end{center}
The moment map $\mu:T^*\mathrm{Rep}(Q,\mathbf v,\mathbf w)\to \mathfrak{gl}(\mathbf v)^*$ for the Hamiltonian $\GL(\mathbf v)$-action on $T^*\mathrm{Rep}(Q,\mathbf v,\mathbf w)$ is
\begin{align}\label{moment_map_general}
    \mu(x,y,\alpha,\beta)=-\sum_{a\in Q_1}[x_a,y_a]-\sum_{i\in Q_0}\alpha_i\beta_i
    \;.
\end{align}
Inside $\mathfrak{gl}(\mathbf v)^*=\bigoplus_{i\in Q_0}\mathfrak{gl}(\mathbf v_i)^*$ there is a linear subspace $\mathcal Z$ which is the span of scalar matrices 
\begin{align}\label{the center Z}
    \mathcal Z:=\bigoplus_{i\in Q_0} \mathbb C\cdot\mathrm{Id}_{V_i} \;.
\end{align}
\begin{definition}
The affine quiver variety $\CM ^0(Q,\mathbf v,\mathbf w)$ and its deformation $\widetilde{\mathcal M}^0(Q,\mathbf v,\mathbf w)$ are defined as the Hamiltonian reductions
\begin{align}\label{def_M0}
    \CM ^0(Q,\mathbf v,\mathbf w):=\mu^{-1}(0)\sslash \GL(\mathbf v)\;,
    \quad \widetilde{\mathcal M}^0(\mathbf v,\mathbf w):=\mu^{-1}(\mathcal Z)\sslash \GL(\mathbf v) \;.
\end{align}
\end{definition}

Alternatively, one can define the extended moment map 
\begin{equation}\label{extended moment map}
\begin{split}
    \widetilde\mu: T^*\mathrm{Rep}(Q,\mathbf v,\mathbf w)\times\mathcal Z&\longrightarrow \mathfrak{gl}(\mathbf v)^*\\
    \textrm{\rotatebox{90}{$\in$}} \qquad\qquad &\qquad \quad \textrm{\rotatebox{90}{$\in$}}\\
    ((x,y,\alpha,\beta),z)&\mapsto \mu(x,y,\alpha,\beta)- z \;.
\end{split}
\end{equation}
Then we obviously have
\begin{align*}
    \mu^{-1}(\mathcal Z)\cong \widetilde\mu^{-1}(0) \;.
\end{align*}
\begin{definition}
A representation $(x,y,\alpha,\beta)\in T^*\mathrm{Rep}(Q,\mathbf v,\mathbf w)$ is called stable if there is no proper $Q_0$-graded linear subspace $S_i\subset V_i$ such that 
\begin{align*}
    \alpha_i(W_i)\subset S_i\;,
    \quad 
    x_a(S_{\mathsf t(a)})\subset S_{\mathsf h(a)}\;,
    \quad 
    y_a(S_{\mathsf t(a)})\subset S_{\mathsf h(a)}\;.
\end{align*}
A representation $(x,y,\alpha,\beta)\in T^*\mathrm{Rep}(Q,\mathbf v,\mathbf w)$ is called costable if there is no proper $Q_0$-graded linear subspace $T_i\subset V_i$ such that 
\begin{align*}
    \beta_i(T_i)=0\;,
    \quad 
    x_a(T_{\mathsf t(a)})\subset T_{\mathsf h(a)}\;,
    \quad 
    y_a(T_{\mathsf t(a)})\subset T_{\mathsf h(a)}\;.
\end{align*}
\end{definition}
Denote by $T^*\mathrm{Rep}(Q,\mathbf v,\mathbf w)^{\mathrm{st}}$ (resp.\  $T^*\mathrm{Rep}(Q,\mathbf v,\mathbf w)^{\mathrm{cst}}$) the locus of stable (resp.\ costable) representations, and for every $\GL(\mathbf v)$-invariant subset $S\subset T^*\mathrm{Rep}(Q,\mathbf v,\mathbf w)$ we set
\begin{align}
    S^{\mathrm{st}}:=S\cap T^*\mathrm{Rep}(Q,\mathbf v,\mathbf w)^{\mathrm{st}}\;, 
    \quad 
    S^{\mathrm{cst}}:=S\cap T^*\mathrm{Rep}(Q,\mathbf v,\mathbf w)^{\mathrm{cst}}\;.
\end{align}
\begin{definition}
The Nakajima quiver variety ${\mathcal M}(Q,\mathbf v,\mathbf w)$ and its deformation $\widetilde{\mathcal M}(Q,\mathbf v,\mathbf w)$ are defined as the Hamiltonian reductions
\begin{align}
    {\mathcal M}(Q,\mathbf v,\mathbf w):=\mu^{-1}(0)^{\mathrm{st}}\sslash \GL(\mathbf v)\;,
    \quad 
    \widetilde{\mathcal M}(Q,\mathbf v,\mathbf w):=\mu^{-1}(\mathcal Z)^{\mathrm{st}}\sslash \GL(\mathbf v) \;.
\end{align}
\end{definition}
Note that ${\mathcal M}(Q,\mathbf v,\mathbf w)$ is a closed subscheme of $\widetilde{\mathcal M}(Q,\mathbf v,\mathbf w)$. In fact, if $\mu^{-1}(\mathcal Z)^{\mathrm{st}}$ is nonempty, then the natural projection $\mu^{-1}(\mathcal Z)^{\mathrm{st}}\to \mathcal Z$ is smooth and the action of $\GL(\mathbf v)$ on $\mu^{-1}(\mathcal Z)^{\mathrm{st}}$ is free. After passing to the quotient we get a smooth morphism 
$p: \widetilde{\mathcal M}(Q,\mathbf v,\mathbf w)\to \mathcal Z$, and ${\mathcal M}(Q,\mathbf v,\mathbf w)=p^{-1}(0)$. Since ${\mathcal M}(Q,\mathbf v,\mathbf w)$ is a good quotient, its dimension is the expected one:
\begin{align}
    \dim {\mathcal M}(Q,\mathbf v,\mathbf w)=\mathbf v\cdot(2\mathbf w-\mathsf C\mathbf v)\;.
\end{align}
Note that the right hand side is always non-negative thanks to the condition \eqref{general_anomaly_cancellable_condition}.

\section{Linear quiver gauge theories and \texorpdfstring{$T_{\r}^{\s}[\SU(N)]$}{T[SU(N)]} Theories} 
    \label{sec:T_rho_sigma}

In this section, we specialize to the case of 3d $\mathcal{N}=4$ gauge theories with 
linear quivers of type A, as shown in \cref{fig:genericAquiver}. We in particular highlight a
family of 3d $\mathcal{N}=4$ theories known as $T_{\r}^{\s}[\SU(N)]$ theories introduced in \cite{Gaiotto:2008ak}.\footnote{The theory was denoted as $T_{\r}^{\r'}[\SU(N)]$ in \cite{Gaiotto:2008ak}, where their notations and ours are related by $\s_{\rm here}=\r'_{\rm there}$ and $\r_{\rm here}=\r_{\rm there}$.} 
In the following we follow the convention of $T_{\r}^{\s}[\SU(N)]$ theories in \cite{Nishioka:2011dq}.

\begin{figure}[ht!]
    \centering
    \begin{tikzpicture}[scale=.75, every node/.style={scale=0.8}]
        \draw[] (-6,0) circle (.7);
        \draw[] (-5.3,0) -- (-4.2,0);
        \draw[] (-3.5,0) circle (.7);
        \draw[] (-2.8,0) -- (-1.75,0);
        \draw[line width=1pt,loosely dotted] (-1.75,0) -- (-.3,0);
        \draw[] (-.3,0) -- (.8,0);
        \draw[] (1.5,0) circle (.7);   
        \draw[] (2.2,0) -- (3.3,0);
        \draw[] (4,0) circle (.7);      
        \draw[] (4,-.7) -- (4,-2);
        \draw[] (1.5,-.7) -- (1.5,-2);
        \draw[] (-3.5,-.7) -- (-3.5,-2);
        \draw[] (-6,-.7) -- (-6,-2);   
        \draw[] (3.35,-2) rectangle (4.65,-3.3);
        \draw[] (0.85,-2) rectangle (2.15,-3.3);
        \draw[] (-4.15,-2) rectangle (-2.85,-3.3);
        \draw[] (-6.65,-2) rectangle (-5.35,-3.3);
        \node at (-6,0) {$v_{\Qzero}$};
        \node at (-3.5,0) {$v_{\Qzero-1}$};
        \node at (1.5,0) {$v_{2}$};
        \node at (4,0) {$v_{1}$};
         \node at (-6,-2.65) {$w_{\Qzero}$};
        \node at (-3.5,-2.65) {$w_{\Qzero-1}$};
        \node at (1.5,-2.65) {$w_{2}$};
        \node at (4,-2.65) {$w_{1}$};
    \end{tikzpicture}
    \caption{A-type linear  quiver diagram with gauge nodes $\{v_i\}_{i\in Q_0}$ and framing nodes $\{w_i\}_{i\in Q_0}$.}    
    \label{fig:genericAquiver}
\end{figure}
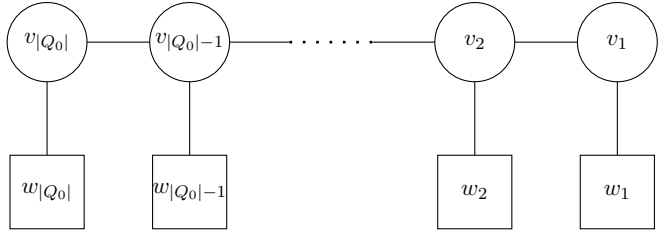

\subsection{Some notations}

In the linear quiver of 
\cref{fig:genericAquiver},
an edge of the quiver always connects  neighboring gauge node vertices.
By choosing an orientation of the arrow, the oriented adjacency matrix $\mathsf{Q}$ can be obtained as 
\begin{align}
\mathsf{Q}_{i,j}= 
    \begin{cases}
        1 & \quad  (j-i=1) \;,\\
        0 & \quad \textrm{(otherwise)}\;.
    \end{cases}
\end{align}
This means that 
$a\in Q_1$ can be traded for $i\in Q_0$ with $i=1, \dots, \Qzero-1$, where $i\in Q_0$ now represents an edge connecting vertices $i$ and $i+1$.

The Cartan matrix is given by
\begin{align}
\mathsf{C}_{i,j}= 
    \begin{cases}
        2 & \quad 
        (i=j) \;,\\
        -1 & \quad (|i-j|=1) \;,\\
        0 & \quad \textrm{(otherwise)}\;,
    \end{cases}
\end{align}
which is nothing but the standard Cartan matrix of type $A_{\Qzero}$.

The notation for the elements in the direct summands in $T^*\mathrm{Rep}(Q,\mathbf v,\mathbf w)$ is now
\begin{center}
\begin{tabular}{ | c | c | c | c |} 
 \hline
 $\mathrm{Hom}(V_{i+1},V_{i})$ & $\mathrm{Hom}(V_{i},V_{i+1})$ & $\mathrm{Hom}(W_i,V_i)$ & $\mathrm{Hom}(V_i,W_i)$  \\  [1ex] 
 \hline
$x_i$ & $y_i$ & $\alpha_i$ & $\beta_i$  \\ [0.5ex] 
 \hline
\end{tabular}
\end{center}
The moment map $\mu:T^*\mathrm{Rep}(Q,\mathbf v,\mathbf w)\to \mathfrak{gl}(v_i)^*$ for the Hamiltonian $\GL(v_i) \subset \GL(\mathbf v)$ action on $T^*\mathrm{Rep}(Q,\mathbf v,\mathbf w)$ is
\begin{align}\label{moment_map}
    \mu_i(x,y,\alpha,\beta)=-x_i y_i +y_{i-1} x_{i-1}- \alpha_i\beta_i\;,
\end{align}
where we have set $x_{\Qzero}=y_{\Qzero}=0$ for simplicity of the equation .

\subsection{Pair of partitions}

The $T_{\r}^{\s}[\SU(N)]$ theory is specified by $\rho$ and $\sigma$, each of which represents a conjugacy class of $\SU(2)$ embeddings,
\beq
    \r, \s: \SU(2)\to\SU(N) \;.
\eeq
When we denote the standard Pauli generators of $\mathfrak{su}(2)$ as 
$t_{i=1,2,3}$, we can write, up to conjugation by an element of $\SU(N)$, 
\beq
& \r(t^3)=
    \diag\left(\underbrace{\frac{n_1-1}{2},\frac{n_1-3}{2},\cdots,-\frac{n_1-1}{2}}_{n_1},\cdots,\underbrace{\frac{n_k-1}{2},\cdots,-\frac{n_k-1}{2}}_{n_k}\right) \;.
\eeq
The integers $n_i$ define a partition of $N$,
\beq
     \r=[n_1,n_2,\cdots,n_{k}]\;,
     \quad 
     n_1\geq n_{2} \geq n_{k} > 0\;,
     \quad 
     \sum_{i=1}^{k} n_{i}=N\;.
 \eeq
The same comment applies to $\s$.
 When $\r=\s=[1,\cdots,1]$ the $T^{\s}_{\r}[\SU(N)]$ theory is denoted simply as $T[\SU(N)]$. 
 
 \subsection{Partitions and quiver data}
 
The $T_{\r}^{\s}[\SU(N)]$ theory is defined by a D3-NS5-D5 boundary condition, 
where we have $N$ D3-branes on a interval, ending on the NS5-type boundary conditions defined by $\r$ 
on one end and D5-type boundary conditions defined by $\s$ on the other end, as depicted in \cref{fig:quiverexample2011} --- we have $n_i$ D3-branes ending on the $i^{\rm th}$ NS5-brane
for $\r=[n_1,n_2,\cdots,n_{k}]$, with similar pattern for the D5-branes and $\s$.\footnote{Since we have 4d $\mathcal{N}=4$ $\SU(N)$ super Yang-Mills gauge theory on $N$ D3-branes, we can regard this as a setup where 4d $\mathcal{N}=4$ $\SU(N)$ super Yang-Mills theory is placed on a interval, with D5 and NS5 boundary conditions on the two ends.}

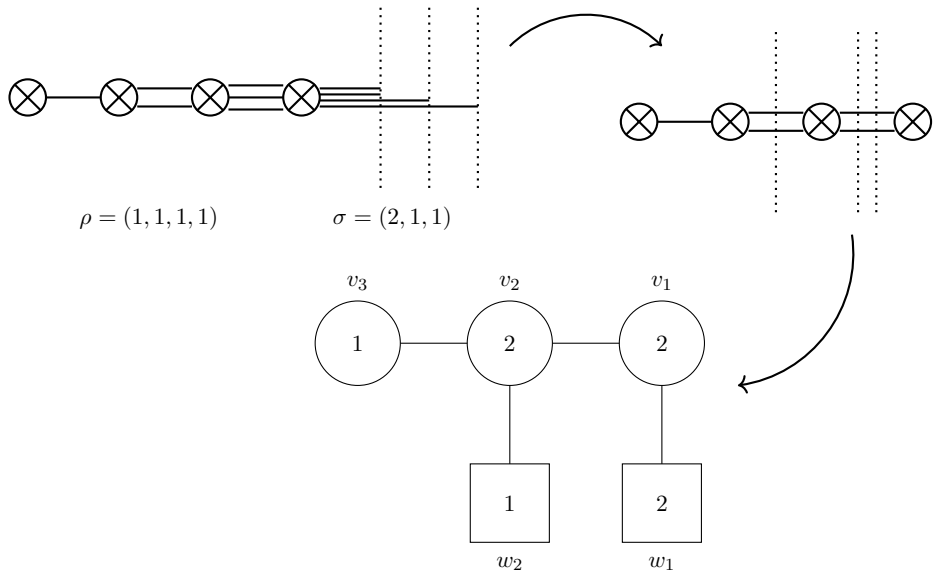
\begin{figure}[ht!]
    \centering

    \begin{tikzpicture}[decoration=snake]
    \node (leftfigure) at (0,0){
    \begin{tikzpicture}[scale=0.8, every node/.style={scale=0.8}]     
    \node [draw,circle,minimum width=0.6 cm,thick](A) at (-6,0){}; 
    \draw [thick] (A.north east) -- (A.south west)
    (A.north west) -- (A.south east);
    \node [draw,circle,minimum width=0.6 cm,thick](B) at (-4.5,0){}; 
    \draw  [thick] (B.north east) -- (B.south west)
    (B.north west) -- (B.south east);
    \node [draw,circle,minimum width=0.6 cm,thick](C) at (-3,0){}; 
    \draw  [thick](C.north east) -- (C.south west)
    (C.north west) -- (C.south east);
    \node [draw,circle,minimum width=0.6 cm,thick](D) at (-1.5,0){}; 
    \draw [thick] (D.north east) -- (D.south west)
    (D.north west) -- (D.south east);
%
    \draw[dotted,thick] (-0.2,-1.5) -- (-0.2,1.5);
    \draw[dotted,thick] (0.6,-1.5) -- (0.6,1.5);
    \draw[dotted,thick] (1.4,-1.5) -- (1.4,1.5);
%
    \draw[thick] (-5.7,0) -- (-4.8,0);
    \draw[thick] (-4.2,-0.15) -- (-3.3,-0.15);
    \draw[thick] (-4.2,0.15) -- (-3.3,0.15);
    \draw[thick] (-2.7,0) -- (-1.8,0);
    \draw[thick] (-2.7,0.2) -- (-1.8,0.2);
    \draw[thick] (-2.7,-0.2) -- (-1.8,-0.2);
    \draw[thick] (-1.2,0.15) -- (-0.2,0.15);
    \draw[thick] (-1.2,0.05) -- (-0.2,0.05);
    \draw[thick] (-1.2,-0.05) -- (0.6,-0.05);
    \draw[thick] (-1.2,-0.15) -- (1.4,-0.15);
    \node at (-4,-2) {$\rho=(1,1,1,1)$};
    \node at (0,-2) {$\sigma=(2,1,1)$};
    
    \end{tikzpicture}
    };
    
    \node (rightfigure) at (7,0)
         {
            \begin{tikzpicture}[scale=0.8, every node/.style={scale=0.8}]  
            \node [draw,circle,minimum width=0.6 cm,thick](A) at (-6,0){}; 
            \draw [thick] (A.north east) -- (A.south west)
            (A.north west) -- (A.south east);
            \node [draw,circle,minimum width=0.6 cm,thick](B) at (-4.5,0){}; 
            \draw  [thick] (B.north east) -- (B.south west)
            (B.north west) -- (B.south east);
            \node [draw,circle,minimum width=0.6 cm,thick](C) at (-3,0){}; 
            \draw  [thick](C.north east) -- (C.south west)
            (C.north west) -- (C.south east);
            \node [draw,circle,minimum width=0.6 cm,thick](D) at (-1.5,0){}; 
            \draw [thick] (D.north east) -- (D.south west)
            (D.north west) -- (D.south east);       
            \draw[dotted,thick] (-3.75,-1.5) -- (-3.75,1.5);
            \draw[dotted,thick] (-2.4,-1.5) -- (-2.4,1.5);
            \draw[dotted,thick] (-2.1,-1.5) -- (-2.1,1.5);       
            \draw[thick] (-5.7,0) -- (-4.8,0);
            \draw[thick] (-4.2,-0.15) -- (-3.3,-0.15);
            \draw[thick] (-4.2,0.15) -- (-3.3,0.15);
            \draw[thick] (-2.7,0.15) -- (-1.8,0.15);
            \draw[thick] (-2.7,-0.15) -- (-1.8,-0.15);
            
        \end{tikzpicture}
        };
         
   \node (bottomfigure) at (3.5,-4.0){
        \begin{tikzpicture}[scale=0.8, every node/.style={scale=0.8}]        
            \draw[] (-1.0,0) circle (.7); 
            \draw[] (-0.3,0) -- (0.8,0);
            \draw[] (1.5,0) circle (.7);   
            \draw[] (2.2,0) -- (3.3,0);
            \draw[] (4,0) circle (.7);           
            \draw[] (4,-.7) -- (4,-2);
            \draw[] (1.5,-.7) -- (1.5,-2);     
            \draw[] (3.35,-2) rectangle (4.65,-3.3);
            \draw[] (0.85,-2) rectangle (2.15,-3.3);
            \node at (-1.0,0) {$1$};
            \node at (1.5,0) {$2$};
            \node at (4,0) {$2$};
            \node at (-1.0,1) {$v_{3}$};
            \node at (1.5,1) {$v_{2}$};
            \node at (4,1) {$v_{1}$};
            \node at (1.5,-3.65) {$w_{2}$};
            \node at (4,-3.65) {$w_{1}$};
            \node at (1.5,-2.65) {$1$};
            \node at (4,-2.65) {$2$};
        \end{tikzpicture}
    };

    \draw[->, thick] (3.5,1) to [bend left=45] (5.5,1);
    \draw[->, thick] (8,-1.5) to [bend left=45] (6.5,-3.5);
    
    \end{tikzpicture}
    \caption{Brane realization, before and after Hanany-Witten moves, with NS5 branes (circles with crosses), D3 branes (horizontal segments) and D5 branes (dashed lines) and the corresponding quiver for the gauge theory example $T^{\sigma}_{\rho}\left[\SU(N)\right]$ with $N=4$, $\rho=(1,1,1,1)$ and $\sigma=(2,1,1)$ that appeared in \cite{Nishioka:2011dq}.}
    \label{fig:quiverexample2011}
\end{figure}

The matter contents of the gauge theory can be extracted after the Hanany-Witten transition \cite{Hanany:1996ie}.
The resulting theory is a quiver gauge theory as in \cref{fig:quiver for Trs[SU(N)] theory}, 
where the ranks $v=\left(v_1,\ldots , v_{k-1}\right)$ and $w=\left(w_1,\ldots , w_{k-1}\right)$ of the gauge and flavor symmetry groups for $T_{\r}^{\s}[\SU(N)]$ are determined by
\beq
    v_{k-1}=n_k~,\quad 2v_i=v_{i+1}+v_{i-1}+w_{i}+n_{i+1}-n_i~,\quad (i=1,\cdots,k-2)~,
\eeq
for
\beq
    &\r=[n_1,n_2,\cdots,n_{k-1},n_k]~, \\
    &\s=[\underbrace{(k-1),\cdots,(k-1)}_{w_{k-1}},\cdots,\underbrace{i,\cdots,i}_{w_i},\cdots] ~.
\eeq

\begin{figure}[ht!]
    \centering
    \begin{tikzpicture}[scale=.75, every node/.style={scale=0.8}]        
        \draw[] (-6,0) circle (.7);
        \draw[] (-5.3,0) -- (-4.2,0);
        \draw[] (-3.5,0) circle (.7);
        \draw[] (-2.8,0) -- (-1.75,0);
        \draw[line width=1pt, loosely dotted] (-1.75,0) -- (-.3,0);
        \draw[] (-.3,0) -- (.8,0);
        \draw[] (1.5,0) circle (.7);   
        \draw[] (2.2,0) -- (3.3,0);
        \draw[] (4,0) circle (.7);           
        \draw[] (4,-.7) -- (4,-2);
        \draw[] (1.5,-.7) -- (1.5,-2);
        \draw[] (-3.5,-.7) -- (-3.5,-2);
        \draw[] (-6,-.7) -- (-6,-2);       
        \draw[] (3.35,-2) rectangle (4.65,-3.3);
        \draw[] (0.85,-2) rectangle (2.15,-3.3);
        \draw[] (-4.15,-2) rectangle (-2.85,-3.3);
        \draw[] (-6.65,-2) rectangle (-5.35,-3.3);
        \node at (-6,0) {$v_{k-1}$};
        \node at (-3.5,0) {$v_{k-2}$};
        \node at (1.5,0) {$v_{2}$};
        \node at (4,0) {$v_{1}$};
         \node at (-6,-2.65) {$w_{k-1}$};
        \node at (-3.5,-2.65) {$w_{k-2}$};
        \node at (1.5,-2.65) {$w_{2}$};
        \node at (4,-2.65) {$w_{1}$};
    \end{tikzpicture}
    \caption{The quiver diagram for the $T_{\r}^{\s}[\SU(N)]$ theory.}
    \label{fig:quiver for Trs[SU(N)] theory}
\end{figure}

The 3d mirror symmetry is realized by the S-duality of type IIB string theory \cite{Hanany:1996ie}, 
which exchanges the NS5-branes with the D5-branes while preserving the D3-branes. This means that 
the mirror of the $T_{\r}^{\s}[\SU(N)]$ theory is the $T_{\s}^{\r}[\SU(N)]$ theory,
and in particular $T[\SU(N)]$ theory (whose quiver is shown in \cref{fig:quiver_TSUN}) is mirror to itself.

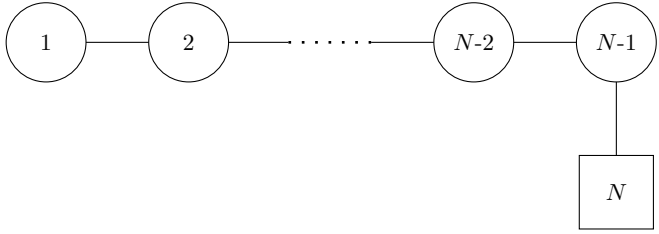
\begin{figure}[ht!]
    \centering
    \begin{tikzpicture}[scale=.75, every node/.style={scale=0.8}]        
        \draw[] (-6,0) circle (.7);
        \draw[] (-5.3,0) -- (-4.2,0);
        \draw[] (-3.5,0) circle (.7);
        \draw[] (-2.8,0) -- (-1.75,0);
        \draw[line width=1pt, loosely dotted] (-1.75,0) -- (-.3,0);
        \draw[] (-.3,0) -- (.8,0);
        \draw[] (1.5,0) circle (.7);   
        \draw[] (2.2,0) -- (3.3,0);
        \draw[] (4,0) circle (.7);           
        \draw[] (4,-.7) -- (4,-2);
        \draw[] (3.35,-2) rectangle (4.65,-3.3);
        \node at (-6,0) {$1$};
        \node at (-3.5,0) {$2$};
        \node at (1.5,0) {$N$-$2$};
        \node at (4,0) {$N$-$1$};
        \node at (4,-2.65) {$N$};
    \end{tikzpicture}
    \caption{The quiver diagram for the $T[\SU(N)]$ theory.}
    \label{fig:quiver_TSUN}
\end{figure}

\subsection{Higgs and Coulomb branches}

The $T_{\r}^{\s}[\SU(N)]$ theory has non-trivial vacuum moduli space of vacua, consisting of the Higgs branch $\mathcal{M}_H$, the Coulomb branch $\mathcal{M}_C$ and the mixed branches. The Higgs branch of $T_{\r}^{\s}[\SU(N)]$ has three different descriptions:
\begin{itemize}
    \item[(1)] Nakajima quiver variety $\mathcal M(v,w)$ associated to the quiver of \cref{fig:quiver for Trs[SU(N)] theory}, 
    \item[(2)] Slodowy variety $\overline{\mathcal S}^{\rho^t}_{\sigma}$ defined as an intersection $\overline{\mathbb O}(\rho^t)\cap \mathcal S(\sigma)$, where $\overline{\mathbb O}(\rho^t)$ is the closure of $\mathrm{SL}_N$-orbit through an nilpotent element of $\mathfrak{sl}_N$ with Jordan blocks associated to the partition $\rho^t$ (transpose of $\rho$), and $\mathcal S(\sigma)$ is the transverse slice to an nilpotent element of $\mathfrak{sl}_N$ with Jordan blocks associated to the partition $\sigma$~,
    \item[(3)] Slice $\overline{\mathcal W}^{\sigma^t}_{\rho}$ of the affine Grassmannian $\mathrm{Gr}_{\mathrm{PSL}_{|w|}}$, where $|w|=\sum_{i=1}^{k-1}w_i$, and both $\sigma^t$ and $\rho$ are regarded as coweights of $\mathrm{PSL}_{|w|}$~,
\end{itemize}
with the three varieties having been shown to be isomorphic by Mirkovi\'c-Vybornov \cite{mirkovic2007quiver}.  
By the work of Braverman-Finkelberg-Nakajima \cite{braverman2016coulomb, braverman2017ring}, 
the Coulomb branch of $T_{\s}^{\r}[\SU(N)]$ is the slice $\overline{\mathcal W}^{\s^t}_{\r}$ of the affine Grassmannian $\mathrm{Gr}_{\mathrm{PSL}_{k}}$, which is isomorphic to the Slodowy variety $\overline{\mathcal S}^{\r^t}_{\s}$ of $\mathrm{SL}_{N}$.  This is expected since the 3d mirror symmetry $T_{\s}^{\r}[\SU(N)]\leftrightarrow T_{\r}^{\s}[\SU(N)]$ swaps the Coulomb branches and Higgs branches
\begin{align}
    \overline{\mathcal S}^{\r^t}_{\s} \cong\mathcal M_{H}(T_{\r}^{\s}[\SU(N)])\cong \mathcal M_{C}(T_{\s}^{\r}[\SU(N)])\cong \overline{\mathcal W}^{\s^t}_{\r}\;.
\end{align}

The symmetry of the Higgs branch is the commutant of  
$\s(\mathrm{SU}(2))$, more precisely $\mathrm{S}(\mathrm{U}(w_1)\times\cdots\times \mathrm{U}(w_{k-1}))$. 
To see the hidden natural symmetry of the Coulomb branch, collect quiver nodes of $d_i=0$ and construct a Dynkin diagram from these (corresponding to the balanced sub-quiver), see \cite[Appendix A]{braverman2017ring}, then the Coulomb branch symmetry is the Levi subgroup of $\mathrm{PSU}(k)$ corresponding to the balanced subquiver. Equivalently, if we regard $\rho$  as a cocharacter of $\mathrm{PSU}(k)$, then the Coulomb branch symmetry is the commutant of $\rho(\mathrm{U}(1))$  in $\mathrm{PSU}(k)$.

\subsection{Higgs branch from paths on the linear quivers}\label{subsec:Higgs}

For an explicit analysis of the Higgs branch,
it is useful to have concrete parametrizations of the Higgs branch.
In the algebraic language, this is to identify an element of the coordinate ring $\mathbb{C}[\CM(Q, \mathbf{v}, \mathbf{w})]$ of the quiver variety.

The elements of the coordinate ring of the Higgs branch are given by gauge-invariant operators (mesons)\footnote{There are no baryons since the gauge groups are $\U(N)$.},
which on the quiver diagram are either (i) closed paths or (ii) open paths which begin at a framing node and 
end at another framing node. Not all of these operators are independent since 
we can use the moment map relations $\mu=\mathcal{Z}$, where $\mu$ was given in \eqref{moment_map}.

Now, Theorem 2.2.1 of \cite{2019arXiv190501810} has shown\footnote{See also Theorem 10.1.1 of \cite{2019arXiv190501810},
\cite{MR1623674} and \cite{MR2130242}. This statement holds for a general linear quiver diagram, not necessarily one of the $T_{\rho}^{\sigma}[\SU(N)]$ theories.} that the coordinate ring 
is generated by operators 
of the form (see \cref{fig:pathsquiverexample3})
\begin{equation}
 \label{path_g}
  g_{(i\to m \to j)}: = \beta_j \cdot x_j  \cdots \cdot x_{m-2} \cdot x_{m-1} \cdot y_{m-1}  \cdot  \cdots \cdot y_{i+1} \cdot y_{i} \cdot \alpha_i \;.
\end{equation}
In particular, when parameterising the Higgs branch $\CM_H$, it is only necessary to consider paths on the quiver diagram which do not loop around any of the edges.

\begin{figure}
\centering
    \begin{tikzpicture}[scale=.6, every node/.style={scale=0.8}];
        \node(-5) at (-12.5,0){};
        \node(-4) at (-10,0){$\circled{$m$}$};
        \node(-3) at (-7.5,0){};
        \node(-2) at (-5,0){};
        \node(-1) at (-2.5,0){$\circled{$i+1$}$};
        \node(0) at (0,0){$\circled{$i$}$};
        \node(1) at (2.5,0){};
        \node(2) at (5,0){};
        \node(3) at (7.5,0){$\circled{$j$}$};
        \node(4) at (10.0,0){};
        
        \node(w-4) at (-10,-2.5){$\rectangled{}$};
        \node(w-1) at (-2.5,-2.5){$\rectangled{}$};
        \node(w0) at (0,-2.5){$\rectangled{}$};
        \node(w3) at (7.5,-2.5){$\rectangled{}$};
        \draw (-5) -- (-4);
        \draw (-4) -- (-3);
        \draw[dotted] (-3) -- (-2);
        \draw (-2) -- (-1);
        \draw (-1) -- (0);
        \draw (0) -- (1);
        \draw[dotted] (1) -- (2);
        \draw (2) -- (3);
        \draw (3) -- (4);
        \draw (w-4) --(-4);
        \draw (w-1) --(-1);
        \draw (w0) --(0);
        \draw (w3) --(3);
        \draw[->,red,thick] (-0.5,-2) -- node [left] {$\alpha_{i}$}(-0.5, 0.8);
        \draw[->,red,thick] (-0.5, 0.8) -- node [below] {$y_{i}$}(-3.0, 0.8);
        \draw[dotted,red,thick] (-3.0, 0.8) -- (-8.0, 0.8);
        \draw[->,red,thick] (-8.0, 0.8) -- (-10.5, 0.8);
        \draw[->,red,thick] (-10.5, 0.8) -- (-10.5, 1.2);
        \draw[->,red,thick]  (-10.5, 1.2) -- node [above] {$x_{m}$}(-8.0, 1.2);
        \draw[dotted,red,thick] (-8.0, 1.2) -- (4.5, 1.2);
        \draw[->,red,thick] (4.5, 1.2) -- node [above] {$x_{j}$}(7.0, 1.2);
        \draw[->,red,thick] (7.0,1.2) -- node [left] {$\beta_{j}$}(7.0, -2);
    \end{tikzpicture}
    \caption{A path on the quiver diagram representing a gauge-invariat operator on the Higgs branch. In this figure we have shown the case $j<i$, but we allow for $j\ge i$ in the definition of the generator $g_{(i\to m \to j)}$.}
    \label{fig:pathsquiverexample3}
\end{figure}
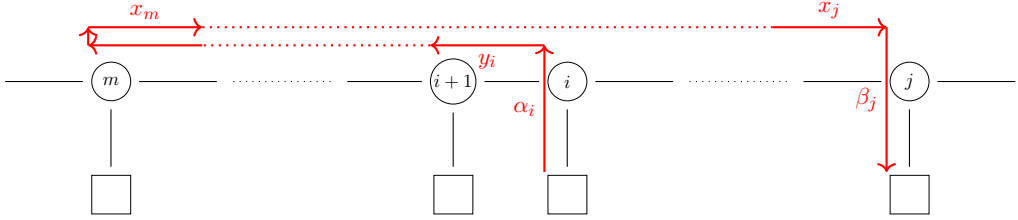

There are still dependencies between the operators of the form
\eqref{path_g} and not all of them are required to generate the Higgs branch.

\subsection{Balanced linear quivers}\label{sec:balanced_quivers}

In the following we will specialize to the case of balanced quivers.

We can verify that a linear type A quiver (as in \cref{fig:genericAquiver}) is balanced 
if and only if it can be expressed as a $T_{\r}^{\s}[\SU(N)]$ theory,
where
\begin{align}
        & N = k v_{k-1}~,\\
        &\r=[\underbrace{v_{k-1} ,v_{k-1},\cdots, v_{k-1}}_k ]~, \\
        &\s=[\underbrace{(k-1),\cdots,(k-1)}_{w_{k-1}},\cdots,\underbrace{i,\cdots,i}_{w_i},\cdots] ~.
\end{align}
This include, as special cases (with $v_{k-1}=1$),
$T_{\r}^{\s}[\SU(N)]$ theory with $k=N$ and 
\begin{subequations}
\begin{align}
        &\r=[\underbrace{1, 1,\cdots, 1}_N ]~, \\
        &\s=[\underbrace{(N-1),\cdots,(N-1)}_{w_{k-1}},\cdots,\underbrace{i,\cdots,i}_{w_i},\cdots] ~.
\end{align}
\label{rho=111}
\end{subequations}
In the latter special case, we have $T_{[1, \dots, 1]}^{\s}[\SU(N)]$ theories with a general partition $\s$, and 
$v_i$ are related to $w_i$ by
\begin{align}\label{v_i in terms of w_i}
    v_i=N-i-\sum_{j>i}(j-i)w_j \;.
\end{align}

\section{VOAs associated to quivers} \label{sec:VOA-quiver}

In the following sections we will concretely describe the VOA $\mathcal{V}$ \eqref{def:BRSTcomplex} for a quiver gauge theory, following the steps outlined in the previous section. In this section, we will actually describe a VOA $\CV(Q,\mathbf v,\mathbf w)$ which is canonically associated to a quiver $Q$ with gauge and framing dimensions $\mathbf v,\mathbf w$, in contrast to the boundary VOA of the quiver gauge theory associated to $(Q,\mathbf v,\mathbf w)$ which involves a choice of boundary fermions. The difference is that we fix a canonical choice of anomaly cancellation scheme which is determined solely by the quiver data.

We first describe the naive VA for the BRST reduction in \cref{subsection:pre-reduction}. We then analyze in \cref{subsection:anomaly}  the global $\mathfrak{g}$-symmetry currents and 
identify the boundary theories needed for anomaly cancellation. We can then carry out the BRST reduction to identify the VA $\CV$.
In \cref{sec:conformal_vector} we analyze the conformal vector, to promote the VA into a VOA.

\subsection{Vertex algebra before BRST reduction}
\label{subsection:pre-reduction}

Let us begin by writing down the VOA before the BRST reduction. Note that at this point we will set aside the boundary anomaly cancellation condition, which will be fully taken into account in the next subsection.

In our discussion of VOAs we impose $\mathcal{N}=(0,4)$-preserving Neumann boundary conditions
for $\mathcal{N}=4$ vector and hyper multiplets. This leaves the following degrees of freedom on the boundary:

\begin{itemize} 
\item a collection of ${\Sb}$-systems
    \begin{itemize}
        \item $(I_{i}, J_{i})_{i\in Q_0}$ valued in fundamental or antifundamental representations of the 
        $i\in \{1,\ldots ,\Qzero\}$ flavor nodes  and  
        \item $(X_{a}, Y_{a})_{a\in Q_1}$ valued in bifundamental representations of the gauge groups.  
    \end{itemize}
    These fields are associated to the scalar fields coming from the 3d $\mathcal{N}=4$ hypermultiplets valued in a symplectic representation of the gauge group $G$, as well as those from 3d $\mathcal{N}=4$ vectormultiplets.

\item a system $\bc$ of fermionic fields $(\mathsf{b}_{i}, \mathsf{c}_{i})_{i\in Q_0}$ from the $\bc$ ghosts needed to fix the gauge freedom. The conformal weights of $\mathsf{b}$ and $\mathsf{c}$ are chosen to be $1$ and $0$, respectively.
\end{itemize}

The convention we adopt is summarized pictorially in \cref{fig:convention2}, which presents the information locally around the $i^{\rm th}$ gauge node of the quiver.
For example $X_{a}$ is associated to the quiver edge directed from the $\mathsf{t}(a)^{\rm th}$ gauge node to the $\mathsf{h}(a)^{\rm th}$ gauge node. 

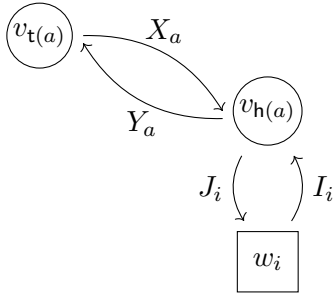
\begin{figure}[ht!]
    \centering
    \begin{tikzpicture}[]
          \node (giplus1) at (-3,-1) {$\circled{$v_{\mathsf{t}(a)}$}$};
          \node (gi) at (0,-2) {$\circled{$v_{\mathsf{h}(a)}$}$};
          \node (fi) at (0,-4) {$\rectangled{$w_i$}$};
          %
          \draw[->, bend left] (giplus1.east) to node[above]{$X_{a}$} (gi.west);
          \draw[<-, bend right] ([yshift=-0.1cm]giplus1.east) to node[below]{$Y_{a}$} ([yshift=-0.1cm]gi.west);
          \draw[->, bend right] (fi) to node[right]{$I_{i}$} (gi);
          \draw[->, bend right] (gi) to node[left]{$J_{i}$} (fi);
     \end{tikzpicture}
     \caption{Conventions for the quiver arrows and associated fields in the vertex algebra.}
     \label{fig:convention2}
\end{figure}

The boundary field content defines a vertex operator algebra $\CVbg \otimes \CVbc$ on the boundary $\partial M_3$,
where the collection of fields obeys the following nontrivial OPEs 
\begin{gather}
\begin{split}
    &  \qquad (X_{a})^{\alpha}{}_{\beta} (z) \, (Y_{b})^{\gamma}{}_{\delta} (w) \sim \frac{\delta^{\alpha}_{\delta}\delta^{\gamma}_{\beta}\delta_{a,b}}{z-w} ~, 
    \\
    & \qquad (I_{i})^{\alpha}{}_{\beta} (z) \, (J_{j})^{\gamma}{}_{\delta} (w) \sim \frac{\delta^{\alpha}_{\delta}\delta^{\gamma}_{\beta}\delta_{i,j}}{z-w} ~,
    \\
    & \qquad (\mathsf b_{i})^{\alpha}{}_{\beta} \,(z)  \, (\mathsf c_{j})^{\gamma}{}_{\delta} \,(w) \sim \frac{\delta^{\alpha}_{\delta}\delta^{\gamma}_{\beta} \delta_{i,j} }{z-w} ~, \\   
\end{split}
\end{gather}
while all other OPEs are trivial. 

\subsection{Boundary anomaly cancellation and BRST reduction}\label{subsection:anomaly}

Our boundary conditions leave unconstrained
some fermions, which contribute to gauge anomalies on the boundary.
These anomalies need to be canceled by the addition of extra degrees of freedom
on the boundary \cite{Yoshida:2014ssa,Dimofte:2017tpi,Costello:2018fnz}.

A viable prescription for quiver gauge theories is to cancel a boundary anomaly locally at each of the quiver nodes. For the supersymmetric quiver gauge theories with A-type quiver depicted in \cref{fig:genericAquiver} and gauge group of the form $G=\prod_{i\in Q_0}G_i$, with $G_i=\U(v_i)= \SU(v_i)\times \U(1)_{i, \textrm{diag}}$, one should separate the gauge anomaly for the semi-simple non-Abelian parts $\SU(v_i)$ from the Abelian components $\U(1)_{i, \textrm{diag}}$. 

\subsubsection{$\SU$ anomalies}
The contribution to the boundary gauge anomaly from a 3d $\SU(v_i)$ $\mathcal{N}=2$ gauge multiplet with Neumann boundary conditions is $v_i {\rm Tr} (F^2)$, where $F$ denotes the $\SU(v_i)$ connection field strength \cite[Section 2]{Dimofte:2017tpi}.
On the other hand, the fermions from the 3d $\mathcal{N}=2$ chiral multiplets in a fundamental or antifundamental representation of $G_i$, which correspond to edges   
\begin{figure}[ht!]
\begin{center}
 \begin{tikzpicture}
        \node at (-2,0) [circle] (GN) {};
        \node at (-1,0) [circle,draw] (GN-1) {$G_i$};
        \node at (1,0) [circle] (FN) {$w_i$};
        \node [rectangle,draw,fit=(FN),inner sep=0] (FNN) {};
        \draw (GN) -- (GN-1) -- (FNN);
    \end{tikzpicture}
\end{center}
\end{figure}
\noindent 
of the quiver diagram and which have Neumann boundary conditions, each of these contributes $-\frac{1}{2} {\rm Tr} (F^2)$ to the anomaly polynomial. 
From the $G_i$--$w_i$ part of the quiver, the number of fundamental hypermultiplets is equal to the dimension of the fundamental or antifundamental representation for the flavor symmetry. 
The total contribution at the $i^{\rm th}$ node of the quiver is therefore $-\frac{1}{2} (w_i-\sum_{j\in Q_0} \mathsf{C}_{i,j} v_j) {\rm Tr} (F^2)$ associated to the pairs of arrows on the quiver pointing away from and respectively into the gauge node (see \cref{fig:general quiver node for anomaly2}). Then in order for the $\SU(v_i)$ gauge anomaly to cancel at each node $i$, the corresponding dimension vectors $\mathbf{v}=(v_i)_{i\in Q_0}$ and $\mathbf{w}=(w_i)_{i\in Q_0}$ should satisfy the balance condition (recall \eqref{general_balancing_condition})
\begin{equation}
   w_i-\sum_{j\in Q_0} \mathsf{C}_{i,j} v_j  = 0 ~.  
\end{equation}
When this condition is satisfied the quiver is referred to as \emph{balanced} \cite{Gaiotto:2008ak}.

If instead we have the anomaly cancellation condition (recall \eqref{general_anomaly_cancellable_condition})
\begin{equation}
   \label{def_d}
   d_i:= w_i-\sum_{j\in Q_0} \mathsf{C}_{i,j} v_j   \geq 0 ~,  
\end{equation}
then it is necessary to add $d_i$ additional left-handed symplectic 3d fermions in a fundamental or antifundamental representation of the gauge group $\SU(v_i)$, which contribute $\frac{1}{2} d_i {\rm Tr} (F^2)$ to the gauge anomaly polynomial. In this case, the quiver locally resembles that depicted in \cref{fig:general quiver node for anomaly2} around the gauge node $i\in Q_0$. One could think of the framing vector space being promoted from $W_i= \mathbb C^{w_i}$ to $\mathbb C^{w_i|d_i}$. 

\begin{figure}[ht!]
    \centering
    \begin{tikzpicture}[scale=.75, every node/.style={scale=0.8}]  
        \draw[] (-3.5,2.7) circle (.7);
        \draw[] (-6,2.7) circle (.7);
        \draw[] (-3.5,0) circle (.7);
        \draw[loosely dashed] (-2.8,0) -- (-1.7,0);
        \draw[] (-3.5,.7) -- (-4.8,2);
        \draw[] (-3.5,-.7) -- (-3.5,-2);
        \draw[] (-3.5,.7) -- (-6,2);
        \draw[] (-3.5,.7) -- (-3.5,2);
        \draw[] (-4.15,-2) rectangle (-2.85,-3.3);
        \draw[loosely dashed] (-1.7,-.7) rectangle (-.3,.7);        
        \node at (-3.5,0) {$v_{i}$};
        \node at (-1,0) {$d_i$};
        \node at (-3.5,-2.65) {$w_{i}$};
        \node at (-4.8,2.65) {$\cdots$};
        \node at (-3.5,2.65) {$v_{k}$};
        \node at (-6,2.65) {$v_{j}$};       
        \node[circle,minimum size=1.5cm] (vi) at (-3.5,0){};
        \draw[] (vi) to [in=160, out=200, looseness=5](vi);
        \draw[] (vi) to [in=150, out=210, looseness=8](vi);
        \draw[dotted, thick] (-5.5,0) -- (-5.1,0);
    \end{tikzpicture}
    \caption{Local neighborhood of a general quiver around the $i^{\rm th}$ vertex. The $\SU$ anomaly cancellation requires introduction of boundary Fermi multiplets, which are represented as dotted arrows connecting to the newly-added framing node $\SU(d_i)$.}
    \label{fig:general quiver node for anomaly2}
\end{figure}
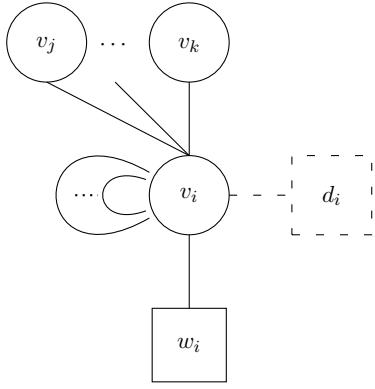

The end result of our analysis is that we need to include 
\begin{itemize}
         \item 
            a collection of boundary Fermi multiplets $\CV_{\textrm{Ff}}$ satisfying the OPE relations
            \begin{gather}\label{def:Ff}
            \begin{split}
                & \qquad (\F_{i})^{\alpha}{}_{\beta} (z) \, (\Psi_{j})^{\gamma}{}_{\delta} (w) \sim \frac{\delta^{\alpha}_{\delta}\delta^{\gamma}_{\beta}\delta_{i,j}}{z-w} ~,
            \end{split}
            \end{gather}
            (All other OPEs are non-singular.)
\end{itemize}

\paragraph{Analysis of global symmetry currents}
We can arrive at the same conclusion by 
directly analyzing the global symmetry currents.

We have explained previously that we need to identify the current algebra symmetry originating from the gauge symmetry of the 3d $\mathcal{N}=4$ theory we started with.
For a quiver gauge theory,
we have the gauge group $G=\prod_{i\in Q_0} G_i$, and 
this implies that
there exists an affine $\mathfrak{g}_i$ current $J_{\mathfrak{g}_i}(z)$ of the $\beta\gamma$ system associated to the gauge node $v_i$ for every $i\in Q_0$.
At the $i^{\rm th}$ gauge node, the generic form of which is depicted in \cref{fig:general quiver node for anomaly2}, 
there is an affine $\mathfrak{u}(v_i)$ current of the form (cf.\ \cite{Costello:2018fnz}) 

\begin{align}
\label{def:unBRSTcurrentViShifted_general}
    (J_{\mathfrak{u}(v_i)})^\alpha{}_\beta &= 
    -\sum_{\mathsf{h}(a)=i} (X_{a})^\alpha{}_\gamma (Y_{a})^{\gamma}{}_\beta
    +\sum_{\mathsf{t}(a)=i} 
    (Y_{a})^\alpha{}_\gamma(X_{a})^{\gamma}{}_\beta-(I_{i})^\alpha{}_\gamma(J_{i})^{\gamma}{}_\beta\nonumber\\ 
    & ~~~~ + (\F_{i})^\alpha{}_\gamma(\Psi_{i})^{\gamma}{}_\beta  \;.
\end{align}
In view of comparison between \eqref{def:unBRSTcurrentViShifted_general} with \eqref{moment_map_general}, we shall call $J_{\mathfrak{u}(v_i)}$ the \textit{chiral moment map} associated to the $i^{\rm th}$ node, denoted by $\mu_{\mathrm{ch},i}$.

The $\mathfrak{su}(v_i)$ part of the current \eqref{def:unBRSTcurrentViShifted_general} is given by 
\begin{align}
    (J_{ \mathfrak{su}(v_i)})^\alpha{}_\beta 
    &= (J_{ \mathfrak{u}(v_i)})^\alpha{}_\beta 
    - \frac{\delta^\alpha_\beta}{v_i} (J_{ \mathfrak{u}(v_i)})^\gamma{}_\gamma ~. 
\end{align}
These $\mathfrak{su}(v_i)$ currents are akin to the BRST 
current $J^\alpha_{\CVs}$ in \eqref{def:BRSTcomplex}, where the level of the corresponding $\bc$ ghost system current is $2 \mathsf{h}^\vee_{\mathfrak{su}(v_i)}=2v_i$. Thus the 
level of the affine $\mathfrak{su}(v_i)$ currents $J_{\mathfrak{su}(v_i)}$ must be $-2 \mathsf{h}^\vee_{\mathfrak{su}(v_i)}=-2v_i$ in order for these to be anomaly free and to define a nilponent charge in equation \eqref{def:BRSTcurrent}. We can see this is indeed the case from the OPE  
\begin{align} 
    & (J_{ \mathfrak{su}(v_i)})^\alpha{}_\beta \, (z) (J_{ \mathfrak{su}(v_i)})^\gamma{}_\delta \, (w) \nonumber\\ 
    & \qquad \qquad \qquad 
    \sim \frac{\delta^\alpha_\delta \delta^\gamma_\beta (-\sum_{j\in Q_0} \mathsf{Q}_{i,j} v_{j}-w_i + d_i)}{(z-w)^2} + \frac{\delta^\gamma_\beta (J_{ \mathfrak{su}(v_i)})^\alpha{}_\delta-\delta^\alpha_\delta (J_{ \mathfrak{su}(v_i)})^\gamma{}_\beta}{z-w} \;.
\end{align}
The level of the $\mathfrak{su}(v_i)$ current algebra is then 
$-\sum_{j\in Q_0} \mathsf{Q}_{i,j} v_{j}-w_i + d_i$,
and this coincides with $-2 v_i$ if and only if the anomaly cancellation condition \eqref{def_d} is satisfied.

\subsubsection{$\U(1)$ anomalies}

The $\mathfrak{u}(v_i)$ current \eqref{def:unBRSTcurrentViShifted_general} also has a diagonal $\mathfrak{u}(1)$ component 
\beq
    J^{(i)}_{ \mathfrak{u}(1)}
    = \frac{1}{v_i}(J_{ \mathfrak{u}(v_i)})^\gamma{}_\gamma \;, 
\label{J_U1}
\eeq
which represents its trace part and which is anomalous, as is evident from the presence of singular terms in the OPE  
\beq
    J^{(i)}_{ \mathfrak{u}(1)}(z) \, J^{(j)}_{ \mathfrak{u}(1)} (w) \sim \frac{-\mathsf{C}_{ij}}{(z-w)^2} ~,
\eeq
where the existence of the second-order pole represents the existence of the $\U(1)$ anomalies.

While there is far from a unique choice to cancel the anomalies, 
the exists a minimal choice to cancel the 
anomalies: 
\begin{itemize}
    \item 
        We add 
        $\U(1)^{Q_0}$ current algebras 
        satisfying the commutation relation
        \beq
        h_i(z) \, h_j(w) \sim \frac{\mathsf{C}_{ij}}{(z-w)^2} ~,
        \eeq
        (All other OPEs are non-singular.)
\end{itemize}

We can then modify the $\mathfrak{u}(1)$ current 
 \eqref{J_U1} to be
\beq
    J^{(i)}_{\mathfrak{u}(1)}=\frac{1}{v_i}(J_{ \mathfrak{u}(v_i)})^\gamma{}_\gamma -  h_i \;.
\eeq
We can then easily verify that this current
has a regular OPE with itself for all $i\in Q_0$.

\subsubsection{Definition of \texorpdfstring{$\CV(Q,\mathbf v,\mathbf w)$}{V(Q,v,w)}}

Let us summarize the ingredients. Starting from a quiver $Q$ with gauge and framing dimensions $\mathbf v,\mathbf w$, we associate 
\begin{itemize}
    \item $\beta\gamma$ systems $(X_a,Y_a)_{a\in Q_1}$ and $(I_i,J_i)_{i\in Q_0}$,
    \item pairs of free fermions $(\F_i,\Psi_i)_{i\in Q_0}$,
    \item Heisenberg algebra $\mathcal H=(h_i)_{i\in Q_0}$.
\end{itemize}
Let us denote the tensor product of the above trio by $\CVs(Q,\mathbf v,\mathbf w)$. We call 
\begin{align}\label{eq:chiral extended moment map}
    \widetilde\mu_{\mathrm{ch},i}:=J_{\mathfrak{u}(v_i)}-h_i\cdot\mathrm{id}
\end{align}
the \textit{chiral extended moment map} associated to the $i^{\rm th}$ node, which is a chiralization of extended moment map \eqref{extended moment map}, in the sense that  $\widetilde\mu_{\mathrm{ch}}$ gives rise to a vertex algebra map
\begin{align*}
    \widetilde\mu_{\mathrm{ch}}:V^{-\kappa_{\mathfrak g}}(\mathfrak g)\to \CV(Q,\mathbf v,\mathbf w) \;,
    \quad 
    (E_i)^\alpha{}_\beta\mapsto (\widetilde\mu_{\mathrm{ch},i})^\alpha{}_\beta \;,
\end{align*}
where $\mathfrak g=\prod_{i\in Q_0}\mathfrak{gl}(v_i)$ and $\kappa_{\mathfrak g}$ is the Killing form on $\mathfrak g$. Then we can gauge the $\widehat{\mathfrak{g}}$-symmetry, using the BRST reduction described in \cref{subsec:BRST reduction}, and define
\begin{align}\label{def:VA of quiver}
    \CV^\bullet(Q,\mathbf v,\mathbf w):=H^{\infty/2+\bullet}_{\mathrm{BRST}}(\mathfrak{g},\CVs(Q,\mathbf v,\mathbf w))\;
    \quad
    \text{and } ~ \CV(Q,\mathbf v,\mathbf w):=\CV^0(Q,\mathbf v,\mathbf w) \;,
\end{align}
from the semi-infinite cohomology $H^{\infty/2+\bullet}_{\mathrm{BRST}}(\mathfrak{g},\CVs(Q,\mathbf v,\mathbf w))$ defined by the BRST complex.

\bigskip We now make two remarks. First, the presence of the current algebra is required irrespective of the choice of the quiver. Second, when the Cartan matrix $\mathsf{C}$ has a non-trivial kernel (i.e.\ $\mathsf{C}$ is not invertible), then a linear combination of the $h$'s have a non-singular OPE with all other fields and hence decouples from the rest of the VA; in this case it more natural to decouple these fields to generate a smaller VA. The exception is the case where $\mathsf{C}$ is invertible, 
for example when the quiver is of the ADE type.

\subsection{Conformal vector}
\label{sec:conformal_vector}

Let us discuss a conformal vector of the theory so that we promote the VA into the VOA. 

Since we already know the stress-energy operator for each of the building blocks, we can immediately write down
the total stress-energy tensor to be
\begin{equation} 
    \label{def:generalTtensor_general}
    T=T_{\Sb}+T_{\Ff}+T_{h}+T_{\mathsf{bc}} 
    ~ , 
\end{equation} 
where $T_{\Sb}$, $T_{\Ff}$, $T_{h}$ and $T_{\mathsf{bc}}$ are defined by 
\begin{align}
    T_{\Sb}&:= \frac{1}{2}\sum_{a\in Q_1} \mathrm{Tr}(X_{(a)}\partial Y_{(a)}-\partial X_{(a)} Y_{(a)})
                  +\frac{1}{2}\sum_{i\in Q_0} \mathrm{Tr}(I_{i}\partial J_{i}-\partial I_{i} J_{i})~,
    \\
    T_{\Ff}&:=-\frac{1}{2}\sum_{i\in Q_0} \mathrm{Tr}(\F_{i}\partial \Psi_{i}-\partial \F_{i} \Psi_{i})~, 
    \\ 
    T_{h}&:=\frac{1}{2}\sum_{i,j\in Q_0} (\mathsf{C}^{-1})_{i,j} h_{i} h_{j} ~,
    \\
    T_{\mathsf{bc}}&:=-\sum_{i\in Q_0} \mathrm{Tr}(\mathsf b_{i}  \partial \mathsf c_{i} 
    ) ~,
\end{align}
where we assume for now that $\mathsf{C}$ is invertible. 

In order to verify that the BRST reduction defines a VOA, we need to ensure that the BRST operator is consistent with the stress-energy tensor of the VOA $\CVs \otimes \mathcal{V}_{\mathsf{bc}}$.
To see this, first note that the $\mathfrak g$-current  $J_{\CVs}^{\alpha} (z)$ is a primary field of weight 1, having OPE with the stress-energy tensor $T_{\CVs}(w)$ of $\CVs$:
\begin{align}\label{eq:JT=J}
   T_{\CVs}(z) J_{\CVs}^{\alpha}(w)\sim \frac{J_{\CVs}^{\alpha}(w)}{(z-w)^2} + \frac{\partial J_{\CVs}^{\alpha}(w)}{z-w}~.
\end{align} 
The total stress-energy tensor $T(z)=T_{\CVs}(z)+T_{\mathsf{bc}}(z)=T_{\CVs}(z)-\mathsf b^{\alpha}\partial \mathsf c_{\alpha}(z)$ of the product VOA $\mathcal \CVs\otimes \mathcal V_{\bc}$ is invariant under the BRST supercharge $\QBRST$, as can be seen by computing the OPE 
\begin{equation}
\begin{split}
     T(z) \JBRST(w) \sim \frac{c_{\alpha}J_{\CVs}^{\alpha}(w)}{(z-w)^2}+\frac{f^{\alpha\beta}{}_{\gamma}\mathsf b^{\gamma}\mathsf c_{\alpha}\mathsf c_{\beta}(w)}{2(z-w)^2} \;.
\end{split}
\end{equation}
and hence
\begin{align} 
     [\QBRST, T(w) ] =\oint_{|z-w|=1} \frac{dz}{2\pi i} \JBRST(z)\, T(w) = 0~. 
\end{align}
The stress-energy tensor \eqref{def:generalTtensor_general} is hence BRST invariant. 

There are more BRST-closed operators associated with paths of the quiver diagram.
Explicit expressions of these operators will depend on the details of the quiver, and for simplicity of the presentation we will discuss them in \cref{sec:VOA_linear},
when we specialize our discussion to linear quivers.

The central charge of the stress-energy tensor \eqref{def:generalTtensor_general} is 
\begin{equation}\label{eqn: central charge}
\begin{split}
    c_{\CVs \otimes \mathcal{V}_{\mathsf{bc}}} 
    &=\underbrace{-\sum_{i\in Q_0} \mathsf{Q}_{i,i} v_i^2-\sum_{i\neq j} \mathsf{Q}_{i,j} v_i v_{j}-\sum_{i \in Q_0} v_iw_i}_{\Sb} +\underbrace{\sum_{i \in Q_0} v_id_i}_{\Ff} 
    +  \underbrace{\Qzero}_{h} 
     \underbrace{-2\sum_{i\in Q_0} v_i^2}_{\bc}
    \\
    &=|Q_0| -\frac{1}{2}\mathbf v\cdot \mathsf C\mathbf v-3\mathbf v\cdot \mathbf v~,
\end{split}
\end{equation}
where the equality on the second line follows from substituting the anomaly cancellation condition \eqref{def_d}.

\bigskip When the Cartan matrix $\mathsf C$ is not invertible, there is a central part inside $\CV(Q,\mathbf v,\mathbf w)$ strongly generated by $\{h\in \mathbb C^{Q_0}\:|\: \mathsf Ch=0\}$. Let us denote the vertex ideal of $\CVs(Q,\mathbf v,\mathbf w)$ strongly generated by the kernel of $\mathsf C$ by $\ker(\mathsf C)$. We define $\CVs(Q,\mathbf v,\mathbf w)_{\mathrm{red}}$ to be the quotient of $\CVs(Q,\mathbf v,\mathbf w)$ by the vertex ideal $\ker(\mathsf C)$, and subsequently define $\CV(Q,\mathbf v,\mathbf w)_{\mathrm{red}}$ to be the BRST reduction of $\CVs(Q,\mathbf v,\mathbf w)_{\mathrm{red}}$, i.e.
\begin{equation}
\begin{split}
    \CV^\bullet(Q,\mathbf v,\mathbf w)_{\mathrm{red}}&:=H^{\infty/2+\bullet}_{\mathrm{BRST}}(\mathfrak{g},\CVs(Q,\mathbf v,\mathbf w)_{\mathrm{red}})\;,\\
    \CV(Q,\mathbf v,\mathbf w)_{\mathrm{red}}&:=\CV^0(Q,\mathbf v,\mathbf w)_{\mathrm{red}} \;.
\end{split}
\end{equation}
Now $\CVs(Q,\mathbf v,\mathbf w)_{\mathrm{red}}$ admits a stress-energy tensor $T=T_{\Sb}+T_{\Ff}+T_{\overline{h}}+T_{\mathsf{bc}} $, where 
\begin{align}
    T_{\overline{h}}:=\frac{1}{2}\sum_{i,j} (\overline{\mathsf{C}}^{-1})_{i,j} \overline{h}_{i} \overline{h}_{j} ~.
\end{align}
Here $\overline{h}_{i}$ are chosen from a basis of $\mathbb C^{Q_0}/\ker(\mathsf C)$, and $\overline{\mathsf{C}}$ is the induced inner form on $\mathbb C^{Q_0}/\ker(\mathsf C)$. The same argument as above shows that $T$ is BRST closed, thus it defines a stress-energy operator of $\CV(Q,\mathbf v,\mathbf w)_{\mathrm{red}}$, and its central charge can be computed similarly:
\begin{align}
    c_{\CV_{\mathrm{red}}} 
    =\mathrm{rank}(\mathsf C) -\frac{1}{2}\mathbf v\cdot \mathsf C\mathbf v-3\mathbf v\cdot \mathbf v~.
\end{align}

\section{Boundary VOAs for linear quivers gauge theories}\label{sec:VOA_linear}

Having discussed the VOAs associated with general quiver gauge theories, let us next discuss the boundary VOAs of 
linear quiver gauge theories presented in \cref{sec:T_rho_sigma}.

\subsection{Anomaly cancellation on the boundary}\label{subsection:anomaly_linear}

Since we are now merely specializing the discussion from \cref{sec:VOA-quiver}, 
all the statements therein will be applicable here.

\paragraph{$\SU$ anomalies:}

For cancellation of $\SU$ anomalies it is necessary to add $d_i$ additional left-handed symplectic 2d fermions in a fundamental or antifundamental representation of the gauge group $\SU(v_i)$, where
$\{d_i\}$ are determined by the anomaly cancellation condition \eqref{def_d}:
\begin{equation}
   \label{balancing_condition}
   d_i :=v_{i-1}+v_{i+1} -2v_i +w_i  \in \mathbb{Z}_{\ge 0}^{\Qzero} ~.  
\end{equation}
In this case, the quiver locally resembles that depicted in \cref{fig:quiver node for anomaly} around the gauge node $i\in Q_0$. 

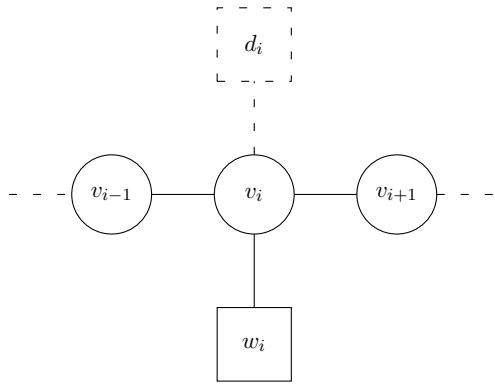
\begin{figure}[ht!]
    \centering
    \begin{tikzpicture}[scale=.75, every node/.style={scale=0.8}]  
        \draw[loosely dashed] (-7.8, 0) -- (-6.7, 0);
        \draw[] (-6,0) circle (.7);
        \draw[] (-5.3,0) -- (-4.2,0);
        \draw[] (-3.5,0) circle (.7);
        \draw[] (-2.8,0) -- (-1.7,0);
        \draw[] (-1,0) circle (.7);    
        \draw[loosely dashed] (-0.3, 0) -- (0.8, 0);
        \draw[] (-3.5,-.7) -- (-3.5,-2);
        \draw[loosely dashed] (-3.5,.7) -- (-3.5,2);
        \draw[loosely dashed] (-4.15,2) rectangle (-2.85,3.3);
        \draw[] (-4.15,-2) rectangle (-2.85,-3.3);
        \node at (-6,0) {$v_{i-1}$};
        \node at (-3.5,0) {$v_{i}$};
        \node at (-1,0) {$v_{i+1}$};
        \node at (-3.5,-2.65) {$w_{i}$};
        \node at (-3.5,2.65) {$d_{i}$};
    \end{tikzpicture}
    \caption{Local portion of an A-type quiver around the $i^{th}$ gauge node. The condition for $\SU(v_i)$ gauge anomaly cancellation is $2v_i-v_{i-1}-v_{i+1}-w_i + d_i = 0$. The dashed square node represents not a part of the quiver diagram of the bulk 3d theory, but rather boundary degrees of freedom which are introduced only to cancel the boundary gauge anomaly.}
    \label{fig:quiver node for anomaly}
\end{figure}

\paragraph{$\U(1)$ anomalies}
For linear quivers there exists a somewhat canonical way of
canceling $\U(1)$ anomalies---a collection of $\mathcal{N}=(0,2)$ Fermi multiplets with chiral free fermions $(\chi_i, \psi_i)$ and label $i=0,1, \ldots, \Qzero$ are added on the boundary to cancel the Abelian gauge anomaly $\prod_{i\in Q_0 }\U(1)_{i, \textrm{diag}}$. This generates the $\U(1)$ current algebra strongly generated by
\begin{align}\label{eq:u(1) currents from fermions}
    h_i = \normord{\chi_i \psi_i}-\normord{\chi_{i-1} \psi_{i-1}}.
\end{align}
It is straightforward to see that $h_i(z)h_j(w)\sim {\mathsf C_{i,j}}/{(z-w)^2}$, therefore the Heisenberg algebra $\{h_i\}_{i\in Q_0}$ cancels the $\U(1)$ anomaly as we discussed in \cref{subsection:anomaly}.

\subsection{VOA before BRST reduction}
\label{subsection:pre-reduction_linear}

Having discussed the anomaly cancellation, we can now write down the VA $\CVsdel \otimes \mathcal{V}_{\bc}$ before the BRST reduction:
\begin{itemize} 
    \item A collection of ${\Sb}$-systems
        \begin{itemize}
            \item $(I_{i}, J_{i})_{i\in Q_0}$ valued in fundamental or antifundamental representations of the 
            $i\in \{1,\ldots ,\Qzero\}$ flavor nodes  and  
            \item $(X_{i}, Y_{i})_{i\in \{1,\ldots ,\Qzero-1\}}$ valued in bifundamental representations of the gauge groups.  
        \end{itemize}
        These fields are associated to the scalar fields coming from the 3d $\mathcal{N}=4$ hypermultiplets valued in a symplectic representation of the gauge group $G$, which will be denoted by $\CV_{\beta\gamma}$.
        
    \item A set of free fermions $(\F_i,\Psi_i)$ for any gauge nodes with $d_i\neq 0$ for labels $i \in \{1,\ldots , \Qzero\}$, which will be denoted by $\CV_{\Ff}$.
    
    \item A set of chiral free fermions $(\chi_i, \psi_i)_{i=0,1,\ldots , \Qzero}$ introduced on the boundary to cancel the Abelian gauge anomaly, with pairs $(\chi_i, \psi_i)$ and $(\chi_{i-1}, \psi_{i-1})$ for each gauge node $i \in \{1,\ldots , \Qzero\}$. This system of free fermions will be denoted by $\CV_{\chi\psi}$.
    
    \item A $\bc$ ghost system of fermionic fields $(\mathsf{b}_{i}, \mathsf{c}_{i})_{i\in Q_0}$ which is needed to fix the gauge freedom. The conformal weights of $\mathsf{b}$ and $\mathsf{c}$ are chosen to be $1$ and $0$, respectively. The $\bc$ ghost system will be denoted $\CV_{\bc}$.
\end{itemize} 

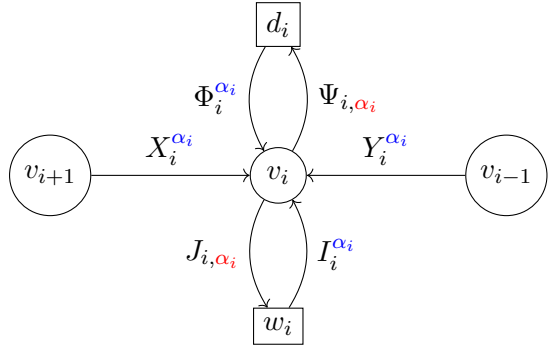
\begin{figure}[ht!]
    \centering
    \begin{tikzpicture}[vertex/.style={draw,circle,minimum size=4mm},fvertex/.style={draw,rectangle,minimum size=4mm}]
          \node[fvertex] (di) at (0,0) {$d_i$};
          \node[vertex] (giplus1) at (-3,-2) {$v_{i+1}$};
          \node[vertex] (gi) at (0,-2) {$v_i$};
          \node[vertex] (giminus1) at (3,-2) {$v_{i-1}$};
          \node[fvertex] (fi) at (0,-4) {$w_i$};
        %
          \draw[->] (giplus1) -- node[above]{$X_{i}^{\textcolor{blue}{\alpha_{i}}}$} (gi);
          \draw[->] (giminus1) -- node[above]{$Y_{i}^{\textcolor{blue}{\alpha_{i}}}$} (gi);
          \draw[->, bend right] (fi) to node[right]{$I_{i}^{\textcolor{blue}{\alpha_{i}}}$} (gi);
          \draw[->, bend right] (gi) to node[left]{$J_{i,\textcolor{red}{\alpha_{i}}}$} (fi);
          \draw[->, bend right] (di) to node[left]{$\F^{\textcolor{blue}{\alpha_{i}}}_{i}$} (gi);
          \draw[->, bend right] (gi) to node[right]{$\Psi_{i,\textcolor{red}{\alpha_{i}}}$} (di);
    \end{tikzpicture}
    \caption{Index convention at the gauge node $i$, where $\alpha_{i}$ is the gauge index for the gauge subgroup $G_{i}$. Other indices are suppressed.}
    \label{fig:convention}
\end{figure}

The convention we adopt is summarized pictorially in \cref{fig:convention}, which presents the information locally around the $i^{\rm th}$ gauge node of the quiver similarly to \cref{fig:quiver node for anomaly}. It shows how the symplectic bosons $X_{i}, Y_{i}, I_{i}$ and $J_{i}$ and the fermions $\F_{i}$ and $\Psi_{i}$ are associated to arrows on the quiver edges, together with the convention for their indices. For example $X_{i}$ is associated to the quiver edge directed from the $(i+1)^{\rm th}$ gauge node to the $i^{\rm th}$ gauge node, with the convention  
\begin{align} 
    \circled{$v_{i+1}$} \quad & \overrightarrow{ {}_{(X_{i})^{\a_{i}}{}_{\a_{i+1}}} } \quad \circled{$v_i$} \;, \nonumber
\end{align}
where $\a_{i}$ is an index for the gauge group $G_{i}$ and $\a_{i+1}$, an index for the gauge group $G_{i+1}$. 

The boundary field content defines a vertex operator algebra $\CVsdel\otimes \CV_{\bc}$ on $\partial M_3$ where
\begin{equation}
   \CVsdel := \mathcal{V}_{\Sb} \otimes \mathcal{V}_{\Ff} \otimes \mathcal{V}_{\cp}  ~. 
\end{equation}  
The collection of fields obeys the following nontrivial OPEs 
\begin{gather}
\begin{split}
    &  \qquad (X_{i})^{\alpha}{}_{\beta} (z) \, (Y_{j})^{\gamma}{}_{\delta} (w) \sim \frac{\delta^{\alpha}_{\delta}\delta^{\gamma}_{\beta}\delta_{i,j}}{z-w} ~, 
    \\
    & \qquad (I_{i})^{\alpha}{}_{\beta} (z) \, (J_{j})^{\gamma}{}_{\delta}d (w) \sim \frac{\delta^{\alpha}_{\delta}\delta^{\gamma}_{\beta}\delta_{i,j}}{z-w} ~,
    \\
    & \qquad (\F_{i})^{\alpha}{}_{\beta} (z) \, (\Psi_{j})^{\gamma}{}_{\delta} (w) \sim \frac{\delta^{\alpha}_{\delta}\delta^{\gamma}_{\beta}\delta_{i,j}}{z-w} ~,
    \\
    & \qquad (\mathsf b_{i})^{\alpha}{}_{\beta} \,(z)  \, (\mathsf c_{j})^{\gamma}{}_{\delta} \,(w) \sim \frac{\delta^{\alpha}_{\delta}\delta^{\gamma}_{\beta} \delta_{i,j} }{z-w} ~, \\
    & \qquad \chi_i (z) \, \psi_j (w) \sim \frac{\delta_{i,j}}{z-w}~,    
\end{split}
\end{gather}
while all other OPEs are trivial. 

The total stress-energy tensor is 
\begin{equation} 
    \label{def:generalTtensor}
    T=T_{\Sb}+T_{\Ff}+T_{\chi\psi}+T_{\mathsf{bc}} 
    ~ , 
\end{equation} 
where $T_{\Sb}$, $T_{\Ff}$, and $T_{\mathsf{bc}}$ are defined in the same way as in the \eqref{def:generalTtensor_general}, and $T_{\chi\psi}$ is defined by 
\begin{align}
    T_{\chi\psi}&:=-\frac{1}{2}\sum_{i=0}^{\Qzero} (\chi_{i}\partial \psi_{i}-\partial \chi_{i} \psi_{i})~, 
\end{align}
The central charge of the stress-energy tensor \eqref{def:generalTtensor} is 
\begin{equation}\label{eqn: central charge for linear quivers}
\begin{split}
    c_{ \CVs_{\partial}\otimes\CV_{\mathsf b\mathsf c}} 
    &=\underbrace{-\sum_{i=1}^{\Qzero-1} v_iv_{i+1}-\sum_{i=1}^{\Qzero} v_iw_i}_{\Sb} +\underbrace{\sum_{i=1}^{\Qzero} v_id_i}_{\Ff} 
    +  \underbrace{1+\Qzero}_{\chi \psi} 
     \underbrace{-2\sum_{i=1}^{\CI} v_i^2}_{\bc}
    \\
    &=1+\Qzero + \sum_{i=1}^{\Qzero-1} v_iv_{i+1}-4\sum_{i=1}^{\Qzero}v_i^2~,
\end{split}
\end{equation}
where the equality on the second line follows from substituting the anomaly cancellation condition \eqref{def_d}.

\subsection{BRST reduction}
\label{sec:BRST_operators}

Let us next perform a BRST reduction of the VOA $\CVsdel$. We notice that $\CVsdel$ contains the vertex subalgebra $\CVs:=\CV_{\beta\gamma}\otimes\CV_{\Ff}\otimes\mathcal H$ where $\mathcal H$ is the Heisenberg algebra strongly generated by $\{h_i\}_{i\in Q_0}$ in \eqref{eq:u(1) currents from fermions}, $\CVs$ being exactly the vertex algebra associated to the quiver $(Q,\mathbf v,\mathbf w)$ before BRST reduction defined in \cref{subsection:anomaly}. The stress-energy tensor $T$ in \eqref{def:generalTtensor} belongs to the vertex subalgebra $\CVs$. We also notice that the boson-fermion correspondence provides a decomposition of $\CVsdel$
\begin{align}
    \CVsdel=\bigoplus_{\lambda\in \mathbb Z^{\Qzero+1}}\CVs_{\partial,\lambda} ~,
\end{align}
into a lattice extension of the vertex subalgebra $\CVs_{\partial,0}$, where 
\begin{align}
    \CVs_{\partial,0}=\CVs\otimes J_{\chi\psi},\quad  J_{\chi\psi}=\sum_{i=0}^{\Qzero} \chi_i \psi_i ~.
\end{align}
Therefore we can perform BRST reduction of $\CVsdel$ using the chiral extended moment map \eqref{eq:chiral extended moment map}, and define
\begin{align}
    \CVdel^\bullet:=H^{\infty/2+\bullet}_{\mathrm{BRST}}(\mathfrak{g},\CVsdel)\;,
    \quad
    \text{and }\CVdel:=\CVdel^0 \;.
\end{align}
Then we have a decomposition of $\CVdel$
\begin{align}\label{lattice decomposition of boundary VOA}
    \CVdel=\bigoplus_{\lambda\in \mathbb Z^{\Qzero+1}}\CV_{\partial,\lambda} ~,\quad  \CV_{\partial,0}=\CV\otimes J_{\chi\psi}~,
\end{align}
into a lattice extension of the vertex subalgebra $\CV\otimes J_{\chi\psi}$, where $\CV=\CV(Q,\mathbf v,\mathbf w)$ is the vertex algebra associated to the quiver $(Q,\mathbf v,\mathbf w)$ defined in \eqref{def:VA of quiver}. The $\CV\otimes J_{\chi\psi}$-module $\CV_{\partial,\lambda}$ is
\begin{align}
    H^{\infty/2+0}_{\mathrm{BRST}}(\mathfrak{g},\CV_{\beta\gamma}\otimes\CV_{\Ff}\otimes\mathcal H_{\bar\lambda})\otimes J_{\chi\psi}||\lambda|\rangle,
\end{align}
where $\mathcal H_{\bar\lambda}$ is the module of $\mathcal H$ generated by $|\bar\lambda\rangle$ such that $h_{i,(0)}|\bar\lambda\rangle=(\lambda_i-\lambda_{i-1})|\bar\lambda\rangle$, and $||\lambda|\rangle$ generates a module of $J_{\chi\psi}$ such that $J_{\chi\psi,(0)}||\lambda|\rangle=\sum_{i=0}^{\Qzero}\lambda_i||\lambda|\rangle$.

\bigskip Let us discuss examples of bosonic operators that can be constructed for A-type quivers and which are invariant with respect to the action of BRST charges \eqref{def:BRSTcurrent} defined using the affine $\mathfrak{u}(v_i)$ currents $\widetilde\mu_{\mathrm{ch},i}$ \eqref{eq:chiral extended moment map}:

\begin{itemize}
    \item The stress energy tensor $T$ defined in equation \eqref{def:generalTtensor}~.
\end{itemize}

\begin{itemize}
    \item The $\U(1)$ current $J_{\chi\psi}$, which is constructed using  the chiral free fermions  
          \begin{equation}\label{def:U(1)chiralbilincurrent}
            J_{\chi\psi} = \sum_{i=0}^{\Qzero} \chi_i \psi_i ~,
          \end{equation}
\end{itemize}

This current is invariant with respect to the BRST currents \eqref{def:BRSTcurrent} constructed using the 
currents $\widetilde\mu_{\mathrm{ch},i}$ \eqref{eq:chiral extended moment map}, the former having only regular terms in the OPEs with the latter
\begin{equation}
   J_{\chi\psi}(z) \widetilde\mu_{\mathrm{ch},i}(w) \sim 0  ~. 
\end{equation} 

\begin{itemize}
    \item Currents associated to paths on the quiver diagram in \cref{fig:genericAquiver} which begin and end at framing nodes. For example, one can consider operators
    \begin{align}\label{G_path_def}
    \begin{split}
     G_{(i,j)} &: =   J_j X X \dots X I_i \quad (i>j) \;, \\
     G_{(i,j)} &:=   J_j Y Y \dots Y I_i \quad (i<j)\;,
    \end{split}
    \end{align}
    which are represented by a path on the quiver diagram as in \cref{fig:example path}.

\begin{figure}[htbp]
    \centering
\[\begin{tikzcd}
	{\circled{$v_{\Qzero}$}} & \cdots & {\circled{$v_3$}} && {\circled{$v_2$}} && {\circled{$v_1$}} 
	\\
	{\rectangled{$w_{\Qzero}$}} & \cdots & {\rectangled{$w_3$}} && {\rectangled{$w_2$}} && {\rectangled{$w_1$}}
	\arrow["{I^{i}_{1,a}}"', from=2-7, to=1-7]
	\arrow["{Y_{1,i}^{j}}"', from=1-7, to=1-5]
	\arrow["{Y_{2,j}^{k}}"', from=1-5, to=1-3]
	\arrow["{J_{3,k}^{b}}"', from=1-3, to=2-3]
	\arrow[dotted, no head, from=2-5, to=1-5]
	\arrow[dotted, no head, from=2-1, to=1-1]
\end{tikzcd}\]
    \caption{The operator $G_{(3,1)}=J_{3}Y_{2}Y_{1} I_{1}$}
    \label{fig:example path}
\end{figure}
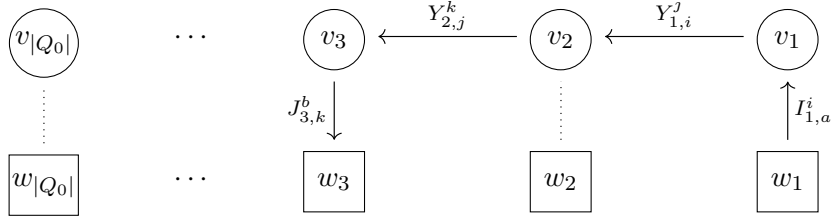
  
The operators \eqref{G_path_def} are primary fields of dimension $(|i-j|+2)/2$, and moreover are BRST exact.

\end{itemize}

To explain the BRST exactness, 
let us for example compute the OPE of 
$\widetilde\mu_{\mathrm{ch},2}$ with the operator
$G_{(3,1)}=J_{3}Y_{2}Y_{1} I_{1}$ of \cref{fig:example path}. Most of the terms in 
the definition of the current \eqref{eq:chiral extended moment map} have trivial OPEs with $G_{(3,1)}$,
with the exception of 
\begin{align}
    (\widetilde\mu_{\mathrm{ch},2})^{\alpha}{}_{\beta} \supset (X_{2})^{\alpha}{}_{\gamma} (Y_{2})^{\gamma}{}_{\beta} - (Y_{2})^{\alpha}{}_{\gamma} (X_{1})^{\gamma}{}_{\beta} \;.
\end{align}
We can compute
\begin{align}
\begin{split}
     [(X_{2})^{\alpha}{}_{\gamma} (Y_{2})^{\gamma}{}_{\beta}] (z) G_{(3,1)} (w) 
     &\sim  \wick{
     \c (X_{2})^{\alpha}{}_{\gamma} (Y_{2})^{\gamma}{}_{\beta} [J_{3} \c Y_{2} Y_{1} I_{1}](w)}\\
     &\sim 
     \frac{1}{z-w} (J_{3} Y_{2})_{\beta}(Y_{1} I_{1})^{\alpha} \;.
 \end{split}
\end{align}
Similarly, we can compute 
\begin{align}
\begin{split}
     [(Y_{1})^{\alpha}{}_{\gamma} (X_{1})^{\gamma}{}_{\beta}] (z) G_{(3,1)} (w) 
     \sim
     \frac{1}{z-w} (J_{3} Y_{2})_{\beta}(Y_{1} I_{1})^{\alpha} \;.
 \end{split}
\end{align}
and hence $(\widetilde\mu_{\mathrm{ch},2})^{\alpha}{}_{\beta} G_{(3,1)} \sim 0$. Since the $\mathsf{bc}$ ghosts have trivial OPE with $G_{(3,1)}$, we have OPE $\JBRST (z) G_{(3,1)}(w)\sim 0$ and this proves the BRST invariance of $G_{(3,1)}$. 

The proof for a more general $G_{(i,j)}$ operator \eqref{G_path_def} works in essentially the same way, since the discussion is localized in the neighborhood of the quiver diagram.

We can consider a more general path. The simplest possibility is the path in 
\cref{fig:simplests dim1}. In this case, 
the naive operator
$G_{(i,i)} \overset{?}{=} J_{i} I_{i}$
is not BRST invariant due to the 
presence of a double pole in the
OPE of $G_{(i,i)}$ and $\textrm{Tr}(I_{i} \partial J_{i} - \partial I_{i} J_{i}) \subset T$. This problem, however, can be dealt with by introducing an extra term involving fermions: 
\begin{align} \label{Gii}
    (G_{(i,i)})^a{}_b = (J_i I_i)^a{}_b + \left( \chi_i \psi_i + \chi_{i+1} \psi_{i+1} + \chi_{i+2} \psi_{i+2} + \cdots \right)\delta^a{}_b \;,
\end{align} 
    
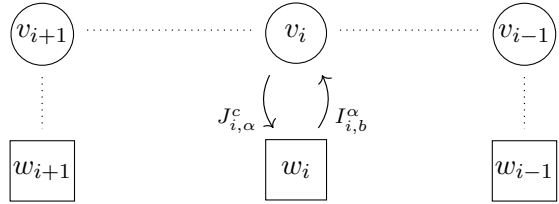
\begin{figure}[htbp]
\centering
   \[\begin{tikzcd}
	{\circled{$v_{i+1}$}} & {} & {\circled{$v_i$}} && {\circled{$v_{i-1}$}} \\
	{\rectangled{$w_{i+1}$}} && {\rectangled{$w_i$}} && {\rectangled{$w_{i-1}$}}
	\arrow["I_{i,b}^{\a}"', bend right, from=2-3, to=1-3] 
   \arrow["{J_{i,\a}^{c}}"', bend right, from=1-3, to=2-3]   
	\arrow[dotted, no head, from=1-5, to=2-5]
	\arrow[dotted, no head, from=1-1, to=2-1]
	\arrow[dotted, no head, from=1-3, to=1-5]
	\arrow[dotted, no head, from=1-1, to=1-3]
   \end{tikzcd}\]
   \caption{The bosonic generators $G_{(i,i),b}^{c}$ at framing node $w_i$.}
   \label{fig:simplests dim1}
\end{figure}

More generally, one 
considers operators which are schematically of the form
\begin{align} \label{J_Phi_I}
     G \sim   J_j \mathcal O_1 \mathcal O_2 \dots \mathcal O_m I_i \;, \quad \mathcal O_i  = X \textrm{ or }Y \;.
\end{align}
The expression \eqref{J_Phi_I} is not meant to be a precise expression in general, and as we will see we 
might need to add extra terms to make the operator BRST-invariant, similarly to those in equation \eqref{Gii}.
On the quiver diagram the operator \eqref{J_Phi_I} represents a path starting from one of the framing nodes $i$, going around the quiver, and then
ending at another framing node $j$. 

Readers will have noticed that the discussion of the BRST-closed operators in this subsection is reminiscent of the 
generators of the coordinate ring of the Higgs branch in \cref{subsec:Higgs}.
This is not a coincidence, and we will see throughout the rest of this paper the close parallel between the geometry of the Higgs branch and the its associated VOA: the VOA is a ``chiralization'' of the Higgs branch. We will in particular formalize in \cref{sec:variety} this correspondence in the language of the associated variety. 

We have to this point listed examples of BRST-invariant operators
which will survive after the BRST quotient and hence will define non-trivial operators in $\CVdel$. However, it turns out that the generators \eqref{J_Phi_I} still contain redundancies and cannot be used as a subset for a strong generator of VOA. Here a strong generator of a VOA is a set of operators generating the VOA, such that none of the generators appear in the non-singular terms of the OPEs of other generators. In other words, we do not include composite operators in the set of generators. We will come back to this issue in \cref{subsec:conjectures}. 

The connections between BRST-invariant operators and  paths on the quiver
also sheds light on the OPE between operators. Defining for example an operator $G_{(m\leftarrow j)}$ associated to a path on the quiver diagram in \cref{fig:pathsquiverexample} starting at the framing node $j$ and ending at the framing node $m$, and an operator $G_{(l \rightarrow i)}$ associated to a path starting at the framing node $l$ and ending at the framing node $i$
\begin{equation}
    G_{(m \leftarrow j)} = J_{m} Y_{l} Y_{k} Y_{j} I_{j} ~,
    \qquad
    G_{(l \rightarrow i)} = J_{i} X_{i} X_{j} X_{k} I_{l} ~,
\end{equation}
their OPE takes the form 
\begin{equation} \label{GGG}
    G_{(m \leftarrow j)}(z) G_{(l \rightarrow i)}(w) \sim \frac{ v_k G_{(m \leftarrow l)} G_{(j \rightarrow i)} (w)}{(z-w)^2} + \frac{G_{(j \rightarrow i)}\mathcal{O}_1(w)   +   G_{(m \leftarrow l)}\mathcal{O}_2 (w)}{(z-w)} ~,
\end{equation}
for some operators $\mathcal{O}_1$ and $\mathcal{O}_2$.
At the level of the paths on the quiver, this operation removes pairs of juxtaposed segments in \cref{fig:pathsquiverexample} with opposite orientation, each such pair producing a singular contribution $(z-w)^{-1}$ and operators associated to the paths formed by the remaining segments in \cref{fig:pathsquiverexample2}.  

\begin{figure}
\centering
    \begin{tikzpicture}

    \node (leftfigure) at (0,0){
        \begin{tikzpicture}[scale=.6, every node/.style={scale=0.8}];
        \node(-3) at (-7.5,0){};
        \node(-2) at (-5,0){$\circled{$m$}$};
        \node(-1) at (-2.5,0){$\circled{$l$}$};
        \node(0) at (0,0){$\circled{$k$}$};
        \node(1) at (2.5,0){$\circled{$j$}$};
        \node(2) at (5,0){$\circled{$i$}$};
        \node(3) at (7.5,0){};
        \node(w-2) at (-5,-2.5){$\rectangled{}$};
        \node(w-1) at (-2.5,-2.5){$\rectangled{}$};
        \node(w0) at (0,-2.5){$\rectangled{}$};
        \node(w1) at (2.5,-2.5){$\rectangled{}$};
        \node(w2) at (5,-2.5){$\rectangled{}$};
        \draw (-3) -- (-2);
        \draw (-2) -- (-1);
        \draw (-1) -- (0);
        \draw (0) -- (1);
        \draw (1) -- (2);
        \draw (2) -- (3);
        
        \draw (w-2) --(-2);
        \draw (w-1) --(-1);
        \draw (w0) --(0);
        \draw (w1) --(1);
        \draw (w2) --(2);

        \draw[->,red,thick] (2.0,-2) -- node [left] {$I_{i}$}(2.0, 0.8);
        \draw[->,red,thick] (2.0, 0.8) -- node [below] {$Y_{j}$}(-0.5, 0.8);
        \draw[->,red,thick] (-0.5, 0.8) -- node [below] {$Y_{k}$}(-3.0, 0.8);
        \draw[->,red,thick] (-3.0, 0.8) -- node [below] {$Y_{l}$}(-5.5, 0.8);
        \draw[<-,red,thick] (-5.5,-2) -- node [left] {$J_{m}$}(-5.5, 0.8);

        \draw[->,blue,thick] (-3.0,-2) -- node [left] {$I_{l}$}(-3.0, 1.2);
        \draw[->,blue,thick] (-3.0, 1.2) -- node [above] {$X_{k}$}(-0.5, 1.2);
        \draw[->,blue,thick] (-0.5, 1.2) -- node [above] {$X_{j}$}(2.0, 1.2);
        \draw[->,blue,thick] (2.0, 1.2) -- node [above] {$X_{i}$}(4.5, 1.2);
        \draw[->,blue,thick] (4.5,1.2) -- node [left] {$J_{i}$}(4.5, -2);
        \end{tikzpicture}
    };
    \end{tikzpicture}
    \caption{We consider the OPE of two BRST-invariant operators, one represented by a red path and another by a blue path; see the LHS of \eqref{GGG}.}
    \label{fig:pathsquiverexample}
\end{figure}
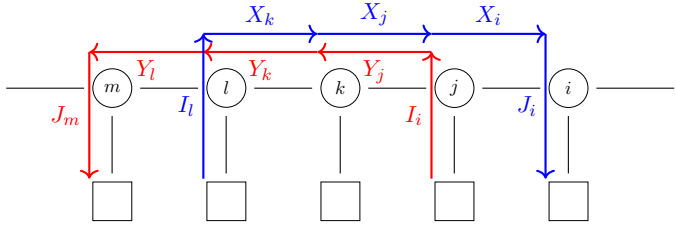

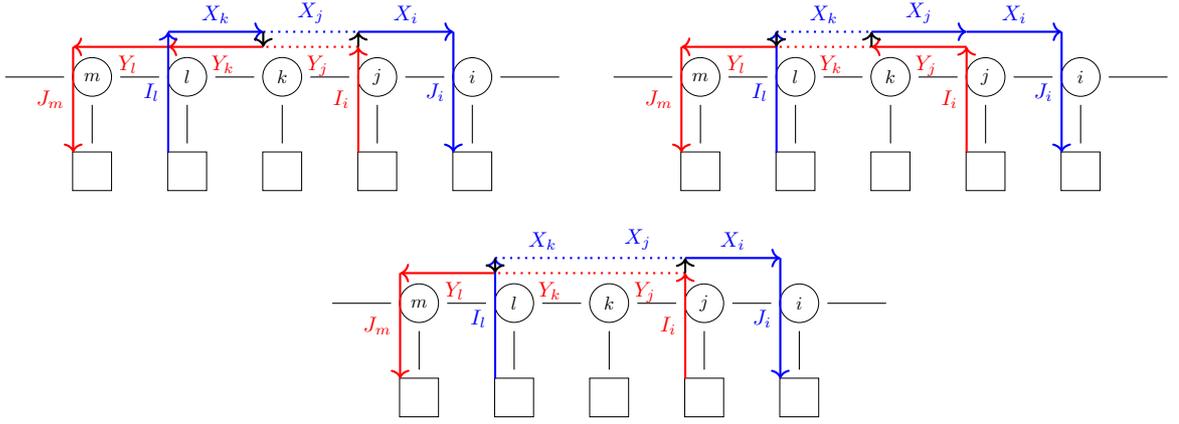
\begin{figure}
\centering
    \begin{tikzpicture}

    \node (leftfigure) at (-0.5,0){
        \begin{tikzpicture}[scale=.5, every node/.style={scale=0.8}];
        \node(-3) at (-7.5,0){};
        \node(-2) at (-5,0){$\circled{$m$}$};
        \node(-1) at (-2.5,0){$\circled{$l$}$};
        \node(0) at (0,0){$\circled{$k$}$};
        \node(1) at (2.5,0){$\circled{$j$}$};
        \node(2) at (5,0){$\circled{$i$}$};
        \node(3) at (7.5,0){};
        \node(w-2) at (-5,-2.5){$\rectangled{}$};
        \node(w-1) at (-2.5,-2.5){$\rectangled{}$};
        \node(w0) at (0,-2.5){$\rectangled{}$};
        \node(w1) at (2.5,-2.5){$\rectangled{}$};
        \node(w2) at (5,-2.5){$\rectangled{}$};
        \draw (-3) -- (-2);
        \draw (-2) -- (-1);
        \draw (-1) -- (0);
        \draw (0) -- (1);
        \draw (1) -- (2);
        \draw (2) -- (3);
        
        \draw (w-2) --(-2);
        \draw (w-1) --(-1);
        \draw (w0) --(0);
        \draw (w1) --(1);
        \draw (w2) --(2);

        \draw[->,red,thick] (2.0,-2) -- node [left] {{\small  $I_{i}$}}(2.0, 0.8);
        \draw[dotted,red,thick] (2.0, 0.8) -- node [below] {{\small ~ $Y_{j}$}}(-0.5, 0.8);
        \draw[->,red,thick] (-0.5, 0.8) -- node [below] {{\small ~ $Y_{k}$}}(-3.0, 0.8);
        \draw[->,red,thick] (-3.0, 0.8) -- node [below] {{\small ~ $Y_{l}$}}(-5.5, 0.8);
        \draw[<-,red,thick] (-5.5,-2) -- node [left] {{\small $J_{m}$}}(-5.5, 0.8);

        \draw[->,blue,thick] (-3.0,-2) -- node [left] {{\small $I_{l}$}}(-3.0, 1.2);
        \draw[->,blue,thick] (-3.0, 1.2) -- node [above] {{\small $X_{k}$}}(-0.5, 1.2);
        \draw[dotted,blue,thick] (-0.5, 1.2) -- node [above] {{\small $X_{j}$}}(2.0, 1.2);
        \draw[->,blue,thick] (2.0, 1.2) -- node [above] {{\small $X_{i}$}}(4.5, 1.2);
        \draw[->,blue,thick] (4.5,1.2) -- node [left] {{\small $J_{i}$}}(4.5, -2);

        \draw[->,thick] (-0.5, 1.2) -- (-0.5, 0.8);
        \draw[<-,thick] (2.0, 1.2) -- (2.0, 0.8);
            
        \end{tikzpicture}
    };

    \node (rightfigure) at (7.5,0){
        \begin{tikzpicture}[scale=.5, every node/.style={scale=0.8}];
        \node(-3) at (-7.5,0){};
        \node(-2) at (-5,0){$\circled{$m$}$};
        \node(-1) at (-2.5,0){$\circled{$l$}$};
        \node(0) at (0,0){$\circled{$k$}$};
        \node(1) at (2.5,0){$\circled{$j$}$};
        \node(2) at (5,0){$\circled{$i$}$};
        \node(3) at (7.5,0){};
        \node(w-2) at (-5,-2.5){$\rectangled{}$};
        \node(w-1) at (-2.5,-2.5){$\rectangled{}$};
        \node(w0) at (0,-2.5){$\rectangled{}$};
        \node(w1) at (2.5,-2.5){$\rectangled{}$};
        \node(w2) at (5,-2.5){$\rectangled{}$};
        \draw (-3) -- (-2);
        \draw (-2) -- (-1);
        \draw (-1) -- (0);
        \draw (0) -- (1);
        \draw (1) -- (2);
        \draw (2) -- (3);
        
        \draw (w-2) --(-2);
        \draw (w-1) --(-1);
        \draw (w0) --(0);
        \draw (w1) --(1);
        \draw (w2) --(2);

        \draw[->,red,thick] (2.0,-2) -- node [left] {{\small $I_{i}$}}(2.0, 0.8);
        \draw[->,red,thick] (2.0, 0.8) -- node [below] {{\small ~ $Y_{j}$}}(-0.5, 0.8);
        \draw[dotted,red,thick] (-0.5, 0.8) -- node [below] {{\small ~ $Y_{k}$}}(-3.0, 0.8);
        \draw[->,red,thick] (-3.0, 0.8) -- node [below] {{\small ~ $Y_{l}$}}(-5.5, 0.8);
        \draw[<-,red,thick] (-5.5,-2) -- node [left] {{\small $J_{m}$}}(-5.5, 0.8);

        \draw[->,blue,thick] (-3.0,-2) -- node [left] {{\small $I_{l}$}}(-3.0, 1.2);
        \draw[dotted,blue,thick] (-3.0, 1.2) -- node [above] {{\small $X_{k}$}}(-0.5, 1.2);
        \draw[->,blue,thick] (-0.5, 1.2) -- node [above] {{\small $X_{j}$}}(2.0, 1.2);
        \draw[->,blue,thick] (2.0, 1.2) -- node [above] {{\small $X_{i}$}}(4.5, 1.2);
        \draw[->,blue,thick] (4.5,1.2) -- node [left] {{\small $J_{i}$}}(4.5, -2);

        \draw[<-,thick] (-0.5, 1.2) -- (-0.5, 0.8);
        \draw[->,thick] (-3.0, 1.2) -- (-3.0, 0.8);
        \end{tikzpicture}
    };

    \node (bottomfigure) at (3.8,-3.0){
        \begin{tikzpicture}[scale=.5, every node/.style={scale=0.8}];
        \node(-3) at (-7.5,0){};
        \node(-2) at (-5,0){$\circled{$m$}$};
        \node(-1) at (-2.5,0){$\circled{$l$}$};
        \node(0) at (0,0){$\circled{$k$}$};
        \node(1) at (2.5,0){$\circled{$j$}$};
        \node(2) at (5,0){$\circled{$i$}$};
        \node(3) at (7.5,0){};
        \node(w-2) at (-5,-2.5){$\rectangled{}$};
        \node(w-1) at (-2.5,-2.5){$\rectangled{}$};
        \node(w0) at (0,-2.5){$\rectangled{}$};
        \node(w1) at (2.5,-2.5){$\rectangled{}$};
        \node(w2) at (5,-2.5){$\rectangled{}$};
        \draw (-3) -- (-2);
        \draw (-2) -- (-1);
        \draw (-1) -- (0);
        \draw (0) -- (1);
        \draw (1) -- (2);
        \draw (2) -- (3);
        
        \draw (w-2) --(-2);
        \draw (w-1) --(-1);
        \draw (w0) --(0);
        \draw (w1) --(1);
        \draw (w2) --(2);

        \draw[->,red,thick] (2.0,-2) -- node [left] {{\small $I_{i}$}}(2.0, 0.8);
        \draw[dotted,red,thick] (2.0, 0.8) -- node [below] {{\small ~ $Y_{j}$}}(-0.5, 0.8);
        \draw[dotted,red,thick] (-0.5, 0.8) -- node [below] {{\small ~ $Y_{k}$}}(-3.0, 0.8);
        \draw[->,red,thick] (-3.0, 0.8) -- node [below] {{\small ~ $Y_{l}$}}(-5.5, 0.8);
        \draw[<-,red,thick] (-5.5,-2) -- node [left] {{\small $J_{m}$}}(-5.5, 0.8);

        \draw[->,blue,thick] (-3.0,-2) -- node [left] {{\small $I_{l}$}}(-3.0, 1.2);
        \draw[dotted,blue,thick] (-3.0, 1.2) -- node [above] {{\small $X_{k}$}}(-0.5, 1.2);
        \draw[dotted,blue,thick] (-0.5, 1.2) -- node [above] {{\small $X_{j}$}}(2.0, 1.2);
        \draw[->,blue,thick] (2.0, 1.2) -- node [above] {{\small $X_{i}$}}(4.5, 1.2);
        \draw[->,blue,thick] (4.5,1.2) -- node [left] {{\small $J_{i}$}}(4.5, -2);

        \draw[<-,thick] (2.0, 1.2) -- (2.0, 0.8);
        \draw[->,thick] (-3.0, 1.2) -- (-3.0, 0.8);
        \end{tikzpicture}
    };

    \end{tikzpicture}
    \caption{The OPE of two operators represented in \cref{fig:pathsquiverexample2} generates three singular terms as in the RHS of \eqref{GGG}, each of which can be represented graphically as in this figure.
    As is evident from this example, the schematic structures OPEs between path operators are well represented by paths on the quiver.}
    \label{fig:pathsquiverexample2}
\end{figure}

\subsection{Match of central charges with \texorpdfstring{$\mathcal{W}$-algebras}{W-algebras}}

Let us comment on some numerology
before closing this section.

For the case \eqref{rho=111},
we claim that the central charge \eqref{eqn: central charge} equals the central charge of $\CW^{-N+1}(\mathfrak{gl}(N),f)$, where $f$ is a nilpotent element of $\mathfrak{gl}(N)$ of type $((N-1)^{w_{N-1}}\cdots i^{w_i}\cdots 1^{w_1})$. 

Here we use the decomposition $\CW^{-N+1}(\mathfrak{gl}(N),f)\cong \CW^{-N+1}(\mathfrak{sl}(N),f)\otimes \widehat{\mathfrak{gl}}(1)_1$, and we take the energy momentum operator of $\CW^{-N+1}(\mathfrak{gl}(N),f)$ to be the sum of energy momentum operators of $\CW^{-N+1}(\mathfrak{sl}(N),f)$ and $\widehat{\mathfrak{gl}}(1)_1$ respectively. 
The energy momentum operator of $\CW^{k}(\mathfrak{g},f)$ for a simple Lie algebra $\mathfrak{g}$ of level $k$ and a nilpotent element $f$ is given in \cite[(2.5)]{kac2003quantum}, whose central charge is \cite[(2.6)]{kac2003quantum} (see also \cite{kac2004quantum})
\begin{align}\label{eqn: central charge in Kac-Roan-Wakimoto}
    c(\mathfrak g,x,f,k)=\frac{k\dim\mathfrak{g}}{k+\mathsf{h}^\vee_{\mathfrak{g}}}-12k(x|x)-\sum_{\alpha\in S_+}(12m_{\alpha}^2-12m_{\alpha}+2)-\frac{1}{2}\dim\mathfrak{g}_{1/2}\;,
\end{align}
Here $\{x, e, f\}$ represents the $\mathfrak{sl}_2$-triple with diagonalizable $x$ (so that $[e,f]=x,[x,e]=e,[x,f]=-f$),
and $(\cdot | \cdot )$ is a non-degenerate
$\mathfrak{g}$-invariant symmetric bilinear form. The adjoint action of $x$ decomposes the Lie algebra into the weight spaces
$\mathfrak{g} = \oplus_{j\in \frac{1}{2}\mathbb{Z}} \, \mathfrak{g}_{j}$. For a root $\alpha \in \mathfrak{g}_{j}$ we define $m_{\alpha}=j$, and $S_{+}$ denotes the set of positive roots.

Although the two expressions \eqref{eqn: central charge for linear quivers} and \eqref{eqn: central charge in Kac-Roan-Wakimoto} look quite different, we verify in \cref{sec:c_W} that
\begin{align}\label{cc agree}
    c_{ \CVsdel \otimes \mathcal{V}_{\bc} } =1+ c(\mathfrak{sl}(N),x,f,-N+1)\;,
\end{align}
where $f$ is a nilpotent element of $\mathfrak{sl}(N)$ of type $((N-1)^{w_{N-1}}\cdots i^{w_i}\cdots 1^{w_1})$.

The match of the central charge discussed here is not a coincidence.
Indeed, we will see in the examples explored in later sections that there are homomorphisms from $\CW^{-N+1}(\mathfrak{gl}(N),f)$ into our H-twisted VOA $\CVdel$, such that the stress-energy operator of the former is mapped into the stress-energy operator $T$ described above.

\section{Higgs branch and associated varieties} 
\label{sec:variety}

In the previous section, we pointed out that the H-twisted VOA $\CV$ can be regarded as the chiralization of the Higgs branch. One can then ask if there is a way to extract the latter from the former. It turns out that the answer is positive, and 
we can ``de-chiralize'' the VOA by considering Zhu's $C_2$ algebra, from which we can extract the geometry of the associated variety, as we will explain in this section.

\subsection{Zhu's algebra and associated varieties}

We introduce some standard concepts in the literature on vertex algebras (see e.g.\ \cite{arakawa2017introduction}).

The canonical filtration\footnote{Also known as the Li's filtration.} $F^\bullet$ of a vertex algebra $\CV$ is defined by 
\beq\label{eq:canonical filtration}
\begin{split}
    \CV &=F^{0}\CV\supset F^{1}\CV\supset\cdots\supset F^{p}\CV\supset F^{p+1}\CV\supset\cdots, 
    \\
    F^{p}\CV &:=\mathrm{Span}\left\{\mathcal{O}^{1}_{(-n_1-1)}\cdots \mathcal{O}^{k}_{(-n_k-1)}| 0\rangle\Biggr|\, 
    \mathcal{O}^1, \dots , \mathcal{O}^k \in \CV,  \sum_{i=1}^{k}n_i\geq p\in\BZ_{\geq0}\right\}\;.  
\end{split}
\eeq
It is known that the associated graded vector space
\beq
    \mathrm{gr}_{F}\CV:=\bigoplus_{p=0}\mathrm{gr}_{F}^{p}\CV \;,
    \quad 
    \mathrm{gr}_{F}^{p}\CV:= F^{p}\CV/F^{p+1}\CV \;,
\eeq  
admits a natural vertex Poisson algebra structure \cite[Proposition 3.14]{arakawa2017introduction}.
Note that $\mathrm{gr}_{F}\CV$ is isomorphic to $\cV$ as vector spaces, while the vertex Poisson algebra structure of the former replaces the vertex operator algebra structure of the latter.
In this sense, the former can be regarded as a ``semiclassical analogue'' of the latter.

Zhu's $C_2$-algebra of $\CV$ is defined to be the zeroth graded algebra 
\beq
\Zhu({\CV}):= F^{0}\CV/F^{1}\CV \;.
\eeq 
It was proven that $\Zhu({\CV})$ is a Poisson algebra \cite[Proposition 3.15]{arakawa2017introduction}. 
We denote the image of $\mathcal{O}\in \CV$ under the surjection $\CV\twoheadrightarrow  \Zhu({\CV})$ by $\overline{\mathcal{O}}$.
In physics language, Zhu's algebra is basically generated by primary operators with respect to the global $\SL(2, \mathbb{R})$ conformal group, except there are extra subtleties when we have null states
relating primaries to descendants; one can also think of Zhu's algebra as describing the dimensional reduction of the VOA \cite{Dedushenko:2019mzv,Ashwinkumar:2023zbu}.

We say that $\CV$ is finitely strongly generated by a set of elements $\mathcal{O}^1, \mathcal{O}^2,\cdots, \mathcal{O}^n\in \CV$ if their images $\overline{\mathcal{O}^1}, \overline{\mathcal{O}^2},\cdots,\overline{\mathcal{O}^n}\in \Zhu({\CV})$ generate $\Zhu({\CV})$ as a $\mathbb C$-algebra.

We have defined Zhu's $C_2$ algebra $\Zhu({\CV})$ as a reduction of 
$\mathrm{gr}_{F}\CV$. We can conversely ask if we can recover $\mathrm{gr}_{F}\CV$ from $\CR({\CV})$. 
Physically, this is natural since all we need to do is to consider the descendant operators.
More mathematically, this can be answered by introducing the concept of the $\infty$-jet
${\jet} \CR({\CV})$, which admits a natural vertex Poisson algebra structure \cite[Theorem 3.9]{arakawa2017introduction}.
According to \cite[Theorem 3.17]{arakawa2017introduction}, there exists a natural surjective map
\beq
\jet\CR({\CV})\twoheadrightarrow \mathrm{gr}_{F}\CV \;.
\eeq
In particular, if $\CV$ is finitely strongly generated by a set of elements $\mathcal{O}^1, \mathcal{O}^2,\cdots, \mathcal{O}^n\in \CV$, then $\CV$ is spanned by elements
\begin{align*}
\mathcal{O}^{i_1}_{(-n_1)}\cdots \mathcal{O}^{i_k}_{(-n_k)}|0\rangle \;, \quad 
\textrm{ with } k\ge 0, n_i\ge 1 \;.
\end{align*}
In this sense, we do not lose much information by 
replacing the VOA by its $C_2$ algebra.

The associated variety $X_{\CV}$ is defined as the geometry corresponding to the 
reduced part of the $C_2$ algebra, i.e.\ geometry corresponding to the non-nilpotent part of the algebra \cite{MR2875849}:
\begin{align}
(\mathcal{R}({\CV}))_{\rm red} =  \mathbb{C}[X_{\CV}] \;.
\end{align}

Summarizing, given a vertex algebra $\CV$ we can extract the associated variety $X_{\CV}$ by a sequence of surjective maps
\begin{align}
    \label{V_to_M}
    \CV \twoheadrightarrow \Zhu(\CV) \twoheadrightarrow (\Zhu(\CV))_{\rm red} = \mathbb{C}[X_{\CV}] \;.
\end{align}

\subsection{Conjecture}
\label{subsec:conjectures}

We can now formulate the following conjecture:
\begin{conjecture}\label{conj:associated variety = Higgs branch}
When $\CVdel$ is the H-twisted boundary vertex algebra for a type A linear quiver gauge theory specified by $(Q, \mathbf{v}, \mathbf{w})$, then its associated variety $X_{\CVdel}$ can be identified with the affine quiver variety ${\CM}^0(Q, \mathbf{v}, \mathbf{w})$ of the theory.
\end{conjecture}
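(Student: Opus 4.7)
The natural strategy is to compute the Zhu $C_2$-algebra $\Zhu(\CVdel)$ explicitly and then identify its reduced part with $\mathbb{C}[\CM^0(Q,\mathbf v,\mathbf w)]$. The construction of $\CVdel$ is a two-stage procedure: first form the free-field VOA $\CVsdel\otimes \CV_{\bc}$, then take BRST cohomology with respect to the chiral extended moment map. The plan is to translate each of these steps into the geometric world by taking associated graded with respect to Li's filtration \eqref{eq:canonical filtration}, and to recognize the result as Hamiltonian reduction.

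\textbf{Step 1: The pre-BRST associated variety.} First I would compute $X_{\CVsdel}$. Each $\beta\gamma$ pair $(X_a,Y_a)$, $(I_i,J_i)$ contributes polynomial generators $\overline{X}_a,\overline{Y}_a,\overline{I}_i,\overline{J}_i$ to $\Zhu$, while the fermionic generators $\F_i,\Psi_i$, $\chi_i,\psi_i$, $\mathsf{b}_i,\mathsf{c}_i$ all square to zero in $\Zhu$ and so are killed in the reduced quotient. Crucially, the Heisenberg fields $h_i$ are realized via \eqref{eq:u(1) currents from fermions} as fermion bilinears $\normord{\chi_i\psi_i}-\normord{\chi_{i-1}\psi_{i-1}}$, so their images in $(\Zhu(\CVsdel))_{\mathrm{red}}$ vanish as well. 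Hence
\[
X_{\CVsdel}\;\cong\;T^*\mathrm{Rep}(Q,\mathbf v,\mathbf w),
\]
with the symplectic structure on the right-hand side matching the Poisson bracket induced from the vertex Poisson structure of $\mathrm{gr}_F\CVsdel$.

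\textbf{Step 2: BRST reduction as Hamiltonian reduction.} Next I would show that passing to the BRST cohomology $\CVdel=H^{\infty/2+0}_{\mathrm{BRST}}(\mathfrak g,\CVsdel)$ corresponds, at the level of associated varieties, to the Hamiltonian reduction $\mu^{-1}(0)\sslash\GL(\mathbf v)$. The canonical filtration on $\CVsdel\otimes\CV_{\bc}$ is compatible with the BRST differential $\QBRST$, and the induced differential on the associated graded is the classical Chevalley--Koszul differential for the $\GL(\mathbf v)$-action on $T^*\mathrm{Rep}(Q,\mathbf v,\mathbf w)$ together with the Koszul complex for the ideal generated by the components of the classical moment map $\mu$. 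Here I use that $\widetilde\mu_{\mathrm{ch},i}=J_{\mathfrak u(v_i)}-h_i$ reduces modulo $F^1$ precisely to the classical moment map $\mu_i$ of \eqref{moment_map_general}: the fermion bilinears in \eqref{def:unBRSTcurrentViShifted_general} drop out, and $\overline{h_i}=0$ by Step 1. The Cartan-model description of classical Hamiltonian reduction then identifies the $E_\infty$-page of the spectral sequence associated to $F^\bullet$ with $\mathbb C[\mu^{-1}(0)]^{\GL(\mathbf v)}=\mathbb C[\CM^0(Q,\mathbf v,\mathbf w)]$.

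\textbf{Step 3: Degeneration and identification.} Finally, I would invoke a standard comparison lemma (going back to Arakawa's treatment of associated varieties of $\CW$-algebras and BRST reductions of affine VAs) to argue that this spectral sequence degenerates at $E_\infty$ and that the natural surjection
\[
\jet\bigl(\mathbb C[\CM^0(Q,\mathbf v,\mathbf w)]\bigr)\;\twoheadrightarrow\;\mathrm{gr}_F\CVdel
\]
yields, upon passing to the reduced $C_2$-quotient, the desired identification $(\Zhu(\CVdel))_{\mathrm{red}}\cong \mathbb C[\CM^0(Q,\mathbf v,\mathbf w)]$.

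\textbf{Main obstacle.} The hard part will be establishing that BRST cohomology commutes with taking associated variety, i.e.\ that the higher cohomologies $\CV^{\neq 0}(Q,\mathbf v,\mathbf w)$ do not contribute extra reduced generators and that the spectral sequence from Li's filtration degenerates at $E_2$. For balanced quivers one can hope to exploit the regularity of the classical moment map $\mu$ on the stable locus (so that Koszul is exact and the classical invariant theory is well-behaved), but for general anomaly-cancellable quivers \eqref{general_anomaly_cancellable_condition} one must either restrict to flatness of $\mu$ or use Crawley-Boevey's criterion for the flatness of moment maps for quiver representations. A secondary subtlety is ensuring that the lattice decomposition \eqref{lattice decomposition of boundary VOA} does not enlarge the reduced Zhu algebra beyond the $\lambda=0$ sector: this follows because each vertex operator $||\lambda|\rangle$ lies in $F^{|\lambda|}$ of Li's filtration whenever $|\lambda|>0$, hence it is killed in the $C_2$-quotient, but verifying this requires a careful bookkeeping of conformal weights in the lattice extension.
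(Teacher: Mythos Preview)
The statement you are attempting to prove is labeled \emph{Conjecture} in the paper and is not proven there; the paper offers only heuristic motivation (the analogy with the 4d $\mathcal N=2$ story of Beem--Rastelli) and a corollary assuming the conjecture. So there is no ``paper's own proof'' to compare against. Your outline is a reasonable strategy, and Steps~1--2 are essentially correct bookkeeping: the fermionic generators (including the $\chi\psi$ pairs realizing $h_i$ and the $\Phi\Psi$ pairs in the chiral moment map) are nilpotent in $\Zhu$, so on the reduced side the BRST datum does collapse to the classical moment map $\mu$ on $T^*\mathrm{Rep}(Q,\mathbf v,\mathbf w)$.

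The genuine gap is exactly where you flag it, but you underestimate its seriousness. There is no ``standard comparison lemma going back to Arakawa'' that applies here. Arakawa's associated-variety computations for $\CW$-algebras rely on Kostant's theorem, which guarantees that the relevant Slodowy-slice moment map is flat and that the Drinfeld--Sokolov BRST cohomology is concentrated in degree zero; neither fact is available for quiver moment maps in general. Indeed, the paper's own Theorem~\ref{thm:vanishing and embedding} establishes vanishing of negative BRST cohomology only under restrictive hypotheses (totally negative quivers, the $A_1$ quiver, Dynkin quivers with $\sum(\mathbf v_i-1)\le 2$, or the Jordan quiver with $\mathbf w\ge\mathbf v$), and defers even that to the companion paper. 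Without such vanishing and without flatness of $\mu$, the spectral sequence from Li's filtration need not degenerate at $E_2$, and the $E_\infty$-page need not coincide with $\mathbb C[\mu^{-1}(0)]^{\GL(\mathbf v)}$. Your proposal is a correct identification of what \emph{would} need to be shown, but Step~3 as written is an assertion of the conjecture rather than a proof of it. (A minor point: your argument for why the lattice sectors $\CV_{\partial,\lambda\neq 0}$ die in the reduced $C_2$-quotient is not right---Li's filtration degree is not the conformal weight---though the conclusion is correct, since the $\chi\psi$ free-fermion factor has associated variety a point.)
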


The vertex algebra is called quasi-lisse when the associated variety has finitely many symplectic leaves. The conjecture implies that the H-twisted VOA is quasi-lisse.

Our conjecture is reminiscent of, and is partly inspired by, a similar discussion for VOAs associated with 4d $\mathcal{N}=2$ theories \cite{Beem:2017ooy}. As we will discuss in \cref{sec:4d-VOA}, the precise relation between VOAs associated with 3d $\mathcal{N}=4$ theories and those with 4d $\mathcal{N}=2$ theories is subtle, and at this point we do not expect that the conjecture in \cite{Beem:2017ooy} 
immediately implies our conjecture (or vice versa).

In our discussion to this point we have obtained the Higgs branch from the VOA. We can try to go in the other direction, by ``uplifting'' Higgs branch operators to BRST-closed VOA operators. In this regard, we can formulate the following corollary to \Cref{conj:associated variety = Higgs branch}:
\begin{cor}
Let $Q$ be a balanced linear quiver. We can choose a set of elements $\{\mathcal{O}^i\}$ of $\CVdel$ such that their images under the natural map \eqref{V_to_M} generate $\mathbb{C}[X_{\CVdel}]=\mathbb C[{\mathcal{M}}^0]$.
Moreover, $\{\mathcal{O}^i\}$ can be taken as a subset of strong generators of the VOA.
\end{cor}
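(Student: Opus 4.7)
The plan is to combine \Cref{conj:associated variety = Higgs branch} with the classical description of $\mathbb C[\mathcal M^0(Q,\mathbf v,\mathbf w)]$ recalled in \cref{subsec:Higgs}, and then explicitly exhibit BRST-closed chiral lifts of the classical path generators. Under the conjecture, $\mathbb C[X_{\CVdel}]\cong \mathbb C[\mathcal M^0(Q,\mathbf v,\mathbf w)]$; by the theorem of \cite{2019arXiv190501810} quoted in the paper, the latter is generated by the path operators $g_{(i\to m\to j)}$ of \eqref{path_g}. Thus the first claim reduces to constructing, for each triple $(i,m,j)$ indexing a path from one framing node to another, a BRST-closed operator $\mathcal O^{(i\to m\to j)}\in \CVdel$ whose image under the composition \eqref{V_to_M} equals $g_{(i\to m\to j)}$.

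The candidate lift is the obvious chiralization: replace each classical field $x_a,y_a,\alpha_i,\beta_i$ along the path by its chiral analog $X_a,Y_a,I_i,J_i$ and take the normally-ordered product in the same order as on the path. For paths between \emph{distinct} framing nodes this gives the operators $G_{(i,j)}$ of \eqref{G_path_def}, whose BRST invariance is verified by the direct Wick-contraction argument illustrated in the paper, which cancels pairwise at each intermediate node. For paths returning to the starting framing node, the naive chiralization acquires a non-trivial double pole against $\widetilde\mu_{\mathrm{ch},k}$; this is cured by adding a $\chi\psi$-bilinear counter-term of the form displayed in \eqref{Gii}, the coefficient of which is forced by matching the offending pole. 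Since $Q$ is assumed balanced one has $d_i=0$ and hence no $(\F_i,\Psi_i)$ fields, which simplifies the bookkeeping. To verify the image in $\mathbb C[X_{\CVdel}]$, one uses that the canonical filtration \eqref{eq:canonical filtration} turns normally-ordered products of weight-$1/2$ bosons into ordinary commutative products in $\mathrm{gr}_F\CVdel$, so the image of $\mathcal O^{(i\to m\to j)}$ in $\Zhu(\CVdel)$ is, modulo nilpotents, the polynomial $\beta_j x_j \cdots y_i\alpha_i$; the $\normord{\chi_k\psi_k}$ counter-terms descend to even products of odd symbols in the super-Poisson algebra $\Zhu(\CVdel)$, whose squares vanish by super-commutativity, so they lie in the nilradical and drop out on passing to $(\Zhu(\CVdel))_{\mathrm{red}}=\mathbb C[X_{\CVdel}]$, leaving exactly $g_{(i\to m\to j)}$.

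For the second assertion, take any strong generating set $S$ of $\CVdel$ --- for instance, the $\QBRST$-cohomology representatives of the manifest strong generators of $\CVsdel \otimes \CV_{\bc}$, which include the elementary fields $X_a,Y_a,I_i,J_i,\chi_i,\psi_i$ plus whatever additional generators are needed --- and form the union $S\cup\{\mathcal O^{(i\to m\to j)}\}$; this union is manifestly still a strong generating set, and it contains the $\mathcal O^i$'s as a subset. The main obstacle I anticipate is the explicit construction of the fermionic counter-terms making the lifts BRST-closed for paths more intricate than those in \eqref{G_path_def} or \eqref{Gii}, e.g.\ paths traversing long loops around the quiver; a clean systematic resolution would likely follow either from the Miura-type free-field realization of \cref{sec:quiver reduction}, which should express each generator as a canonical normal-ordered combination of screening data attached to the path, or from the localization picture of \cref{sec:localization}, which ties $\CVdel$ directly to the algebra of functions on the quiver variety.
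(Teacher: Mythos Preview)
The paper does not supply a proof of this corollary; it is presented as a direct consequence of \Cref{conj:associated variety = Higgs branch}, followed only by a remark explaining why the $\{\mathcal O^i\}$ form merely a \emph{subset} of strong generators (the stress-energy tensor, for instance, is an extra generator expected to vanish in the associated variety). The implicit argument is essentially tautological: the map \eqref{V_to_M} is a composition of surjections, so once the conjecture identifies $(\Zhu(\CVdel))_{\rm red}$ with $\mathbb C[\CM^0]$, any generating set of the latter lifts set-theoretically to $\CVdel$; and any finite subset of a vertex algebra can be completed to a strong generating set by adjoining further elements until the images generate $\Zhu(\CVdel)$.

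Your proposal goes well beyond this: you attempt to exhibit \emph{explicit} BRST-closed lifts of the classical path generators $g_{(i\to m\to j)}$. That is in the spirit of \cref{sec:VOA_linear,sec:[21..1],sec:[nn],sec:quiver reduction} --- useful and nontrivial --- but not required for the corollary as stated, and as you acknowledge, the counter-term construction for general paths is left incomplete. There is also a genuine slip in your argument for the second claim: the elementary fields $X_a,Y_a,I_i,J_i,\chi_i,\psi_i$ are strong generators of $\CVsdel$, not of its BRST cohomology $\CVdel$; they are not $\QBRST$-closed, and there are no ``$\QBRST$-cohomology representatives'' of them in any meaningful sense. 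Strong generation does not descend from a complex to its cohomology in the way you suggest. The fix is the trivial one already implicit above: adjoin to $\{\mathcal O^i\}$ any further elements whose images complete a generating set of $\Zhu(\CVdel)$.
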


Notice that we have here stated ``a subset of a strong generator'' instead of ``a strong generator''.
This is because the associated variety is defined from the reduced part of Zhu's algebra,
and hence we will in general lose the information of some generators by the time we arrive at the Higgs branch. 
In general, we do expect some operators in the non-reduced part. For example, 
the stress-energy operator is BRST-closed and defines a non-trivial operator in the VOA;
it is expected to be absent after passing to associated variety (cf.\ \cite{Beem:2017ooy}).

\section{Vertex algebra for \texorpdfstring{$T^{[2,1^{n-2}]}_{[1^n]}[\mathrm{SU}(n)]$}{TSU(n)}}\label{sec:[21..1]}

In the previous sections we discussed boundary VOAs of 3d $\mathcal{N}=4$ theories
associated with linear balanced A-type quivers. In this section, we specialize further to a family of such theories
whose quivers are given in Fig.~\ref{fig:quiver for T^[2,1^{n-2}]_[1^n][SU(n)] theory}.
These theories are $T^{[2,1^{n-2}]}_{[1^n]}[\SU(n)]$ theories
in the language of section~\ref{sec:T_rho_sigma}.

\begin{figure}[ht!]
    \centering
    \begin{tikzpicture}[scale=.75, every node/.style={scale=0.8}]    
        \draw[] (-6,0) circle (.75cm);
        \draw[] (-5.25,0) -- (-4.25,0);
        \draw[] (-3.5,0) circle (.75cm);
        \draw[] (-2.75,0) -- (-1.75,0);
        \draw[line width=1pt,loosely dotted] (-1.75,0) -- (-.25,0);
        \draw[] (-.25,0) -- (.75,0);
        \draw[] (1.5,0) circle (.75);   
        \draw[] (2.25,0) -- (3.25,0);
        \draw[] (4,0) circle (.75);       
        \draw[] (4,-.75) -- (4,-2);
        \draw[] (1.5,-.75) -- (1.5,-2);       
        \draw[] (3.35,-2) rectangle (4.65,-3.3);
        \draw[] (0.85,-2) rectangle (2.15,-3.3);
        \node at (-6,0) {$1$};
        \node at (-3.5,0) {$2$};
        \node at (1.5,0) {$n-2$};
        \node at (1.5,-2.65) {$1$};
        \node at (4,0) {$n-2$};
        \node at (4,-2.65) {$n-2$};
        \node at (-6,1) {$(n-1)$};
        \node at (-3.5,1) {$(n-2)$};
        \node at (1.5,1) {$(2)$};
        \node at (4,1) {$(1)$};
    \end{tikzpicture}
    \caption{Quiver diagram for a $T^{[2,1^{n-2}]}_{[1^n]}[\mathrm{SU}(n)]$ theory. This quiver has two framing nodes (represented by boxes). The labeling of the quiver nodes is also shown as numbers in parentheses.}
    \label{fig:quiver for T^[2,1^{n-2}]_[1^n][SU(n)] theory}
\end{figure}
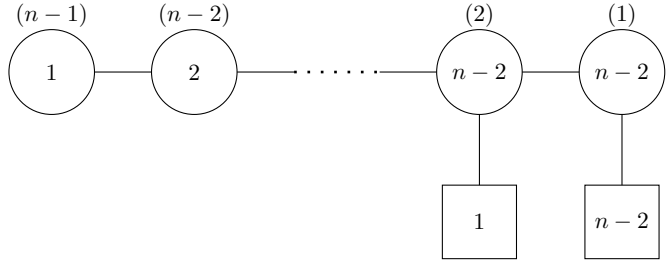

One of the important properties of the $T^{[2,1^{n-2}]}_{[1^n]}[\SU(n)]$ theories is that they have balanced quivers, as we have seen in \eqref{sec:balanced_quivers}.
Since the anomaly cancellation condition \eqref{balancing_condition} is satisfied for these examples,
the simplification for the associated H-twisted VOAs is that 
we do not need to incorporate extra Fermi multiplets to 
cancel $\SU$ anomalies. (We still need to cancel $\U(1)$ anomalies, however.)

In this section, we discuss a set of BRST-closed operators. As a byproduct we afterwards propose an 
explicit vertex algebra homomorphism into the H-twisted boundary VOA $\CV(T^{[2,1^{n-2}]}_{[1^n]}[\mathrm{SU}(n)])$ from another known VOA in the literature: the 
affine $\mathcal{W}$-algebra $\CW^{-n+1}(\mathfrak{gl}_n , f_{\rm min})$, where $f_{\rm min}$ refers to $[2, 1^{n-2}]$.

\subsection{BRST reduction}
\label{BRST-cohomology-for-A-twisted-Trhosigmasimple}

\paragraph{Ingredients:} Following the recipe, we assign $\U(1)$ Fermi multiplets at each gauge node $i$
\beq
    (\chi_i,\ps_i)_{i=1,\cdots,n-1}\;, &&\textrm{for a gauge vertex } i~,
\eeq
symplectic bosons corresponding to scalar fields from bifundamental $\CN=4$ hypermultiplets
\beq
    (X_i,Y_i)_{i=1,2,\cdots,n-2}\;, &&\textrm{for a gauge vertex } {i}~,
\eeq
and symplectic bosons for scalar fields from fundamental $\CN=4$ hypermultiplets in flavor
\beq
    (I_i,J_i)_{i=1,2}\;, &&\textrm{for gauge-flavor nodes } {i}~,
\eeq
where $(X_i,I_i)$ and $(Y_i,J_i)$ are incoming to and outgoing from the $i$-th vertex.

\paragraph{BRST currents:} We can take the BRST currents defined in equation \eqref{eq:chiral extended moment map}. 

\paragraph{BRST cohomology}

We choose the bosonic generators for the BRST cohomology to be:\footnote{While not needed for the rest of the paper, let us comment on fermionic generators. It turns out that some fermionic generators are also represented by a path on the quiver diagram. 
This representation reproduces the examples in \cite{Costello:2018fnz}.
We can generalize fermionic BRST-closed operators charged under $\U(1)_k$ as in \cite{Costello:2018fnz} as
\beq
\begin{split}
    \cO_{0}&=\ps_{0}\e^{ij}\e_{ab}I^{a}_{1,i}I^{b}_{1,j},\\
    \cO_{1}&=\ps_{1}\e^{ab}J_{2,a}J_{2,b},\\
    \cO_{2}&=\ps_{2}\e^{ab}Y_{2,a}J_{2,b},\\
    \cO_{3}&=\ps_{3}X_{2}^{a}J_{2,a},
    \label{eq:fermionic BRST}
\end{split}
\eeq
Operators in \eqref{eq:fermionic BRST} are charged under the Coulomb branch symmetry.}
\begin{itemize}
    \item[(1)] Stress-energy operator $T=T_{\beta\gamma}+T_{\psi\chi}+T_{\bc}$.
    
    \item[(2)] $\U(1)$ current: $J_F=\sum_{i=0}^{n-1}\chi_{i}\ps_{i} $~.

    \item[(3)] Extension of Kac-Moody currents: $(G^{(1,1)})^i_j, (G^{(1,2)})_j, (G^{(2,1)})^i$, where $1\le i,j\le n-2$
    \begin{gather}
    \begin{split}\label{eqdef:genericBRSTbgen}
        &(G^{(1,1)})^{i}_{\ph{j}j}=(J_{1})^{i}_{\ph{k}k}(I_1)^{k}_{\ph{j}j}+\chi_{0}\ps_{0}\d^{i}_{\ph{a}j}~,\\
        &(G^{(1,2)})^1_{\ph{j}j}=(J_{1})^1_{\ph{k}k}(X_{1})^k_{\ph{i}i}(I_{2})^i_{\ph{j}j}~, \\
        &(G^{(2,1)})^i_{\ph{1}1}=(J_{2})^i_{\ph{j}j}(Y_{1})^j_{\ph{k}k}(I_{1})^k_{\ph{1}1} ~,
    \end{split}
    \end{gather}
     The currents listed in equations \eqref{eqdef:genericBRSTbgen} are associated to particularly simple paths as in \cref{fig:path for KM} on the quiver diagram depicted in Fig.~\ref{fig:quiver for T^[2,1^{n-2}]_[1^n][SU(n)] theory}.

We claim that it is sufficient to consider
only these three types of operators, out of many operators of the form \eqref{J_Phi_I}. 
This is proven in \cref{sec:reduction}.

\end{itemize}

\begin{figure}[ht!]
\centering
\begin{subfigure}[a]{\textwidth}
\centering
\[\begin{tikzcd}
{\circled{$1$}}&{\circled{$2$}}& {\circled{$n-3$}} & {\circled{$n-2$}} & {\circled{$n-2$}} 
\\
{}&{}&{} & {\rectangled{$1$}} & {\rectangled{$n-2$}}
    \arrow["I_{1,b}^{\a}"', bend right, from=2-5, to=1-5] 
    \arrow["{J_{1,\a}^{c}}"', bend right, from=1-5, to=2-5] 
    \arrow[dotted, thick, no head, from=1-4, to=1-5]
    \arrow[dotted, thick, no head, from=1-3, to=1-4]
    \arrow[dotted, thick, no head, from=1-4, to=2-4]
    \arrow[dotted, thick, no head, from=1-2, to=1-1]
    \arrow[dotted, thick, no head, from=1-2, to=1-3]
\end{tikzcd}\]
\caption{The path for $G^{(1,1)}$}
\end{subfigure}

\begin{subfigure}[a]{\textwidth}
\centering
   \[\begin{tikzcd}
   {\circled{$1$}}&{\circled{$2$}}& {\circled{$n-3$}} & {\circled{$n-2$}} & {\circled{$n-2$}}
   \\
   {}&{}&{} & {\rectangled{$1$}} & {\rectangled{$n-2$}}
    \arrow["{I^{i}_{2,a}}"', from=2-4, to=1-4]
    \arrow["{X_{1,i}^{j}}"', from=1-4, to=1-5]
     \arrow["{J_{1,j}^{1}}"', from=1-5, to=2-5]
    \arrow[dotted, thick, no head, from=1-2, to=1-1]
    \arrow[dotted, thick, no head, from=1-2, to=1-3]
    \arrow[dotted, thick, no head, from=1-3, to=1-4]
   \end{tikzcd}\]
\caption{The path of $G^{(1,2)}$}
\end{subfigure}

\begin{subfigure}[a]{\textwidth}
\centering
    \[\begin{tikzcd}
        {\circled{$1$}}&{\circled{$2$}}& {\circled{$n-3$}} & {\circled{$n-2$}} & {\circled{$n-2$}}
        \\
        {}&{}&{} & {\rectangled{$1$}} & {\rectangled{$n-2$}}
        	\arrow["{I^{i}_{1,a}}"', from=2-5, to=1-5]
        	\arrow["{Y_{1,i}^{j}}"', from=1-5, to=1-4]
        	\arrow["{J_{2,k}^{b}}"', from=1-4, to=2-4]
            \arrow[dotted, thick, no head, from=1-2, to=1-1]
           \arrow[dotted, thick, no head, from=1-2, to=1-3]
           \arrow[dotted, thick, no head, from=1-3, to=1-4]
    \end{tikzcd}\]
    \caption{The path of $G^{(2,1)}$}
 \end{subfigure}
 
    \caption{Simple paths on the quiver diagram from Fig.~\ref{fig:quiver for T^[2,1^{n-2}]_[1^n][SU(n)] theory}, for BRST closed operators.}
    \label{fig:path for KM}
\end{figure}

\paragraph{Moment Map} 

\beq\label{eqdef:genericmomentmaprelations}
(\widetilde{\m}_{\mathrm{ch},i})^{a}_{\ph{a}b}=X^{a}_{i}Y_{i,b}-Y^{a}_{i-1}X_{i-1,b}+\d^{a}_{\ph{a}b}(\chi_{i-1}\ps_{i-1}-\chi_{i} \ps_{i})+I^{a}_{i}J_{i,b}\d_{(i<3)} \;.
\eeq

\subsection{Map from \texorpdfstring{$\CW$}{W}-algebra to boundary VOA}

We now claim there exists an injection as a VOA from 
an affine $W$-algebra $\CW^{-n+1}(\mathfrak{gl}_{n},f_{\mathrm{min}})$
into the H-twisted VOA for the $T^{[2,1^{n-2}]}_{[1^n]}[\mathrm{SU}(n)]$ theory, where $f_{\rm min}$ refers to $[2, 1^{n-2}]$.

The strong generators and their OPEs for $\CW^{-n+1}(\mathfrak{gl}_{n},[2, 1^{n-2}])$ were obtained by Ueda \cite{ueda2022example}.\footnote{We have $[2, 1^{n-2}]$, whereas \cite{ueda2022example} has $[2^n, 1^{m-n}]$. So we need to send $n_{\rm there}\to 1, m_{\rm there}\to n-1$, and hence $m_{\rm there}-n_{\rm there}\to n-2$, in the result from \cite{ueda2022example}. We take the level to be $k=-n+1$, which sets the parameter $\alpha_1, \alpha_2$ in \cite{ueda2022example} as $\alpha_1=-n+2, \alpha_2=0$.}
The strong generators are given by
\begin{align}
\begin{split}
    \left\{  W_{i,j}^{(1)} | 1 \le i \le n-2 , 1\le j \le n-1 \textrm{ or } i=j= n-1 \right \} 
    \cup \left\{ W_{i=n-1,j}^{(2)} | 1 \le j \le n-1 \right\} \;,
\end{split}
\end{align}
and their OPEs are given by:
\begin{align}\label{eq: W([2,1^{n-2}]) OPE}
    (W^{(1)}_{p,q})_{(0)} W_{i,j}^{(2)} &= -\delta_{p,j} W^{(2)}_{i,q} + \delta_{i,q} \delta(p>2) W^{(2)}_{p,j} +\delta(p\leq 2, q>2) (W^{(1)}_{p,j})_{(-1)} W^{(1)}_{i,q} \nonumber\\ 
    & ~~~~ - \delta_{i,q}\delta(p\leq 2)  \sum_{w\leq 2} (W^{(1)}_{w,j})_{(-1)} W^{(1)}_{p,w} \;,\nonumber\\
    (W^{(1)}_{p,q})_{(1)} W_{i,j}^{(2)} &= \delta_{p,j} \alpha_1 \delta(q>2) W^{(1)}_{i,q} - \delta_{i,q} \delta(p\leq 2) \alpha_1 W^{(1)}_{p,j} \nonumber \\
    &~ + \delta_{p,q} \delta(j>2) W^{(1)}_{i,j} - \delta_{i,j} \delta(q\leq 2) W^{(1)}_{p,q} + \delta_{p,j} \delta_{q,i} \sum_{w\leq 2} W^{(1)}_{w,w}\;, \nonumber \\
    (W^{(1)}_{p,q})_{(2)} W_{i,j}^{(2)} &= -\delta_{q,i} \delta_{p,j} \alpha_1^2- \delta_{p,q} \delta_{i,j} \delta(p,q\leq 2)\alpha_1 \;, \nonumber \\
    (W^{(1)}_{p,q})_{(s)} W_{i,j}^{(2)} &=0~ \quad \forall s>2 ~.
\end{align}
Here $(\mathcal{O}_{1})_{(n)} (\mathcal{O}_{2})$ for $n=0, 1, \dots$ are defined from the OPE of the operators as
\begin{align}
\mathcal{O}_1(z) \mathcal{O}_2(w) \sim \sum_{n} \frac{1}{(z-w)^{n+1}} ((\mathcal{O}_{1})_{(n)} (\mathcal{O}_2)) (w)\;,
\end{align}
with $\alpha_1:=-n+2$.

We propose the following map from 
an affine $W$-algebra $\CW^{-n+1}(\mathfrak{gl}_{n},f_{\mathrm{min}})$
into the bosonic part of the H-twisted VOA for the $T^{[2,1^{n-2}]}_{[1^n]}[\mathrm{SU}(n)]$ theory:
\beq
&W^{(1)}_{i,j}=(G^{(1,1)})^i_j,\; 1\le i,j\le n-2\;,\\
&W^{(1)}_{i,n-1}=(G^{(1,2)})^i,\; 1\le i\le n-2\;,\\
&W^{(1)}_{n-1,n-1}=J_F-\sum_{i=1}^{n-2}(G^{(1,1)})^i_i\;,\\
&W^{(2)}_{n-1,l}=(G^{(2,1)})_l,\; 1\le l\le n-2\;,\\
&W^{(2)}_{n-1,n-1}=:\chi_{(0)}\ps_{(0)}W^{(1)}_{n-1,n-1}:+::X^c_{1,b}Y^a_{1,c}::J^1_{2,a}I^b_{2,1}:: + \Tr{(\mathsf b_1\partial \mathsf c_1)}
\nn
\\
&-(:{\partial J}_{1,a}^{i} I_{1,i}^{a}:-:{\partial X}^b_{1,a}Y^a_{1,b}:-(n-2):J^1_{1,a}{\partial I}^a_{2,1}:+\partial \chi_{0}\ps_{0}-\chi_{0}\partial \ps_{0})  \;.
\label{eq: VOA map}
\eeq

In principle it is a matter of computations to
verify that our proposed map is a vertex algebra homomorphism. Some OPEs are 
easier to verify. For example, 
we can check OPEs between $W^{(1)}$'s as follows:
\begin{align}\label{W1W1}
W^{(1)}_{p,q}(z)W^{(1)}_{i,j}(w)&\sim  \frac{(2-n)\delta_{p,j}\delta_{q,i}+\delta_{p,q}\delta_{i,j}(1+\delta(p,i>n-2))}{(z-w)^2}+\frac{\delta_{q,i}W^{(1)}_{p,j}(w)-\delta_{p,j}W^{(1)}_{i,q}(w)}{z-w} \;.
\end{align}
We can also check  some of the OPEs by
remembering the discussions in terms of paths in \cref{sec:BRST_operators}. In general, however, the OPEs are complicated and require repeated use of the moment map relations. We will discuss the special case of $n=4$ in detail in the next section, where we check explicitly all the relations of the $\mathcal{W}$-algebra (and hence prove that we have a vertex algebra homomorphism).

Later we will show that the map \eqref{eq: VOA map} is indeed a vertex algebra map for all $n\ge 3$, and in fact it is injective, see \cref{sec:quiverred211111}. But there is a caveat: we need to modify the target of the map \eqref{eq: VOA map}. Namely what we actually find is a vertex algebra map from $\CW^{-n+1}(\mathfrak{gl}_n,f_{\min})$ to a certain vertex algebra denoted by $\mathsf D^{\mathrm{ch}}(\widetilde\CM)$, which can be regarded as a ``chiralization'' of the extended Higgs branch $\widetilde\CM$. $\mathsf D^{\mathrm{ch}}(\widetilde\CM)$ is closely related to the H-twisted boundary vertex algebra of $T^{[2,1^{n-2}]}_{[1^n]}[\mathrm{SU}(n)]$ theory, in the sense that there exists a vertex algebra map $\CV_{\partial,0}\to \mathsf D^{\mathrm{ch}}(\widetilde\CM)$, where $\CV_{\partial,0}$ is the degree zero component in the decomposition \eqref{lattice decomposition of boundary VOA}. When $n=4$, we can show that such map is injective, thus providing another way to show that \eqref{eq: VOA map} is a vertex algebra map.

Before proceeding to the computation of OPE, we note that the map \eqref{eq: VOA map} is a vertex algebra homomorphism (instead of a VOA homomorphism) and have not specified the stress-energy operator (i.e. conformal vector). It will be shown in the later section that our stress-energy operator $T$ in \eqref{def:generalTtensor} is in the image of $\CW^{-n+1}(\mathfrak{gl}_n,f_{\min})$, moreover it equals to $T_{\mathrm{KRW}}+\frac{1}{2n}J_F^2$, where $T_{\mathrm{KRW}}$ is the stress-energy operator of $\CW^{-n+1}(\mathfrak{sl}_n,f_{\min})$ defined by Kac-Roan-Wakimoto in \cite[(2.5)]{kac2003quantum}. See \cref{sec:quiverred211111} for details.

\subsection{Check of OPE Relations for \texorpdfstring{$n=4$}{n=4}}\label{sec:[211]}

In this section we work out in more detail the $T_{[1^4]}^{[2,1^2]}[\SU(4)]$ theory,
which is the $n=4$ case of the $T^{[2,1^{n-2}]}_{[1^n]}[\mathrm{SU}(n)]$ theory,
whose quiver diagram was given previously in Fig.~\ref{fig:quiverexample2011}. 

The BRST invariant currents, fermionic operators, moment map equations, and the map between generators from the $\CW$-algebra are obtained from the formulae in section \ref{BRST-cohomology-for-A-twisted-Trhosigmasimple} by taking $n=4$.

\paragraph{BRST invariant currents}
 Extension of Ka$\check{c}$-Moody currents: $G^{(1,1)}$ $G^{(1,2)}$ $G^{(2,1)}$
\beq \label{three_Gs}
(G^{(1,1)})^{i}_{\ph{j}j}=J_{1}^{i}I_{1,j}+\chi_{0}\ps_{0}\d^{i}_{\ph{a}j} \;,
&&
G^{(1,2)}=J_{1}X_{1}I_{2} \;,
&&
G^{(2,1)}=J_{2}Y_{1}I_{1} \;,
\eeq

\paragraph{Moment map equation}

\begin{equation}
(\widetilde{\m}_{\mathrm{ch},i})^{a}_{\ph{a}b}=X^{a}_{i}Y_{i,b}-Y^{a}_{i-1}X_{i-1,b}+\d^{a}_{\ph{a}b}(\chi_{i-1}\ps_{i-1}-\chi_{i} \ps_{i})+I^{a}_{i}J_{i,b} \delta_{i<3} \;.
\end{equation}
In the discussion below, we will use short-hand notation $\m_i$ for the extended chiral moment map $\widetilde{\m}_{\mathrm{ch},i}$.

Using the map \eqref{eq: VOA map}, we will confirm in the following that the OPE of the strong generators follows the relations provided in \cite{ueda2022example}. 

\paragraph{\texorpdfstring{$W^{(1)} W^{(1)}$}{WW} OPE}
We checked their OPEs as in \eqref{W1W1} (with $n=4$), 
and we can also check that we have no non-trivial OPEs  involving $W^{(1)}_{3,(q,j)\leq2}$.

\paragraph{\texorpdfstring{$W^{(1)}_{1,3} W^{(2)}_{3,2}$}{WW} OPE}
Using Mathematica\footnote{The computations are facilitated with the help of the Mathematica package \textsf{OPEdefs.m} \cite{Thielemans:1994er}.}, we compute that
\begin{equation}\label{eqn: W1[1,3] W2[3,2] OPE}
\begin{split}
    &(W^{(1)}_{1,3})_{(0)} W^{(2)}_{3,2}+\sum_{w\leq 2} (W^{(1)}_{w,2})_{(-1)} W^{(1)}_{1,w} - (W^{(1)}_{1,2})_{(-1)} W^{(1)}_{3,3}=\\
    &::J^1_{1,b}I^a_{1,2}::I^b_{1,k}J^k_{1,a}::-:W^{(1)}_{1,2}W^{(1)}_{3,3}:+2W^{(1)}_{1,2}\chi_{0}\psi_{0}+W^{(1)}_{1,2}:J^1_{2,a}I^a_{2,1}:\\
    &+::J^1_{1,b}I^a_{1,2}::X^b_{1,c}Y^c_{1,a}:: \;.
\end{split}
\end{equation}
The moment map equation implies that 
$$
::J^1_{1,b}I^a_{1,2}::I^b_{1,k}J^k_{1,a}::+::J^1_{1,b}I^a_{1,2}::X^b_{1,c}Y^c_{1,a}::+W^{(1)}_{1,2}(\chi_{0}\psi_{0}-\chi_{1}\psi_{1})=0\;,
$$
In detail, with the explicit form of $W^{(1)}_{1,2}$,
\beq
    &::J^1_{1,b}I^a_{1,2}:[(\m_{1})^{b}_{a}-\d^{b}_{a}(\chi_{0}\ps_{0}-\chi_{1}\ps_{1})]+:J_{1,a}^{1}I^{a}_{1,2}:(\chi_{0}\psi_{0}-\chi_{1}\psi_{1})\nn
    \\
    &=::J^1_{1,b}I^a_{1,2}:(\m_{1})^{b}_{a}:=0 \;,
\eeq
where in the last step we have used the fact that $:J^1_{1,b}I^a_{1,2}:$ transforms under the $\m_1$ as an adjoint representation, and the discussion in \cref{subsec:Some BRST exact elements} implies that $::J^1_{1,b}I^a_{1,2}:(\m_{1})^{b}_{a}:$ is BRST exact. Thus the right-hand-side of \eqref{eqn: W1[1,3] W2[3,2] OPE} equals to 
\begin{align}
    W^{(1)}_{1,2}(\chi_{0}\psi_{0}+\chi_{1}\psi_{1})-:W^{(1)}_{1,2}W^{(1)}_{3,3}:+W^{(1)}_{1,2}:J^1_{2,a}I^a_{2,1}: \;,
\end{align}
and using the moment map again, we find
\begin{align}
    \underbrace{\chi_{0}\psi_{0}+\chi_{1}\psi_{1}-W^{(1)}_{3,3}+:J^1_{2,a}I^a_{2,1}:}_{\Tr{(\m_1+\m_2+\m_3)}}=0 \;,
\end{align}
therefore $(W^{(1)}_{1,3})_{(0)} W^{(2)}_{3,2}$ agrees with \cite{ueda2022example}. On the other hand, we
verified
\begin{align}
    (W^{(1)}_{1,3})_{(1)} W^{(2)}_{3,2}=2W^{(1)}_{1,2}\;, \quad (W^{(1)}_{1,3})_{(n)} W^{(2)}_{3,2}=0,\;\forall n>1\;.
\end{align}
Hence $W^{(1)}_{1,3} W^{(2)}_{3,2}$ OPE are checked in all orders.

\paragraph{\texorpdfstring{$W^{(1)}_{1,3} W^{(2)}_{3,1}$}{WW} OPE}
\bi
\item[1] $(W^{(1)}_{1,3})_{(2)} W^{(2)}_{3,1}$ OPE is easily checked. 
\item[2] $(W^{(1)}_{1,3})_{(1)} W^{(2)}_{3,1}$ OPE is checked with the combinations of the moment map equations listed below.

\item[] The moment map for $(W^{(1)}_{1,3})_{(1)} W^{(2)}_{3,1}$ : $-3\Tr(\m_{1})-2\Tr(\m_{2})-2\Tr(\m_{3})$
\beq
(W^{(1)}_{1,3})_{(1)} W^{(2)}_{3,1}
=-2W^{(1)}_{3,3} +3 W^{(1)}_{1,1}+ W^{(1)}_{2,2}-3\Tr(\m_{1})-2\Tr(\m_{2})-2\Tr(\m_{3})
\;.
\eeq
\item[3] $(W^{(1)}_{1,3})_{(0)} W^{(2)}_{3,1}$ OPE is checked by
\begin{small}
\beq
&(W^{(1)}_{1,3})_{(0)} W^{(2)}_{3,1}+\sum_{w\leq 2} (W^{(1)}_{w,1})_{(-1)} W^{(1)}_{1,w} - (W^{(1)}_{1,1})_{(-1)} W^{(1)}_{3,3}+W^{(2)}_{3,3}
\label{eqn: W1[1,3] W2[3,1] OPE}
\nn
\\
&=\underbrace{W^{(2)}_{3,3} - :W^{(1)}_{1,1}W^{(1)}_{3,3}:+:{\partial J}_{1,a}^{i} I_{1,i}^{a}:-:{\partial X}^b_{1,a}Y^a_{1,b}:-2:J^1_{2,a}{\partial I}^a_{2,1}:+\partial \chi_{0}\ps_{0}-\chi_{0}\partial \ps_{0}}_{W^{(2)}_{3,3}-\left(W^{(2)}_{3,3}-\mathrm{Tr}(\mathsf b\partial c)\right)=\mathrm{Tr}(\mathsf b\partial \mathsf c)}
\nn
\\
&+\underbrace{:J^1_{1,b}I^a_{1,1}:(\chi_{0}\ps_{0}+\chi_{1}\ps_{1})+:J^1_{1,a}I^a_{1,1}::J^1_{2,b}I^b_{2,1}:-:J^1_{1,b}I^a_{1,1}W^{(1)}_{3,3}:}_{:J^1_{1,a}I^a_{1,1}:\Tr{(\m_{1}+\m_{2}+\m_{3})}=0}
\nn
\\
&+\underbrace{:J^1_{1,a}I^a_{1,1}:(\chi_{0}\ps_{0}-\chi_{1}\ps_{1})+::J^1_{1,b}I^a_{1,1}::X^b_{1,c}Y^c_{1,a}::+::J^1_{1,b}I^a_{1,1}::I^b_{1,k}J^k_{1,a}::}_{:J^1_{1,b}I^a_{1,1}:(\m_{1})^{b}_{a}=-\mathrm{Tr}(\mathsf b\partial \mathsf c)}
\nn
\\
&=0 \;,
\eeq
where in the last step we have used the fact that $:J^1_{1,b}I^a_{1,1}:$ transforms under the $\m_1$ as an adjoint representation with anomalous level $-1$, and discussion in \cref{subsec:Some BRST exact elements} implies that $::J^1_{1,b}I^a_{1,2}:(\m_{1})^{b}_{a}:$ is BRST equivalent to $-\mathrm{Tr}(\mathsf b\partial \mathsf c)$.
\end{small}

\ei

\paragraph{\texorpdfstring{$W^{(1)}_{2,3} W^{(2)}_{3,2}$}{WW} OPE}

The computation is essentially the same as the $W^{(1)}_{1,3} W^{(2)}_{3,1}$ OPE, and we do not repeat here.

\paragraph{\texorpdfstring{$W^{(1)}_{1,3} W^{(2)}_{3,3}$}{WW} and \texorpdfstring{$W^{(1)}_{2,3} W^{(2)}_{3,3}$}{WW} OPE}

\beq
    & (W^{(1)}_{1,3})_{(0)} W^{(2)}_{3,3}+ :W^{(1)}_{1,3} (W^{(1)}_{1,1} - W^{(1)}_{3,3}):+ :W^{(1)}_{2,3}W^{(1)}_{1,2} :
    \nn
    \\
    &=:W^{(1)}_{1,3}\Tr{(\m_{1}+\m_{2}+\m_{3})}:-:J^{1}_{1,b}X^{a}_{1,c}I^{c}_{2,1}:(\m_{1})^{b}_{a}=0 \;.
    \eeq
    \beq
    & (W^{(1)}_{2,3})_{(0)} W^{(2)}_{3,3}+ :W^{(1)}_{2,3} (W^{(1)}_{2,2} - W^{(1)}_{3,3}):+ :W^{(1)}_{1,3}W^{(1)}_{2,1} :
    \nn
    \\
    &=:W^{(1)}_{2,3}\Tr{(\m_{1}+\m_{2}+\m_{3})}:-:J^{2}_{1,b}X^{a}_{1,c}I^{c}_{2,1}:(\m_{1})^{b}_{a}=0 \;.
\eeq

\section{Vertex algebra for \texorpdfstring
{$T^{[n,n]}_{[1^{2n}]}[\SU(2n)]$}
{T[n,n][1{2n}][\SU(2n)]}}
\label{sec:[nn]}

Let us next discuss one more example of a class of  
balanced quivers, those for the $T^{[n,n]}_{[1^{2n}]}[\SU(2n)]$ theories. The corresponding quiver is presented in \cref{fig:T-shaped}. 

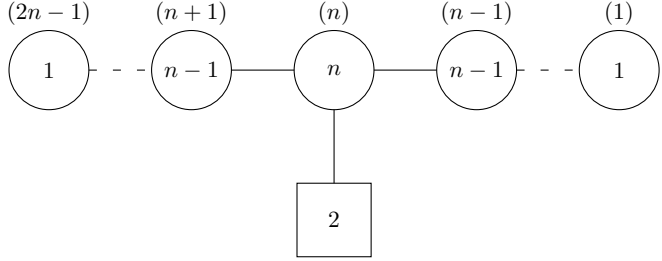
\begin{figure}[ht!]
    \centering
    \begin{tikzpicture}[scale=.75, every node/.style={scale=0.8}]  
        \draw[] (-8.5,0) circle (.7);
        \draw[loosely dashed] (-7.8, 0) -- (-6.7, 0);
        \draw[] (-6,0) circle (.7);
        \draw[] (-5.3,0) -- (-4.2,0);
        \draw[] (-3.5,0) circle (.7);
        \draw[] (-2.8,0) -- (-1.7,0);
        \draw[] (-1,0) circle (.7);    
        \draw[loosely dashed] (-0.3, 0) -- (0.8, 0);
        \draw[] (1.5,0) circle (.7);    
        \draw[] (-3.5,-.7) -- (-3.5,-2); 
        \draw[] (-4.15,-2) rectangle (-2.85,-3.3);
        \node at (-8.5,0) {$1$};
        \node at (-6,0) {$n-1$};
        \node at (-3.5,0) {$n$};
        \node at (-1,0) {$n-1$};
        \node at (1.5,0) {$1$};
        \node at (-3.5,-2.65) {$2$};
        \node at (-8.5,1) {$(2n-1)$};
        \node at (-6,1) {$(n+1)$};
        \node at (-3.5,1) {$(n)$};
        \node at (-1,1) {$(n-1)$};
        \node at (1.5,1) {$(1)$};  
    \end{tikzpicture}
    \caption{A quiver diagram for $T^{[n,n]}_{[1^{2n}]}[\SU(2n)]$}
    \label{fig:T-shaped}
\end{figure}

\subsection{General \texorpdfstring{$n$}{n}}
\paragraph{Ingredients:} Following the recipe, we assign $\U(1)$ Fermi multiplets at each gauge node $i$
\beq
    (\chi_i,\ps_i)_{i}\;, \qquad {i=1,\cdots,2n-1}~,
\eeq
Symplectic bosons:
\beq
    (X_i,Y_i)_{i=1,2,\cdots,2n-2} \;, &&(I_i,J_i)_{i=n} \;,
\eeq
where $(X_i,I_i)_i$ and $(Y_i,J_i)_{i}$ are incoming to and outgoing from the $i$-th vertex.

The moment maps are 
\beq
    ({\m}_{i})^{\a}_{\ph{\a}{\b}}&=X^{\a}_{i}Y_{i,\b}-Y^{\a}_{i-1}X_{i-1,\b}+\d^{\a}_{\ph{\a}{\b}}(\chi_{i-1}\ps_{i-1}-\chi_{i}\ps_{i})+I^{\a}_{a}J_{\b}^{a}\d_{i,n} \;, 
\eeq
where $i=1,\cdots,2n-2$, and $a=1,2$.

The BRST cohomology consists of $T$, $\{(G^{(i)})^a_{\ph{a}b}\}$, and $J_F=\sum_i \chi_{i}\ps_{i}$ as before. The extension of Kac-Moody currents is represented by $n$ independent paths on the quiver in \cref{fig:T-shaped}. All of the bosonic operators corresponding to paths on the quiver are written as below,
\beq
\begin{split}
J^{a}\underbrace{X_{n}\cdots X_{n+k_1-1}}_{k_1}\underbrace{Y_{n+k_1-1}\cdots Y_{n-k_2}}_{k_1+k_2}\underbrace{X_{n-k_2}\cdots X_{n-1}}_{k_2}I_{b} \;,
\\
J^{a}\underbrace{Y_{n-1}\cdots Y_{n-k_1}}_{k_1}\underbrace{X_{n-k_1}\cdots X_{n+k_2-1}}_{k_1+k_2}\underbrace{Y_{n+k_2-1}\cdots Y_{n}}_{k_2}I_{b}\;,
\end{split}
\eeq
where $k_i=0,\cdots,n-1$, and we omit all dummy indices that are contracted. One may choose for example the linearly independent bosonic operators to correspond to paths on the quiver that begin at the framing node, turn left at the rank-$n$ gauge node and loop around the rank-$(n+k)$ gauge node for some $0\leq k \leq 2n-1$, before returning to the framing node. All of the other bosonic operators, which correspond to the remaining possible paths, are related to these via the moment map relations. The corresponding BRST closed operators are $(G^{(1)})^a_{b}$ and $(G^{(m+1)})^a_{b}$ of the form
\beq
    (G^{(1)})^a_{b}&=J^{a}I_{b}+\frac{1}{2}\d^{a}_{b}\left(\sum_{k=0}^{n-1}\chi_{k}\ps_{k}-\sum_{k=n}^{2n-1}\chi_{k}\ps_{k}\right),
    \\
    (G^{(m)})^a_{b}&=J^{a}\underbrace{X_{n}\cdots X_{n+m-2}}_{n\ge m\geq 2}\underbrace{Y_{n+m-2}\cdots Y_{n}}_{n\ge m\geq 1}I_{b}\, + (\textrm{corr.}) \;,
\eeq
for $2\leq m\leq n$ and where (\textrm{corr.}) represent correction terms involving $(G^{(m')})^a_{b}$ with $m'<m$ and fermion bilinears. 

\subsection{\texorpdfstring
{$n=2$}
{n=2}}
\label{sec:[22]}

\subsubsection{BRST reduction}

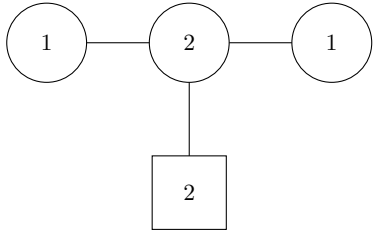
\begin{figure}[ht!]
    \centering
    \begin{tikzpicture}[scale=.75, every node/.style={scale=0.8}]        
        \draw[] (-6,0) circle (.7);
        \draw[] (-5.3,0) -- (-4.2,0);
        \draw[] (-3.5,0) circle (.7);
        \draw[] (-2.8,0) -- (-1.7,0);
        \draw[] (-1,0) circle (.7);         
        \draw[] (-3.5,-.7) -- (-3.5,-2);
        \draw[] (-4.15,-2) rectangle (-2.85,-3.3);
        \node at (-6,0) {$1$};
        \node at (-3.5,0) {$2$};
        \node at (-1,0) {$1$};
        \node at (-3.5,-2.65) {$2$};
    \end{tikzpicture}
    \caption{A quiver for a rectangular $\CW$-algebra}
    \label{fig:rectangular}
\end{figure}

Following the recipe in section \ref{subsection:pre-reduction}, we have the ingredients for boundary VOA of the H-twisted $T^{[2,2]}_{[1,1,1,1]}[\SU(4)]$ theory:
\beq
X_{1,\a},Y_{1}^{\a},X_{2}^\a,Y_{2,\a},I^{\a}_{a},J_{\a}^{a},\quad  (\a,a=1,2),
&&
\chi_{i},\ps_{i} \;,\quad (i=0,1,2,3) \;.
\eeq
with the gauge and the flavor indices $\a$ and $a$ respectively. 

\paragraph{Moment map relations}
\beq
{\m}_{1}&=X_{1,\a}Y^{\a}_{1}+(\chi_{0}\ps_{0}-\chi_{1}\ps_{1}) \;,
\\
({\m}_{2})^{\a}_{\ph{\a}{\b}}&=X^{\a}_{2}Y_{2,\b}-Y^{\a}_{1}X_{1,\b}+\d^{\a}_{\ph{\a}{\b}}(\chi_{1}\ps_{1}-\chi_{2}\ps_{2})+I^{\a}_{a}J_{\b}^{a} \;,
\\
{\m}_{3}&=-Y_{2,\a}X^{\a}_{2}+(\chi_{2}\ps_{2}-\chi_{3}\ps_{3}) \;.
\eeq
We can write down the following set of bosonic operators which are classically gauge invariant
\beq
\left\{J_{\a}^{a}I^{\a}_{b},\, J_{\a}^{a}Y^{\a}_{1}X_{1,\b}I^{\b}_{b},\, J_{\a}^{a}X^{\a}_{2}Y_{2,\b}I^{\b}_{b},\,
J_{\a}^{a}X^{\a}_{2}Y_{\b}Y^{\b}_{1}X_{1,\g}I^{\g}_{b}\right\} \;.
\eeq
Following the strategy for the general $n$, one may choose the bosonic operators corresponding to paths on the quiver that begin at the framing node and turn left at the rank-2 gauge node before returning to the starting point, namely the operators $J_{\a}^{a}I^{\a}_{b}$ and $J_{\a}^{a}X^{\a}_{2}Y_{2,\b}I^{\b}_{b}$.

\paragraph{BRST cohomology}

The bosonic generators for the BRST cohomology are:
\begin{itemize}
    \item[(1)] Stress-energy operator $T=T_{\beta\gamma}+T_{\psi\chi}+T_{bc}$.
    \item[(2)] $\U(1)$ current: $J_F=\sum_{i=0}^{3}\chi_{i}\ps_{i} $.
    \item[(3)] Extension of Kac-Moody currents: $(G^{(1)})^a_{\ph{a}b}, (G^{(2)})^a_{\ph{a}b}$. 

\end{itemize}
From all the possible paths, the extension of Kac-Moody currents are chosen as
\beq
(G^{(1)})^a_{\ph{a}b}&=J_{\a}^{a}I^{\a}_{b}+\frac{1}{2}\d^{a}_{b}(\chi_{0}\ps_{0}+\chi_{1}\ps_{1}-\chi_{2}\ps_{2}-\chi_{3}\ps_{3}) \;,
\\
(G^{(2)})^{a}_{\ph{a}b}&=-(J_{\a}^{a}X^{\a}_{2})_{(-1)}Y_{2,\b}I^{\b}_{b}+\rho (G^{(1)})^a_{\ph{a}b}-(\rho^2+\partial\rho)\delta^a_b \;,
\eeq
where $\rho=(3b_3-b_2-b_1-b_0)/4$. The free bosons $b_0,\cdots,b_3$ are introduced such that $b_i(z)b_j(w)\sim {\delta_{ij}}/{(z-w)^2}$, and $b_i$ is related to the free fermions by $b_i=\chi_i\psi_i$. 

We will come back to this example in
section \ref{[2,2]-revisited}. We will see, in particular, that
the two generators $G^{(1)}$ and $G^{(2)}$, after a suitable combination with the current $J_F$, will be mapped the strong generators of the rectangular $\mathcal{W}$-algebra 
$\CW^{-3}(\mathfrak{gl}_4,[2^2])$.

\bigskip

\noindent We remark now that the VOA of the H-twisted $T^{[2,2]}_{[1,1,1,1]}[\SU(4)]$ theory can also be obtained by the quiver reduction of the  $T_{[1,1,1,1]}^{[2,1,1]}[\SU(4)]$ theory, up to gauging of a $\U(1)$ framing, by specializing the discussion in section \ref{sec:quiverred211111}. 

\section{Localization of quiver vertex algebra}\label{sec:localization}

In \cref{sec:[21..1]} and \cref{sec:[nn]}, we have discussed the homomorphism from affine W-algebras into our VOAs. The practical computations, however,
 become rather involved as we consider more complicated quivers, and it is clearly desirable to have a systematic procedure for analyzing the VOAs associated with general linear quivers.

Fortunately, there indeed exists such a systematic procedure, which we will discuss in detail in the coming three sections. One of the motivations comes from the by-now-familiar motto that the VOA is a chiralization of the Higgs branch. On the geometry side, while the geometry of the Higgs branch is complicated in itself, one can always choose a local patch to simplify the geometry dramatically.
There is a vertex algebra counterpart of this statement,
in the form of the localizations of the quiver vertex algebras.

In this section, we sketch the $\hbar$-adic vertex algebra and its sheaf version, define the $\hbar$-adic BRST cohomology and its sheaf version, and state the main result (vanishing theorem and injectivity theorem), which will be proven in the companion paper \cite{math_draft}.

The quiver vertex algebra $\CV(Q,\mathbf v,\mathbf w)$ can be thought as a vertex algebra analog of the extended quiver variety $\widetilde\CM^0(Q,\mathbf v,\mathbf w)$ \eqref{def_M0}, which is defined as the affine quotient
\begin{align*}
    \widetilde\CM^0(Q,\mathbf v,\mathbf w):=\widetilde\mu^{-1}(0)\sslash \mathrm{GL}(\mathbf v) \;,
\end{align*}
where $\mathrm{GL}(\mathbf v)=\prod_{i\in Q_0}\mathrm{GL}(\mathbf v_i)$ is the complexified gauge group \eqref{GL_Q}, and $\widetilde\mu:T^*\mathrm{Rep}(Q,\mathbf v,\mathbf w)\times \mathbb C^{Q_0}\to \mathfrak{gl}(\mathbf v)^*$ is the extended moment map: 
\begin{align*}
    \widetilde\mu((X,Y,I,J)\in T^*\mathrm{Rep}(Q,\mathbf v,\mathbf w), h\in \mathbb C^{Q_0})=\left(\sum_{a:j\to i}X_aY_a-\sum_{b:i\to j}Y_bX_b+I_iJ_i+h_i\right)_{i\in Q_0}.
\end{align*}
Suppose that $U\subset T^*\mathrm{Rep}(Q,\mathbf v,\mathbf w)\times \mathbb C^{Q_0}$ is a $\mathrm{GL}(\mathbf v)$-invariant open subset, then there is a natural map between varieties
\begin{align*}
    (\widetilde\mu^{-1}(0)\cap U)\sslash \mathrm{GL}(\mathbf v)\to \widetilde\CM^0(Q,\mathbf v,\mathbf w) \;.
\end{align*}
A distinguished choice of such $U$ is the stable locus, which consists of those $(X,Y,I,J,z)$ such that the actions of $X,Y$ on the image of $I$ generate all gauge node vector spaces. Denote by $\widetilde\mu^{-1}(0)^{\mathrm{st}}$ the intersection between $\widetilde\mu^{-1}(0)$ and the stable locus, then the extended Nakajima quiver variety $\widetilde\CM(Q,\mathbf v,\mathbf w)$ is defined as the quotient 
\begin{align*}
    \widetilde\CM(Q,\mathbf v,\mathbf w):=\widetilde\mu^{-1}(0)^{\mathrm{st}} \sslash \mathrm{GL}(\mathbf v) \;,
\end{align*}
and it has a natural projection $\pi: \widetilde\CM(Q,\mathbf v,\mathbf w)\to \widetilde\CM^0(Q,\mathbf v,\mathbf w)$. By the geometric invariant theory, $\pi$ is a projective morphism. When the moment map $\mu$ is 
flat\footnote{In fact, the moment map $\mu$ is flat if and only if $\pi$ induces an isomorphism between global sections of the structure sheaf.}, the projection $\pi$ induces an isomorphism between global sections of structure sheaf, i.e.
\begin{align*}
    \mathbb C[\widetilde\CM^0(Q,\mathbf v,\mathbf w)]\cong \Gamma(\widetilde\CM(Q,\mathbf v,\mathbf w), \mathscr O_{\widetilde\CM(Q,\mathbf v,\mathbf w)}) \;.
\end{align*}
The variety $\widetilde\CM(Q,\mathbf v,\mathbf w)$ is smooth, and the quotient map $q:\widetilde\mu^{-1}(0)^{\mathrm{st}}\to \widetilde\CM(Q,\mathbf v,\mathbf w)$ is a principal $\mathrm{GL}(\mathbf v)$-bundle, whereas $\widetilde\CM^0(Q,\mathbf v,\mathbf w)$ is singular in general and the action of $\mathrm{GL}(\mathbf v)$ on $\widetilde\mu^{-1}(0)$ is always non-free, this makes $\widetilde\CM(Q,\mathbf v,\mathbf w)$ a better geometric object to study than $\widetilde\CM^0(Q,\mathbf v,\mathbf w)$. 

For simplicity we will abbreviate $\widetilde\CM(Q,\mathbf v,\mathbf w)$ (resp.\ $\widetilde\CM^0(Q,\mathbf v,\mathbf w)$) to $\widetilde\CM$ (resp.\ $\widetilde\CM^0$).

\bigskip Such a construction has a vertex algebra analog. To properly state it, we need to make sense of sheafification of vertex algebra. In the classical world, i.e.\ in a commutative algebra, this is captured by the notion of localization. For quantized operators, we need to be careful about in what sense we localize. For example the Weyl algebra $D(\mathbb C^2):=\mathbb C\langle x,y\rangle/([x,y]-1)$ can be localized just for the variable $x$, or just for the variable $y$. Namely, we can consider $f(x)g(x,y)$ where $f(x)$ is a rational function of $x$ and $g(x,y)$ is a polynomial in $x,y$, then the commutator $[f_1(x)g_1(x,y), f_2(x)g_2(x,y)]$ is well-defined, in this sense the Weyl algebra can be localized with respect to $x$. Similarly, it can be localized with respect to $y$. If we attempt to localize with respect to both $x$ and $y$, however, then we immediately run into a problem: the commutator $[x^{-1},y^{-1}]$ does not make sense. To cure this issue, one can introduce an $\hbar$-adic version of the Weyl algebra $D(\mathbb C^2)_\hbar:=\mathbb C[\![\hbar]\!]\langle x,y\rangle/([x,y]-\hbar)$, where we treat $\hbar$ as a formal variable, then it makes sense to define the $\star$-multiplication as
\begin{align*}
F(x,y)\star G(x,y):=
m\circ e^{\frac{\hbar}{2} \partial_x\wedge\partial_y}(F(x,y)\otimes G(x,y)) \;,
\end{align*}
where $F,G$ are arbitrary rational functions in $x,y$, and $m$ is the usual multiplication for rational functions. The RHS is treated as a power series in $\hbar$. The $\star$ product allows us to define a sheaf $\mathscr D_{\mathbb C^2,\hbar}$ of non-commutative algebras on $\mathbb C^2$ such that $D(\mathbb C^2)_\hbar=\Gamma(\mathbb C,\mathscr D_{\mathbb C^2,\hbar})$.

\bigskip The lesson we learn from the example of Weyl algebra is that localization might involve infinite sums, which can be made sense of by allowing formal power series with the aid of formal variables. For vertex algebras, a notion of $\hbar$-adic vertex algebra was introduced in \cite{li2004vertex}. The formal definition is as follows. An $\hbar$-adic vertex algebra is a tuple $(V,|0\rangle,\partial,Y(-,z))$ such that $V$ is a flat $\mathbb C[\![\hbar]\!]$-module complete in $\hbar$ topology, and $Y(-,z):V\to \mathrm{End}_{\mathbb C[\![\hbar]\!]}(V)[\![z^{\pm}]\!]$ is a $\mathbb C[\![\hbar]\!]$-linear map such that
\begin{enumerate}
    \item $a_{(n)}$ is continuous with respect to the $\hbar$-adic topology, for all $a\in V, n\in \mathbb Z$.
    \item $(V/\hbar^NV,|0\rangle,\partial,Y(-,z))$ is a vertex algebra for all $N\in \mathbb Z_{\ge 1}$.
\end{enumerate}
An example of $\hbar$-adic vertex algebra is a $\beta\gamma$-system:
\begin{align*}
    \beta(z)\gamma(w)\sim \frac{\hbar}{z-w}\;.
\end{align*}
In this example, the underlying vector space (vacuum module) is 
\begin{align*}
    D^{\mathrm{ch}}(\mathbb C^2)_{\hbar}:=\mathbb C[\![\hbar]\!][\beta_{(-n)},\gamma_{(-n)}\:|\:n\in \mathbb Z_{\ge 1}]|0\rangle \;.
\end{align*}
Note that $D^{\mathrm{ch}}(\mathbb C^2)/(\hbar)$ is a commutative vertex algebra, in fact a vertex Poisson algebra, which is isomorphic to the coordinate ring on the jet bundle ${\jet}\mathbb C^2$ endowed with the natural vertex Poisson algebra structure induced by the Poisson structure on $\mathbb C^2=T^*\mathbb C$. 

We define a sheaf of $\hbar$-adic vertex algebra $\mathscr D^{\mathrm{ch}}_{\mathbb C^2,\hbar}$ on $\mathbb C^2$ by 
\begin{align*}
    \mathscr D^{\mathrm{ch}}_{\mathbb C^2,\hbar}(U_f):=\mathbb C[{\jet}U_f][\![\hbar]\!] \;
\end{align*}
where $f$ is a polynomial in $\beta,\gamma$, and $U_f\subset \mathbb C^2$ is the open subset of nonzero locus of $f$. The $\hbar$-adic vertex algebra structure on the the vector space $\mathbb C[{\jet}U_f][\![\hbar]\!]$ is defined using Wick's contraction formula:
\begin{align*}
    A(z)B(w)\sim :e^{\frac{\hbar}{z-w}\partial_\beta\wedge\partial_\gamma}(A(z)\otimes B(w)): \;
\end{align*}
where $A,B\in \mathbb C[\partial^n\beta,\partial^n\gamma\:|\:n\in \mathbb Z_{\ge 0}][f^{-1}]$, and $::$ is the normal-ordered product. For example let $f=\beta\gamma$, then take $A=\beta^{-1}$ and $B=\gamma^{-1}$, the OPE reads:
\begin{align*}
    \beta^{-1}(z)\gamma^{-1}(w)\sim \sum_{n=1}^{\infty}n!\hbar^n\frac{:\beta^{-n-1}(z)\gamma^{-n-1}(w):}{(z-w)^n} \;.
\end{align*}

The above example naturally generalizes to multiple pairs of $\beta\gamma$'s, i.e. we have $$D^{\mathrm{ch}}(T^*\mathbb C^m)_{\hbar}=D^{\mathrm{ch}}(T^*\mathbb C)_{\hbar}^{\otimes m}$$ for all $m\in \mathbb Z_{\ge 1}$. The choice of coordinate system on $\mathbb C^m$ does not affect the $\hbar$-adic vertex algebra structure on $D^{\mathrm{ch}}(T^*\mathbb C^m)_{\hbar}$, so we will simply write it as $D^{\mathrm{ch}}(T^*V)_{\hbar}$, where $V$ is an $m$-dimensional vector space. We can define the localization of $D^{\mathrm{ch}}(T^*V)_{\hbar}$ using Wick's contraction formula, and therefore get a sheaf of $\hbar$-adic vertex algebras on $T^*V$, denoted by $\mathscr D^{\mathrm{ch}}_{T^*V,\hbar}$; $\mathscr D^{\mathrm{ch}}_{T^*V,\hbar}$ is called the sheaf of chiral differential operators (CDO) on $T^*V$.

We can also sheafify the more general $\beta\gamma\mathsf b\mathsf c$ system, i.e. the $\hbar$-adic vertex superalgebra $D^{\mathrm{ch}}(T^*(V\oplus\Pi S))_{\hbar}$, where $V$ and $S$ are vectors spaces and $\Pi S$ is the odd vector space associated to $S$. We only localize the bosonic variables, i.e.\ only the $\beta\gamma$ part is allowed to take the inverse. Then we get a sheaf of $\hbar$-adic vertex superalgebras $\mathscr D^{\mathrm{ch}}_{T^*(V\oplus\Pi S),\hbar}$ on the variety $T^*V$.

Another example is the $\hbar$-adic Heisenberg algebra and its localization. Let $W=\bigoplus_{i=1}^m\mathbb C\cdot a^i$ be a vector space, endowed with a symmetric inner product $(\cdot,\cdot)$, then we define $\hbar$-adic vertex algebra 
\begin{align*}
    \mathcal{H}(W,(\cdot,\cdot))_\hbar:=\mathbb C[\![\hbar]\!][a^1_{(-n)},\cdots,a^m_{(-n)}\:|\:n\in \mathbb Z_{\ge 1}]|0\rangle \;,
\end{align*}
whose $\hbar$-adic vertex algebra structure is given by the OPE
\begin{align*}
    a^i(z)a^j(w)\sim \frac{\hbar^2(a^i,a^j)}{(z-w)^2} \;.
\end{align*}
Since the Heisenberg algebra also has the Wick contraction formula, we can define its localization similarly to the localization of $\beta\gamma$ system. Namely for a polynomial $f\in \mathbb C[a^1,\cdots,a^m]$, and $A,B\in \mathbb C[\partial^n a^1,\cdots,\partial^na^m\:|\:n\in \mathbb Z_{\ge 0}][f^{-1}]$, the OPE reads
\begin{align*}
    A(z)B(w)\sim :e^{\frac{\hbar^2(a^i,a^j)}{(z-w)^2}\partial_{a^i}\otimes\partial_{a^j}}(A(z)\otimes B(w)): \;
\end{align*}
where $::$ is the normal-ordered product. For example let $f=a^1a^2$, then take $A=a^1$ and $B=a^2$, the OPE reads:
\begin{align*}
    (a^1)^{-1}(z)(a^2)^{-1}(w)\sim \sum_{n=1}^{\infty}n!\hbar^{2n}(a^1,a^2)^n\frac{:(a^1)^{-n-1}(z)(a^2)^{-n-1}(w):}{(z-w)^{2n}} \;.
\end{align*}
We define a sheaf $\mathscr H^{(\cdot,\cdot)}_{W,\hbar}$ by
\begin{align*}
    \mathscr H^{(\cdot,\cdot)}_{W,\hbar}(U_f):=\mathbb C[J_\infty U_f][\![\hbar]\!] \;,
\end{align*}
endowed with the aforementioned $\hbar$-adic vertex algebra structure.

\bigskip Now coming back to our quiver setting, we define a sheaf of $\hbar$-adic vertex superalgebras on the prequotient $\mathfrak R:=T^*\mathrm{Rep}(Q,\mathbf v,\mathbf w)\times \mathbb C^{Q_0}$, denoted by $\mathscr D^{\mathrm{ch}}_{\mathfrak R,\hbar}$, by the following
\begin{align}
    \mathscr D^{\mathrm{ch}}_{\mathfrak R,\hbar}:=\mathscr D^{\mathrm{ch}}_{T^*\mathrm{Rep}(Q,\mathbf v,\mathbf w|\mathbf d),\hbar}\widehat{\boxtimes} \mathscr H^{(\cdot,\cdot)_Q}_{\mathbb C^{Q_0},\hbar} \;,
\end{align}
where $\mathbf d=\mathbf w-\mathsf C\mathbf v$, and $(\cdot,\cdot)_{Q}$ is the Cartan inner product on $\mathbb C^{Q_0}$:
\begin{align}
    (h_i,h_j)_Q:=2\delta_{ij}-\#(a:i\to j\in Q_1)-\#(a:j\to i\in Q_1)\;.
\end{align}
$\mathscr D^{\mathrm{ch}}_{\mathfrak R,\hbar}$ is the sheafified version of $\CVs(Q,\mathbf v,\mathbf w)$, namely the $\hbar$-adic completion $\CVs(Q,\mathbf v,\mathbf w)_{\hbar}$ is isomorphic to the global section $\Gamma(\mathfrak R,\mathscr D^{\mathrm{ch}}_{\mathfrak R,\hbar})$.

\subsection{Sheaf of BRST cohomologies}

In this subsection we construct the sheaf of BRST cohomologies of $\mathscr D^{\mathrm{ch}}_{\mathfrak R,\hbar}$ on $\widetilde\CM$, following \cite{arakawa2023hilbert} and \cite{kuwabara2017vertex}. 

\bigskip First of all, let us define $\hbar$-adic BRST reduction. Let $\mathfrak g$ be a reductive Lie algebra and $\kappa$ be a $\mathfrak g$-invariant symmetric inner form on $\mathfrak g$, then the $\hbar$-adic current algebra $V^\kappa(\mathfrak g)_\hbar$ is the $\mathbb C[\![\hbar]\!]$-module $$\mathbb C[J_\infty\mathfrak g][\![\hbar]\!]:=\mathbb C[\![\hbar]\!][E^\alpha_{(-n)}\:|\:n\in \mathbb Z_{\ge 1}, E^\alpha\in\text{a basis of }\mathfrak g]|0\rangle$$ endowed with OPE
\begin{align*}
    E^{\alpha}(z)E^\beta(w)\sim \frac{\hbar^2\kappa^{\alpha,\beta}}{(z-w)^2}+\frac{\hbar f^{\alpha\beta}_\gamma E^\gamma(w)}{z-w} \;,
\end{align*}
where $\kappa^{\alpha,\beta}=\kappa(E^\alpha,E^\beta)$ is the matrix element of the inner form $\kappa$, and $f^{\alpha\beta}{}_\gamma$ is the structure constant of $\mathfrak g$ (so that $f^{\alpha\beta}{}_\gamma E^\gamma=[E^\alpha,E^\beta]$).

Denote by $\kappa_{\mathfrak g}$ the Killing form on $\mathfrak g$. Suppose that $V$ is an $\hbar$-adic vertex superalgebra such that there exists an $\hbar$-adic vertex superalgebra map
\begin{align*}
    J_V: V^{-\kappa_{\mathfrak g}}(\mathfrak g)_\hbar\to V \;,
\end{align*}
then we define $\hbar$-adic vertex superalgebra 
\begin{align}
    \widetilde C(\mathfrak g, V)_\hbar:=V\otimes \CV_{\mathsf b\mathsf c}(\mathfrak g)_\hbar \;,
\end{align}
where $\CV_{\mathsf b\mathsf c}(\mathfrak g)_\hbar$ is the $\hbar$-adic $\mathsf b\mathsf c$ system of $\mathfrak g$, i.e. chiral differential operators $D^{\mathrm{ch}}(\Pi T^*\mathfrak g)_\hbar$ of the odd symplectic vector space $\Pi T^*\mathfrak g$. Note that $\CV_{\mathsf b\mathsf c}(\mathfrak g)_\hbar$ admits an $\hbar$-adic vertex superalgebra map
\begin{align*}
    J_{\mathsf b\mathsf c}: V^{\kappa_{\mathfrak g}}(\mathfrak g)_\hbar\to \CV_{\mathsf b\mathsf c}(\mathfrak g)_\hbar\;, 
    \quad 
    J_{\mathsf b\mathsf c}(E^\alpha)=f^{\alpha\beta}{}_{\gamma}\mathsf b^\gamma\mathsf c_{\beta}\;.
\end{align*}
$\CV_{\mathsf b\mathsf c}(\mathfrak g)_\hbar$ is graded by ghost numbers, i.e. $\deg \mathsf c=+1,\deg\mathsf b=-1$, so the image of $J$ is contained in the degree zero piece $\CV^0_{\mathsf b\mathsf c}(\mathfrak g)_\hbar$. Consider the ghost number $1$ element 
\begin{align}
    \JBRST:=\mathsf c_\alpha\left(J_V(E^\alpha)+\frac{1}{2}J_{\mathsf b\mathsf c}(E^\alpha)\right)\in  \widetilde C^1(\mathfrak g, V)_\hbar \;.
\end{align}
The operator $\mathcal{Q}_{\hbar}:=\frac{1}{\hbar}\JBRST_{(0)}$ acts on $\widetilde C(\mathfrak g, V)_\hbar$, and it squares to zero $\mathcal{Q}_{\hbar}^2=0$, therefore $(\widetilde C(\mathfrak g, V)_\hbar, \mathcal{Q}_{\hbar})$ is a chain complex of $\hbar$-adic vertex superalgebra, which is graded by ghost numbers. We call it the $\hbar$-adic BRST complex. 

Note that the linear map 
\begin{align}
    E^\alpha\mapsto J(E^\alpha):=\mathcal{Q}_{\hbar}(\mathsf b^\alpha)=J_V(E^\alpha)+J_{\mathsf b\mathsf c}(E^\alpha)
\end{align}
is an $\hbar$-adic vertex superalgebra chain complex map 
$$
J: (V^0(\mathfrak g)_\hbar,0)\to (\widetilde C(\mathfrak g, V)_\hbar, \mathcal{Q}_{\hbar}) \;,
$$ 
where the LHS is endowed with trivial differential. In particular the Lie algebra $\mathfrak g$ acts on the complex $(\widetilde C(\mathfrak g, V)_\hbar, \mathcal{Q}_{\hbar})$ by $E^\alpha\mapsto \frac{1}{\hbar}J(E^\alpha)_{(0)}$. 

The relative $\hbar$-adic BRST complex is defined to be the subspace 
\begin{align}
    C(\mathfrak g, V)_\hbar:=\{v\in \widetilde C(\mathfrak g, V)_\hbar\:|\: J(E^\alpha)_{(0)}v=\mathsf b^\alpha_{(0)}v=0,\forall \alpha\} \;,
\end{align}
which is a sub-complex because $[J(E^\alpha)_{(0)},\mathcal{Q}_{\hbar}]=0$ and $[\mathsf b^\alpha_{(0)},\mathcal{Q}_{\hbar}]=J(E^\alpha)_{(0)}$. We define the relative $\hbar$-adic BRST cohomology to be
\begin{align}
    H^{\infty/2+\bullet}_{\hbar\mathrm{BRST}}(\mathfrak g,V):=H^{\bullet}(C(\mathfrak g, V)_\hbar, \mathcal{Q}_{\hbar})\;.
\end{align}
This is a $\mathbb Z$-graded $\hbar$-adic vertex superalgebra.

\bigskip Now we come back to our quiver setting. According to our previous discussion in \cref{subsection:anomaly}, the chiral extended moment map $\widetilde\mu_{\mathrm{ch}}$ is an $\hbar$-adic vertex superalgebra map from the current algebra $V^{-\kappa_{\mathfrak{gl}(\mathbf v)}}(\mathfrak{gl}(\mathbf v))_\hbar$ to $\CVs(Q,\mathbf v,\mathbf w)_\hbar$. We define a presheaf of $\mathbb Z$-graded $\hbar$-adic vertex superalgebras on $\widetilde\CM$ by assigning any open subset $U\subset \widetilde\CM$ to the relative $\hbar$-adic BRST cohomology:
\begin{align}
    U\mapsto H^{\infty/2+\bullet}_{\hbar\mathrm{BRST}}(\mathfrak{gl}(\mathbf v),\mathscr D^{\mathrm{ch}}_{\mathfrak R,\hbar}(\widetilde U))\;.
\end{align}
Here $\widetilde U$ is an open subset of $\mathfrak R$ such that $\widetilde U\cap \widetilde\mu^{-1}(0)=q^{-1}(U)$ where $q^{-1}(U)$ is the preimage of $U$ under the quotient map $q:\widetilde\mu^{-1}(0)^{\mathrm{st}}\to \widetilde\CM$. Note that 
\begin{itemize}
    \item $\widetilde U$ always exists, and the largest such choice is $\widetilde U_{\mathrm{max}}=\mathfrak R\setminus(\widetilde\mu^{-1}(0)\setminus q^{-1}(U))$,
    \item if $U$ is affine then $\widetilde U$ can be chosen such that it is also affine,
    \item $H^{\infty/2+\bullet}_{\hbar\mathrm{BRST}}(\mathfrak{gl}(\mathbf v),\mathscr D^{\mathrm{ch}}_{\mathfrak R,\hbar}(\widetilde U))$ does not depend on the choice of $\widetilde U$, because a generalization of \cite[Lemma 4.13]{arakawa2023hilbert} holds, i.e. if $D\subset \mathfrak R$ is an open subset such that $D\cap \mu^{-1}(0)=\emptyset$ then $H^{\infty/2+\bullet}_{\hbar\mathrm{BRST}}(\mathfrak{gl}(\mathbf v),\mathscr D^{\mathrm{ch}}_{\mathfrak R,\hbar}(D))=0$, see the companion paper \cite{math_draft} for more details.
\end{itemize}
The sheaf of $\mathbb Z$-graded $\hbar$-adic vertex superalgebras on $\widetilde\CM$ associated to the presheaf $U\mapsto H^{\infty/2+\bullet}_{\hbar\mathrm{BRST}}(\mathfrak{gl}(\mathbf v),\mathscr D^{\mathrm{ch}}_{\mathfrak R,\hbar}(\widetilde U))$ is denoted by $\mathscr H^{\infty/2+\bullet}_{\hbar\mathrm{BRST}}(\mathfrak{gl}(\mathbf v),\mathscr D^{\mathrm{ch}}_{\mathfrak R,\hbar})$, and denote its cohomological degree zero piece by 
\begin{align}
    \mathscr D^{\mathrm{ch}}_{\widetilde\CM,\hbar}:= \mathscr H^{\infty/2+0}_{\hbar\mathrm{BRST}}(\mathfrak{gl}(\mathbf v),\mathscr D^{\mathrm{ch}}_{\mathfrak R,\hbar})\;.
\end{align}
We note that if $U\subset \widetilde\CM$ is an open affine subset, then 
\begin{align}\label{eq:sheaf on affine}
    \mathscr D^{\mathrm{ch}}_{\widetilde\CM,\hbar}(U)=H^{\infty/2+0}_{\hbar\mathrm{BRST}}(\mathfrak{gl}(\mathbf v),\mathscr D^{\mathrm{ch}}_{\mathfrak R,\hbar}(\widetilde U))\;.
\end{align}
The argument in \cite[Section 4]{arakawa2023hilbert} naturally generalizes to our setting, and we have the following.
\begin{thm}
The BRST cohomology sheaf $\mathscr H^{\infty/2+n}_{\hbar\mathrm{BRST}}(\mathfrak{gl}(\mathbf v),\mathscr D^{\mathrm{ch}}_{\mathfrak R,\hbar})$ vanishes for $n\neq 0$. Moreover, there is an isomorphism between sheaves of vertex Poisson superalgebras
\begin{align}
    \mathscr D^{\mathrm{ch}}_{\widetilde\CM,\hbar}/\hbar\mathscr D^{\mathrm{ch}}_{\widetilde\CM,\hbar}\cong \mathscr O_{J_\infty\widetilde\CM(Q,\mathbf v,\mathbf w|\mathbf d)}\;,
\end{align}
where $\mathbf d:=\mathbf w-\mathsf C\mathbf v$ and $\widetilde\CM(Q,\mathbf v,\mathbf w|\mathbf d)$ is the extended quiver super variety.
\end{thm}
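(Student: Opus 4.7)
The plan is to reduce the $\hbar$-adic computation to its classical $\hbar=0$ counterpart, in the spirit of the Arakawa-style arguments the excerpt cites. I would work locally on an affine open $U\subset \widetilde\CM$, which exists because $\pi$ is projective; by \eqref{eq:sheaf on affine} it suffices, for any affine lift $\widetilde U\subset \mathfrak R$ with $\widetilde U\cap\widetilde\mu^{-1}(0)=q^{-1}(U)$, to compute the global $\hbar$-adic relative BRST cohomology $H^{\infty/2+\bullet}_{\hbar\mathrm{BRST}}(\mathfrak{gl}(\mathbf v),\mathscr D^{\mathrm{ch}}_{\mathfrak R,\hbar}(\widetilde U))$ and show that it vanishes outside degree zero, with degree zero having the stated classical limit. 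Independence of $\widetilde U$ is guaranteed by the auxiliary vanishing quoted in the bullet points just before the theorem: on open subsets disjoint from $\widetilde\mu^{-1}(0)$, the BRST cohomology is identically zero.

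First I would filter the relative BRST complex $C(\mathfrak{gl}(\mathbf v),\mathscr D^{\mathrm{ch}}_{\mathfrak R,\hbar}(\widetilde U))$ by powers of $\hbar$. Because the $\hbar$-adic vertex superalgebra is flat over $\mathbb C[\![\hbar]\!]$ and $\mathcal Q_\hbar=\frac{1}{\hbar}\JBRST_{(0)}$ respects the filtration, the $E_0$ page of the associated spectral sequence is a purely classical Chevalley-Koszul complex. Concretely, modulo $\hbar$ the differential becomes contraction with the extended moment map components $\widetilde\mu^\alpha$ on
\begin{equation*}
\mathscr O_{J_\infty\widetilde U}\otimes \Lambda^\bullet \mathfrak{gl}(\mathbf v)^{*}\otimes \Lambda^\bullet \mathfrak{gl}(\mathbf v),
\end{equation*}
with the fermionic Fermi multiplet directions accounting for the $\mathbf d=\mathbf w-\mathsf C\mathbf v$ factor built into $\mathfrak R$, and the relative condition picking out the $\mathfrak{gl}(\mathbf v)$-basic subcomplex.

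Next I would establish the classical vanishing and identify the degree zero cohomology. On the stable locus $\widetilde\mu^{-1}(0)^{\mathrm{st}}$ the action of $\mathrm{GL}(\mathbf v)$ is free, hence $d\widetilde\mu$ is surjective there, so the $\widetilde\mu^\alpha$ form a Koszul-regular sequence after restriction to a $q$-saturated neighborhood of $U$. Jet-prolonging this regularity, the Koszul complex resolves $\mathscr O_{J_\infty(\widetilde U\cap \widetilde\mu^{-1}(0))}$. Taking basic cohomology for the free $\mathrm{GL}(\mathbf v)$-action — exact by reductivity combined with freeness — concentrates the classical cohomology in degree zero and yields exactly $\mathscr O_{J_\infty\widetilde\CM(Q,\mathbf v,\mathbf w|\mathbf d)}(U)$. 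I would glue these local identifications using the independence of $\widetilde U$ and compatibility of Koszul resolutions with restriction to smaller affines.

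Finally, I would lift from $E_0$ to the $\hbar$-adic complex. Because the $\hbar$-adic filtration is complete and Hausdorff and the $E_0$ page is concentrated in degree zero, the spectral sequence degenerates and the $\hbar$-adic cohomology is also concentrated in degree zero; flatness in $\hbar$ then promotes the classical isomorphism to the stated identification $\mathscr D^{\mathrm{ch}}_{\widetilde\CM,\hbar}/\hbar\mathscr D^{\mathrm{ch}}_{\widetilde\CM,\hbar}\cong \mathscr O_{J_\infty\widetilde\CM(Q,\mathbf v,\mathbf w|\mathbf d)}$ of vertex Poisson superalgebras, the Poisson bracket on the right hand side being the semiclassical shadow of the $\hbar$-adic OPE. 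The main obstacle is the regularity claim for $(\widetilde\mu^\alpha)$: it fails on all of $\widetilde U$ in general because the action is not free on unstable strata. The way around this, reflected in the companion paper \cite{math_draft}, is to combine the Koszul argument over the stable locus with the auxiliary vanishing on the unstable complement, which is where the generalization of \cite[Lemma 4.13]{arakawa2023hilbert} enters decisively.
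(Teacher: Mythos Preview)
Your proposal is correct and matches the paper's approach. The paper does not give a self-contained proof of this theorem; it simply asserts that ``the argument in \cite[Section 4]{arakawa2023hilbert} naturally generalizes to our setting'' (with full details deferred to the companion paper \cite{math_draft}), and your sketch is precisely an outline of that Arakawa--Kuwabara-style argument: reduce to affines, pass to the $\hbar$-graded spectral sequence, identify the $\hbar=0$ page as the jet-Koszul/basic complex for $\widetilde\mu$, use freeness on the stable locus for regularity and concentration in degree zero, and invoke the vanishing lemma (the generalization of \cite[Lemma~4.13]{arakawa2023hilbert}) to handle the unstable complement and the independence of the choice of $\widetilde U$.
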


\subsection{Vertex superalgebra of global sections}\label{subsec:global section}

Let $\mathbb S$ be an one-dimensional torus that acts on $\mathfrak R$ by assigning the following weights to coordinates
\begin{align}\label{eq:S-wts of XYIJh}
    \mathrm{wt}(X_{(-n)})=\mathrm{wt}(Y_{(-n)})=\mathrm{wt}(I_{(-n)})=\mathrm{wt}(J_{(-n)})=n\;,
    \quad 
    \mathrm{wt}(h_{(-n)})=n+1 \;.
\end{align}
We extend the $\mathbb S$-action to $\widetilde C(\mathfrak{gl}(\mathbf v),\CVs(Q,\mathbf v,\mathbf w))$ by assigning the following weights
\begin{align}\label{eq:S-wts of bc and hbar}
    \mathrm{wt}(\mathsf b_{(-n)})=n+1\;,\quad\mathrm{wt}(\mathsf c_{(-n)})=n-1\;,\quad\mathrm{wt}(\hbar)=1\;,
\end{align}
then $\mathcal{Q}_{\hbar}$ has zero weight, so $(\widetilde C(\mathfrak{gl}(\mathbf v),\CVs(Q,\mathbf v,\mathbf w)),\mathcal{Q}_{\hbar})$ is an $\mathbb S$-equivariant complex. Moreover, the relative BRST complex $(C(\mathfrak{gl}(\mathbf v),\CVs(Q,\mathbf v,\mathbf w)),\mathcal{Q}_{\hbar})$ is an $\mathbb S$-equivariant sub-complex. 

Since $\widetilde\CM$ has an $\mathbb S$-equivariant ample line (tensor product of tautological line bundles), there exists an $\mathbb S$-equivariant closed embedding $\widetilde\CM\hookrightarrow\widetilde\CM^0\times\mathbb P(L)$ where $L$ is an $\mathbb S$-module. So we can find an open cover $\widetilde\CM=\bigcup_{i=1}^nU_i$ such that each $U_i$ is affine and $\mathbb S$-invariant. Set $\widetilde U_{i,\mathrm{max}}=\mathfrak R\setminus(\widetilde\mu^{-1}(0)\setminus q^{-1}(U_i))$, then $\widetilde U_{i,\mathrm{max}}$ is the largest among those open subsets of $\mathfrak R$ whose intersection with $\widetilde\mu^{-1}(0)$ equals to $q^{-1}(U)$. It follows that $\widetilde U_{i,\mathrm{max}}$ is also $\mathbb S$-invariant, therefore $\mathscr D^{\mathrm{ch}}_{\widetilde\CM,\hbar}(U_i)=H^{\infty/2+0}_{\hbar\mathrm{BRST}}(\mathfrak{gl}(\mathbf v),\mathscr D^{\mathrm{ch}}_{\mathfrak R,\hbar}(\widetilde U_{i,\mathrm{max}}))$ is a complete $\mathbb S$-module. Similarly $\mathscr D^{\mathrm{ch}}_{\widetilde\CM,\hbar}(U_i\cap U_j)$ is a complete $\mathbb S$-module. Then it follows that $\mathscr D^{\mathrm{ch}}_{\widetilde\CM,\hbar}(\widetilde\CM)=\ker\left(\bigoplus_i\mathscr D^{\mathrm{ch}}_{\widetilde\CM,\hbar}(U_i)\to \bigoplus_{i,j}\mathscr D^{\mathrm{ch}}_{\widetilde\CM,\hbar}(U_i\cap U_j)\right)$ is a complete $\mathbb S$-module. 

Since the weights of $\mathscr D^{\mathrm{ch}}_{\widetilde\CM,\hbar}(\widetilde\CM)$ are bounded below\footnote{This can be shown as follows. Since $\hbar^p\mathscr D^{\mathrm{ch}}_{\widetilde\CM,\hbar}/\hbar^{p+1}\mathscr D^{\mathrm{ch}}_{\widetilde\CM,\hbar}\cong \hbar^p\mathscr O_{J_\infty\widetilde\CM(Q,\mathbf v,\mathbf w|\mathbf d)}$, it is enough to show that weights of $\mathscr O_{J_\infty\widetilde\CM(Q,\mathbf v,\mathbf w|\mathbf d)}(\widetilde\CM)$ are bounded below. Since the action of $\mathbb S$ on $\widetilde\CM^0$ is repelling with unique fixed point $0$, it follows that the fixed point set $\widetilde\CM^{\mathbb S}$ is contained in the central fiber $\pi^{-1}(0)$ of the projection $\pi:\widetilde\CM\to \widetilde\CM^0$. $\pi^{-1}(0)$ is proper, so there must be a connected component $F\subset \widetilde\CM^{\mathbb S}$ such that normal weights are all repelling. Let $U\subset \widetilde\CM$ be the repelling locus of $F$, then $\mathscr O_{J_\infty\widetilde\CM(Q,\mathbf v,\mathbf w|\mathbf d)}(\widetilde\CM)\subset \mathscr O_{J_\infty\widetilde\CM(Q,\mathbf v,\mathbf w|\mathbf d)}(U)$. The limit $x\mapsto \lim_{t\to \infty}\sigma(t)\cdot x$ gives a map $\lim: U\to F$, which realizes $U$ as an $\mathbb S$-equivariant affine bundle over $F$. It follows that there is a filtration on $\mathscr O_{J_\infty\widetilde\CM(Q,\mathbf v,\mathbf w|\mathbf d)}(U)$ such that its associated graded is an $\mathbb S$-equivariant subspace of $\Gamma(F,S^\bullet(\mathcal N^\vee_{F/J_\infty\widetilde\CM(Q,\mathbf v,\mathbf w|\mathbf d)}))$, where $\mathcal N_{F/J_\infty\widetilde\CM(Q,\mathbf v,\mathbf w|\mathbf d)}$ is the normal bundle of $F$ in $J_\infty\widetilde\CM(Q,\mathbf v,\mathbf w|\mathbf d)$. Since there could be only finitely many normal directions of $F$ which have positive weights, and moreover all such weight spaces are odd by our choice of $F$, it follows that weights in $\Gamma(F,S^\bullet(\mathcal N^\vee_{F/J_\infty\widetilde\CM(Q,\mathbf v,\mathbf w|\mathbf d)}))$ are bounded below.}, we have
\begin{align*}
    \mathscr D^{\mathrm{ch}}_{\widetilde\CM,\hbar}(\widetilde\CM)={\prod_{m\ge -M}}'\mathscr D^{\mathrm{ch}}_{\widetilde\CM,\hbar}(\widetilde\CM)_m \;,
\end{align*}
where $\mathscr D^{\mathrm{ch}}_{\widetilde\CM,\hbar}(\widetilde\CM)_m$ is the weight $m$ subspace \footnote{The argument in the previous footnote shows that if $(Q,\mathbf v,\mathbf w)$ is balanced, then $\mathscr D^{\mathrm{ch}}_{\widetilde\CM,\hbar}(\widetilde\CM)$ is non-negatively graded, and $\mathscr D^{\mathrm{ch}}_{\widetilde\CM,\hbar}(\widetilde\CM)_0=\mathbb C|0\rangle$.}, and ${\prod}'$ means the subspace of the product space consisting of elements $(x_i)_{i=-M}^{\infty}$ such that there exists $N$ and $x_n\in \hbar^{n-N}\mathscr D^{\mathrm{ch}}_{\widetilde\CM,\hbar}(\widetilde\CM)_N$ for all $n>N$. Consider the subspace of $\mathbb S$-finite elements
\begin{align}
    \mathscr D^{\mathrm{ch}}_{\widetilde\CM,\hbar}(\widetilde\CM)_{\mathbb S\text{-fin}}:=\bigoplus_{m\ge 0}\mathscr D^{\mathrm{ch}}_{\widetilde\CM,\hbar}(\widetilde\CM)_m \;,
\end{align}
which is a $\mathbb C[\hbar]$-module. Moreover the OPE preserves the $\mathbb S$-weight, and every OPE must depend on $\hbar$ in a polynomial way since $\mathrm{wt}(\hbar)=1$ and weights of $\mathscr D^{\mathrm{ch}}_{\widetilde\CM,\hbar}(\widetilde\CM)$ are bounded below. Therefore $\mathscr D^{\mathrm{ch}}_{\widetilde\CM,\hbar}(\widetilde\CM)_{\mathbb S\text{-fin}}$ is a vertex superalgebra over the base ring $\mathbb C[\hbar]$.

\bigskip Denote by $\mathsf D^{\mathrm{ch}}(\widetilde\CM)$ the $\hbar=1$ specialization of the $\mathbb S$-finite part: 
\begin{align}
\mathsf D^{\mathrm{ch}}(\widetilde\CM):=\mathscr D^{\mathrm{ch}}_{\widetilde\CM,\hbar}(\widetilde\CM)_{\mathbb S\text{-fin}}/(\hbar-1)\;.
\end{align}
We claim that there exists a natural vertex superalgebra map
\begin{align}\label{quiver VOA to global section}
    \CV(Q,\mathbf v,\mathbf w)\to \mathsf D^{\mathrm{ch}}(\widetilde\CM)\;.
\end{align}
In fact, we consider the $\mathbb S$-finite part of the relative $\hbar$-adic BRST complex:
\begin{align*}
    C(\mathfrak{gl}(\mathbf v),\CVs(Q,\mathbf v,\mathbf w))[\hbar]:=C(\mathfrak{gl}(\mathbf v),\CVs(Q,\mathbf v,\mathbf w))_{\hbar,\mathbb S-\mathrm{fin}}\;,
\end{align*}
then $\mathcal{Q}_{\hbar}$ has $\mathbb S$-weight zero and thus stabilizes $C(\mathfrak{gl}(\mathbf v),\CVs(Q,\mathbf v,\mathbf w))[\hbar]$, which makes it a cochain complex of $\mathbb C[\hbar]$-vertex superalgebras. We also note that the $\hbar=1$ specialization gives the ordinary relative BRST complex:
\begin{align*}
    (C(\mathfrak{gl}(\mathbf v),\CVs(Q,\mathbf v,\mathbf w))[\hbar],\mathcal{Q}_{\hbar})/(\hbar-1)\cong (C(\mathfrak{gl}(\mathbf v),\CVs(Q,\mathbf v,\mathbf w)),\mathcal{Q}) \;.
\end{align*}
$C(\mathfrak{gl}(\mathbf v),\CVs(Q,\mathbf v,\mathbf w))[\hbar]$ is sub-complex of $C(\mathfrak{gl}(\mathbf v),\CVs(Q,\mathbf v,\mathbf w))_\hbar$, and the latter admits a natural chain complex map to $C(\mathfrak{gl}(\mathbf v),\mathscr D^{\mathrm{ch}}_{\mathfrak R,\hbar}(\widetilde U))_\hbar$ for every open subset $U\subset \widetilde\CM$, whence we obtain a $\mathbb C[\hbar]$-module map $H^0(C(\mathfrak{gl}(\mathbf v),\CVs(Q,\mathbf v,\mathbf w))[\hbar],\mathcal{Q}_{\hbar})\to \mathscr D^{\mathrm{ch}}_{\widetilde\CM,\hbar}(\widetilde\CM)$, whose image is contained in the $\mathbb S$-finite part. This map is compatible with taking OPE, thus we have a map between $\mathbb C[\hbar]$-vertex superalgebras $H^0(C(\mathfrak{gl}(\mathbf v),\CVs(Q,\mathbf v,\mathbf w))[\hbar],\mathcal{Q}_{\hbar})\to \mathscr D^{\mathrm{ch}}_{\widetilde\CM,\hbar}(\widetilde\CM)_{\mathbb S\text{-fin}}$. Specializing to $\hbar=1$ gives the desired vertex superalgebra map \eqref{quiver VOA to global section}.

In the companion paper \cite{math_draft}, we will prove the following result.
\begin{thm}\label{thm:vanishing and embedding}
Let $Q$ be a quiver and choose the gauge and framing dimensions $\mathbf v,\mathbf w$ such that $\mathbf w-\mathsf C\mathbf v\in \mathbb Z^{Q_0}_{\ge 0}$. Assume moreover that one of the following conditions holds:
\begin{itemize}
    \item $Q$ is totally negative, i.e. $\mathsf C_{ij}<0$ for every pair $i,j\in Q_0$,
    \item $Q$ is a Dynkin quiver, and $\sum_{i\in Q_0}(\mathbf v_i-1)\le 2$,
    \item $Q$ is the $A_1$ quiver,
    \item $Q$ is the Jordan quiver, and $\mathbf w_i\ge \mathbf v_i$,
\end{itemize}
then we have
\begin{enumerate}
    \item $H^{\infty/2+n}_{\mathrm{BRST}}(\mathfrak{gl}(\mathbf v),\CVs(Q,\mathbf v,\mathbf w))$ vanishes for $n<0$,
    \item the map $\CV(Q,\mathbf v,\mathbf w)\to \mathsf D^{\mathrm{ch}}(\widetilde\CM)$ in \eqref{quiver VOA to global section} is injective.
\end{enumerate}
\end{thm}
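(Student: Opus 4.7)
\medskip\noindent\textbf{Proof proposal.} The plan is to reduce both assertions to classical (commutative) statements about the jet scheme of $\widetilde\CM^0$ by means of the $\hbar$-adic filtration, and then to lift back using spectral-sequence arguments. More precisely, work with the $\mathbb S$-finite part of the relative $\hbar$-adic BRST complex $C(\mathfrak{gl}(\mathbf v),\CVs(Q,\mathbf v,\mathbf w))[\hbar]$ introduced in \cref{subsec:global section}; this is a $\mathbb C[\hbar]$-linear cochain complex whose specialization at $\hbar=1$ recovers the ordinary relative BRST complex, and whose specialization at $\hbar=0$ recovers a classical Chevalley--Koszul-type complex on the jet scheme $J_\infty \mathfrak R$. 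The $\mathbb S$-grading, together with the fact that $\mathbb S$-weights of $\mathscr D^{\mathrm{ch}}_{\widetilde\CM,\hbar}(\widetilde\CM)_{\mathbb S\text{-fin}}$ are bounded below, guarantees that every $\hbar$-adic OPE is in fact $\hbar$-polynomial, so the $\hbar$-adic filtration $F^p:=\hbar^p\cdot(\cdot)$ induces a well-behaved spectral sequence converging to the cohomology at $\hbar=1$.

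For part (1), the degree-zero piece of the $E_1$-page is the classical cohomology $H^{\bullet}(\mathfrak{gl}(\mathbf v),K^{\bullet})$ of $\mathfrak{gl}(\mathbf v)$ on the Koszul complex $K^{\bullet}$ attached to the components of the extended chiral moment map $\widetilde\mu_{\mathrm{ch}}$ acting on $\mathbb C[J_\infty \mathfrak R]$. Each of the four hypotheses is designed precisely so that the moment map $\widetilde\mu$ itself is flat in the sense of Crawley--Boevey, which in turn implies that $\widetilde\mu^{-1}(0)$ is a complete intersection and that the associated jet-level Koszul complex is a resolution of $\mathbb C[J_\infty \widetilde\mu^{-1}(0)]$; this reduces the $E_1$-page to the relative Lie algebra cohomology of $\mathfrak{gl}(\mathbf v)$ on $\mathbb C[J_\infty \widetilde\mu^{-1}(0)]$. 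Since this space is naturally non-negatively graded with respect to a suitable $\mathfrak{gl}(\mathbf v)$-equivariant grading, and the ghost number in the relative complex is bounded below by $0$, the classical cohomology is concentrated in non-negative ghost degree. Combining this with the sheaf-level vanishing $\mathscr H^{\infty/2+n}_{\hbar\mathrm{BRST}}(\mathfrak{gl}(\mathbf v),\mathscr D^{\mathrm{ch}}_{\mathfrak R,\hbar})=0$ for $n\neq 0$ already proven earlier in \cref{sec:localization}, the spectral sequence forces $H^{\infty/2+n}_{\mathrm{BRST}}(\mathfrak{gl}(\mathbf v),\CVs(Q,\mathbf v,\mathbf w))=0$ for every $n<0$.

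For part (2), consider the reduction modulo $\hbar$ of the $\mathbb C[\hbar]$-algebra map constructed in \cref{subsec:global section}. By the isomorphism $\mathscr D^{\mathrm{ch}}_{\widetilde\CM,\hbar}/\hbar\cong \mathscr O_{J_\infty \widetilde\CM(Q,\mathbf v,\mathbf w|\mathbf d)}$ and the identification of the classical BRST cohomology above, this reduction becomes the natural pullback map $\mathbb C[J_\infty \widetilde\CM^0]\to \Gamma(J_\infty\widetilde\CM,\mathscr O)$ induced by the projective morphism $\pi:\widetilde\CM\to \widetilde\CM^0$. Flatness of $\widetilde\mu$ implies (as noted in \cref{sec:localization}) that $\pi^{*}:\mathbb C[\widetilde\CM^0]\xrightarrow{\sim}\Gamma(\widetilde\CM,\mathscr O_{\widetilde\CM})$ is an isomorphism, and this isomorphism passes through to jet schemes, so the classical map is injective. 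Suppose now that $f\in \CV(Q,\mathbf v,\mathbf w)$ lies in the kernel of \eqref{quiver VOA to global section}; lift $f$ to an element $\widetilde f$ of $H^0(C(\mathfrak{gl}(\mathbf v),\CVs)[\hbar])$ using the $\mathbb S$-grading, and let $p\ge 0$ be the largest integer with $\widetilde f\in \hbar^p\cdot H^0$. The image of $\hbar^{-p}\widetilde f$ in the classical algebra then lies in the kernel of the injective classical map, forcing $\hbar^{-p}\widetilde f\in \hbar\cdot H^0$, a contradiction unless $\widetilde f=0$. This gives injectivity.

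The main technical obstacle is the verification that $\widetilde\mu$ is flat (equivalently, the completeness of the Koszul relations) under each of the four hypotheses. For totally negative quivers and the small-rank Dynkin cases this follows essentially from Crawley--Boevey's criterion via the root-theoretic inequality $p(\mathbf v)\ge \sum p(\mathbf v^{(i)})$, while the Jordan case with $\mathbf w_i\ge \mathbf v_i$ requires a separate direct argument using the Hilbert scheme interpretation. A secondary subtlety is establishing that the $\hbar$-adic spectral sequence converges strongly; this uses the $\mathbb S$-weight bound to ensure that each fixed-weight subspace is an $\hbar$-polynomial complex of finite rank, so convergence is automatic. These points will be treated in detail in the companion paper \cite{math_draft}.
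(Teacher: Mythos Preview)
The paper does not actually contain a proof of this theorem: immediately before the statement it says ``In the companion paper \cite{math_draft}, we will prove the following result,'' and nothing further is offered. So there is no in-paper argument to compare your proposal against.

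Your sketch follows the overall strategy one would expect from the Jordan-quiver case treated in \cite{arakawa2023hilbert}, which the paper itself cites as the template (``The argument in \cite[Section 4]{arakawa2023hilbert} naturally generalizes to our setting''): pass to the $\hbar$-adic complex, reduce modulo $\hbar$ to a classical Koszul--Chevalley problem on $J_\infty\mathfrak R$, use flatness of $\widetilde\mu$ to collapse the Koszul part, and lift back via the $\mathbb S$-graded spectral sequence. That is almost certainly the right shape.

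That said, several of your intermediate steps are not yet justified and are exactly the places where the real work lies. First, your assertion that ``the ghost number in the relative complex is bounded below by $0$'' after taking Koszul cohomology is too quick: the relative complex still carries the modes $\mathsf b^{\alpha}_{(-n)}$ for $n\ge 1$, which contribute negative ghost number, and one must argue carefully (typically via a second filtration separating the Koszul and Chevalley--Eilenberg pieces, together with reductivity of $\mathfrak{gl}(\mathbf v)$) why no negative-degree classes survive. Second, the claim that the isomorphism $\mathbb C[\widetilde\CM^0]\cong\Gamma(\widetilde\CM,\mathscr O_{\widetilde\CM})$ ``passes through to jet schemes'' is not automatic; taking global sections does not in general commute with $J_\infty$, and one also needs that taking $\mathrm{GL}(\mathbf v)$-invariants commutes with passing to jets, i.e.\ that $\mathbb C[J_\infty\widetilde\mu^{-1}(0)]^{\mathrm{GL}(\mathbf v)}\cong\mathbb C[J_\infty\widetilde\CM^0]$. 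Third, it is not evident that the four hypotheses are \emph{equivalent} to Crawley--Boevey flatness; the Dynkin condition $\sum_i(\mathbf v_i-1)\le 2$ in particular is more restrictive than mere flatness of $\mu$, suggesting that the actual proof in \cite{math_draft} requires something beyond flatness (for instance, normality or reducedness of $\widetilde\CM^0$, or a bound needed to control higher cohomology of the resolution). You have correctly flagged these as the technical obstacles, but your sketch does not yet indicate how any of them is overcome.
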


\subsection{Associated variety}\label{subsec:Associated variety}

Recall that we have a canonical filtration \eqref{eq:canonical filtration} $F^\bullet V$ for every vertex algebra $V$. This construction naturally extends to a sheaf $\mathscr D$ of $\hbar$-adic vertex algebras. Namely, we define the subsheaf $F^p\mathscr D$ to be the sheafification of the sub-presheaf
\begin{align*}
    U\mapsto F^p(\mathscr D(U)) \;,
\end{align*}
where $F^p(\mathscr D(U))$ is defined the same way as \eqref{eq:canonical filtration}. This allows us to define the $C_2$ sheaf
\begin{align}
    \CR({\mathscr D}):=F^0\mathscr D/F^1\mathscr D\;.
\end{align}
Now let us go back to the quiver vertex algebra case $\mathscr D=\mathscr D^{\mathrm{ch}}_{\widetilde\CM,\hbar}$. Then we have
\begin{align*}
    \CR(\mathscr D^{\mathrm{ch}}_{\widetilde\CM,\hbar})/\hbar \CR(\mathscr D^{\mathrm{ch}}_{\widetilde\CM,\hbar})\cong \mathscr D^{\mathrm{ch}}_{\widetilde\CM,\hbar}/(\hbar \mathscr D^{\mathrm{ch}}_{\widetilde\CM,\hbar}+F^2\mathscr D^{\mathrm{ch}}_{\widetilde\CM,\hbar}) \cong  \CR(\mathscr O_{J_\infty\widetilde\CM(Q,\mathbf v,\mathbf w|\mathbf d)})\cong  \mathscr O_{\widetilde\CM(Q,\mathbf v,\mathbf w|\mathbf d)}\;,
\end{align*}
which induces an isomorphism 
\begin{align}
    \CR(\mathscr D^{\mathrm{ch}}_{\widetilde\CM,\hbar})\cong \mathscr O_{\widetilde\CM(Q,\mathbf v,\mathbf w|\mathbf d)}[\![\hbar]\!]\;.
\end{align}
We have an obvious inclusion 
\begin{align*}
    F^1(\mathscr D^{\mathrm{ch}}_{\widetilde\CM,\hbar}(\widetilde\CM))\subset F^1(\mathscr D^{\mathrm{ch}}_{\widetilde\CM,\hbar})(\widetilde\CM)\;,
\end{align*}
thus we get the following natural maps between $\hbar$-adic Poisson algebras:
\begin{align*}
    \CR(\mathscr D^{\mathrm{ch}}_{\widetilde\CM,\hbar}(\widetilde\CM))\twoheadrightarrow \mathscr D^{\mathrm{ch}}_{\widetilde\CM,\hbar}(\widetilde\CM)/F^1(\mathscr D^{\mathrm{ch}}_{\widetilde\CM,\hbar})(\widetilde\CM)\hookrightarrow \CR(\mathscr D^{\mathrm{ch}}_{\widetilde\CM,\hbar})(\widetilde\CM)\cong \mathscr O_{\widetilde\CM(Q,\mathbf v,\mathbf w|\mathbf d)}(\widetilde\CM)[\![\hbar]\!] \;,
\end{align*}
where the first map is surjective and the second map is injective. Restricting to $\mathbb S$-finite part and specializing to $\hbar=1$, we get a map between Poisson algebras:
\begin{align}\label{eq:map from C_2 alg to function ring of quiver variety}
    \CR(\mathsf D^{\mathrm{ch}}(\widetilde\CM))\longrightarrow \mathscr O_{\widetilde\CM(Q,\mathbf v,\mathbf w|\mathbf d)}(\widetilde\CM) \;.
\end{align}
\begin{conjecture}
The map \eqref{eq:map from C_2 alg to function ring of quiver variety} is an isomorphism.
\end{conjecture}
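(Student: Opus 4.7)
\medskip
The plan is to view \eqref{eq:map from C_2 alg to function ring of quiver variety} as the comparison between forming the $C_2$-quotient and taking global sections of the sheaf $\mathscr D^{\mathrm{ch}}_{\widetilde\CM,\hbar}$ on $\widetilde\CM$, and to prove that those two operations commute by reducing to a sheaf-cohomology vanishing. The factorisation displayed just above \eqref{eq:map from C_2 alg to function ring of quiver variety} writes our map as a surjection $\CR(\mathsf D^{\mathrm{ch}}(\widetilde\CM))\twoheadrightarrow \mathsf D^{\mathrm{ch}}(\widetilde\CM)/(F^{1}\mathscr D^{\mathrm{ch}}_{\widetilde\CM,\hbar})(\widetilde\CM)$ followed by an injection into $\CR(\mathscr D^{\mathrm{ch}}_{\widetilde\CM,\hbar})(\widetilde\CM)$, so the conjecture splits into (A) the equality of filtrations $F^{1}\mathsf D^{\mathrm{ch}}(\widetilde\CM)=(F^{1}\mathscr D^{\mathrm{ch}}_{\widetilde\CM,\hbar})(\widetilde\CM)$ after $\mathbb S$-finite $\hbar=1$ specialization, and (B) the exactness on global sections of $0\to F^{1}\mathscr D^{\mathrm{ch}}_{\widetilde\CM,\hbar}\to \mathscr D^{\mathrm{ch}}_{\widetilde\CM,\hbar}\to \CR(\mathscr D^{\mathrm{ch}}_{\widetilde\CM,\hbar})\to 0$.

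A filtration-inductive argument, together with the $\mathbb S$-weight bound of \cref{subsec:global section} to control the inverse limit over $p$ of $\mathscr D^{\mathrm{ch}}_{\widetilde\CM,\hbar}/F^{p}$, reduces both (A) and (B) to the vanishing
\[
H^{i}(\widetilde\CM,\ F^{p}\mathscr D^{\mathrm{ch}}_{\widetilde\CM,\hbar}/F^{p+1}\mathscr D^{\mathrm{ch}}_{\widetilde\CM,\hbar})=0,\qquad i\ge 1,\ p\ge 0.
\]
By the theorem of \cref{sec:localization}, these graded pieces are identified modulo $\hbar$ with graded summands of $\mathscr O_{J_\infty\widetilde\CM(Q,\mathbf v,\mathbf w|\mathbf d)}$; since the super directions contribute only a finite-rank exterior bundle pulled back from $\widetilde\CM$, the conjecture further reduces to $H^{i}(\widetilde\CM,\mathscr O_{J_\infty\widetilde\CM(Q,\mathbf v,\mathbf w)})=0$ for $i\ge 1$. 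The classical case (jets of order zero) is immediate from Grauert--Riemenschneider applied to the projective morphism $\pi\colon\widetilde\CM\to\widetilde\CM^0$ with affine target, or equivalently from Kaledin's rational-singularities theorem for the symplectic variety $\widetilde\CM^0$. For finite truncations $J_m\widetilde\CM$ one exhibits the jet scheme as a derived Hamiltonian reduction of the smooth ambient $J_m(T^*\mathrm{Rep}(Q,\mathbf v,\mathbf w)\times\mathbb C^{Q_0})$, and inductively pushes forward along $\pi^{(m)}\colon J_m\widetilde\CM\to J_m\widetilde\CM^0$ using a Koszul model and the Leray spectral sequence; passage to $m\to\infty$ is controlled by the $\mathbb S$-boundedness, which renders the inverse system Mittag--Leffler.

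The main obstacle is the jet step: the singular base $\widetilde\CM^0$ may have non-reduced or reducible jet schemes, so one cannot naively invoke affineness of $J_m\widetilde\CM^0$ at each level. A more robust alternative is to exploit the contracting $\mathbb S$-action recalled in \cref{subsec:global section}: its repelling fixed set lies in the proper fibre $\pi^{-1}(0)$, so the associated Bia{\l}ynicki--Birula decomposition presents $\widetilde\CM$ as an affine-bundle fibration over the compact $\mathbb S$-fixed components, and the cohomology of $\mathscr O_{J_\infty\widetilde\CM}$ reduces to a weighted character count on those fixed components whose normal weights are non-negative. This cellular approach collapses the computation to the desired identification with $\mathscr O_{\widetilde\CM}(\widetilde\CM)$ and has the added advantage of yielding an explicit character formula for $\mathsf D^{\mathrm{ch}}(\widetilde\CM)$; combined with the identification of associated gradeds from \cref{sec:localization}, it simultaneously verifies (A) and (B), establishing the conjecture.
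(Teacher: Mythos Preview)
The paper states this as a \emph{conjecture}, not a theorem: no proof is offered, and the statement remains open. There is therefore nothing in the paper to compare your proposal against.

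As for the proposal itself, it is a strategic outline rather than a proof, and you yourself identify the gap. The reduction of the problem to (A) and (B), and further to higher-cohomology vanishing of the associated graded sheaves $F^{p}/F^{p+1}\cong \mathrm{gr}^{p}\mathscr O_{J_\infty\widetilde\CM(Q,\mathbf v,\mathbf w|\mathbf d)}$, is sound and follows the filtration machinery already set up in the paper. The $p=0$ case is indeed handled by Grauert--Riemenschneider (or Kaledin), since $\widetilde\CM\to\widetilde\CM^0$ is a symplectic resolution with affine target.

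The jet step, however, is not actually carried out. Your first route (``derived Hamiltonian reduction of $J_m$ of the smooth ambient, then push forward along $\pi^{(m)}$'') is not justified: the target $J_m\widetilde\CM^0$ is in general singular, non-reduced, and possibly reducible, so affineness of the base and vanishing of higher direct images do not follow automatically; this is precisely the difficulty you flag. Your second route via a Bia{\l}ynicki--Birula decomposition is also only a heuristic: a BB stratification by affine bundles over proper fixed loci does not by itself yield $H^{i}(\widetilde\CM,\mathscr O_{J_\infty\widetilde\CM})=0$ for $i\ge1$; one would still need to control the extensions between strata and the higher cohomology of the structure sheaf on the (typically positive-dimensional, non-affine) fixed components. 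The claim that the computation ``collapses to a weighted character count'' conflates a $K$-theoretic or Euler-characteristic statement with a genuine vanishing statement. In short, the proposal identifies the right target ($H^{>0}$ vanishing for jets of the resolution) but does not close it; the conjecture remains open, consistent with the paper.
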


\begin{prop}\label{prop:injectivity criterion}
Suppose that $\Gamma$ is a vertex algebra map from a bosonic vertex algebra $\CW$ to $\mathsf D^{\mathrm{ch}}(\widetilde\CM)$. Assume moreover that
\begin{enumerate}
    \item $\CW$ has a PBW basis\footnote{By definition, $\CW$ has a PBW basis if and only if $\mathrm{gr}_F\CW=\jet\CR(\CW)$ and $\mathcal R(\CW)$ is a finitely generated polynomial ring.},
    \item after reducing to $C_2$ algebra, the induced map $\CR(\CW)\to \mathscr O_{\widetilde\CM}(\widetilde\CM)$ is injective.
\end{enumerate}
Then $\Gamma$ is injective.
\end{prop}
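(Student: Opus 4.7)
The plan is to pass to the associated graded for the canonical (Li) filtration $F^\bullet$ on both sides and reduce the injectivity of $\Gamma$ to an algebraic independence statement on the extended quiver variety $\widetilde\CM$. Since any vertex algebra morphism preserves normal-ordered products and the translation operator $\partial$, the map $\Gamma$ is automatically filtered, inducing a morphism $\mathrm{gr}_F\Gamma:\mathrm{gr}_F\CW\to\mathrm{gr}_F\mathsf D^{\mathrm{ch}}(\widetilde\CM)$ of graded vertex Poisson algebras. Hypothesis (1) identifies $\mathrm{gr}_F\CW$ with the polynomial algebra $\jet\CR(\CW)=\mathbb C[\partial^{k}x_i\mid 1\le i\le n,\ k\ge 0]$ freely generated by the derivatives of a choice $x_1,\ldots,x_n$ of polynomial generators of $\CR(\CW)$; in particular the filtration is separated, i.e.\ $\bigcap_p F^p\CW=0$, so it suffices to prove that $\mathrm{gr}_F\Gamma$ is injective.

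Now take $\omega\in F^p\CW\setminus F^{p+1}\CW$, whose class $[\omega]\in \mathrm{gr}^p_F\CW$ corresponds to a nonzero homogeneous polynomial $P(\partial^{k}x_i)$ under the above identification. I will show $\mathrm{gr}^p_F\Gamma([\omega])\neq 0$ by postcomposing with a natural map $\Phi:\mathrm{gr}_F\mathsf D^{\mathrm{ch}}(\widetilde\CM)\to \mathscr O_{\jet\widetilde\CM}(\jet\widetilde\CM)$ which sends $\mathrm{gr}^p_F\Gamma([\omega])$ to $P(\partial^{k}y_i)$, where $y_i\in \mathscr O_{\widetilde\CM}(\widetilde\CM)$ denotes the image of $x_i$ under the composite map of hypothesis (2). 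That hypothesis asserts that $y_1,\ldots,y_n$ are algebraically independent in $\mathscr O_{\widetilde\CM}(\widetilde\CM)$, and since the arc-space functor sends dominant morphisms of reduced varieties to dominant morphisms, the family $\{\partial^{k}y_i\}_{i,k}$ is algebraically independent in $\mathscr O_{\jet\widetilde\CM}(\jet\widetilde\CM)$; consequently $P(\partial^{k}y_i)\neq 0$, which proves $\mathrm{gr}_F\Gamma$ is injective and hence so is $\Gamma$.

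The main obstacle is the construction of the map $\Phi$ above, which must be compatible with both the natural surjection $\jet\CR(\mathsf D^{\mathrm{ch}}(\widetilde\CM))\twoheadrightarrow\mathrm{gr}_F\mathsf D^{\mathrm{ch}}(\widetilde\CM)$ and the $\jet$-image of the map \eqref{eq:map from C_2 alg to function ring of quiver variety}. For this one uses the $\hbar$-adic description of \S\ref{subsec:global section}: the canonical filtration on $\mathsf D^{\mathrm{ch}}(\widetilde\CM)$ arises as the $\mathbb S$-finite, $\hbar=1$ specialization of the $\hbar$-adic filtration on the global sections $\mathscr D^{\mathrm{ch}}_{\widetilde\CM,\hbar}(\widetilde\CM)$, and the sheaf-level identification $\mathscr D^{\mathrm{ch}}_{\widetilde\CM,\hbar}/\hbar\mathscr D^{\mathrm{ch}}_{\widetilde\CM,\hbar}\cong \mathscr O_{\jet\widetilde\CM}$ from \S\ref{sec:localization} passes to the global sections and their $\mathbb S$-finite part to deliver the desired $\Phi$, with the required commutativity built into the construction.
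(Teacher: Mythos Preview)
Your proof is correct and follows essentially the same route as the paper's: pass to the associated graded for the Li filtration, identify $\mathrm{gr}_F\CW$ with $\jet\CR(\CW)$, factor the composite through a map to $\mathbb C[\jet\widetilde\CM]$, and then argue that the induced ring map $\jet\CR(\CW)\to\mathbb C[\jet\widetilde\CM]$ is injective because the dual morphism of varieties is dominant. The only substantive difference is in how that last dominance is justified: the paper explicitly invokes generic smoothness to find open subsets on which $f$ is smooth, observes that $\jet f$ is then flat there, and cites Liu's Lemma~4.3.7 to conclude dominance, whereas you appeal to a general principle that the arc functor preserves dominant morphisms of reduced varieties. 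In characteristic zero that principle is indeed valid (and its proof is precisely the generic-smoothness argument the paper spells out), so there is no gap; the paper is simply more explicit about this step. Your treatment of the map $\Phi$ is likewise in the same spirit as the paper's one-line assertion that the composite ``factors through $\mathrm{gr}_F\mathsf D^{\mathrm{ch}}(\widetilde\CM)$''---both rely on the sheaf-level identification $\mathscr D^{\mathrm{ch}}_{\widetilde\CM,\hbar}/\hbar\cong\mathscr O_{\jet\widetilde\CM}$ established earlier in the text.
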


\begin{proof}
By the definition, $\mathrm{gr}_F\CW=\jet\CR(\CW)$ and $\mathcal R(\CW)$ is a finitely generated polynomial ring. In particular the associated variety $X(\CW)=\mathrm{Spec}\mathcal R(\CW)$ is an affine space. It follows from the injectivity of $\CR(\CW)\to \mathscr O_{\widetilde\CM}(\widetilde\CM)$ that the induced morphism $f:\widetilde\CM\to X(\CW)$ is dominant. By generic smoothness, there exist open subsets $U\subset \widetilde\CM, U'\subset X(\CW)$ such that the $f|_U:U\to U'$ is smooth, thus $J_\infty f|_{J_\infty U}: J_\infty U\to J_\infty U' $ is a flat morphism. According to \cite[Lemma 4.3.7]{liu2002algebraic}, $J_\infty f|_{J_\infty U}$ is a dominant morphism. $\jet\CR(\CW)$ is an integral domain because $\CR(\CW)$ is a polynomial ring, then it follows that the induced map between jets $\jet\CR(\CW)\to \mathbb C[{\jet}\widetilde\CM]$ is injective. We note that $\jet\CR(\CW)\to \mathbb C[{\jet}\widetilde\CM]$ factors through $\mathrm{gr}_F\mathsf D^{\mathrm{ch}}(\widetilde\CM)$, therefore the induced map between associated graded algebra $\mathrm{gr}_F(\Gamma):\mathrm{gr}_F\CW\to \mathrm{gr}_F\mathsf D^{\mathrm{ch}}(\widetilde\CM)$ is injective. This implies that $\Gamma$ is injective.
\end{proof}

A typical situation is that $\Gamma:\CW\to \mathsf D^{\mathrm{ch}}(\widetilde\CM)$ maps the strong generators of $\CW$ to path elements in $\mathscr O_{\widetilde\CM}(\widetilde\CM)$, after reducing to the $C_2$ algebra. For example in the $T[\SU(n)]$ quiver in \cref{fig:T[SU(n)]}
\begin{figure}
\centering
    \begin{tikzpicture}[scale=0.9, every node/.style={scale=1.0}]
    \node(LL) at (-5,0) {$\circled{$1$}$};
    \node(L) at (-2.5,0) {$\cdots$};
    \node(M) at (0,0) {$\circled{$n$-$2$}$};
    \node(R) at (2.5,0) {$\circled{$n$-$1$}$};
    \node(B) at (2.5,-2.5) {$\rectangled{$n$}$};
    \draw[->] ([yshift=0.1cm]LL.east) to node[above]{$X_{n-2}$} ([yshift=0.1cm]L.west);
    \draw[->] ([yshift=-0.1cm]L.west) to node[below]{$Y_{n-2}$} ([yshift=-0.1cm]LL.east);
    \draw[->] ([yshift=0.1cm]L.east) to node[above]{$X_2$} ([yshift=0.1cm]M.west);
    \draw[->] ([yshift=-0.1cm]M.west) to node[below]{$Y_2$} ([yshift=-0.1cm]L.east);
    \draw[->] ([yshift=0.1cm]M.east) to node[above]{$X_1$} ([yshift=0.1cm]R.west);
    \draw[->] ([yshift=-0.1cm]R.west) to node[below]{$Y_1$} ([yshift=-0.1cm]M.east);
    \draw[->] ([xshift=0.1cm]R.south) to node[right]{$J$} ([xshift=0.1cm]B.north);
    \draw[->] ([xshift=-0.1cm]B.north) to node[left]{$I$} ([xshift=-0.1cm]R.south);
    \node(h3) at (-5,0.7) {$h_{n-1}$};
    \node(h2) at (0,0.7) {$h_2$};
    \node(h1) at (2.5,0.7) {$h_1$};
    \end{tikzpicture}
    \caption{The $T[\SU(n)]$ quiver.}
    \label{fig:T[SU(n)]}
\end{figure}
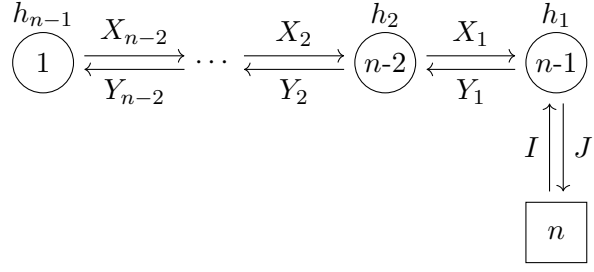
we have a vertex algebra map $\Gamma: V^{-n+1}(\mathfrak{sl}_n)\to \CV(Q)$ such that
\begin{align}\label{eq:mebdding of V(sl_n)}
\Gamma(E^i_j)=J^i_kI^k_j-\frac{\delta^i_j}{n}J^l_kI^k_l \;.
\end{align}
It is well-known that for any level $k\in \mathbb C$, the current algebra $V^{k}(\mathfrak{g})$ has a PBW basis, and its $C_2$ algebra is $\mathbb C[\mathfrak g^*]$ \cite[Example 4.9]{arakawa2017introduction}. In our case, the map $\varphi: \widetilde\CM\to \mathfrak{sl}_n^*$ induced from $\Gamma$ is given by 
\begin{align}
    \varphi(x,y,\alpha,\beta)=\beta\alpha-\frac{\mathrm{Tr}(\beta\alpha)}{n}\mathrm{id} \;.
\end{align}
Consider the map $p:\widetilde\CM\to \mathrm{End}(\mathbb C^n)$ such that $p(x,y,\alpha,\beta)=\beta\alpha$, then it is easy to see that the image of $p$ is the determinantal variety $\{A\in \mathrm{End}(\mathbb C^n)\:|\:\det(A)=0\}$. Since $\varphi=\mathfrak{Z}\circ p$, where $\mathfrak{Z}:\mathrm{End}(\mathbb C^n)\to \mathfrak{sl}_n^*$ maps a matrix $A$ to its traceless part $A-\frac{\mathrm{Tr}(A)}{n}\mathrm{id}$, we see that $\varphi$ is surjective because every traceless matrix $B$ equals to $\mathfrak Z(B-\lambda\cdot\mathrm{id})$ where $\lambda$ is an eigenvalue of $B$. It follows that $\CR(V^{-n+1}(\mathfrak{sl}_n))\to \mathbb C[\widetilde\CM]$ is injective, whence $\Gamma:V^{-n+1}(\mathfrak{sl}_n)\to \mathsf D^{\mathrm{ch}}(\widetilde\CM)$ is injective by Proposition \ref{prop:injectivity criterion}. 

\section{Quiver reduction}\label{sec:quiver reduction}

In this section, we develop a tool to study quiver vertex superalgebras inductively by reducing a quiver to another quiver with fewer nodes. 
This will enable us to obtain free-field realizations for quiver VOAs associated with a very general class of quiver gauge theories.

We begin with an observation on the geometry side. Let $Q$ be a quiver, and choose gauge dimension vector $\mathbf v$ and framing dimension vector $\mathbf w$ such that $\mathbf w-\mathsf C\mathbf v\in \mathbb Z^{Q_0}_{\ge 0}$. Assume moreover that there exists a node $i\in Q_0$ such that $\mathbf w_i= \mathbf v_i$, and consider the open locus $\mathcal U\subset \widetilde\CM(Q,\mathbf v,\mathbf w)$ where the framing map $I_i:\mathbb C^{\mathbf w_i}\to V_i$ is an isomorphism. On the locus $\mathcal U$ the isomorphism $\mathbb C^{\mathbf w_i}\cong V_i$ effectively removes the gauge node, and replaces it by a framing vector space of dimension $\mathbf w_i$, and the quiver data are updated accordingly: 
\begin{align*}
X_a\mapsto 
\begin{cases}
X_a, & \text{if }a\text{ is not connected to }i\\
I_i^{-1}X_a,& \text{if }a: j\to i\text{ where }j\neq i\\
X_aI_i,& \text{if }a: i\to j\text{ where }j\neq i\\
I_i^{-1}X_aI_i,& \text{if }a: i\to i
\end{cases},\quad 
Y_a\mapsto 
\begin{cases}
Y_a, & \text{if }a\text{ is not connected to }i\\
I_i^{-1}Y_a,& \text{if }a: i\to j\text{ where }j\neq i\\
Y_aI_i,& \text{if }a: j\to i\text{ where }j\neq i\\
I_i^{-1}Y_aI_i,& \text{if }a: i\to i
\end{cases}.
\end{align*}
The extended moment map equation at the node $i$ determines the $J_i$ by
\begin{align*}
    J_i=-I_i^{-1}z_i+\sum_{a:i\to j} I_i^{-1}Y_aX_a-\sum_{b:j\to i} I_i^{-1}X_b Y_b\;,
\end{align*}
so $J_i$ can be removed from the quiver data. Therefore we have an isomorphism
\begin{align*}
    \mathcal U_i\cong \widetilde\CM(Q',\mathbf v',\mathbf w')\times T^*\mathrm{End}(\mathbb C^{\mathbf v_i})^{\oplus \mathsf g_i}\times \mathbb C\;,
\end{align*}
where $Q'$ is the sub-quiver of $Q$ obtained by deleting the node $i$, and $\mathsf g_i$ is the number of edge loops at the node $i$ in the quiver $Q$, and
\begin{align*}
    \mathbf v'_j=\mathbf v_j\;,\quad \mathbf w'_j=\mathbf w_j-\mathsf C_{ij}\mathbf v_i\;.
\end{align*}
It is easy to see that
\begin{align*}
    \mathbf w'_j-\sum_{l\in Q'_0}\mathsf C'_{j,l}\mathbf v'_l= \mathbf w_j-\sum_{l\in Q_0}\mathsf C_{j,l}\mathbf v_l, \quad\forall j\in Q'_0\;,
\end{align*}
in particular $\mathbf w'-\mathsf C'\mathbf v'\in \mathbb Z^{Q'_0}_{\ge 0}$, where $\mathsf C'$ is the Cartan matrix for the new quiver $Q'$. 

\bigskip We can slightly generalize the above argument to the situation when there exists a node $i\in Q_0$ such that $\mathbf w_i\ge \mathbf v_i$. In this case we split the framing vector space into $\mathbb C^{\mathbf w_i}=\mathbb C^{\mathbf v_i}\oplus\mathbb C^{\mathbf w_i-\mathbf v_i}$, and consider the open locus $\mathcal U_i\subset \widetilde\CM(Q,\mathbf v,\mathbf w)$ where the framing map $I_i:\mathbb C^{\mathbf v_i}\to V_i$ is an isomorphism. Then we have a natural isomorphism
\begin{align*}
    \mathcal U_i\cong \widetilde\CM(Q',\mathbf v',\mathbf w')\times T^*\mathrm{End}(\mathbb C^{\mathbf v_i})^{\oplus \mathsf g_i}\times T^*\mathrm{Hom}(\mathbb C^{\mathbf v_i},\mathbb C^{\mathbf w_i-\mathbf v_i})\times \mathbb C\;,
\end{align*}
where $Q'$ is the sub-quiver of $Q$ obtained by deleting the node $i$, and $\mathsf g_i$ is the number of edge loops at the node $i$ in the quiver $Q$, and
\begin{align*}
    \mathbf v'_j=\mathbf v_j\;,\quad \mathbf w'_j=\mathbf w_j-\mathsf C_{ij}\mathbf v_i\;.
\end{align*}
In this case we also have $\mathbf w'_j-\sum_{l\in Q'_0}\mathsf C'_{j,l}\mathbf v'_l= \mathbf w_j-\sum_{l\in Q_0}\mathsf C_{j,l}\mathbf v_l, \;\forall j\in Q'_0$, in particular $\mathbf w'-\mathsf C'\mathbf v'\in \mathbb Z^{Q'_0}_{\ge 0}$. 

\subsection{The building block}\label{subsec:building block}

We start the vertex algebra analog of the aforementioned geometry observation by looking into the basic building block in the above construction. Namely we pick a node $i\in Q_0$ for which the reduction will be performed and then re-arrange the neighboring nodes of $i$ into the LHS of \cref{fig:reduction building block}. 

We will show in the following that the relative BRST reduction of the gauge node on a certain localization of the LHS will produce the RHS of \cref{fig:reduction building block}.

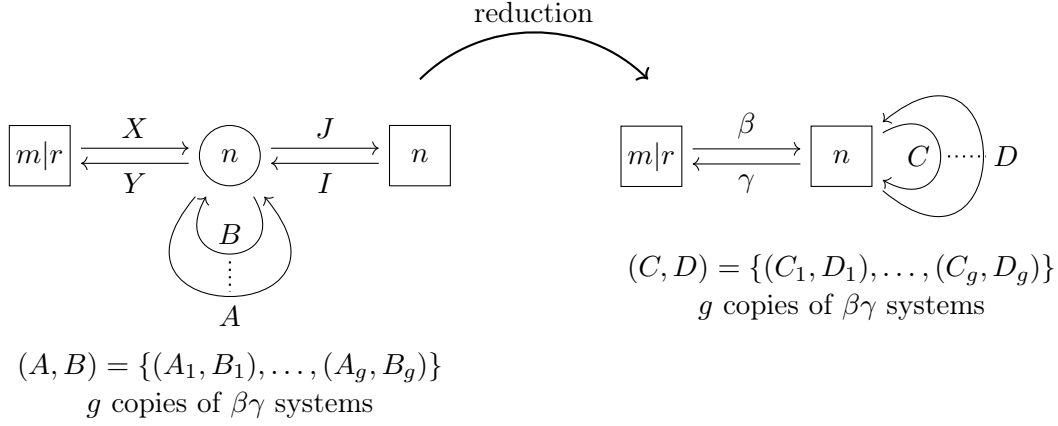
\begin{figure}

    \begin{tikzpicture}

    \node (leftfigure) at (0,0){
    \begin{tikzpicture}
    \node(L) at (-2.5,0) {$\rectangled{$m|r$}$};
    \node(M) at (0,0) {$\circled{$n$}$};
    \node(R) at (2.5,0) {$\rectangled{$n$}$};
    \draw[->] ([yshift=0.1cm]L.east) to node[above]{$X$} ([yshift=0.1cm]M.west);
    \draw[->] ([yshift=-0.1cm]M.west) to node[below]{$Y$} ([yshift=-0.1cm]L.east);
    \draw[->] ([yshift=0.1cm]M.east) to node[above]{$J$} ([yshift=0.1cm]R.west);
    \draw[->] ([yshift=-0.1cm]R.west) to node[below]{$I$} ([yshift=-0.1cm]M.east);
    \draw[thick,dotted] (0,-1.4) -- (0,-1.8);
    \draw[<-] (M) to[in=300, out=240, looseness=4.8] node[above]{$B$}(M);
    \draw[->] (M) to[in=310, out=230, looseness=6.5] node[below]{$A$} (M);
    \node(bottomtext) at (0,-2.8) {$(A, B) = \{(A_1, B_1), \dots, (A_g, B_g)\}$};
    \node(bottomtext-2) at (0,-3.3) {$g$ copies of $\beta \gamma$ systems};
    \end{tikzpicture}
    };

    \draw[->,thick] (2.5,2.5) to [bend left=45] node[above]{reduction} (5.5, 2.5);
     
    \node (rightfigure) at (8,1.3){
    
    \begin{tikzpicture}
    \node(L) at (-2.5,0) {$\rectangled{$m|r$}$};
    \node(M) at (0,0) {$\rectangled{$n$}$};
    \draw[->] ([yshift=0.1cm]L.east) to node[above]{$\beta$} ([yshift=0.1cm]M.west);
    \draw[->] ([yshift=-0.1cm]M.west) to node[below]{$\gamma$} ([yshift=-0.1cm]L.east);
    \draw[thick,dotted] (1.4,0) -- (1.9,0);
    \draw[<-] (M) to[in=390, out=330, looseness=4.8] node[left]{$C$}(M);
    \draw[->] (M) to[in=400, out=320, looseness=6.5] node[right]{$D$} (M);
    \node(bottomtext) at (0,-1.5) {$(C, D) = \{(C_1, D_1), \dots, (C_g, D_g)\}$};
    \node(bottomtext-2) at (0,-2.0) {$g$ copies of $\beta \gamma$ systems};
    \end{tikzpicture}
    };
    
    \end{tikzpicture}
    \caption{A reduction procedure for the quiver, explained locally around the middle vertex. Here we have $m-r=(1-2g)n$.}
    \label{fig:reduction building block}
\end{figure}

Denote the $\beta\gamma\mathsf b\mathsf c$-Heisenberg system on the LHS (before BRST reduction) by $D^{\mathrm{ch}}(\mathfrak R)$. The nontrivial OPEs are 
\begin{alignat*}{2}
    &X^b_j(z)Y^i_a(w) \sim \frac{\delta^b_a\delta^i_j}{z-w}\;,  \quad 
    &&I^b_\nu(z)J^\mu_a(w)\sim \frac{\delta^b_a\delta^\mu_\nu}{z-w}\;,\\
    \quad 
    &A^a_{s,b}(z)B^c_{t,d}(w)\sim \frac{\delta^a_d\delta^c_b\delta_{s,t}}{z-w}\;, \quad
    &&h(z)h(w)\sim \frac{2-2\mathsf g}{(z-w)^2}\;.
\end{alignat*}
Here $(X,Y)$ contain both $\beta\gamma$ systems and $\mathsf b\mathsf c$ systems, and they are distinguished by the OPE
\begin{align*}
    Y^i_a(z)X^b_j(w)\sim (-1)^{1+|i|}\frac{\delta^b_a\delta^i_j}{z-w}\;
\end{align*}
where $|i|$ is the parity of $i$, which equals to $0$ if $1\le i\le m$, and it equals to $1$ if $m<i\le m+r$. The chiral extended moment map reads:
\begin{align}
(\widetilde\mu_{\mathrm{ch}})^a_b=-Y^i_bX^a_i-I^a_\rho J^\rho_b-A^a_c B^c_b+B^a_dA^d_b-h\delta^a_b\;.
\end{align}
Let us localize the vertex algebra $D^{\mathrm{ch}}(\mathfrak R)$ by allowing $I$ to be invertible, i.e. we add $\det(I)^{-1}$ to the vertex algebra\footnote{$\det(I)$ is defined as the normal order product $\epsilon_{a_1a_2\cdots a_n}:I^{a_1}_1I^{a_2}_2\cdots I^{a_n}_n:$.}, and denote the resulting vertex algebra by $D^{\mathrm{ch}}(\mathfrak R)[I^{-1}]$. Then 
\begin{align*}
    D^{\mathrm{ch}}(\mathfrak R)[I^{-1}]\cong \mathbb C[\det(I)^{-1}_{(-1)},X_{(-n)},Y_{(-n)},I_{(-n)},J_{(-n)},A_{(-n)},B_{(-n)},h_{(-n)}\:|\:n\in \mathbb Z_{\ge 1}]|0\rangle \;.
\end{align*}
Define
\begin{align}
    \beta:=I^{-1}X\;, \quad \gamma:= YI\;,\quad C:=I^{-1}AI\;,\quad D:=I^{-1}BI\;,
\end{align}
where we contract indices when multiplying matrix-valued operators. We also define an element
\begin{align}
    \mathring{\alpha}:=h+\frac{2-2\mathsf g}{(1-2\mathsf g)n}\left(\mathrm{Tr}(YX)-\mathrm{Tr}(\gamma\beta)\right)\;.
\end{align}
We can also rewrite $\mathring{\alpha}$ as
\begin{align}
    \mathring{\alpha}=h+(2-2\mathsf g)\mathrm{Tr}(I^{-1}\partial I)\;.
\end{align}
Thus $\mathring{\alpha}$ is a Heisenberg field of level $2-2\mathsf g$. When $I$ is invertible, we can use $(\beta,\gamma,C,D,\mathring{\alpha})$ together with $\widetilde\mu_{\mathrm{ch}}$ to represent $(X,Y,I,J,A,B,h)$, and vice versa. Therefore we have an isomorphism
\begin{multline}
    D^{\mathrm{ch}}(\mathfrak R)[I^{-1}]\cong \mathbb C[\beta_{(-n)},\gamma_{(-n)},C_{(-n)},D_{(-n)},\mathring{\alpha}_{(-n)}\:|\:n\in \mathbb Z_{\ge 1}]\otimes\\
    \otimes\mathbb C[\det(I)^{-1}_{(-1)},I_{(-n)},\widetilde\mu_{\mathrm{ch},(-n)}\:|\:n\in \mathbb Z_{\ge 1}]|0\rangle\;.
\end{multline}
We notice that $\mathbb C[\beta_{(-n)},\gamma_{(-n)},C_{(-n)},D_{(-n)},\mathring{\alpha}_{(-n)}\:|\:n\in \mathbb Z_{\ge 1}]|0\rangle$ is exactly the $\beta\gamma\mathsf b\mathsf c$-Heisenberg system on the RHS of \cref{fig:reduction building block}, and we denote it by $D^{\mathrm{ch}}(\mathfrak R')$. We note that $D^{\mathrm{ch}}(\mathfrak R')$ commutes with $\mathbb C[\det(I)^{-1}_{(-1)},I_{(-n)},\widetilde\mu_{\mathrm{ch},(-n)}\:|\:n\in \mathbb Z_{\ge 1}]|0\rangle$, and the classical limit of the latter is the coordinate ring of $J_\infty T^*\mathrm{GL}_n$. Therefore we have
\begin{align}\label{eq:BRST cohomology of building block}
    H^{\infty/2+i}_{\mathrm{BRST}}(\mathfrak{gl}_n,D^{\mathrm{ch}}(\mathfrak R)[I^{-1}])=\begin{cases}
        D^{\mathrm{ch}}(\mathfrak R')\;,& i=0\;,\\
        0\;, & i\neq 0\;.
    \end{cases}
\end{align}
The natural embedding $D^{\mathrm{ch}}(\mathfrak R)\hookrightarrow D^{\mathrm{ch}}(\mathfrak R)[I^{-1}]$ of vertex superalgebras induces a natural map between relative BRST complexes: $(C(\mathfrak{gl}_n,D^{\mathrm{ch}}(\mathfrak R)),\mathcal{Q})\to(C(\mathfrak{gl}_n,D^{\mathrm{ch}}(\mathfrak R)[I^{-1}]),\mathcal{Q})$, and taking degree zero cohomology we get a vertex superalgebra map
\begin{align*}
    \nu: H^{\infty/2+0}_{\mathrm{BRST}}(\mathfrak{gl}_n,D^{\mathrm{ch}}(\mathfrak R))\to D^{\mathrm{ch}}(\mathfrak R')\;.
\end{align*}
Let $f(x_1,y_1;\cdots;x_{\mathsf g},y_{\mathsf g})$ be a monomial in $2\mathsf g$ variables $x_1,y_1,\cdots,x_{\mathsf g},y_{\mathsf g}$ such that each pair $(x_i,y_i)$ can not appear in $f$ simultaneously. For example $x_1x_2y_1$ is not allowed. Then it easy to see that $Yf(A_1,B_1;\cdots;A_{\mathsf g},B_{\mathsf g})I$, $Jf(A_1,B_1;\cdots;A_{\mathsf g},B_{\mathsf g})X$, $\mathrm{Tr}(f(A_1,B_1;\cdots;A_{\mathsf g},B_{\mathsf g}))$, and $\mathrm{Tr}(A_iB_i)$ are BRST closed, so they represent cohomology classes in $H^{\infty/2+0}_{\mathrm{BRST}}(\mathfrak{gl}_n,D^{\mathrm{ch}}(\mathfrak R))$. Then we have the following.

\begin{lem}
\begin{align}\label{eq:YfI}
    \nu(Yf(A_1,B_1;\cdots;A_{\mathsf g},B_{\mathsf g})I)&=\gamma f(C_1,D_1;\cdots;C_{\mathsf g},D_{\mathsf g})\;,
    \\
    \label{eq:JfX}
    \nu(Jf(A_1,B_1;\cdots;A_{\mathsf g},B_{\mathsf g})X)&=M f(C_1,D_1;\cdots;C_{\mathsf g},D_{\mathsf g})\beta\;,
    \\
    \label{eq:tr(f)}
    \nu(\mathrm{Tr}(f(A_1,B_1;\cdots;A_{\mathsf g},B_{\mathsf g})))&=\mathrm{Tr}(f(C_1,D_1;\cdots;C_{\mathsf g},D_{\mathsf g}))\;,
     \\
    \label{eq:tr(AB)}
    \nu(\mathrm{Tr}(A_iB_i))&=\mathrm{Tr}(C_iD_i)\;,
\end{align}
where $M^a_b:=-\gamma^i_b\beta^a_i-C^a_cD^c_b+D^a_dC^d_b-\mathring{\alpha}\delta^a_b$, and multiplications are understood as normal ordered products of matrix-valued operators. Furthermore, we have the following map of BRST-closed elements 
\begin{align}\label{eq:YfX}
    &\nu\left(\left(Y^i_bX^a_j-\frac{\delta^i_j}{(1-2\mathsf g)n}Y^k_bX^a_k\right)f(A_1,B_1;\cdots;A_{\mathsf g},B_{\mathsf g})^b_a\right) \nonumber\\
    &\qquad \qquad =\left(\gamma^i_b\beta^a_j-\frac{\delta^i_j}{(1-2\mathsf g)n}\gamma^k_b\beta^a_k\right)f(C_1,D_1;\cdots;C_{\mathsf g},D_{\mathsf g})^b_a\;, \\
     \label{eq:JI}
    &\nu\left(JI-\frac{\mathrm{Tr}(JI)}{n}\mathrm{id} \right)
    =-\beta \gamma+\frac{\mathrm{Tr}(\beta\gamma)}{n}\mathrm{id} -CD+DC\;,
    \\
    \label{eq:tr(XY)-tr(IJ)}
    &\nu\left(\frac{1}{1-2\mathsf g}\mathrm{Tr}(YX)-\mathrm{Tr}(JI)\right)=n\mathring{\alpha}+\frac{2-2\mathsf g}{1-2\mathsf g}\mathrm{Tr}(\beta\gamma)\;.
\end{align}
\end{lem}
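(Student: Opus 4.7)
The plan is to verify each identity by reducing it to a direct computation in the explicit isomorphism
\[
D^{\mathrm{ch}}(\mathfrak R)[I^{-1}]\cong D^{\mathrm{ch}}(\mathfrak R')\otimes\mathbb C[\det(I)^{-1}_{(-1)},I_{(-n)},\widetilde\mu_{\mathrm{ch},(-n)}\:|\:n\in \mathbb Z_{\ge 1}]|0\rangle,
\]
using the substitutions $X=I\beta$, $Y=\gamma I^{-1}$, $A=ICI^{-1}$, $B=IDI^{-1}$, together with the rewriting $\mathring\alpha=h+(2-2\mathsf g)\mathrm{Tr}(I^{-1}\partial I)$. Since the vanishing theorem \eqref{eq:BRST cohomology of building block} forces everything in the second tensor factor (the $I$ and $\widetilde\mu_{\mathrm{ch}}$ part) to drop out of BRST cohomology, it suffices to express the given BRST-closed representative in the decomposition and read off its component in $D^{\mathrm{ch}}(\mathfrak R')$.

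For \eqref{eq:YfI}, \eqref{eq:tr(f)}, and \eqref{eq:tr(AB)}, the computation is essentially direct. The assumption that each pair $(A_i,B_i)$ does not appear simultaneously in the monomial $f$ means that the only singular OPEs arise between the outermost $Y,I^{-1}$ and $X,I$ factors (if present), not inside $f(A,B)$. Thus $Yf(A,B)I=(YI)\bigl(I^{-1}f(A,B)I\bigr)=\gamma f(I^{-1}AI,I^{-1}BI)=\gamma f(C,D)$, and cyclicity of the trace handles \eqref{eq:tr(f)} and \eqref{eq:tr(AB)}; the absence of Wick contractions between $I^{\pm 1}$ and $A,B$ is what ensures no quantum correction appears.

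For \eqref{eq:JfX}, \eqref{eq:JI}, and \eqref{eq:tr(XY)-tr(IJ)}, the strategy is to use the extended moment map equation to eliminate $J$. Explicitly, $(\widetilde\mu_{\mathrm{ch}})^a_b=0$ in cohomology gives
\[
J^\rho_b = -(I^{-1})^\rho_a\bigl(Y^i_bX^a_i+A^a_cB^c_b-B^a_dA^d_b+h\,\delta^a_b\bigr)
\]
modulo elements of the second tensor factor. Substituting into $Jf(A,B)X$, collecting $I,I^{-1}$ in the spectator positions and applying the previous step produces the matrix $M$ defined in the statement, proving \eqref{eq:JfX}. For \eqref{eq:JI} and \eqref{eq:tr(XY)-tr(IJ)} we additionally need the chiral trace identity relating $\mathrm{Tr}(YX)$ to $\mathrm{Tr}(\gamma\beta)$: the re-normal-ordering in $\gamma^i_j\beta^j_i=Y^i_aI^a_j(I^{-1})^j_bX^b_i$ produces via Wick contractions of $Y$ with $X$ an anomalous derivative term proportional to $n\,\mathrm{Tr}(I^{-1}\partial I)$. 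Combining this correction with the definition of $\mathring\alpha$ converts $h$ into $\mathring\alpha$ with precisely the coefficients $(1-2\mathsf g)$ and $(2-2\mathsf g)$ appearing in the statement.

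Identity \eqref{eq:YfX} follows by the same mechanism: the traceless projection $Y^i_bX^a_j-\tfrac{\delta^i_j}{(1-2\mathsf g)n}Y^k_bX^a_k$ is tailored so that the anomalous contribution from the trace part of the $YX$-Wick contraction --- which would otherwise produce an $h$-dependent (and thus $\mathring\alpha$-dependent) term under the rewriting --- is cancelled, leaving the corresponding traceless expression in $\gamma,\beta$. The main technical obstacle throughout is bookkeeping these quantum corrections; once the trace identity $\mathrm{Tr}(\gamma\beta)=\mathrm{Tr}(YX)-(1-2\mathsf g)n\cdot\mathrm{Tr}(I^{-1}\partial I)$ is established, all remaining identities reduce to direct substitution and the moment map relation, so the heart of the proof is this one Wick-contraction calculation.
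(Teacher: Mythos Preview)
Your proposal is essentially correct and follows the same strategy as the paper: direct substitution for \eqref{eq:YfI}, \eqref{eq:tr(f)}, \eqref{eq:tr(AB)}, \eqref{eq:YfX}, and the moment-map relation together with the normal-ordering identity $\mathrm{Tr}(YX)-\mathrm{Tr}(\gamma\beta)=(1-2\mathsf g)n\,\mathrm{Tr}(I^{-1}\partial I)$ for the rest. The one organizational difference is that the paper reverses your order for the $J$-identities: it first proves \eqref{eq:JI} by introducing the explicit operator $\mathcal O^{\rho}_\sigma:=-:(\widetilde\mu_{\mathrm{ch}})^a_b (I^{-1})^{\rho}_a I^b_\sigma:$, observes that $(I^{-1})^{\rho}_a I^b_\sigma$ sits in the adjoint with no anomalous level so that $\mathcal O$ is BRST-exact by the general lemma in \cref{subsec:Some BRST exact elements}, and then expands $\mathcal O$ directly---the quantum correction you anticipate appears here as the term $-I^a_{\tau,(-2)}J^{\tau}_{b,(0)}(I^{-1})^{\rho}_a I^b_\sigma=(1-2\mathsf g)\mathrm{Tr}(I^{-1}\partial I)\delta^\rho_\sigma$. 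Having \eqref{eq:JI} in hand, the paper rewrites $M=JI-\mathrm{Tr}(I^{-1}\partial I)\,\mathrm{id}$ and checks \eqref{eq:JfX} by a one-line computation. Your route (substitute for $J$ first, then specialize) works too, but the paper's ordering isolates the single nontrivial normal-ordering step more cleanly and avoids having to track corrections through the longer product $Jf(A,B)X$.
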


\begin{proof}
The equations \eqref{eq:YfI}, \eqref{eq:tr(f)}, \eqref{eq:tr(AB)}, \eqref{eq:YfX} are straightforward to derive. Then equation \eqref{eq:tr(XY)-tr(IJ)} follows from the trace of the moment map $\mathrm{Tr}(\widetilde\mu_{\mathrm{ch}})$, which is zero in the BRST cohomology. To derive \eqref{eq:JI}, let us consider the operator
\begin{align*}
\mathcal O^{\rho}_\sigma:=-(\widetilde\mu_{\mathrm{ch}})^a_{b,(-1)}(I^{-1})^{\rho}_{a,(-1)} I^b_\sigma\;.
\end{align*}
Since for every pair of fixed $\rho,\sigma$ the operator $(I^{-1})^{\rho}_{a,(-1)} I^b_\sigma$ transforms in the adjoint representation with respect to the action of $\widetilde\mu_{\mathrm{ch}}$, then according to \cref{subsec:Some BRST exact elements}, $\mathcal O^{\rho}_\sigma$ vanishes in the BRST cohomology. On the other hand we can expand $\mathcal O^{\rho}_\sigma$ as follows:
\begin{align*}
\mathcal O^{\rho}_\sigma&=(I^{-1})^\rho_a Y^i_b X^a_iI^b_\sigma+(I^{-1}ABI-I^{-1}BAI)^{\rho}_\sigma+(JI)^{\rho}_\sigma-I^a_{\tau,(-2)}J^{\tau}_{b,(0)}(I^{-1})^{\rho}_{a,(-1)} I^b_\sigma+h\delta^{\rho}_\sigma\\
&=\gamma^i_\sigma\beta^\rho_i+(CD-DC+JI)^{\rho}_\sigma+h\delta^{\rho}_\sigma+(1-2\mathsf g)\mathrm{Tr}(I^{-1}\partial I)\delta^{\rho}_\sigma \;.
\end{align*}
Using the equation $\mathcal O-\frac{\mathrm{Tr}(\mathcal O)}{n}\mathrm{id}\equiv 0$ in the BRST cohomology, we find \eqref{eq:JI}. We note that \eqref{eq:JI} implies that 
\begin{align*}
    M=\nu\left(JI-\frac{1}{(1-2\mathsf g)n}\mathrm{Tr}(YX)\mathrm{id}\right)+\frac{1}{(1-2\mathsf g)n}\mathrm{Tr}(\gamma\beta)=JI-\mathrm{Tr}(I^{-1}\partial I)\mathrm{id}\;.
\end{align*}
Thus the RHS of \eqref{eq:JfX} equals to
\begin{align*}
    (JI-\mathrm{Tr}(I^{-1}\partial I)\mathrm{id})_{(-1)}I^{-1}f(A_1,B_1;\cdots;A_{\mathsf g},B_{\mathsf g})X\;,
\end{align*}
which equals to LHS of \eqref{eq:JfX}.
\end{proof}

The stress-energy operator $T_{\mathrm{total}}$ is the sum of stress-energy operators for the $\beta\gamma\mathsf b\mathsf c$ systems formed by $(X,Y,I,J,A,B)$, the Heisenberg field $h$, and the $\mathsf b\mathsf c$ ghost system. Namely
\begin{align*}
    T_{\mathrm{total}}&=\frac{1}{2}\mathrm{Tr}(\partial Y X-Y\partial X)+\frac{1}{2}\mathrm{Tr}(\partial J I-J\partial I)+\frac{1}{2}\mathrm{Tr}(\partial B A-B\partial A)\\
    &\quad +\frac{1}{4-4\mathsf g}h^2-\mathrm{Tr}(\mathsf b\partial\mathsf c)\;.
\end{align*}
We note that when $\mathsf g=1$, the above formula does not make sense, and we need to remove the Heisenberg field $h$ from the formula, since $h$ is commutative. 

From our previous discussion in  \cref{sec:BRST_operators}, the total stress-energy operator $T_{\mathrm{total}}$ is BRST closed, thus it represents an element in $H^{\infty/2+0}_{\mathrm{BRST}}(\mathfrak{gl}_n,D^{\mathrm{ch}}(\mathfrak R))$. 
\begin{lem}
Under the reduction map $\nu$, 
\begin{align}\label{eq:stress-energy operator under reduction}
\nu(T_{\mathrm{total}})
=-\mathrm{Tr}(\gamma\partial \beta)+\frac{1}{2}\mathrm{Tr}(\partial D C-C\partial D)+\begin{cases}
    \frac{1}{4-4\mathsf g}\mathring{\alpha}^2-\frac{n}{2}\partial\mathring{\alpha}, & \mathsf g\neq 1\\
    0, & \mathsf g=1
\end{cases}\;.
\end{align}
\end{lem}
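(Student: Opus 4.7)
The plan is to verify \eqref{eq:stress-energy operator under reduction} by computing $\nu$ applied to each summand of $T_{\mathrm{total}}$ separately, then reassembling. I would organize the computation into four pieces: (a) the $(X,Y)$ bosonic $\beta\gamma$ stress-energy, (b) the $(A,B)$ stress-energy, (c) the Heisenberg contribution $\tfrac{1}{4-4\mathsf g}h^2$, and (d) the combined $JI$-term together with the $\mathsf b\mathsf c$ ghost contribution, the last of which must be handled through the BRST cohomology using the techniques already developed in \cref{subsec:building block}.

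For (a), I substitute $X = I\beta$ and $Y = \gamma I^{-1}$ into $\tfrac{1}{2}\mathrm{Tr}(\partial Y\, X - Y\,\partial X)$, expand via the Leibniz rule in normal-ordered products, and track all Wick-contraction anomalies. The classical (leading) term reproduces $-\mathrm{Tr}(\gamma \partial \beta)$, while double contractions between $I$ and $I^{-1}$ produce residual contributions proportional to $\mathrm{Tr}(I^{-1}\partial I)$ and its derivative. For (b), the substitution $A = ICI^{-1}$, $B = IDI^{-1}$ yields $\tfrac{1}{2}\mathrm{Tr}(\partial D\, C - C\,\partial D)$ plus $I$-corrections analogous to those from (a).

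For (d), the crucial input is the BRST-triviality of the moment map $\widetilde\mu_{\mathrm{ch}} \equiv 0$, which on the localization $D^{\mathrm{ch}}(\mathfrak R)[I^{-1}]$ lets me solve for $J$:
\begin{equation*}
J \equiv I^{-1}\!\left(-YX - AB + BA - h\,\mathrm{id}\right),
\end{equation*}
so that $\tfrac{1}{2}\mathrm{Tr}(\partial J\, I - J\,\partial I)$ becomes an explicit expression in the remaining fields. The ghost term $-\mathrm{Tr}(\mathsf b \partial \mathsf c)$ is treated as in the analysis of $W^{(2)}_{n-1,n-1}$ in \eqref{eq: VOA map} and in \cref{subsec:Some BRST exact elements}: it is BRST-equivalent to a bosonic combination computable from the reduced data. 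Combined, the $JI$-piece and the ghost piece collapse to an expression depending only on $I$ and $h$, which I then re-express using $\mathring\alpha = h + (2-2\mathsf g)\mathrm{Tr}(I^{-1}\partial I)$.

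Finally, in step (c), I substitute $h = \mathring\alpha - (2-2\mathsf g)\mathrm{Tr}(I^{-1}\partial I)$ into $\tfrac{1}{4-4\mathsf g}h^2$; the cross-term and squared term absorb all the $I$-dependent residues from (a), (b), and (d), leaving precisely $\tfrac{1}{4-4\mathsf g}\mathring\alpha^2$. The improvement term $-\tfrac{n}{2}\partial\mathring\alpha$ arises from the central anomaly of $\mathrm{Tr}(I^{-1}\partial I)(z)\mathrm{Tr}(I^{-1}\partial I)(w) \sim n/(z-w)^2 + \cdots$ combined with the ghost central contribution. The main obstacle is the bookkeeping of these double-contraction anomalies, since one must show that all $I$-dependent noise cancels except for a clean reassembly into $\mathring\alpha$. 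As a sanity check (and, if needed, an alternative proof), I would verify that the two sides of \eqref{eq:stress-energy operator under reduction} have identical OPEs with each generator $\beta,\gamma,C,D,\mathring\alpha$ of $D^{\mathrm{ch}}(\mathfrak R')$; since $D^{\mathrm{ch}}(\mathfrak R')$ has trivial center, this OPE comparison suffices.
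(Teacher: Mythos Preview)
Your ``sanity check'' at the end is in fact the paper's entire proof, and it is far cleaner than the direct substitution you propose as the main argument. The paper simply observes that $\beta,\gamma,C,D$ are primary with respect to $T_{\mathrm{total}}$ of weights $0,1,\tfrac12,\tfrac12$, and computes the OPE of $T_{\mathrm{total}}$ with $\mathring\alpha$ (which has a third-order pole with coefficient $(2-2\mathsf g)n$). These OPEs match those of the right-hand side of \eqref{eq:stress-energy operator under reduction} with each generator, so the difference lies in the center of $D^{\mathrm{ch}}(\mathfrak R')$.

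There is, however, a small gap in your statement of this alternative. The center of $D^{\mathrm{ch}}(\mathfrak R')$ is not zero but is spanned by the vacuum, so the OPE comparison only pins down $\nu(T_{\mathrm{total}})$ up to an additive constant $\lambda|0\rangle$. The paper closes this by a grading argument: under the grading induced by $L_{0,\mathrm{total}}$ on the relative BRST complex, both $T_{\mathrm{total}}$ and the right-hand side have degree $2$, while $|0\rangle$ has degree $0$, forcing $\lambda=0$. You should include this step.

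Your primary approach --- direct substitution of $X=I\beta$, $Y=\gamma I^{-1}$, $A=ICI^{-1}$, etc., and explicit tracking of all Wick contractions, ghost rewritings, and normal-ordering anomalies --- would in principle also work and would automatically fix the constant, but it is exactly the kind of bookkeeping the OPE argument is designed to bypass. In particular, your handling of the ghost term $-\mathrm{Tr}(\mathsf b\partial\mathsf c)$ by appeal to BRST-exactness is left vague; making this precise is nontrivial and unnecessary once you adopt the OPE route as the main proof.
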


\begin{proof}
To prove \eqref{eq:stress-energy operator under reduction}, we notice that $\beta,\gamma,C,D$ are primary fields with respect to $T_{\mathrm{total}}$ with dimensions $0,1,1/2,1/2$ respectively; and we have OPE
\begin{align*}
    T_{\mathrm{total}}(z)\mathring{\alpha}(w)\sim \frac{(2-2\mathsf g)n}{(z-w)^3}+\frac{\mathring{\alpha}}{(z-w)^2}+\frac{\partial\mathring{\alpha}}{z-w}.
\end{align*}
This implies that $\nu(T_{\mathrm{total}})-(T_{\beta\gamma}+T_{CD}+T_{\mathring{\alpha}})$ commutes with $\beta,\gamma,C,D,\mathring{\alpha}$. Since the center of $D^{\mathrm{ch}}(\mathfrak R')$ is spanned by the vacuum, we must have $\nu(T_{\mathrm{total}})=T_{\beta\gamma}+T_{CD}+T_{\mathring{\alpha}}+\lambda|0\rangle$ for some $\lambda\in \mathbb C$. On the other hand, under the grading on $C(\mathfrak{gl}_n,D^{\mathrm{ch}}(\mathfrak R))$ induced by $L_{0,\mathrm{total}}=\oint zT_{\mathrm{total}}(z)dz$, both $T_{\mathrm{total}}$ and $T_{\beta\gamma}+T_{CD}+T_{\mathring{\alpha}}$ are of degree $2$, whereas $\deg |0\rangle=0$, thus $\lambda=0$ and \eqref{eq:stress-energy operator under reduction} is proven.
\end{proof}

We note that the special case when $m=n=1,r=\mathsf g=0$, the quiver reduction leads to the Wakimoto realization of $V^{-1}(\mathfrak{sl}_2)$. Indeed, the assignment of strong generators 
\begin{align}
    e\mapsto YI\;,\quad  h\mapsto YX-JI\;,\quad f\mapsto JX\;,
\end{align}
gives a vertex algebra map $\sigma: V^{-1}(\mathfrak{sl}_2)\to H^{\infty/2+0}_{\mathrm{BRST}}(\mathfrak{gl}_n,D^{\mathrm{ch}}(\mathfrak R))$. Let $\mathcal{H}^k$ be the Heisenberg algebra associated to the one dimensional vector space $\mathbb C$ with inner form $(x,y)=kxy$, then $D^{\mathrm{ch}}(\mathfrak R')\cong D^{\mathrm{ch}}(T^*\mathbb C)\otimes \mathcal{H}^2$. The composition $\nu\circ\sigma: V^{-1}(\mathfrak{sl}_2)\to D^{\mathrm{ch}}(\mathfrak R')=D^{\mathrm{ch}}(T^*\mathbb C)\otimes \mathcal{H}^2$ maps the strong generators as follows
\begin{align}
    \nu\circ\sigma(e)=\gamma\;,\quad \nu\circ\sigma(h)=2\gamma\beta+\mathring{\alpha}\;,\quad \nu\circ\sigma(f)=-\beta^2\gamma+\partial\beta-\mathring{\alpha}\beta\;.
\end{align}
$\nu\circ\sigma$ agrees with the Wakimoto realization of $V^{-1}(\mathfrak{sl}_2)$.

\subsection{Reduction for vertex (super)algebras associated to quivers}\label{subsec:reduction for VOA}

Consider a quiver $Q$ with gauge dimension $\mathbf v$ and framing dimension $\mathbf w$ such that $\mathbf d:=\mathbf w-\mathsf C\mathbf v\in \mathbb Z^{Q_0}_{\ge 0}$. Suppose that there is a node $i\in Q_0$ such that $\mathbf w_i\ge \mathbf v_i$, then we decompose the framing vector space into $\mathbb C^{\mathbf w_i}=\mathbb C^{\mathbf v_i}\oplus\mathbb C^{\mathbf w_i-\mathbf v_i}$. Without loss of generality, we assume that $i$ is a sink, i.e. there is no arrow in $Q_1$ going out from $i$, then we recognize the neighborhood of the node $i$ as the situation in the previous subsection, see \cref{fig:reduction algorithm}.

\begin{figure}
\begin{center}
    \begin{tikzpicture}

    \node (leftfigure) at (0,0){
    \begin{tikzpicture}
    \node(C) at (0,0) {$\circled{$v_i$}$};
    \node(1) at (-2.16,-1.25) {$\circled{$v_1$}$};
    \node(j) at (-2.16,1.25) {$\circled{$v_j$}$};
    \node(t) at (0,2.5) {$\circled{$v_t$}$};
    \node(w) at (0,-2.5) {$\rectangled{$v_i$}$};
    \node(w2) at (2.16,-1.25) {$\rectangled{$w_i-v_i | d_i$}$};
    \draw[->] ([yshift=0.1cm]1.east) to node[above]{$X_1$} ([yshift=0.1cm]C.south west);
    \draw[->] ([yshift=-0.1cm]C.south west) to node[below]{$Y_1$} ([yshift=-0.1cm]1.east);
    \draw[->] ([yshift=0.1cm]j.east) to node[above]{$X_j$} ([yshift=0.1cm]C.north west);
    \draw[->] ([yshift=-0.1cm]C.north west) to node[below]{$Y_j$} ([yshift=-0.1cm]j.east);
    \draw[->] ([xshift=0.1cm]t.south) to node[right]{$X_t$} ([xshift=0.1cm]C.north);
    \draw[->] ([xshift=-0.1cm]C.north) to node[left]{$Y_t$} ([xshift=-0.1cm]t.south);
    \draw[->] ([xshift=-0.1cm]w.north) to node[left]{$I$} ([xshift=-0.1cm]C.south);
    \draw[->] ([xshift=0.1cm]C.south) to node[right]{$J$} ([xshift=0.1cm]w.north);
    \draw[->] ([yshift=-0.1cm]w2.west) to node[above,yshift=0.3cm]{$\tilde{J}$} ([yshift=-0.1cm]C.south east);
    \draw[->] ([yshift=0.1cm]C.south east) to node[below,yshift=-0.3cm]{$\tilde{I}$} ([yshift=0.1cm]w2.west);
    \draw[dotted,thick] (1.2, 0.6) -- (1.4, 0.7);
    \draw[bend left,dotted,thick] (1) to (j);
    \draw[bend left,dotted,thick] (j) to (t);
    \draw[<-] (M) to[in=10, out=50, looseness=5] node[left,yshift=-0.2cm]{$A_1$}(M);
    \draw[<-] (M) to[in=60, out=0, looseness=8] node[right]{$B_g$} (M);
    \node(h1) at (-2.85,-1.2) {$h_1$};
    \node(hj) at (-2.85,1.25) {$h_j$};
    \node(ht) at (0,3.2) {$h_t$};
    \node(h) at (-0.65,0.1) {$h$};
    \end{tikzpicture}
    };
    
    \draw[->,thick] (1.5,2.5) to [bend left=45] node[above]{reduction} (4.5, 2.5);
     
    \node (rightfigure) at (7,0){
    
    \begin{tikzpicture}
    \node(C) at (0,0) {$\rectangled{$v_i$}$};
    \node(1) at (-2.16,-1.25) {$\circled{$v_1$}$};
    \node(j) at (-2.16,1.25) {$\circled{$v_j$}$};
    \node(t) at (0,2.5) {$\circled{$v_t$}$};
    \node(w) at (0,-2.5) {$\rectangled{$w_i-v_i | d_i$}$};
    \draw[->] ([yshift=0.1cm]1.east) to node[above]{$\beta_1$} ([yshift=0.1cm]C.south west);
    \draw[->] ([yshift=-0.1cm]C.south west) to node[below]{$\gamma_1$} ([yshift=-0.1cm]1.east);
    \draw[->] ([yshift=0.1cm]j.east) to node[above]{$\beta_j$} ([yshift=0.1cm]C.north west);
    \draw[->] ([yshift=-0.1cm]C.north west) to node[below]{$\gamma_j$} ([yshift=-0.1cm]j.east);
    \draw[->] ([xshift=0.1cm]t.south) to node[right]{$\beta_t$} ([xshift=0.1cm]C.north);
    \draw[->] ([xshift=-0.1cm]C.north) to node[left]{$\gamma_t$} ([xshift=-0.1cm]t.south);
    \draw[->] ([xshift=-0.1cm]w.north) to node[left]{$\tilde{\beta}$} ([xshift=-0.1cm]C.south);
    \draw[->] ([xshift=0.1cm]C.south) to node[right]{$\tilde{\gamma}$} ([xshift=0.1cm]w.north);
    \draw[dotted,thick] (1.2, 0.6) -- (1.4, 0.7);
    \draw[bend left,dotted,thick] (1) to (j);
    \draw[bend left,dotted,thick] (j) to (t);
    \draw[<-] (M) to[in=10, out=50, looseness=5] node[left,yshift=-0.2cm]{$C_1$}(M);
    \draw[<-] (M) to[in=60, out=0, looseness=8] node[right]{$D_g$} (M);
    \node(h1) at (-2.85,-1.2) {$h_1'$};
    \node(hj) at (-2.85,1.25) {$h_j'$};
    \node(ht) at (0,3.2) {$h_t'$};
    \node(h) at (-0.65,0.1) {$\alpha$};
    \end{tikzpicture}
    };
    
    \end{tikzpicture}
    \caption{A general reduction procedure for a quiver diagram.}
    \label{fig:reduction algorithm}
\end{center}
\end{figure}
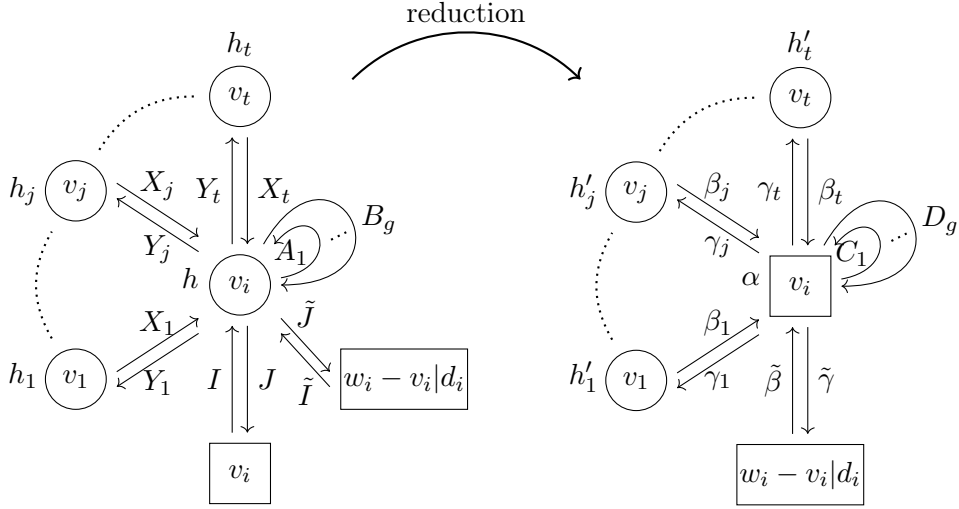

The new quiver $Q'$ on the RHS of \cref{fig:reduction algorithm} is the sub-quiver of $Q$ obtained by deleting the node $i$, and $\mathbf v'_j=\mathbf v_j$, $\mathbf w'_j=\mathbf w_j-\mathsf C_{ij}\mathbf v_i$. 

The $\beta\gamma\mathsf b\mathsf c$ system on the LHS (resp.\ RHS) of \cref{fig:reduction algorithm} is denoted by $\CVs(Q,\mathbf v,\mathbf w)$ (resp.\  $\CVs(Q',\mathbf v',\mathbf w')$). Moreover, if we add the inverse of $I$ into $\CVs(Q,\mathbf v,\mathbf w)$, then according to \eqref{eq:BRST cohomology of building block} we have
\begin{multline}\label{eq:reduction at node i}
    H^{\infty/2+0}_{\mathrm{BRST}}(\mathfrak{gl}(\mathbf v_i),\CVs(Q,\mathbf v,\mathbf w)[I^{-1}])=\left(\CVs(Q',\mathbf v',\mathbf w')\rtimes \mathcal{H}^{\frac{\mathbf w_i-\mathbf d_i}{\mathbf v_i}}\right)\otimes\\
    \otimes D^{\mathrm{ch}}(T^*\mathrm{End}(\mathbb C^{\mathbf v_i})^{\oplus\mathsf g_i})\otimes D^{\mathrm{ch}}(T^*\mathrm{Hom}(\mathbb C^{\mathbf w_i-\mathbf v_i|\mathbf d_i},\mathbb C^{\mathbf v_i}))\;,
\end{multline}
and $ H^{\infty/2+n}_{\mathrm{BRST}}(\mathfrak{gl}(\mathbf v_i),\CVs(Q,\mathbf v,\mathbf w)[I^{-1}])=0$ for $n\neq 0$. The fields on the RHS of \eqref{eq:reduction at node i} are given by
\begin{equation}\label{eq:fields under the reduction}
\begin{split}
    &\CVs(Q',\mathbf v',\mathbf w')\ni\Big\{\beta_j=I^{-1}X_j, \; \gamma_j= Y_jI ,\;h_j'=h_j-\frac{1}{\mathbf v_j}\mathrm{Tr}(Y_jX_j)+\frac{1}{\mathbf v_j}\mathrm{Tr}(\gamma_j\beta_j)\Big\}\;,\\
    &\mathcal{H}^{\frac{\mathbf w_i-\mathbf d_i}{\mathbf v_i}}\ni \Big\{\alpha=h_i+(2-2\mathsf g_i)\mathrm{Tr}(I^{-1}\partial I)+\frac{1}{\mathbf v_i}\sum_{j\to i\in Q_1}\mathrm{Tr}(\gamma_j\beta_j)\Big\}\;,\\
    &D^{\mathrm{ch}}(T^*\mathrm{End}(\mathbb C^{\mathbf v_i})^{\oplus\mathsf g_i})\ni\{C=I^{-1}AI,\; D=I^{-1}BI \}\;,\\
    &D^{\mathrm{ch}}(T^*\mathrm{Hom}(\mathbb C^{\mathbf w_i-\mathbf v_i|\mathbf d_i},\mathbb C^{\mathbf v_i}))\ni \{\widetilde\beta=I^{-1}\widetilde I, \; 
    \widetilde\gamma=\widetilde{J}I \}\;.
\end{split}
\end{equation}
The semi-direct product in \eqref{eq:reduction at node i} means that the Heisenberg algebra $\mathcal{H}^{\frac{\mathbf w_i-\mathbf d_i}{\mathbf v_i}}$ acts on $\CVs(Q',\mathbf v',\mathbf w')$. The action is given by the following OPEs
\begin{align*}
    \alpha(z)h'_j(w)\sim \frac{\mathsf C_{i,j}}{(z-w)^2}\;,\quad \alpha(z)\beta_j(w)\sim \frac{-\mathbf v_i^{-1}\beta_j(w)}{z-w}\;,\quad \alpha(z)\gamma_j(w)\sim \frac{\mathbf v_i^{-1}\gamma_j(w)}{z-w}\;.
\end{align*}
From now on, we denote $D^{\mathrm{ch}}(T^*\mathrm{End}(\mathbb C^{\mathbf v_i})^{\oplus\mathsf g_i})\otimes D^{\mathrm{ch}}(T^*\mathrm{Hom}(\mathbb C^{\mathbf w_i-\mathbf v_i|\mathbf d_i},\mathbb C^{\mathbf v_i}))$ by $D^{\mathrm{ch}}(\mathcal E_i)$.

The current algebra $V^{-2\mathbf v_j}(\mathfrak{gl}(\mathbf v_j))$ corresponding to the node $j\neq i$ is automatically BRST-closed with respect to the $\mathfrak{gl}(\mathbf v_i)$-cohomology, so there is a natural vertex algebra map 
\begin{align*}
    \widetilde\mu'_{\mathrm{ch},j}&:V^{-2\mathbf v_j}(\mathfrak{gl}(\mathbf v_j))\to \CVs(Q',\mathbf v',\mathbf w')\;,\\
    \widetilde\mu'_{\mathrm{ch},j}(E^a_b)&=(Y_j)^a_c(X_j)^c_b-h_j\delta^a_b+\sum\text{terms which are not neighboring to node }i\\
    &=(\gamma_j)^a_c(\beta_j)^c_b-h'_j\delta^a_b+\sum\text{terms which are not neighboring to node }i \;.
\end{align*}
So we see that $\widetilde\mu'_{\mathrm{ch},j}$ agrees with the extended chiral moment map corresponding to the node $j$ in the new quiver $Q'$. Moreover, $\widetilde\mu'_{\mathrm{ch},j}$ commutes with the action of $\mathcal{H}^{\frac{\mathbf w_i-\mathbf d_i}{\mathbf v_i}}$, in fact our choice of the Heisenberg field $\alpha$ instead of $\mathring{\alpha}$ is precisely of this reason. It follows that there is a canonical isomorphism
\begin{multline*}
H^{\infty/2+\bullet}_{\mathrm{BRST}}\left(\bigoplus_{j\neq i}\mathfrak{gl}(\mathbf v_j),H^{\infty/2+0}_{\mathrm{BRST}}(\mathfrak{gl}(\mathbf v_i),\CVs(Q,\mathbf v,\mathbf w)[I^{-1}])\right)\cong \\
    \left(H^{\infty/2+\bullet}_{\mathrm{BRST}}(\mathfrak{gl}(\mathbf v'),\CVs(Q',\mathbf v',\mathbf w'))\rtimes \mathcal{H}^{\frac{\mathbf w_i-\mathbf d_i}{\mathbf v_i}}\right)\otimes D^{\mathrm{ch}}(\mathcal E_i)\;.
\end{multline*}
Consider the relative BRST complex $(C(\mathfrak{gl}(\mathbf v), \CVs(Q,\mathbf v,\mathbf w)[I^{-1}]),\mathcal{Q})$, and we decompose the BRST differential 
\begin{align*}
\mathcal{Q}=\mathcal{Q}^i+\mathcal{Q}^{\hat{i}}
\end{align*}
into the differential $\mathcal{Q}^i$ corresponding to node $i$ and the differential $\mathcal{Q}^{\hat{i}}=\sum_{j\neq i}\mathcal{Q}^j$ corresponding to nodes other than $i$. $\mathcal{Q}^i$ anti-commutes with $\mathcal{Q}^{\hat{i}}$, then we have a spectral sequence with second page
\begin{align*}
    E^{p,q}_2=H^q(H^p(C(\mathfrak{gl}(\mathbf v), \CVs(Q,\mathbf v,\mathbf w)[I^{-1}]),\mathcal{Q}^i),\mathcal{Q}^{\hat{i}})\;.
\end{align*}
According to our previous discussion in \eqref{eq:BRST cohomology of building block}, we have 
\begin{align*}
    H^0(C(\mathfrak{gl}(\mathbf v), \CVs(Q,\mathbf v,\mathbf w)[I^{-1}]),\mathcal{Q}^i)\cong
    \left(C(\mathfrak{gl}(\mathbf v'),\CVs(Q',\mathbf v',\mathbf w'))\rtimes \mathcal{H}^{\frac{\mathbf w_i-\mathbf d_i}{\mathbf v_i}}\right)
    \otimes D^{\mathrm{ch}}(\mathcal E_i)\;,
\end{align*}
and $H^p(C(\mathfrak{gl}(\mathbf v), \CVs(Q,\mathbf v,\mathbf w)[I^{-1}]),\mathcal{Q}^i)$ vanishes for $p\neq 0$. It follows that $E^{p,q}$ degenerates at the second page, and it converges to $H^{p+q}(C(\mathfrak{gl}(\mathbf v), \CVs(Q,\mathbf v,\mathbf w)[I^{-1}]),\mathcal{Q})$. Thus we have an isomorphism
\begin{multline}
    H^{\infty/2+\bullet}_{\mathrm{BRST}}(\mathfrak{gl}(\mathbf v),\CVs(Q,\mathbf v,\mathbf w)[I^{-1}])\cong \\
    \left(H^{\infty/2+\bullet}_{\mathrm{BRST}}(\mathfrak{gl}(\mathbf v'),\CVs(Q',\mathbf v',\mathbf w'))\rtimes \mathcal{H}^{\frac{\mathbf w_i-\mathbf d_i}{\mathbf v_i}}\right)
    \otimes D^{\mathrm{ch}}(\mathcal E_i)\;.
\end{multline}
The natural vertex superalgebra map $\CVs(Q,\mathbf v,\mathbf w)\to \CVs(Q,\mathbf v,\mathbf w)[I^{-1}]$ induces a vertex superalgebra map between zeroth BRST cohomologies:
\begin{align}\label{eq:reduction map}
    \nu_i: \CV(Q,\mathbf v,\mathbf w)\to \CV(Q',\mathbf v',\mathbf w')\rtimes \mathcal{H}^{\frac{\mathbf w_i-\mathbf d_i}{\mathbf v_i}}
    \otimes D^{\mathrm{ch}}(\mathcal E_i)\;.
\end{align}
We call $\nu_i$ the quiver reduction map at the node $i$.

\bigskip The above construction naturally generalizes to sheaves of $\hbar$-adic vertex superalgebras. Let $\mathcal U_i\subset\widetilde\CM(Q,\mathbf v,\mathbf w)$ be the open subset on which the map $I$ in the LHS of \cref{fig:reduction algorithm} is an isomorphism, then we have an isomorphism
\begin{align}\label{eq:geometry reduction}
    \mathcal U_i\cong \widetilde\CM(Q',\mathbf v',\mathbf w')\times \mathbb C\times \mathcal E_i\;,
\end{align}
where $\mathcal{E}_i=T^*\mathrm{End}(\mathbb C^{\mathbf v_i})^{\oplus \mathsf g_i}\times T^*\mathrm{Hom}(\mathbb C^{\mathbf v_i},\mathbb C^{\mathbf w_i-\mathbf v_i|\mathbf d_i})$.  
\begin{lem}\label{lem:reduction for sheaves}
The restriction of $\mathscr D^{\mathrm{ch}}_{\widetilde\CM(Q,\mathbf v,\mathbf w),\hbar}$ to $\mathcal U_i$ is naturally isomorphic to
\begin{align*}
    \mathscr D^{\mathrm{ch}}_{\widetilde\CM(Q',\mathbf v',\mathbf w'),\hbar}\rtimes \mathscr H^{\frac{\mathbf w_i-\mathbf d_i}{\mathbf v_i}}_{\mathbb C,\hbar}\boxtimes \mathscr D^{\mathrm{ch}}_{\mathcal E_i,\hbar}\;.
\end{align*}
\end{lem}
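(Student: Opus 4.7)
The plan is to lift the vertex superalgebra reduction of \cref{subsec:reduction for VOA} to a statement about sheaves, with the building block computation of \cref{subsec:building block} providing the local model. First I would introduce the $\mathrm{GL}(\mathbf v)$-invariant open subset $\widetilde{\mathfrak R}_i\subset \mathfrak R$ on which the framing map $I_i:\mathbb C^{\mathbf v_i}\to V_i$ is an isomorphism, and check that $\widetilde{\mathfrak R}_i\cap \widetilde\mu^{-1}(0)$ is automatically contained in the stable locus and equals $q^{-1}(\mathcal U_i)$. Then by \eqref{eq:sheaf on affine}, for any affine open $U\subset \mathcal U_i$ with affine preimage $\widetilde U\subset \widetilde{\mathfrak R}_i$, the sections of $\mathscr D^{\mathrm{ch}}_{\widetilde\CM(Q,\mathbf v,\mathbf w),\hbar}$ over $U$ are given by $H^{\infty/2+0}_{\hbar\mathrm{BRST}}(\mathfrak{gl}(\mathbf v),\mathscr D^{\mathrm{ch}}_{\mathfrak R,\hbar}(\widetilde U))$, and the claim is local on $\mathcal U_i$, so it is enough to work with one such $\widetilde U$ at a time.

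Next I would split $\mathfrak{gl}(\mathbf v)=\mathfrak{gl}(\mathbf v_i)\oplus \bigoplus_{j\neq i}\mathfrak{gl}(\mathbf v_j)$ and perform the $\mathfrak{gl}(\mathbf v_i)$-reduction first. On $\widetilde{\mathfrak R}_i$ the element $\det(I_i)$ is invertible, so the substitution \eqref{eq:fields under the reduction} can be implemented in the $\hbar$-adic setting via the Wick contraction formula, yielding an isomorphism of sheaves of $\hbar$-adic vertex superalgebras
\begin{equation*}
\mathscr D^{\mathrm{ch}}_{\mathfrak R,\hbar}\big|_{\widetilde{\mathfrak R}_i}\;\cong\; \mathscr D^{\mathrm{ch}}_{\mathfrak R',\hbar}\,\widehat\boxtimes\,\mathscr H^{\frac{\mathbf w_i-\mathbf d_i}{\mathbf v_i}}_{\mathbb C,\hbar}\,\widehat\boxtimes\,\mathscr D^{\mathrm{ch}}_{\mathcal E_i,\hbar}\,\widehat\boxtimes\, \mathscr D^{\mathrm{ch}}_{T^*\mathrm{GL}(\mathbf v_i),\hbar},
\end{equation*}
where the last factor is generated by $I_i^{\pm 1}$, $J_i$ and the component $\widetilde\mu_{\mathrm{ch},i}$. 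The sheaf version of \eqref{eq:BRST cohomology of building block} then shows that the $\mathfrak{gl}(\mathbf v_i)$-BRST cohomology of this tensor product is concentrated in degree zero and equals the tensor product of the first three factors. Finally I would reduce by the remaining $\bigoplus_{j\neq i}\mathfrak{gl}(\mathbf v_j)$: the residual chiral moment maps $\widetilde\mu'_{\mathrm{ch},j}$ commute with the Heisenberg field $\alpha$ (this is precisely why $\alpha$ rather than $\mathring{\alpha}$ was chosen), and under the substitution they coincide with the chiral extended moment maps of the reduced quiver $Q'$ with dimensions $(\mathbf v',\mathbf w')$. The standard spectral sequence on the double BRST complex then degenerates at the second page — the $\mathfrak{gl}(\mathbf v_i)$-direction having collapsed by the previous step — and produces the asserted isomorphism, which glues across affine covers by $\mathrm{GL}(\mathbf v)$-equivariance of all constructions.

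The hard part will be making the sheafified building block step rigorous in the $\hbar$-adic setting. One must verify that adjoining $\det(I_i)^{-1}$ to $\mathscr D^{\mathrm{ch}}_{\mathfrak R,\hbar}$ produces a well-defined sheaf of $\hbar$-adic vertex superalgebras on $\widetilde{\mathfrak R}_i$ — essentially a control of the $\hbar$-adic convergence of all Wick contractions involving powers of $\det(I_i)$ — and that the vanishing $H^{\infty/2+n}_{\hbar\mathrm{BRST}}=0$ for $n\neq 0$ obtained in \eqref{eq:BRST cohomology of building block} holds stalk-wise (or on every affine open of a suitable cover), not merely on global sections. Once this local-to-global vanishing for the building block is in hand, the spectral sequence degeneration, the matching of the residual moment maps with those of $(Q',\mathbf v',\mathbf w')$, and the gluing across $\mathcal U_i$ are essentially formal, and the detailed execution will be recorded in the companion paper \cite{math_draft}.
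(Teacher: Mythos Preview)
Your proposal follows essentially the same route as the paper: split the BRST differential as $\mathcal Q_\hbar=\mathcal Q^i_\hbar+\mathcal Q^{\hat i}_\hbar$, collapse the $\mathcal Q^i_\hbar$-direction on each affine open via the building block computation \eqref{eq:BRST cohomology of building block}, and run the resulting spectral sequence which degenerates at $E_2$; the gluing over $\mathcal U_i$ is then formal.

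One correction, though: your claim that $\widetilde{\mathfrak R}_i\cap\widetilde\mu^{-1}(0)$ is automatically contained in the stable locus is false in general. Invertibility of $I_i$ forces any destabilising subspace to satisfy $S_i=V_i$, but imposes no constraint on $S_j$ for $j\neq i$; one can easily produce unstable points in $\widetilde\mu^{-1}(0)\cap\{\det I_i\neq 0\}$ by setting the bifundamentals and the other framings to zero. This matters because \eqref{eq:sheaf on affine} requires $\widetilde U\cap\widetilde\mu^{-1}(0)=q^{-1}(U)$, not merely $\supset$. The paper sidesteps this by a two-step lift: first choose an affine $\widetilde U'$ in the $Q'$-prequotient $T^*\mathrm{Rep}(Q',\mathbf v',\mathbf w')\times\mathbb C^{Q'_0}\times\mathbb C\times\mathcal E_i$ with $\widetilde U'\cap(\widetilde\mu'^{-1}(0)\times\mathbb C\times\mathcal E_i)=q'^{-1}(U)$, then choose an affine $\widetilde U\subset\mathcal R_i$ with $\widetilde U\cap\widetilde\mu_i^{-1}(0)=q_i^{-1}(\widetilde U')$. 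This chain of lifts is what makes the identification of the $\mathcal Q^i_\hbar$-cohomology with the relative BRST complex for $Q'$ over $\widetilde U'$ (and hence with sections of $\mathscr D^{\mathrm{ch}}_{\widetilde\CM(Q'),\hbar}$ over $U$) go through cleanly, and you should organise your argument the same way.
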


\begin{proof}
Denote by $\mathcal R_i$ the open subset of $T^*\mathrm{Rep}(Q,\mathbf v,\mathbf w)\times\mathbb C^{Q_0}$ on which the map $I$ in the LHS of \cref{fig:reduction algorithm} is an isomorphism. $\mathcal R_i$ is affine because it is the nonzero set of a single function $\det(I)$. The action of $\mathrm{GL}(\mathbf v_i)$ on $\mathcal R_i$ is free and there is an isomorphism
\begin{align*}
    \mathcal R_i\cap \widetilde\mu^{-1}_i(0)\sslash \mathrm{GL}(\mathbf v_i)\cong T^*\mathrm{Rep}(Q',\mathbf v',\mathbf w')\times\mathbb C^{Q'_0}\times \mathbb C\times\mathcal E_i\;,
\end{align*}
where $\widetilde\mu_i$ is the extended moment map associated to the node $i$. Denote by $q_i: \mathcal R_i\cap \widetilde\mu^{-1}_i(0)\to T^*\mathrm{Rep}(Q',\mathbf v',\mathbf w')\times\mathbb C^{Q'_0}\times \mathbb C\times\mathcal E_i$ the quotient map.

Denote by $\widetilde\mu':T^*\mathrm{Rep}(Q',\mathbf v',\mathbf w')\times\mathbb C^{Q'_0}\to \mathfrak{gl}(\mathbf v')^*$ the extended moment map for $Q'$, and denote by $q': \widetilde\mu'^{-1}(0)^{\mathrm{st}}\times \mathbb C\times\mathcal E_i\to  \widetilde\CM(Q',\mathbf v',\mathbf w')\times \mathbb C\times \mathcal E_i$ the quotient map. 

Now we take an arbitrary affine open subset $U\subset \mathcal U_i$ and choose an affine open subset $\widetilde U'\subset T^*\mathrm{Rep}(Q',\mathbf v',\mathbf w')\times\mathbb C^{Q'_0}\times \mathbb C\times\mathcal E_i$ such that $\widetilde U'\cap (\widetilde\mu^{-1}(0)\times \mathbb C\times\mathcal E_i)=q'^{-1}(U)$. Then we choose an affine open subset $\widetilde U\subset \mathcal R_i$ such that $\widetilde U\cap \widetilde\mu^{-1}_i(0)=q_i^{-1}(\widetilde U')$.

According to \eqref{eq:sheaf on affine}, we have 
\begin{align*}
    \mathscr D^{\mathrm{ch}}_{\widetilde\CM(Q,\mathbf v,\mathbf w),\hbar}(U)=H^{\infty/2+0}_{\hbar\mathrm{BRST}}(\mathfrak{gl}(\mathbf v),\mathscr D^{\mathrm{ch}}_{\mathfrak R,\hbar}(\widetilde U))\;.
\end{align*}
Consider the decomposition $\mathcal{Q}_{\hbar}=\mathcal{Q}^{i}_{\hbar}+\mathcal{Q}^{\hat{i}}_{\hbar}$ into differential $\mathcal{Q}^{i}_{\hbar}$ corresponding to the node $i$ and differential $\mathcal{Q}^{\hat{i}}_{\hbar}=\sum_{j\neq i}\mathcal{Q}^{j}_{\hbar}$ corresponding to nodes other than $i$. $\mathcal{Q}^{i}_{\hbar}$ anti-commutes with $\mathcal{Q}^{\hat{i}}_{\hbar}$, then we have a spectral sequence with second page
\begin{align*}
    E^{p,q}_2=H^q(H^p(C(\mathfrak{gl}(\mathbf v), \mathscr D^{\mathrm{ch}}_{\mathfrak R,\hbar}(\widetilde U))_\hbar,\mathcal{Q}^{i}_{\hbar}),\mathcal{Q}^{\hat{i}}_{\hbar})\;.
\end{align*}
According to our previous discussion in \eqref{eq:BRST cohomology of building block}, we have 
\begin{align*}
    H^0(C(\mathfrak{gl}(\mathbf v), \mathscr D^{\mathrm{ch}}_{\mathfrak R,\hbar}(\widetilde U))_\hbar,\mathcal{Q}^{i}_{\hbar})\cong
    C\left(\mathfrak{gl}(\mathbf v'),\left(\mathscr D^{\mathrm{ch}}_{\mathfrak R',\hbar}\rtimes\mathscr H^{\frac{\mathbf w_i-\mathbf d_i}{\mathbf v_i}}_{\mathbb C,\hbar}\boxtimes \mathscr D^{\mathrm{ch}}_{\mathcal E_i,\hbar}\right)(\widetilde U')\right)_\hbar\;,
\end{align*}
and $H^p(C(\mathfrak{gl}(\mathbf v), \mathscr D^{\mathrm{ch}}_{\mathfrak R,\hbar}(\widetilde U))_\hbar,\mathcal{Q}^{i}_{\hbar})$ vanishes for $p\neq 0$. It follows that the spectral sequence $E^{p,q}$ degenerates at the second page, and it converges to 
\begin{align*}
    H^{p+q}(C(\mathfrak{gl}(\mathbf v), \mathscr D^{\mathrm{ch}}_{\mathfrak R,\hbar}(\widetilde U))_\hbar,\mathcal{Q}_{\hbar})=H^{\infty/2+p+q}_{\hbar\mathrm{BRST}}(\mathfrak{gl}(\mathbf v),\mathscr D^{\mathrm{ch}}_{\mathfrak R,\hbar}(\widetilde U))\;.
\end{align*}
On the other hand, 
\begin{align*} E^{0,0}_2&=H^{\infty/2+0}_{\hbar\mathrm{BRST}}\left(\mathfrak{gl}(\mathbf v'),\left(\mathscr D^{\mathrm{ch}}_{\mathfrak R',\hbar}\rtimes\mathscr H^{\frac{\mathbf w_i-\mathbf d_i}{\mathbf v_i}}_{\mathbb C,\hbar}\boxtimes \mathscr D^{\mathrm{ch}}_{\mathcal E_i,\hbar}\right)(\widetilde U')\right)\\
&=\mathscr D^{\mathrm{ch}}_{\widetilde\CM(Q',\mathbf v',\mathbf w'),\hbar}\rtimes \mathscr H^{\frac{\mathbf w_i-\mathbf d_i}{\mathbf v_i}}_{\mathbb C,\hbar}\boxtimes \mathscr D^{\mathrm{ch}}_{\mathcal E_i,\hbar}(U) \;,
\end{align*}
and $E^{p,q}_2$ vanishes for $(p,q)\neq (0,0)$. Thus we have isomorphism between $\hbar$-adic vertex superalgebras:
\begin{align*}
    \mathscr D^{\mathrm{ch}}_{\widetilde\CM(Q,\mathbf v,\mathbf w),\hbar}(U)\cong \mathscr D^{\mathrm{ch}}_{\widetilde\CM(Q',\mathbf v',\mathbf w'),\hbar}\rtimes \mathscr H^{\frac{\mathbf w_i-\mathbf d_i}{\mathbf v_i}}_{\mathbb C,\hbar}\boxtimes \mathscr D^{\mathrm{ch}}_{\mathcal E_i,\hbar}(U) \;.
\end{align*}
By construction such isomorphism is compatible with restriction to smaller affine open subsets, thus we get an isomorphism between sheaves of $\hbar$-adic vertex superalgebras:
\begin{align*}
    \mathscr D^{\mathrm{ch}}_{\widetilde\CM(Q,\mathbf v,\mathbf w),\hbar}\big|_{\mathcal U_i}\cong \mathscr D^{\mathrm{ch}}_{\widetilde\CM(Q',\mathbf v',\mathbf w'),\hbar}\rtimes \mathscr H^{\frac{\mathbf w_i-\mathbf d_i}{\mathbf v_i}}_{\mathbb C,\hbar}\boxtimes \mathscr D^{\mathrm{ch}}_{\mathcal E_i,\hbar} \;.
\end{align*}
This finishes the proof of the lemma.
\end{proof}

The restriction to $\mathcal U_i$ gives rise to an injective map between $\hbar$-adic vertex superalgebras:
\begin{align}\label{eq:restriction to U_i}
    \mathscr D^{\mathrm{ch}}_{\widetilde\CM(Q,\mathbf v,\mathbf w),\hbar}(\widetilde\CM(Q,\mathbf v,\mathbf w))\hookrightarrow \mathscr D^{\mathrm{ch}}_{\widetilde\CM(Q,\mathbf v,\mathbf w),\hbar}(\mathcal U_i)\;,
\end{align}
and the RHS is isomorphic to $\mathscr D^{\mathrm{ch}}_{\widetilde\CM(Q',\mathbf v',\mathbf w'),\hbar}(\widetilde\CM(Q',\mathbf v',\mathbf w'))\rtimes \mathcal{H}^{\frac{\mathbf w_i-\mathbf d_i}{\mathbf v_i}}_\hbar
\otimes D^{\mathrm{ch}}(\mathcal E_i)_\hbar$ by the Lemma \ref{lem:reduction for sheaves}. 

We note that the $\mathbb S$-action on the sheaf $\mathscr D^{\mathrm{ch}}_{\widetilde\CM(Q',\mathbf v',\mathbf w'),\hbar}\rtimes \mathscr H^{\frac{\mathbf w_i-\mathbf d_i}{\mathbf v_i}}_{\mathbb C,\hbar}\boxtimes \mathscr D^{\mathrm{ch}}_{\mathcal E_i,\hbar}$ is given by the weights assignment
\begin{align}\label{eq:S-wts of XYIJh'}
\mathrm{wt}(X_{(-n)})=\mathrm{wt}(Y_{(-n)})=\mathrm{wt}(I_{s,(-n)})=\mathrm{wt}(J_{s,(-n)})=n\;,
\quad \mathrm{wt}(h'_{(-n)})=n+1\;,
\end{align}
for those nodes $s\in Q'_0$ which are not connected to the node $i$ in the original quiver $Q$, and 
\begin{align}
\mathrm{wt}(\widetilde I_{(-n)})=\mathrm{wt}(\beta_{j,(-n)})=n-1\;,\quad \mathrm{wt}(\widetilde J_{(-n)})=\mathrm{wt}(\gamma_{j,(-n)})=n+1\;,
\end{align}
for those nodes $j\in Q'_0$ which are connected to the node $i$ in the original quiver $Q$, and 
\begin{align}
\mathrm{wt}(\alpha_{(-n)})=n+1\;,\quad\mathrm{wt}(C_{(-n)})=\mathrm{wt}(D_{(-n)})=n\;,
\end{align}
and the weights assignment \eqref{eq:S-wts of bc and hbar} to the $\mathsf b\mathsf c$ ghosts and $\hbar$. Although the $\mathbb S$-action on $\mathscr D^{\mathrm{ch}}_{\widetilde\CM(Q',\mathbf v',\mathbf w'),\hbar}(\widetilde\CM(Q',\mathbf v',\mathbf w'))$ is not the same as the one defined in section \ref{subsec:global section}, the next lemma asserts that these two actions have the same finite-weights part.

\begin{lem}\label{lem:S-fin = S'-fin}
Let $\mathbb S'$ be a one-dimensional torus, and define $\mathbb S'$ action on $\mathscr D^{\mathrm{ch}}_{\widetilde\CM(Q',\mathbf v',\mathbf w'),\hbar}$ by weights assignment \eqref{eq:S-wts of XYIJh'} and $\mathrm{wt}(\beta_{j,(-n)})=\mathrm{wt}(\gamma_{j,(-n)})=n$. Then we have
\begin{align}
    \mathscr D^{\mathrm{ch}}_{\widetilde\CM(Q',\mathbf v',\mathbf w'),\hbar}(\widetilde\CM(Q',\mathbf v',\mathbf w'))_{\mathbb S\mathrm{-fin}}=\mathscr D^{\mathrm{ch}}_{\widetilde\CM(Q',\mathbf v',\mathbf w'),\hbar}(\widetilde\CM(Q',\mathbf v',\mathbf w'))_{\mathbb S'\mathrm{-fin}}\;.
\end{align}
\end{lem}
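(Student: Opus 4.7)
The plan is to identify $T := \mathbb S - \mathbb S'$ as a locally finite inner $\mathbb Z$-grading on $\mathscr D^{\mathrm{ch}}_{\widetilde\CM(Q',\mathbf v',\mathbf w'),\hbar}(\widetilde\CM(Q',\mathbf v',\mathbf w'))$, from which the equality of the $\mathbb S$-finite and $\mathbb S'$-finite parts will follow. Comparing the two weight assignments one reads off $\mathrm{wt}_T(\beta_{j,(-n)})=-1$ and $\mathrm{wt}_T(\gamma_{j,(-n)})=+1$ for every $j$ neighboring $i$ in $Q_1$, with $\mathrm{wt}_T=0$ on all other generators including $\hbar$; in particular $\mathrm{wt}_{\mathbb S}(\hbar)=\mathrm{wt}_{\mathbb S'}(\hbar)=1$. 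A direct OPE computation shows that $T$ is inner, arising as the zero mode of a current built from the bilinears $\mathrm{Tr}(\gamma_j\beta_j)$ summed over the edges at the removed node $i$ (with signs determined by orientation); this makes $T$ a derivation commuting with both $\mathbb S$ and $\mathbb S'$.

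The next step is to reduce the problem to showing that $T$ acts locally finitely on every element of bounded $\hbar$-degree. Since $\mathrm{wt}_{\mathbb S}(\hbar)=1$ and every other generator carries non-negative $\mathbb S$-weight, any $\mathbb S$-finite element automatically has bounded $\hbar$-degree; the analogous statement holds with $\mathbb S$ replaced by $\mathbb S'$. So it suffices to prove that a global section of bounded $\hbar$-degree decomposes as a finite sum of $T$-eigenvectors.

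The core technical step is the local finiteness of $T$ on the mod-$\hbar$ reduction. By the vanishing theorem of the companion paper \cite{math_draft} (cf.\ Theorem \ref{thm:vanishing and embedding}), one has $\mathscr D^{\mathrm{ch}}_{\widetilde\CM(Q',\mathbf v',\mathbf w'),\hbar}/\hbar\cong \mathscr O_{\jet \widetilde\CM(Q',\mathbf v',\mathbf w'|\mathbf d')}$. Its global sections form the jet extension of the coordinate ring of the affine quotient $\widetilde\CM^0(Q',\mathbf v',\mathbf w'|\mathbf d')$, which under the flatness of the extended moment map is a finitely generated algebra whose generators are path invariants on the extended quiver. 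Each such path invariant is $T$-homogeneous, with $T$-weight equal to the signed count of $\gamma_j$-arrows minus $\beta_j$-arrows that the path traverses at the removed node. Consequently every global section on the mod-$\hbar$ reduction is a finite polynomial in $T$-homogeneous generators, hence $T$-locally finite. This finiteness then lifts through the $\hbar$-adic filtration by induction on $\hbar$-degree: a section of $\hbar$-degree at most $N$ is a truncated $\hbar$-polynomial whose coefficients are $T$-locally finite, and the union of $T$-weights appearing in all coefficients stays bounded.

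Finally, combining the ingredients, an $\mathbb S$-finite element $v$ has bounded $\mathbb S$-weight and bounded $\hbar$-degree, hence bounded $T$-weight by the previous step, and therefore bounded $\mathbb S'=\mathbb S-T$-weight, so $v$ is $\mathbb S'$-finite; the reverse inclusion follows by symmetry. The main obstacle I expect is the precise identification of the global sections of $\mathscr O_{\jet\widetilde\CM(Q',\mathbf v',\mathbf w'|\mathbf d')}$ as finite polynomials in an explicit set of $T$-homogeneous generators; this relies on the flatness of the extended moment map (available under the hypotheses of Theorem \ref{thm:vanishing and embedding}) together with a careful treatment of the super-framing $\mathbf d'$ arising from the boundary Fermi multiplets, which introduces fermionic directions that must be accounted for when enumerating path invariants.
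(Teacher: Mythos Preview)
Your approach is genuinely different from the paper's and introduces unnecessary complications. The paper's argument is much shorter: since $\mathbb S$ and $\mathbb S'$ commute, the $\mathbb S$-degree $m$ eigenspace decomposes as a restricted product over $\mathbb S'$-degrees $n$, where ``restricted'' means that for $n$ large the component lies in $\hbar^{n-N}\cdot(\text{weight }N+m-n)$. Because $\mathbb S$-weights on global sections are bounded below (established earlier in the paper by a geometric argument using the repelling $\mathbb S$-action on $\widetilde\CM^0$), the weight $N+m-n$ piece vanishes for $n$ large, so the restricted product is actually a finite direct sum. Swapping $\mathbb S$ and $\mathbb S'$ gives the reverse inclusion. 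No structure theory of global sections is needed beyond the lower bound on weights.

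Your argument, by contrast, hinges on the claim that $\Gamma(\widetilde\CM(Q'),\mathscr O_{\jet\widetilde\CM(Q',\mathbf v',\mathbf w'|\mathbf d')})$ is a polynomial algebra in finitely many $T$-homogeneous path invariants. This is not established in the paper, and you correctly flag it as the main obstacle. Even granting flatness of the moment map so that $\mathbb C[\widetilde\CM^0]\cong\Gamma(\widetilde\CM,\mathscr O)$, the jet-level identification $\Gamma(\widetilde\CM,\mathscr O_{\jet\widetilde\CM})\cong\mathbb C[\jet\widetilde\CM^0]$ is a separate assertion that would need its own proof. Moreover, the flatness hypothesis and the conditions of Theorem~\ref{thm:vanishing and embedding} that you invoke are restrictive, whereas the lemma is stated and proved in the paper without them. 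The ``inner'' description of $T$ is also not needed: the paper uses $T$ only implicitly, as the difference of two commuting gradings, and never needs it to come from a zero mode.

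In short, the paper's proof replaces your structural analysis of global sections by the single input that both $\mathbb S$ and $\mathbb S'$ weights are bounded below, which is what you should use directly.
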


\begin{proof}
Let $\mathscr D^{\mathrm{ch}}_{\widetilde\CM(Q',\mathbf v',\mathbf w'),\hbar}(\widetilde\CM(Q',\mathbf v',\mathbf w'))_m$ be the $\mathbb S$-degree $m$ eigenspace. Since the $\mathbb S$-action commutes with the $\mathbb S'$-action, we have
\begin{align*}
    \mathscr D^{\mathrm{ch}}_{\widetilde\CM(Q',\mathbf v',\mathbf w'),\hbar}(\widetilde\CM(Q',\mathbf v',\mathbf w'))_m={\prod_{n\ge -M}}' \mathscr D^{\mathrm{ch}}_{\widetilde\CM(Q',\mathbf v',\mathbf w'),\hbar}(\widetilde\CM(Q',\mathbf v',\mathbf w'))_{m,n} \;,
\end{align*}
where the subscript $m,n$ denotes $\mathbb S$-degree $m$ $\mathbb S'$-degree $n$ eigenspace. The restricted product ${\prod}'$ means the subspace of the product space consisting of elements $(x_i)_{i=-M}^{\infty}$ such that there exists $N$ and $x_n\in \hbar^{n-N}\mathscr D^{\mathrm{ch}}_{\widetilde\CM(Q',\mathbf v',\mathbf w'),\hbar}(\widetilde\CM(Q',\mathbf v',\mathbf w'))_{N+m-n,N}$ for all $n>N$. Since $\mathbb S$-degrees are bounded from below, $x_n$ vanishes for sufficiently large $n$, thus we have
\begin{align*}
    \mathscr D^{\mathrm{ch}}_{\widetilde\CM(Q',\mathbf v',\mathbf w'),\hbar}(\widetilde\CM(Q',\mathbf v',\mathbf w'))_m=\bigoplus_{n\ge -M} \mathscr D^{\mathrm{ch}}_{\widetilde\CM(Q',\mathbf v',\mathbf w'),\hbar}(\widetilde\CM(Q',\mathbf v',\mathbf w'))_{m,n}\;.
\end{align*}
It follows that $\mathscr D^{\mathrm{ch}}_{\widetilde\CM(Q',\mathbf v',\mathbf w'),\hbar}(\widetilde\CM(Q',\mathbf v',\mathbf w'))_{\mathbb S\mathrm{-fin}}\subseteq\mathscr D^{\mathrm{ch}}_{\widetilde\CM(Q',\mathbf v',\mathbf w'),\hbar}(\widetilde\CM(Q',\mathbf v',\mathbf w'))_{\mathbb S'\mathrm{-fin}}$. Exchange the role of $\mathbb S$ and $\mathbb S'$ in the above argument, and we have the reverse inclusion, this finishes the proof.
\end{proof}

By the definition,
\begin{align*}
    \mathsf D^{\mathrm{ch}}(\widetilde\CM(Q',\mathbf v',\mathbf w'))=\mathscr D^{\mathrm{ch}}_{\widetilde\CM(Q',\mathbf v',\mathbf w'),\hbar}(\widetilde\CM(Q',\mathbf v',\mathbf w'))_{\mathbb S'\mathrm{-fin}}/(\hbar-1)\;.
\end{align*}
According to the Lemma \ref{lem:S-fin = S'-fin}, we can also take the $\mathbb S$-finite part in the above definition. We take the $\mathbb S$-finite parts of the two sides of the inclusion \eqref{eq:restriction to U_i} and set $\hbar=1$, then we get an embedding of vertex superalgebras:
\begin{align}\label{eq:reduction map of global sections}
    \bm{\nu}_i: \mathsf D^{\mathrm{ch}}(\widetilde\CM(Q,\mathbf v,\mathbf w))\hookrightarrow \mathsf D^{\mathrm{ch}}(\widetilde\CM(Q',\mathbf v',\mathbf w'))\rtimes \mathcal{H}^{\frac{\mathbf w_i-\mathbf d_i}{\mathbf v_i}}
\otimes D^{\mathrm{ch}}(\mathcal E_i)\;\;.
\end{align}
\eqref{eq:reduction map} and \eqref{eq:reduction map of global sections} are compatible, namely the following diagram commutes:
\begin{equation}
\begin{tikzcd}
\CV(Q,\mathbf v,\mathbf w)\ar[r,"\nu_i"]\ar[d,"\iota"] & \CV(Q',\mathbf v',\mathbf w')\rtimes \mathcal{H}^{\frac{\mathbf w_i-\mathbf d_i}{\mathbf v_i}}
\otimes D^{\mathrm{ch}}(\mathcal E_i)\ar[d,"\iota'"]\\
\mathsf D^{\mathrm{ch}}(\widetilde\CM(Q,\mathbf v,\mathbf w))\ar[r,"\bm{\nu}_i", hook] & \mathsf D^{\mathrm{ch}}(\widetilde\CM(Q',\mathbf v',\mathbf w'))\rtimes \mathcal{H}^{\frac{\mathbf w_i-\mathbf d_i}{\mathbf v_i}}
\otimes D^{\mathrm{ch}}(\mathcal E_i)
\end{tikzcd}
\end{equation}
where the vertical arrows are the maps in \eqref{quiver VOA to global section}.

\begin{remark}\label{rmk:alpha bar}
The Heisenberg field $\alpha$ in \eqref{eq:fields under the reduction} is chosen such that it commutes with the extended chiral moment maps for the new quiver $Q'$. The drawback of such a choice is that $\alpha$ acts on $\CVs(Q',\mathbf v',\mathbf w')$ nontrivially, so we get a semi-direct product instead of a direct product between $\CVs(Q',\mathbf v',\mathbf w')$ and $\mathcal{H}^{\frac{\mathbf w_i-\mathbf d_i}{\mathbf v_i}}$. However, if $Q'$ satisfies certain conditions which will be specified below, then we can deform $\alpha$ using fields in $\CVs(Q',\mathbf v',\mathbf w')$ such that the deformation is still a Heisenberg field and it acts on $\CVs(Q',\mathbf v',\mathbf w')$ trivially.  

Let $\mathsf C'$ be the Cartan matrix of $Q'$. Suppose that there exists $\zeta\in \mathbb C^{Q'_0}$ such that
\begin{align*}
    (\mathsf C'\zeta)_j=\mathsf C_{i,j}\;, \quad\forall j\in Q'_0\;,
\end{align*}
then we define
\begin{align}\label{eq:modified Heisenberg 1}
    \bar\alpha:=\alpha-\frac{1}{\mathbf v_i}\sum_{j\to i\in Q_1}\mathrm{Tr}(\gamma_j\beta_j)-\sum_{j\in Q'_0} \zeta_jh'_j\;.
\end{align}
Now it is straightforward to see that OPEs between $\bar\alpha$ and strong generators of $\CVs(Q',\mathbf v',\mathbf w')$ are regular, thus $\bar\alpha$ acts on the latter trivially. The level of $\bar\alpha$ equals to $2-2\mathsf g_i-\sum_{j\in Q'_0}\mathsf C_{i,j}\zeta_j$. 

For example, if $Q'$ is a disjoint union of Dynkin quivers, then $\mathsf C'$ is invertible, thus the existence (and uniqueness) of $\zeta$ is guaranteed, namely we take $\zeta_j=\sum_{k\in Q'_0}(\mathsf C'^{-1})_{j,k}\mathsf C_{i,k}$. In this case the level of $\bar\alpha$ is ${\det(\mathsf C)}/{\det(\mathsf C')}$.
\end{remark}

\subsection{Generalization}
In our previous discussion of quiver reductions, we needed the condition that $\mathbf w_i\ge \mathbf v_i$. In this section, we generalize it to the following situation. 

Let $Q$ be a quiver and let $i\in Q_0$ be a node. Without loss of generality let us assume that $i$ is a sink. Choose gauge and framing dimensions $\mathbf v,\mathbf w$ such that $\mathbf w-\mathsf C\mathbf v\in \mathbb Z^{Q_0}_{\ge 0}$. Assume moreover that there exists a set of nodes $\mathcal N\subset Q_0\setminus \{i\}$ such that $\mathsf C_{ij}\neq 0,\forall j\in \mathcal N$, and $\sum_{j\in \mathcal N}\mathbf v_j\le\mathbf v_i$, and $\mathbf w_i+\sum_{j\in \mathcal N}\mathbf v_j\ge\mathbf v_i$. Let us split the framing vector space into
\begin{align*}
    \mathbb C^{\mathbf w_i}=\mathbb C^{\mathbf w_{i,1}}\oplus\mathbb C^{\mathbf w_{i,2}}\;,
    \quad
    \text{where }\quad\mathbf w_{i,1}=\mathbf w_i-\mathbf v_i+\sum_{j\in \mathcal N}\mathbf v_j \;,
\end{align*}
and fix a choice of $a_j\in$ the set of arrows from $j$ to $i$, then consider the open locus $\mathcal U\subset \widetilde\CM(Q,\mathbf v,\mathbf w)$ on which the linear map
\begin{align*}
    \mathcal I:=(I|_{\mathbb C^{\mathbf w_{i,2}}},(X_{a_j})_{j\in \mathcal N}):\; \mathbb C^{\mathbf w_{i,2}}\oplus \bigoplus_{j\in \mathcal N} V_j\longrightarrow V_i\;\text{ is an isomorphism}\;.
\end{align*}
An argument similar to the beginning of this section shows that 
\begin{align}
    \mathcal U\cong \widetilde\CM(Q',\mathbf v',\mathbf w')\times \mathbb C\times T^*\mathrm{End}(\mathbb C^{\mathbf w_{i,2}})^{\oplus \mathsf g_i}\times T^*\mathrm{Hom}(\mathbb C^{\mathbf w_{i,1}},\mathbb C^{\mathbf w_{i,2}})\;,
\end{align}
where $Q'$ is the quiver whose set of nodes is $Q'_0=Q_0\setminus \{i\}$, and the Cartan matrix $\mathsf C'$ for $Q'$ is
\begin{align}
    \mathsf C'_{j,l}=\begin{cases}
        \mathsf C_{j,l}\;, & j\notin \mathcal N, l\notin \mathcal N\;,\\
        \mathsf C_{j,l}+\mathsf C_{j,i}\;, & j\notin \mathcal N, l\in \mathcal N,\;\\
        \mathsf C_{j,l}+\mathsf C_{i,l}\;, & j\in \mathcal N, l\notin \mathcal N\;,\\
        \mathsf C_{j,l}+\mathsf C_{i,l}+\mathsf C_{j,i}+\mathsf C_{i,i}\;, & j,l\in \mathcal N \;.
    \end{cases}
\end{align}
The gauge and framing dimension vectors are determined by
\begin{align}
    \mathbf v'_j=\mathbf v_j\;,\quad \mathbf w'_j=\begin{cases}
        \mathbf w_j-\mathsf C_{j,i}\mathbf w_{i,2},& j\notin \mathcal N\;,\\
        \mathbf w_j+\mathbf w_i-\mathsf C_{j,i}\mathbf w_{i,2}-\mathsf C_{i,i}\mathbf w_{i,2}, & j\in \mathcal N\;.
    \end{cases}
\end{align}
It is easy to see that
\begin{align*}
    (\mathbf w'-\mathsf C'\mathbf v')_j=\begin{cases}
        (\mathbf w-\mathsf C\mathbf v)_j\;, & j\notin \mathcal N\;,\\
        (\mathbf w-\mathsf C\mathbf v)_j+(\mathbf w-\mathsf C\mathbf v)_i\;, & j\in \mathcal N\;.
    \end{cases}
\end{align*}
In particular $\mathbf w'-\mathsf C'\mathbf v'\in \mathbb Z^{Q'_0}_{\ge 0}$. 

\bigskip Using the analysis in the section \ref{subsec:building block}, we obtain the following generalization of previous results in the section \ref{subsec:reduction for VOA}.
\begin{itemize}
    \item[(1)] There exists a natural map of vertex superalgebras:
    \begin{align}\label{eq: reduction map_generalization}
        \nu: \CV(Q,\mathbf v,\mathbf w)\to \CV(Q',\mathbf v',\mathbf w')\rtimes \mathcal H^{\lambda}\otimes D^{\mathrm{ch}}(\mathcal E)\;.
    \end{align}
    where $\mathcal H^{\lambda}$ is the Heisenberg algebra of certain level $\lambda$, and $D^{\mathrm{ch}}(\mathcal E)$ is the $\beta\gamma\mathsf b\mathsf c$ system on $\mathcal E=T^*\mathrm{End}(\mathbb C^{\mathbf w_{i,2}})^{\oplus \mathsf g_i}\times T^*\mathrm{Hom}(\mathbb C^{\mathbf w_{i,1}},\mathbb C^{\mathbf w_{i,2}})$. We also have generalization of \eqref{eq:fields under the reduction}, for simplicity we only list the Heisenberg fields as follows
\begin{equation}\label{eq:general h'_j}
h_j'=\begin{cases}
h_j+\mathsf C_{i,j}\mathrm{Tr}(\mathcal I^{-1}\partial \mathcal I)\;, & j\notin \mathcal N\;,\\
h_j+h_i+(\mathsf C_{i,j}+\mathsf C_{i,i})\mathrm{Tr}(\mathcal I^{-1}\partial \mathcal I)\;, & j\in \mathcal N\;,
\end{cases}
\end{equation}
\begin{equation}\label{eq:general alpha}
\begin{split}
\alpha=h_i+&\mathsf C_{i,i}\mathrm{Tr}(\mathcal I^{-1}\partial \mathcal I)+\text{linear comb. of }\mathrm{Tr}(\beta_j\gamma_j),\:\mathrm{Tr}(I'_jJ'_j),\:\mathrm{Tr}(X'_aY'_a).
\end{split}
\end{equation}

\item[(2)]  There exists an embedding of vertex superalgebras:
\begin{align}\label{eq:reduction map of global sections_generalization}
    \bm{\nu}: \mathsf D^{\mathrm{ch}}(\widetilde\CM(Q,\mathbf v,\mathbf w))\hookrightarrow \mathsf D^{\mathrm{ch}}(\widetilde\CM(Q',\mathbf v',\mathbf w'))\rtimes \mathcal{H}^{\lambda}\otimes D^{\mathrm{ch}}(\mathcal E)\;.
\end{align}
\eqref{eq: reduction map_generalization} and \eqref{eq:reduction map of global sections_generalization} are compatible, namely the following diagram commutes:
\begin{equation}
\begin{tikzcd}
\CV(Q,\mathbf v,\mathbf w)\ar[r,"\nu"]\ar[d,"\iota"] & \CV(Q',\mathbf v',\mathbf w')\rtimes \mathcal{H}^{\lambda}
\otimes D^{\mathrm{ch}}(\mathcal E)\ar[d,"\iota'"]\\
\mathsf D^{\mathrm{ch}}(\widetilde\CM(Q,\mathbf v,\mathbf w))\ar[r,"\bm{\nu}", hook] & \mathsf D^{\mathrm{ch}}(\widetilde\CM(Q',\mathbf v',\mathbf w'))\rtimes \mathcal{H}^{\lambda}
\otimes D^{\mathrm{ch}}(\mathcal E)
\end{tikzcd}
\end{equation}
where the vertical arrows are the maps in \eqref{quiver VOA to global section}.
\end{itemize}

\begin{remark}\label{rmk:alpha bar_general}
The Heisenberg field $\alpha$ in \eqref{eq:general alpha} is chosen such that it commutes with the extended chiral moment maps for the new quiver $Q'$. The drawback of such choice is that $\alpha$ acts on $\CV(Q',\mathbf v',\mathbf w')$ nontrivially, so we get a semi-direct product instead of a direct product between $\CV(Q',\mathbf v',\mathbf w')$ and $\mathcal{H}^{\lambda}$. This drawback can be cured in the same way as in Remark \ref{rmk:alpha bar}.

Let $\mathsf C'$ be the Cartan matrix of $Q'$. Suppose that there exists $\zeta\in \mathbb C^{Q'_0}$ such that
\begin{align*}
    \forall j\in Q'_0,\quad(\mathsf C'\zeta)_j=\begin{cases}
        \mathsf C_{i,j}\;, & j\notin \mathcal N\;,\\
        \mathsf C_{i,j}+\mathsf C_{i,i}\;, & j\in \mathcal N\;,
    \end{cases}
\end{align*}
then we define
\begin{align}\label{eq:modified Heisenberg 2}
    \bar\alpha:=h_i+\mathsf C_{i,i}\mathrm{Tr}(\mathcal I^{-1}\partial \mathcal I)-\sum_{j\in Q'_0} \zeta_jh'_j \;.
\end{align}
Now it is straightforward to see that $\bar\alpha$ commutes with $\CVs(Q',\mathbf v',\mathbf w')$, so we get a direct product between $\CV(Q',\mathbf v',\mathbf w')$ and a Heisenberg algebra. The level of $\bar\alpha$ equals to 
$$
2-2\mathsf g_i-(2-2\mathsf g_i)\sum_{j\in \mathcal N}\zeta_j-\sum_{j\in Q'_0}\mathsf C_{i,j}\zeta_j\;.
$$ 
For example, if $Q'$ is a disjoint union of Dynkin quivers, then $\mathsf C'$ is invertible, thus the existence (and uniqueness) of $\zeta$ is guaranteed, namely we take 
$$
\zeta_j=\sum_{k\in Q'_0}(\mathsf C'^{-1})_{j,k}\mathsf C_{i,k}+(2-2\mathsf g_i)\sum_{k\in \mathcal N}(\mathsf C'^{-1})_{j,k}\;.
$$ 
In this case the level of $\bar\alpha$ is $\det(\mathsf C)/\det(\mathsf C')$.
\end{remark}

\bigskip The following lemma tells us that every good linear quiver contains a node for which the quiver reduction can be performed.
\begin{lem}\label{lem:good linear quiver always reduces}
Let $Q$ be a finite type A quiver whose Cartan matrix is denoted by $\mathsf C$, and choose gauge dimension $\mathbf v$ and framing dimension $\mathbf w$ such that $\mathbf w-\mathsf C\mathbf v\in \mathbb Z^{Q_0}_{\ge 0}$, then there exists a node $i\in Q_0$ such that one of the following condition holds
\begin{enumerate}
    \item $\mathbf w_i\ge \mathbf v_i$, 
    \item $\mathbf w_i<\mathbf v_i$ and there exists $j\in \{i+1,i-1\}$ such that $\mathbf v_j\le \mathbf v_i$ and $\mathbf w_i+\mathbf v_j\ge\mathbf v_i$.
\end{enumerate}
\end{lem}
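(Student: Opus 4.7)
The statement has the flavor of a pigeonhole argument: if no node satisfies condition (1), i.e.\ $\mathbf w_i<\mathbf v_i$ everywhere, then the anomaly inequality $\mathbf w-\mathsf C\mathbf v\in \mathbb Z_{\ge 0}^{Q_0}$ forces the ``profile'' $\mathbf v$ to be sufficiently flat that some node in the middle of a maximum-plateau will satisfy condition (2). The natural place to look is at a node where $\mathbf v_i$ attains its maximum.

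The plan is to argue as follows. First I would dispose of the trivial case $|Q_0|=1$: the anomaly condition reduces to $\mathbf w_1\ge 2\mathbf v_1\ge \mathbf v_1$, so condition (1) holds. Assume then $|Q_0|\ge 2$ and pick $i\in Q_0$ with $\mathbf v_i=\max_{k\in Q_0}\mathbf v_k$. If $\mathbf w_i\ge \mathbf v_i$, condition (1) is satisfied at $i$ and we are done. Otherwise $\mathbf w_i<\mathbf v_i$, and I will verify condition (2) at this $i$ by choosing a suitable neighbour $j$.

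The second step uses the explicit form of the anomaly inequality at the node $i$, which for the type $A$ quiver reads
\begin{equation*}
\mathbf w_i+\mathbf v_{i-1}+\mathbf v_{i+1}\;\ge\;2\mathbf v_i,
\end{equation*}
with the usual convention $\mathbf v_0=\mathbf v_{|Q_0|+1}=0$. By maximality, $\mathbf v_{i-1},\mathbf v_{i+1}\le \mathbf v_i$, so the constraint $\mathbf v_j\le \mathbf v_i$ in condition (2) is automatic. I would then split into two subcases. If $i$ is an endpoint, there is a unique neighbour $j\in Q_0$, and the anomaly inequality gives $\mathbf w_i+\mathbf v_j\ge 2\mathbf v_i\ge \mathbf v_i$, yielding condition (2). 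If $i$ is an interior node, let $j\in\{i-1,i+1\}$ be the neighbour with $\mathbf v_j=\max(\mathbf v_{i-1},\mathbf v_{i+1})$ and $k$ the other neighbour; then $\mathbf v_k\le \mathbf v_j$, so the anomaly inequality yields $\mathbf w_i+2\mathbf v_j\ge \mathbf w_i+\mathbf v_j+\mathbf v_k\ge 2\mathbf v_i$, whence
\begin{equation*}
\mathbf w_i+\mathbf v_j\;\ge\;2\mathbf v_i-\mathbf v_j\;\ge\;\mathbf v_i,
\end{equation*}
where the last inequality uses $\mathbf v_j\le \mathbf v_i$. This is exactly condition (2) at the pair $(i,j)$.

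There is no real obstacle here; the only thing to be careful about is the bookkeeping in the boundary/interior split and the sign conventions for $\mathbf v_0,\mathbf v_{|Q_0|+1}$. The short moral of the proof is that the maximum of $\mathbf v$ either already has enough framing to satisfy (1), or the anomaly inequality forces its larger neighbour to carry the slack, producing a witness for (2).
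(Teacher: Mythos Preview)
Your proof is correct and follows essentially the same approach as the paper: pick a node $i$ where $\mathbf v_i$ is maximal and use the anomaly inequality $\mathbf w_i+\mathbf v_{i-1}+\mathbf v_{i+1}\ge 2\mathbf v_i$ together with $\mathbf v_{i\pm 1}\le \mathbf v_i$ to conclude. The only cosmetic difference is that the paper phrases it as ``assume condition (1) fails everywhere,'' and then observes that for \emph{any} neighbour $j$ of the maximal node one has $\mathbf w_i+\mathbf v_j\ge \mathbf v_i$ (since otherwise $\mathbf w_i+\mathbf v_{i-1}+\mathbf v_{i+1}<\mathbf v_i+\mathbf v_i$), whereas you single out the larger neighbour; your endpoint/interior split and the $|Q_0|=1$ case are extra care not spelled out in the paper but entirely in the same spirit.
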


\begin{proof}
Suppose that the condition 1 fails for all nodes, i.e. $\mathbf w_i<\mathbf v_i$ for all $i\in Q_0$, then we claim that the condition 2 holds for at least one node. Let us take $i\in Q_0$ with maximal $\mathbf v_i$, i.e. $\mathbf v_i\ge \mathbf v_k$ for all $k\in Q_0$. Choose $j\in \{i-1,i+1\}$, then $\mathbf v_j\le \mathbf v_i$ by our choice of $i$, and we must have $\mathbf w_i+\mathbf v_j\ge \mathbf v_i$, otherwise we would have $\mathbf w_i+\mathbf v_{i-1}+\mathbf v_{i+1}< 2\mathbf v_i$ which contradicts with the assumption that $\mathbf w-\mathsf C\mathbf v\in \mathbb Z^{Q_0}_{\ge 0}$. This finishes the proof.
\end{proof}

The Lemma \ref{lem:good linear quiver always reduces} implies that for a good linear quiver $Q$, we can always perform a quiver reduction to reduce it to a quiver $Q'$ with one less nodes. Moreover, our quiver reduction algorithm asserts that $Q'$ is either a good linear quiver or a disjoint union of two good linear quivers, so we can repeat the quiver reduction algorithm for $Q'$, using Lemma \ref{lem:good linear quiver always reduces} again. Starting from $Q$, we recursively reduce it, and we end up with a tensor product of $\beta\gamma\mathsf b\mathsf c$ system and some Heisenberg algebra. We summarize it in the following proposition.

\begin{thm}\label{thm:free-field realization}
Let $Q$ be a finite type A quiver whose Cartan matrix is denoted by $\mathsf C$, and choose gauge dimension $\mathbf v$ and framing dimension $\mathbf w$ such that $\mathbf w-\mathsf C\mathbf v\in \mathbb Z^{Q_0}_{\ge 0}$, then there exists an embedding of vertex superalgebras:
\begin{align}\label{eq:free-field realization}
    \Theta: \mathsf D^{\mathrm{ch}}(\widetilde\CM(Q,\mathbf v,\mathbf w))\hookrightarrow D^{\mathrm{ch}}(T^*\mathbb C^{n|m})\otimes\bigotimes_{i\in Q_0}\mathcal H^{\lambda_i}\;,
\end{align}
where $n=\mathbf v\cdot(\mathbf w-\frac{1}{2}\mathsf C\mathbf v)$, and $m=\mathbf v\cdot\mathbf d=\mathbf v\cdot(\mathbf w-\mathsf C\mathbf v)$.
\end{thm}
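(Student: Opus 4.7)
The plan is to proceed by induction on the number of nodes $|Q_0|$, with the quiver reduction map of \cref{subsec:reduction for VOA} providing the inductive step. In the base case $|Q_0|=0$ the quiver is empty, $\widetilde\CM$ is a point, and $\mathsf D^{\mathrm{ch}}(\widetilde\CM)=\mathbb C|0\rangle$, so there is nothing to show. For the inductive step we invoke \cref{lem:good linear quiver always reduces} to locate a node $i\in Q_0$ at which a quiver reduction in the sense of either \eqref{eq:reduction map of global sections} (when $\mathbf w_i\ge \mathbf v_i$) or \eqref{eq:reduction map of global sections_generalization} (when $\mathbf w_i<\mathbf v_i$ but $i$ admits an adjacent node $j$ with $\mathbf v_j\le \mathbf v_i$ and $\mathbf w_i+\mathbf v_j\ge \mathbf v_i$) is legal. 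This produces an embedding of vertex superalgebras
\begin{equation*}
\bm{\nu}:\mathsf D^{\mathrm{ch}}(\widetilde\CM(Q,\mathbf v,\mathbf w))\hookrightarrow \mathsf D^{\mathrm{ch}}(\widetilde\CM(Q',\mathbf v',\mathbf w'))\rtimes \mathcal H^{\lambda_i}\otimes D^{\mathrm{ch}}(\mathcal E_i)\;,
\end{equation*}
where $(Q',\mathbf v',\mathbf w')$ is the reduced quiver data and $\mathcal E_i$ is the auxiliary symplectic (super) vector space produced by the reduction.

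The next step is to verify that $(Q',\mathbf v',\mathbf w')$ falls within the inductive hypothesis. The reduction algorithm either deletes a single node, leaving a finite type A linear quiver, or splits $Q$ into a disjoint union of two shorter finite type A linear quivers (the latter occurring when $i$ is an interior node); in either case $Q'$ is a (possibly disconnected) finite type A quiver with one fewer node. Moreover, the identities for $\mathbf w'-\mathsf C'\mathbf v'$ recorded in \cref{subsec:reduction for VOA} (and its generalization above) show that $\mathbf w'-\mathsf C'\mathbf v'\in \mathbb Z^{Q'_0}_{\ge 0}$, so the inductive hypothesis applies to each connected component of $Q'$ and yields an embedding of $\mathsf D^{\mathrm{ch}}(\widetilde\CM(Q',\mathbf v',\mathbf w'))$ into a product of a $\beta\gamma\mathsf b\mathsf c$ system and $|Q'_0|=|Q_0|-1$ Heisenberg algebras.

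To compose the two embeddings into a single one of the stated form, the remaining issue is to convert the semi-direct product $\mathsf D^{\mathrm{ch}}(\widetilde\CM(Q',\mathbf v',\mathbf w'))\rtimes \mathcal H^{\lambda_i}$ into a genuine tensor product so that the Heisenberg generator produced at step $i$ does not interfere with the free-field image of $\mathsf D^{\mathrm{ch}}(\widetilde\CM(Q',\mathbf v',\mathbf w'))$. Since the reduced quiver $Q'$ is a disjoint union of A-type Dynkin quivers, each connected component of its Cartan matrix $\mathsf C'$ is invertible, so the hypothesis of \cref{rmk:alpha bar_general} (or \cref{rmk:alpha bar} in the simple case) is met and one can replace $\alpha$ by the modified Heisenberg field $\bar\alpha$ defined in \eqref{eq:modified Heisenberg 2} (respectively \eqref{eq:modified Heisenberg 1}). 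By construction $\bar\alpha$ has regular OPE with all strong generators of $\CVs(Q',\mathbf v',\mathbf w')$, hence also with the image of $\mathsf D^{\mathrm{ch}}(\widetilde\CM(Q',\mathbf v',\mathbf w'))$, so the semi-direct product becomes an honest tensor product with a new Heisenberg algebra of level $\det(\mathsf C)/\det(\mathsf C')$.

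Composing the two embeddings therefore yields an embedding of $\mathsf D^{\mathrm{ch}}(\widetilde\CM(Q,\mathbf v,\mathbf w))$ into a tensor product of a $\beta\gamma\mathsf b\mathsf c$ system coming from $D^{\mathrm{ch}}(\mathcal E_i)$ together with the inductively obtained $\beta\gamma\mathsf b\mathsf c$ system for $(Q',\mathbf v',\mathbf w')$, and $|Q_0|$ Heisenberg algebras in total. The last task is to confirm the bosonic and fermionic dimension counts $n=\mathbf v\cdot(\mathbf w-\tfrac12\mathsf C\mathbf v)$ and $m=\mathbf v\cdot\mathbf d$. Both are additive under the change $(Q,\mathbf v,\mathbf w)\mapsto(Q',\mathbf v',\mathbf w')$ combined with the contribution of $\mathcal E_i$, as can be checked directly from the definitions of $\mathbf v'$ and $\mathbf w'$ and the fact that $\mathbf d_j$ is preserved by the reduction; this is essentially the vertex algebra shadow of the geometric identity $\mathcal U_i\cong \widetilde\CM(Q',\mathbf v',\mathbf w')\times\mathbb C\times\mathcal E_i$ from \eqref{eq:geometry reduction}. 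The main obstacle to expect in carrying this out cleanly is bookkeeping the Heisenberg levels $\lambda_i$ and the possible need to iterate the passage from $\alpha$ to $\bar\alpha$ when $Q$ fragments into several components, but no new conceptual input beyond \cref{lem:good linear quiver always reduces}, \cref{rmk:alpha bar_general}, and \cref{thm:vanishing and embedding} is required.
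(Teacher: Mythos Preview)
Your proposal is correct and follows essentially the same approach as the paper, which proves the theorem by recursively applying the quiver reduction at a node guaranteed by \cref{lem:good linear quiver always reduces} until the quiver is exhausted; you in fact supply more detail than the paper does, particularly the use of \cref{rmk:alpha bar}/\cref{rmk:alpha bar_general} to convert the semi-direct product into a genuine tensor product so that the induction composes cleanly. One small remark: \cref{thm:vanishing and embedding} is not actually needed here, since the injectivity of each reduction step \eqref{eq:reduction map of global sections} and \eqref{eq:reduction map of global sections_generalization} is already established independently (it comes from restriction to the open set $\mathcal U_i$), so you can safely drop that reference.
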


In other words, we get a free-field realization of $\mathsf D^{\mathrm{ch}}(\widetilde\CM(Q,\mathbf v,\mathbf w))$ via recursively applying quiver reductions. 

\section{Examples of quiver reductions}\label{sec:reduction_examples}
\subsection{Example: \texorpdfstring{$T[\SU(n)]$}{T} quiver} \label{T[SU(n)]-revisited}

We revisit the $T[\SU(n)]$ quiver in this section. According to our previous discussions in \cref{subsec:Associated variety}, there exists a vertex algebra embedding $V^{-n+1}(\mathfrak{sl}_{n})\hookrightarrow \mathsf D^{\mathrm{ch}}(\widetilde\CM(Q))$ determined by the map on strong generators \eqref{eq:mebdding of V(sl_n)}. Composing with the embedding \eqref{eq:free-field realization} we get a free-filed realization $V^{-n+1}(\mathfrak{sl}_{n})\hookrightarrow D^{\mathrm{ch}}(T^*\mathbb C^{n(n-1)/2})\otimes\bigotimes_{i=1}^{n-1}\mathcal H^{\lambda_i}$. We will show in this section that this is exactly the Wakimoto realization of $V^{-n+1}(\mathfrak{sl}_{n})$ \cite{feigin1990affine,bouwknegt1990quantum}.

It turns out that it will be more convenient to work with $\mathfrak{gl}_{n}$ instead of $\mathfrak{sl}_{n}$. Accordingly we add one Heisenberg field $J_B$ of level $n$ to $\mathsf D^{\mathrm{ch}}(\widetilde\CM(Q))$, or equivalently we take $n$ copies of independent level $1$ Heisenberg algebras $b_0,\cdots,b_{n-1}$, and set $h_i:=b_i-b_{i-1}$ and $J_B=b_0+\cdots+b_{n-1}$. Then there exists a vertex subalgebra $V^{-n+1}(\mathfrak{gl}_n)\subset \mathsf D^{\mathrm{ch}}(\widetilde\CM(Q))\otimes J_B$ which is strongly generated by \footnote{Here our convention for $V^\lambda(\mathfrak{gl}_n)$ is that $E^a_b(z)E^c_d(w)\sim \frac{\lambda\delta^a_d\delta^c_b+\delta^a_b\delta^c_d}{(z-w)^2}+$lower order term.}
\begin{align*}
    E^i_j:=J^i_kI^k_j-b_0\delta^i_j\;, \quad 1\le i,j\le n\;.
\end{align*}
The quiver reduction is depicted as follows.
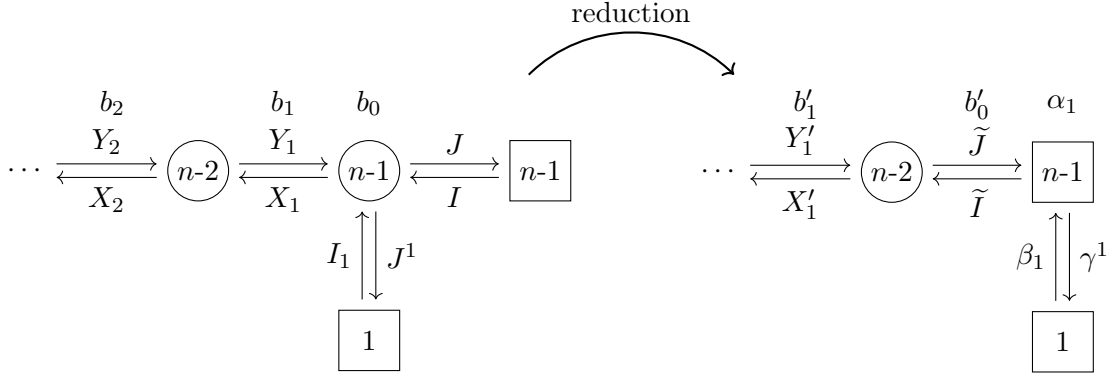
\begin{figure}
\centering
    \begin{tikzpicture}[scale=0.9, every node/.style={scale=1.0}]
    \node (leftfigure) at (-3,0){
    \begin{tikzpicture}[scale=0.9]
    \node(L) at (-2.5,0) {$\cdots$};
    \node(M) at (0,0) {$\circled{$n$-$2$}$};
    \node(R) at (2.5,0) {$\circled{$n$-$1$}$};
    \node(B) at (2.5,-2.5) {$\rectangled{$1$}$};
    \node(F) at (5,0) {$\rectangled{$n$-$1$}$};
    \draw[->] ([yshift=0.1cm]L.east) to node[above]{$Y_2$} ([yshift=0.1cm]M.west);
    \draw[->] ([yshift=-0.1cm]M.west) to node[below]{$X_2$} ([yshift=-0.1cm]L.east);
    \draw[->] ([yshift=0.1cm]M.east) to node[above]{$Y_1$} ([yshift=0.1cm]R.west);
    \draw[->] ([yshift=-0.1cm]R.west) to node[below]{$X_1$} ([yshift=-0.1cm]M.east);
    \draw[->] ([yshift=0.1cm]R.east) to node[above]{$J$} ([yshift=0.1cm]F.west);
    \draw[->] ([yshift=-0.1cm]F.west) to node[below]{$I$} ([yshift=-0.1cm]R.east);
    \draw[->] ([xshift=0.1cm]R.south) to node[right]{$J^1$} ([xshift=0.1cm]B.north);
    \draw[->] ([xshift=-0.1cm]B.north) to node[left]{$I_1$} ([xshift=-0.1cm]R.south);
    \node(b2) at (-1.25,1) {$b_2$};
    \node(b1) at (1.25,1) {$b_1$};
    \node(b0) at (2.5,1) {$b_0$};
    \end{tikzpicture}
    };

    \draw[->,thick] (0.5,2.3) to [bend left=45] node[above]{reduction} (3.5, 2.3);
     
    \node (rightfigure) at (6,0){
    \begin{tikzpicture}[scale=0.9]
    \node(L) at (-2.5,0) {$\cdots$};
    \node(M) at (0,0) {$\circled{$n$-$2$}$};
    \node(R) at (2.5,0) {$\rectangled{$n$-$1$}$};
    \node(B) at (2.5,-2.5) {$\rectangled{$1$}$};
    \draw[->] ([yshift=0.1cm]L.east) to node[above]{$Y'_1$} ([yshift=0.1cm]M.west);
    \draw[->] ([yshift=-0.1cm]M.west) to node[below]{$X'_1$} ([yshift=-0.1cm]L.east);
    \draw[->] ([yshift=0.1cm]M.east) to node[above]{$\widetilde J$} ([yshift=0.1cm]R.west);
    \draw[->] ([yshift=-0.1cm]R.west) to node[below]{$\widetilde I$} ([yshift=-0.1cm]M.east);
    \draw[->] ([xshift=0.1cm]R.south) to node[right]{$\gamma ^1$} ([xshift=0.1cm]B.north);
    \draw[->] ([xshift=-0.1cm]B.north) to node[left]{$\beta_1$} ([xshift=-0.1cm]R.south);
    \node(b1) at (-1.25,1) {$b'_1$};
    \node(b0) at (1.25,1) {$b'_0$};
    \node(alpha1) at (2.5,1) {$\alpha_1$};
    \end{tikzpicture}
    };
    \end{tikzpicture}
    \caption{Reduction procedure of the quiver, as applied to the rightmost node of the $T[\SU(n)]$ quiver.}
    \label{fig:reduction for T[SU(n)]}
\end{figure}
According to our reduction algorithm, we have
\begin{align}
    \widetilde I=X_1 I\;, \quad \widetilde J=I^{-1} Y_1\;,\quad \beta_1=I^{-1}I_1\;,\quad \gamma^1=J^1 I\;,
\end{align}
and 
\begin{align}
   \alpha_1=b_0-\mathrm{Tr}(I^{-1}\partial I)\;,\quad
   b'_i=\begin{cases}
        b_{i+1}\;, & i>0\;,\\
        b_1+\mathrm{Tr}(I^{-1}\partial I)\;, & i=0\;.
    \end{cases}
\end{align}
The quiver reduction gives a vertex algebra embedding
\begin{align*}
        \bm{\nu}:\mathsf D^{\mathrm{ch}}(\widetilde\CM(Q))\otimes J_B\hookrightarrow \mathsf D^{\mathrm{ch}}(\widetilde\CM(Q'))\otimes J'_B\otimes D^{\mathrm{ch}}(T^*\mathbb C^{n-1})\otimes \alpha_1\;,
\end{align*}
where $J'_B=\sum_{i=0}^{n-2}b'_i$ is a level $n-1$ Heisenberg algebra, $D^{\mathrm{ch}}(T^*\mathbb C^{n-1})$ is a $\beta\gamma$ system which is strongly generated by $\beta^i_1,\gamma^1_i$ for $i=2,\cdots,n$, and $\alpha_1$ is a level $1$ Heisenberg algebra. There is a vertex subalgebra $V^{-n+2}(\mathfrak{gl}_{n-1})\subset \mathsf D^{\mathrm{ch}}(\widetilde\CM(Q'))\otimes J'_B$, strongly generated by
\begin{align}
    \widetilde E^i_j:=\widetilde J^i_k\widetilde I^k_j-b_0'\delta^i_j\;, \quad 2\le i,j\le n\;.
\end{align}
Now we compute the image of $E^i_j$ after reduction:
\begin{equation}\label{reduction map for T[SU(n)]}
\begin{split}
\bm{\nu}(E^1_1)&=\sum_{i=2}^{n}\gamma^1_i\beta^i_1-\alpha_1 \;,\\
\bm{\nu}(E^1_i)&=\gamma^1_i\;, \quad 1< i\le n\;,\\
\bm{\nu}(E^i_1)&=\sum_{k=2}^n\widetilde E^i_{k}\beta^k_1+\alpha_1\beta^i_1-\sum_{l=2}^n(\beta^i_1\gamma^1_l)_{(-1)}\beta^l_1\;, \quad 1< i\le n\;,\\
\bm{\nu}(E^i_j)&=\widetilde E^i_{j}-\beta^i_1\gamma^1_j\;, \quad 1< i,j\le n\;.
\end{split}
\end{equation}
In particular, we find that the reduction $\nu$ maps $V^{-n+1}(\mathfrak{gl}_{n})$ into $V^{-n+2}(\mathfrak{gl}_{n-1})\otimes D^{\mathrm{ch}}(T^*\mathbb C^{n-1})\otimes \alpha_1$. This enables us to use the induction on $n$, and end up with getting the following explicit form of a free-field realization of $V^{-n+1}(\mathfrak{gl}_n)$.

\begin{prop}\label{prop:2}
There exists an embedding of vertex algebras
\begin{align*}
    V^{-n+1}(\mathfrak{gl}_n)\hookrightarrow D^{\mathrm{ch}}(T^*\mathbb C^{n(n-1)/2})\otimes(\mathcal H^{1})^{\otimes n},
\end{align*}
such that
\begin{equation}\label{eq:Wakimoto for gl(n+1)}
\begin{split}
E^i_j\mapsto & \gamma^i_j-\sum_{l=1}^{i-1}\beta^i_l\gamma^l_j\;,\quad 1\le i <j\le n\;,\\
E^i_i\mapsto &\sum_{k=i+1}^n\gamma^i_k\beta^k_i-\sum_{l=1}^{i-1}\beta^i_l\gamma^l_i-\alpha_i\;,\quad 1\le i \le n\;,\\
E^i_{i-1}\mapsto &\sum_{k=i+1}^n\gamma^i_k\beta^k_i\beta^i_{i-1}-\sum_{k=i}^n\gamma^{i-1}_k\beta^k_{i-1}\beta^i_{i-1}+(\alpha_{i-1}-\alpha_i)\beta^i_{i-1}\\
&+\sum_{k=i+1}^n\gamma^i_k\beta^k_{i-1}-\sum_{l=1}^{i-2}\beta^i_l\gamma^l_{i-1}+(n-i)\partial\beta^i_{i-1}\;.
\end{split}
\end{equation}
Here $\beta^j_i,\gamma^i_j$, $1\le i<j\le n$ strongly generates $D^{\mathrm{ch}}(T^*\mathbb C^{n(n-1)/2})$ with nontrivial OPEs $\beta^i_j(z)\gamma^k_l(w)\sim\frac{\delta^i_l\delta^k_j}{z-w}$, and $\alpha_1,\cdots,\alpha_n$ are $n$ copies of independent level $1$ Heisenberg algebras.
\end{prop}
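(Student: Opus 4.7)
The proof will proceed by induction on $n$, with the inductive step supplied by the quiver reduction applied to the rightmost gauge node of the $T[\SU(n)]$ quiver. The base case $n=1$ is immediate, since $V^{0}(\mathfrak{gl}_1)\cong \mathcal H^1$. For the inductive step, I will start from the vertex subalgebra $V^{-n+1}(\mathfrak{gl}_n)\hookrightarrow \mathsf D^{\mathrm{ch}}(\widetilde\CM(Q))\otimes J_B$ strongly generated by $E^i_j=J^i_kI^k_j-b_0\delta^i_j$, and apply the explicit formulas \eqref{reduction map for T[SU(n)]} computed just above. These formulas show that the reduction map $\bm{\nu}$ sends $V^{-n+1}(\mathfrak{gl}_n)$ into the subalgebra $V^{-n+2}(\mathfrak{gl}_{n-1})\otimes D^{\mathrm{ch}}(T^*\mathbb C^{n-1})\otimes \alpha_1$, where the $V^{-n+2}(\mathfrak{gl}_{n-1})$ is generated by $\widetilde E^i_j$ for $2\le i,j\le n$, the $\beta\gamma$ system by $\beta^i_1,\gamma^1_i$ for $2\le i\le n$, and $\alpha_1$ is an independent level-$1$ Heisenberg.

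With this in hand, I will invoke the inductive hypothesis on $V^{-n+2}(\mathfrak{gl}_{n-1})$, after relabeling indices $\{2,\dots,n\}\mapsto\{1,\dots,n-1\}$, to obtain an embedding into $D^{\mathrm{ch}}(T^*\mathbb C^{(n-1)(n-2)/2})\otimes (\mathcal H^1)^{\otimes(n-1)}$ via the formulas \eqref{eq:Wakimoto for gl(n+1)} at rank $n-1$. Composing with the reduction map and identifying the newly produced $\beta^i_1,\gamma^1_i$ as the $i=1$ row/column of the desired $\beta^j_i,\gamma^i_j$ system, and the Heisenberg field $\alpha_1$ with the first Heisenberg slot, I obtain the full embedding into $D^{\mathrm{ch}}(T^*\mathbb C^{n(n-1)/2})\otimes (\mathcal H^1)^{\otimes n}$. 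The remaining task is to match the explicit form of the composite with the stated formulas \eqref{eq:Wakimoto for gl(n+1)}: for $E^1_j$ and for $E^i_j$ with $i,j>1$ the match is immediate from \eqref{reduction map for T[SU(n)]} after substituting the inductive expressions for $\widetilde E^i_j$, while the formula for $E^i_1$ requires substituting both $\widetilde E^i_k$ and expanding normal-ordered products involving $\alpha_1$.

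The main technical obstacle is the verification of the formula for $E^i_{i-1}$, which I expect to be the only case where nontrivial reordering occurs. In particular the derivative term $(n-i)\partial\beta^i_{i-1}$ should emerge from two sources: the anomalous OPE between $\alpha_1$ and the composite $\widetilde E^i_k\beta^k_1$ arising from the redefinition $\alpha_1=b_0-\mathrm{Tr}(I^{-1}\partial I)$, and the normal-ordering corrections in $(\beta^i_1\gamma^1_l)_{(-1)}\beta^l_1$ from \eqref{reduction map for T[SU(n)]}. Tracking the coefficient by induction---noting that at each recursive step the analogous coefficient shifts by one---will give the factor $n-i$. The levels of the Heisenberg fields $\alpha_1,\dots,\alpha_n$ all remain $1$ because each reduction produces a Heisenberg of level $(\mathbf w_i-\mathbf d_i)/\mathbf v_i=1$ for the $T[\SU(n)]$ quiver.

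Finally, injectivity of the composite map follows for two independent reasons. On one hand, the reduction map $\bm{\nu}$ in \eqref{eq:reduction map of global sections} is always injective, being induced by restriction to an open subset of $\widetilde\CM$, and the initial map $V^{-n+1}(\mathfrak{sl}_n)\hookrightarrow\mathsf D^{\mathrm{ch}}(\widetilde\CM)$ was already shown injective using Proposition \ref{prop:injectivity criterion} together with the surjectivity of the induced moment map $\varphi$. On the other hand, $T[\SU(n)]$ is a Dynkin quiver, so Theorem \ref{thm:vanishing and embedding} applies directly. Either route ensures that the explicit free-field realization obtained by iterating the quiver reduction is a genuine embedding of vertex algebras.
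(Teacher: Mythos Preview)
Your proposal is correct and follows the same inductive strategy via quiver reduction that the paper uses: the text immediately preceding the proposition sets up exactly this induction, and you have supplied the detailed matching of formulas that the paper leaves implicit. One small correction: your second route to injectivity via Theorem~\ref{thm:vanishing and embedding} does not apply for general $n$, since for the $T[\SU(n)]$ quiver one has $\sum_{i\in Q_0}(\mathbf v_i-1)=(n-1)(n-2)/2$, which exceeds $2$ once $n\ge 4$; your first route through Proposition~\ref{prop:injectivity criterion} and the injectivity of $\bm\nu$ is the correct one and suffices.
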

If we define 
\begin{align*}
    E_i:=E^{i}_{i+1}\;,\quad H_i:=E^{i}_{i}-E^{i+1}_{i+1}\;,\quad F_i:=E^{i+1}_{i}\;, \quad 1\le i\le n-1\;,
\end{align*}
and in addition set
\begin{align*}
    \gamma^{i,j}:=\beta^{j}_{i}\;,\quad& \beta^{i,j}:=\gamma^{i}_{j}\;,\quad 1\le i<j\le n\;,\\
    \sqrt{-1}\partial\phi_k&:=\alpha_{k}\;, \quad 1\le k\le n \;,
\end{align*}
then \eqref{eq:Wakimoto for gl(n+1)} gives rise to the explicit formula of the Wakimoto realization for $V^{-n+1}(\mathfrak{sl}_{n})$ in \cite[Proposition 3.2]{bouwknegt1990quantum}. 

\bigskip The stress-energy operators $T=T_{\beta\gamma}+T_b+T_{\bc}$ is BRST closed, where $T_{\beta\gamma}$ and $T_{\bc}$ are defined in the same way as in the \eqref{def:generalTtensor_general}, and $T_{b}$ is defined by 
\begin{align}
    T_{b}:=\frac{1}{2}\sum_{i=0}^{n-1} b_i^2~.
\end{align}
$T$ gives rises to a stress-energy operator of $\mathsf D^{\mathrm{ch}}(\widetilde\CM(Q))\otimes J_B$. We claim that $T$ is inside the vertex subalgebra $V^{-n+1}(\mathfrak{gl}_{n})$, in fact $T$ agrees with the Sugawara operator:
\begin{align}\label{T equals Sugawara}
    T=\frac{1}{2}\mathrm{Tr}(E^2)~.
\end{align}
To see this, we compute the image of the both sides of \eqref{T equals Sugawara} under the reduction map $\bm{\nu}$. Using \eqref{reduction map for T[SU(n)]}, we have
\begin{align*}
    \frac{1}{2}\bm{\nu}(\mathrm{Tr}(E^2))=\frac{1}{2}\mathrm{Tr}(\widetilde E^2)+\frac{1}{2}\partial \mathrm{Tr}(\widetilde E)+\frac{1}{2}\alpha_1^2+\frac{n-1}{2}\partial\alpha_1-\mathrm{Tr}(\partial \beta_1\cdot\gamma^1)~.
\end{align*}
Using \eqref{eq:stress-energy operator under reduction}, we have
\begin{align*}
    \bm{\nu}(T)=\widetilde T+\frac{1}{2}\partial \mathrm{Tr}(\widetilde E)+\frac{1}{2}\alpha_1^2+\frac{n-1}{2}\partial\alpha_1-\mathrm{Tr}(\partial \beta_1\cdot\gamma^1)~,
\end{align*}
where $\widetilde T$ is the stress energy operator for $\mathsf D^{\mathrm{ch}}(\widetilde\CM(Q'))\otimes J'_B$. When $n=2$, then it follows from the above computation that $\bm{\nu}(T-\frac{1}{2}\mathrm{Tr}(E^2))=0$. Since $\bm{\nu}$ is injective, we have $T=\frac{1}{2}\mathrm{Tr}(E^2)$ in this case. For general $n$, we use induction on $n$ and obtain $\bm{\nu}(T-\frac{1}{2}\mathrm{Tr}(E^2))=0$, thus we have $T=\frac{1}{2}\mathrm{Tr}(E^2)$ in general.

\subsection{Example: \texorpdfstring{$T_{[1^4]}^{[2^2]}[\SU(4)]$}{T} quiver} \label{[2,2]-revisited}

Now we turn back to the example of the $T_{[1^4]}^{[2^2]}[\SU(4)]$ quiver introduced in section \ref{sec:[nn]}, and apply the quiver reduction with respect to the middle gauge node, see Fig.~\ref{fig:gl(4), (2,2)}.

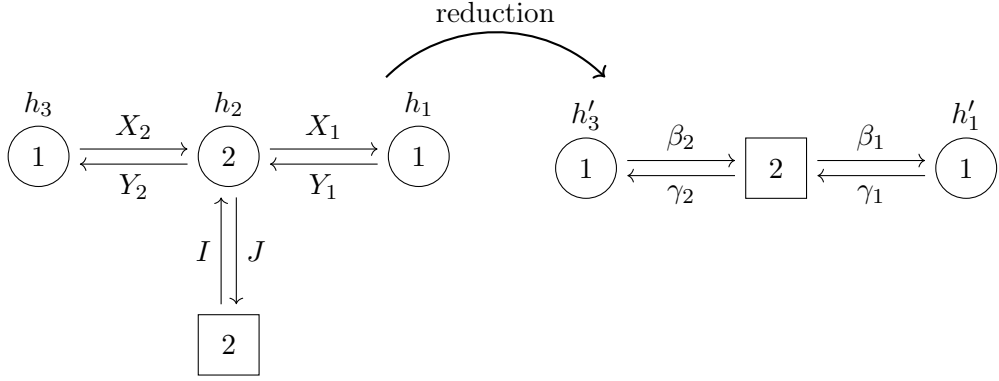
\begin{figure}
\centering
    \begin{tikzpicture}[scale=0.9, every node/.style={scale=1.0}]

    \node (leftfigure) at (0,0){
    \begin{tikzpicture}
    \node(L) at (-2.5,0) {$\circled{$1$}$};
    \node(M) at (0,0) {$\circled{$2$}$};
    \node(R) at (2.5,0) {$\circled{$1$}$};
    \node(B) at (0,-2.5) {$\rectangled{$2$}$};
    \draw[->] ([yshift=0.1cm]L.east) to node[above]{$X_2$} ([yshift=0.1cm]M.west);
    \draw[->] ([yshift=-0.1cm]M.west) to node[below]{$Y_2$} ([yshift=-0.1cm]L.east);
    \draw[->] ([yshift=0.1cm]M.east) to node[above]{$X_1$} ([yshift=0.1cm]R.west);
    \draw[->] ([yshift=-0.1cm]R.west) to node[below]{$Y_1$} ([yshift=-0.1cm]M.east);
    \draw[->] ([xshift=0.1cm]M.south) to node[right]{$J$} ([xshift=0.1cm]B.north);
    \draw[->] ([xshift=-0.1cm]B.north) to node[left]{$I$} ([xshift=-0.1cm]M.south);
    \node(h1) at (-2.5,0.7) {$h_3$};
    \node(h2) at (0,0.7) {$h_2$};
    \node(h3) at (2.5,0.7) {$h_1$};
    \end{tikzpicture}
    };

    \draw[->,thick] (2.3,2.3) to [bend left=45] node[above]{reduction} (5.5, 2.3);
     
    \node (rightfigure) at (8,1.2){
    \begin{tikzpicture}
    \node(L) at (-2.5,0) {$\circled{$1$}$};
    \node(M) at (0,0) {$\rectangled{$2$}$};
    \node(R) at (2.5,0) {$\circled{$1$}$};
    \draw[->] ([yshift=0.1cm]L.east) to node[above]{$\beta_2$} ([yshift=0.1cm]M.west);
    \draw[->] ([yshift=-0.1cm]M.west) to node[below]{$\gamma_2$} ([yshift=-0.1cm]L.east);
    \draw[->] ([yshift=0.1cm]M.east) to node[above]{$\beta_1$} ([yshift=0.1cm]R.west);
    \draw[->] ([yshift=-0.1cm]R.west) to node[below]{$\gamma_1$} ([yshift=-0.1cm]M.east);
    \node(h1) at (-2.5,0.7) {$h_3'$};
    \node(h3) at (2.5,0.7) {$h_1'$};
    \end{tikzpicture}
    };
    \end{tikzpicture}
    \caption{Reduction procedure of the quiver, as applied to the middle node of the $T_{[1^4]}^{[2^2]}[\SU(4)]$ quiver.}
    \label{fig:gl(4), (2,2)}
\end{figure}

According to our reduction algorithm, we have
\begin{align}
\beta_1=X_1I\;, \quad \beta_2=I^{-1}X_2\;,\quad \gamma_1=I^{-1}Y_1\;,\quad\gamma_2=Y_2I\;,
\end{align}
and
\begin{equation}
\begin{split}
h'_1&=h_1+Y_1X_1-\gamma_1\beta_1=h_1-\mathrm{Tr}(I^{-1}\partial I)\;,\\
h'_3&=h_3-X_2Y_2+\beta_2\gamma_2=h_3-\mathrm{Tr}(I^{-1}\partial I)\;,\\
\alpha&=h_2+2\mathrm{Tr}(I^{-1}\partial I)\;,\\
\bar\alpha&=\alpha+\frac{h_1'+h_3'}{2}\;.
\end{split}
\end{equation}
After reduction, we get a tensor product of two copies of $T[\SU(2)]$ quiver VOA together with one copy of the Heisenberg field $\bar\alpha$ of level $1$. We note that there is a vertex subalgebra $V^{-1}(\mathfrak{sl}_2)\otimes V^{-1}(\mathfrak{sl}_2)\subset \CV(Q')$ strongly generated by matrix-valued operators
\begin{align}
J^{[1]}=\gamma_1\beta_1+\frac{h'_1}{2}\mathrm{id}\;,\quad J^{[2]}=-\beta_2\gamma_2+\frac{h'_3}{2}\mathrm{id}\;.
\end{align}
To match with our setting in the linear quiver VOA, we introduce free bosons $b_0,\cdots,b_3$ such that $b_i(z)b_j(w)\sim\frac{\delta_{ij}}{(z-w)^2}$ and set $h_i=b_i-b_{i-1}$. We note that $b_i$ is related to the free fermions by $b_i=\chi_i\psi_i$. The field $J_B=\sum_{i=0}^3b_i$ commutes with $h_1,h_2,h_3$. Then we can re-arrange the fields $J^{[1]},J^{[2]},\bar\alpha,J_B$ into the following
\begin{align}
    E^{[1]}=J^{[1]}-\left(\frac{1}{4}J_B+\frac{1}{2}\bar\alpha\right)\mathrm{id}\;,\quad E^{[2]}=J^{[2]}-\left(\frac{1}{4}J_B-\frac{1}{2}\bar\alpha\right)\mathrm{id}\;,
\end{align}
which strongly generates a vertex subalgebra $V^{-1}(\mathfrak{gl}_2)\otimes V^{-1}(\mathfrak{gl}_2)\subset \CV(Q')\otimes\bar\alpha\otimes J_B$.

\bigskip Now consider the following $2\times 2$ matrix-valued operators before reduction:
\begin{equation}
\begin{split}
G^{(1)}&=JI+\frac{h_1+2h_2+h_3}{2}\mathrm{id}\;,\\
G^{(2)}&=-(JX_2)_{(-1)}(Y_2I)+\rho G^{(1)}-(\rho^2+\partial\rho)\cdot\mathrm{id}\;, 
\end{split}
\end{equation}
where $\rho=\frac{3h_3+2h_2+h_1}{4}$.
We note that $G^{(1)},G^{(2)}$ are BRST closed. Let us compute their images after reduction:
\begin{align*}
    \nu(G^{(1)})=-\beta_2\gamma_2+\gamma_1\beta_1+\frac{h_1'+h_3'}{2}\mathrm{id}=J^{[1]}+J^{[2]}\;,
\end{align*}
and
\begin{align*}
\nu(G^{(2)})=((\beta_2\gamma_2-\gamma_1\beta_1)_{(-1)}\beta_2)_{(-1)}\gamma_2+(\alpha\beta_2)_{(-1)}\gamma_2+\nu(\rho)\nu( G^{(1)})-(\nu(\rho)^2+\partial\nu(\rho))\cdot\mathrm{id}\;,
\end{align*}
where $\nu(\rho)=\frac{3h'_3+2\alpha+h'_1}{4}$. We can expand the first two terms in the above expression of $\nu(G^{(2)})$ as follows:
\begin{align*}
&((\beta_2\gamma_2-\gamma_1\beta_1)_{(-1)}\beta_2)_{(-1)}\gamma_2+(\alpha\beta_2)_{(-1)}\gamma_2\\
=&-\gamma_1\beta_1\beta_2\gamma_2-\partial(\gamma_1\beta_1)+\alpha\beta_2\gamma_2+\partial\alpha\cdot\mathrm{id}+((\beta_2\gamma_2)_{(-1)}\beta_2)_{(-1)}\gamma_2\\
=&-\gamma_1\beta_1\beta_2\gamma_2-\partial(\gamma_1\beta_1)+\alpha\beta_2\gamma_2+\partial\alpha\cdot\mathrm{id}+(\gamma_2\beta_2)_{(-1)}(\beta_2\gamma_2)+\partial(\gamma_2\beta_2)\cdot\mathrm{id}\\
=&-\gamma_1\beta_1\beta_2\gamma_2-\partial(\gamma_1\beta_1)+(h'_3+\alpha)\beta_2\gamma_2+\partial(h'_3+\alpha)\cdot\mathrm{id} \;,
\end{align*}
where in the last line we have imposed the moment map relation $\gamma_2\beta_2\equiv h'_3$. Plug the above expression back to $\nu(G^{(2)})$ and we get
\begin{align*}
\nu(G^{(2)})=&-\gamma_1\beta_1\beta_2\gamma_2-\partial(\gamma_1\beta_1)+(h'_3+\alpha)\beta_2\gamma_2+\nu(\rho)\nu( G^{(1)})\\
&-(\nu(\rho)^2+\partial(\nu(\rho)-h'_3-\alpha))\cdot\mathrm{id}\\
=&\left(J^{[1]}-\frac{\bar\alpha}{2}\mathrm{id}\right)\left(J^{[2]}+\frac{\bar\alpha}{2}\mathrm{id}\right)-\partial\left(J^{[1]}-\frac{\bar\alpha}{2}\mathrm{id}\right)\;.
\end{align*}
Let us re-organize the fields $G^{(1)},G^{(2)},J_B$ in the following way:
\begin{equation}
\begin{split}
W^{(1)}&=G^{(1)}-\frac{J_B}{2}\mathrm{id}\;,\\
W^{(2)}&=G^{(2)}-\frac{1}{4}J_BG^{(1)}+\left(\frac{J_B^2}{16}+\frac{\partial J_B}{4}\right)\cdot\mathrm{id}\;,
\end{split}
\end{equation}
then we get
\begin{align}\label{eq:W in terms of E}
    \nu(W^{(1)})=E^{[1]}+E^{[2]}\;,
    \quad \nu(W^{(2)})=E^{[1]}E^{[2]}-\partial E^{[1]}\;.
\end{align}
The equation \eqref{eq:W in terms of E} can be equivalently presented as the matrix-valued Miura operator:
\begin{align}
    \partial_z^2+\partial_z\cdot \nu(W^{(1)})+\nu(W^{(2)})=(\partial_z+E^{[1]})(\partial_z+E^{[2]}).
\end{align}
Therefore according to \cite{arakawa2017explicit}, $\nu(W^{(1)}),\nu(W^{(2)})$ strongly generate the rectangular $\mathcal{W}$-algebra $\CW^{-3}(\mathfrak{gl}_4,[2^2])$ as a vertex subalgebra of $V^{-1}(\mathfrak{gl}_2)\otimes V^{-1}(\mathfrak{gl}_2)$. 

\bigskip We note that the $T_{[1^4]}^{[2^2]}[\SU(4)]$ quiver satisfies the technical assumption in Theorem \ref{thm:vanishing and embedding}, then it follows from the theorem that the map $\iota:\CV(Q)\to \mathsf D^{\mathrm{ch}}(\widetilde\CM(Q))$ is injective, whence $W^{(1)},W^{(2)}$ strongly generate the rectangular $\mathcal{W}$-algebra $\CW^{-3}(\mathfrak{gl}_4,[2^2])$ as a vertex subalgebra of $\CV(Q)\otimes J_B$. We also note that $\iota':\CV(Q')\to \mathsf D^{\mathrm{ch}}(\widetilde\CM(Q'))$ is injective because the $T[\SU(2)]$ quiver also satisfies the technical assumption in Theorem \ref{thm:vanishing and embedding}. We summarize our conclusion in the following commutative diagram
\begin{equation}
\begin{tikzcd}
\CW^{-3}(\mathfrak{gl}_4,[2^2])\ar[rr,"\text{Miura map}",hook]\ar[d,hook] & & V^{-1}(\mathfrak{gl}_2)\otimes V^{-1}(\mathfrak{gl}_2) \ar[d,hook] \\
\CV(Q)\otimes J_B \ar[rr,"\text{reduction}",hook]\ar[d,"\iota",hook] & & \CV(Q')\otimes\bar\alpha \otimes J_B
\ar[d,"\iota'",hook]\\
\mathsf D^{\mathrm{ch}}(\widetilde\CM(Q))\otimes J_B\ar[rr,"\text{reduction}", hook] & & \mathsf D^{\mathrm{ch}}(\widetilde\CM(Q'))\otimes\bar\alpha \otimes J_B
\end{tikzcd}
\end{equation}
where all arrows are embeddings of vertex algebras.

\bigskip The stress-energy operators $T=T_{\beta\gamma}+T_h+T_{\bc}+T_{J_B}$ is BRST closed, where $T_{\beta\gamma}$, $T_h$, and $T_{\bc}$ are defined in the same way as in the \eqref{def:generalTtensor_general}, and $T_{J_B}=\frac{1}{8}J_B^2$. $T$ gives rise to a stress-energy operator of $\CV(Q)\otimes J_B$. We claim that $T$ is inside the vertex subalgebra $\CW^{-3}(\mathfrak{gl}_4,[2^2])$, and it can be written in terms of $W^{(1)},W^{(2)}$ as
\begin{align}\label{eq:stress-energy operators of [2,2] quiver}
    T=\frac{1}{2}\mathrm{Tr}((W^{(1)})^2)-\mathrm{Tr}(W^{(2)})-\frac{1}{2}\mathrm{Tr}(\partial W^{(1)})~.
\end{align}
To see this, we compute the image of $T$ under the reduction map $\nu$. By \eqref{eq:stress-energy operator under reduction} we have
\begin{align*}
    \nu(T)&=T_{Q'}+\frac{1}{2}\partial\mathrm{Tr}(\gamma_1\beta_1)-\frac{1}{2}\partial\mathrm{Tr}(\beta_2\gamma_2)-\partial\alpha+\frac{1}{8}J_B^2+\frac{1}{2}\bar\alpha^2\\
    &=T_{Q'}-\partial\bar\alpha+\frac{1}{8}J_B^2+\frac{1}{2}\bar\alpha^2~,
\end{align*}
where $T_{Q'}$ is the stress energy operator \eqref{def:generalTtensor_general} for the quiver $Q'$. Using \eqref{T equals Sugawara} and the injectivity of $\iota':\CV(Q')\to \mathsf D^{\mathrm{ch}}(\widetilde\CM(Q'))$, we see that $T_{Q'}$ equals to the Sugawara operator of $V^{-1}(\mathfrak{sl}_2)\otimes V^{-1}(\mathfrak{sl}_2)$, i.e.
\begin{align*}
    T_{Q'}=\frac{1}{2}\mathrm{Tr}((J^{[1]})^2)+\frac{1}{2}\mathrm{Tr}((J^{[2]})^2)~.
\end{align*}
Thus we find
\begin{align*}
    \nu(T)&=\frac{1}{2}\mathrm{Tr}((E^{[1]})^2)+\frac{1}{2}\mathrm{Tr}((E^{[2]})^2)+\frac{1}{2}\mathrm{Tr}(\partial E^{[1]})-\frac{1}{2}\mathrm{Tr}(\partial E^{[2]})\\
    &=\nu\left(\frac{1}{2}\mathrm{Tr}((W^{(1)})^2)-\mathrm{Tr}(W^{(2)})-\frac{1}{2}\mathrm{Tr}(\partial W^{(1)})\right)~.
\end{align*}
By the injectivity of $\nu$, we obtain \eqref{eq:stress-energy operators of [2,2] quiver}. We make two remarks:
\begin{enumerate}
    \item $T$ in \eqref{eq:stress-energy operators of [2,2] quiver} is exactly the stress-energy operator of $\CW^{-3}(\mathfrak{gl}_4,[2^2])$ in \cite[(3.19)]{eberhardt2019matrix};
    \item $T$ equals to $T_{\mathrm{KRW}}+\frac{1}{8}J_B^2$ where $T_{\mathrm{KRW}}$ is Kac-Roan-Wakimoto's stress-energy operator of $\CW^{-3}(\mathfrak{sl}_4,[2^2])$ in \cite[(2.5)]{kac2003quantum}, because $T$ is the unique stress-energy operator of $\CW^{-3}(\mathfrak{gl}_4,[2^2])$ such that $W^{(1)}$ is primary of weight $1$ and $W^{(2)}$ has conformal weight $2$ with respect to $T$ \cite[Section 3.4]{eberhardt2019matrix}, and $T_{\mathrm{KRW}}+\frac{1}{8}J_B^2$ also satisfies these properties.
\end{enumerate}
\bigskip We make a final remark that if we perform the following automorphism on $\CV(Q)\otimes J_B$:
\begin{align*}
    (X^a_b,Y^c_d)\mapsto (Y^b_a,-X^d_c)\;,\quad (I^a_i,J^j_b)\mapsto(J^i_a,-I^b_j)\;,\quad h_i\mapsto -h_i\;,\quad J_B\mapsto -J_B\;,
\end{align*}
then $W^{(1)}$ and $W^{(2)}$ transforms accordingly:
\begin{align*}
    (W^{(1)})^i_j\mapsto -(W^{(1)})^j_i\;,\quad (W^{(2)})^i_j\mapsto (W^{(2)})^j_i-(\partial W^{(1)})^j_i\;.
\end{align*}
The above automorphism on the rectangular $\mathcal{W}$-algebra $\CW^{-3}(\mathfrak{gl}_4,[2^2])$ is equivalent to the one induced from the automorphism of Miura operators
\begin{align*}
(\partial_z+E^{[1]})^i_l(\partial_z+E^{[2]})^l_j\mapsto (\partial_z-E^{[2]})^l_i(\partial_z-E^{[1]})^j_l\;.
\end{align*}

\subsection{Example: \texorpdfstring{$T_{[1^n]}^{[2,1^{n-2}]}[\SU(n)]$}{T} quiver}\label{sec:quiverred211111}

In this subsection we apply the quiver reduction to the rightmost node of the $T_{[1^{n}]}^{[2, 1^{n-2}]}[\SU(n)]$ quiver introduced in section \ref{sec:[21..1]}, see the Fig.~\ref{fig:gl(n), (21..1)}.

\begin{figure}
    \begin{tikzpicture}

    \node (leftfigure) at (0,0){
    \begin{tikzpicture}[scale=0.8, every node/.style={scale=0.8}]
    \node(L) at (-2.5,0) {$\circled{$1$}$};
    \node(M) at (0,0) {$\circled{$2$}$};
    \node(R) at (2.5,0) {$\circled{$n-2$}$};
    \node(RR) at (5,0) {$\circled{$n-2$}$};
    \node(B) at (2.5,-2.5) {$\rectangled{$1$}$};
    \node(BB) at (5,-2.5) {$\rectangled{$n-2$}$};
    \draw[->] ([yshift=0.1cm]L.east) to node[above]{$X_{n-1}$} ([yshift=0.1cm]M.west);
    \draw[->] ([yshift=-0.1cm]M.west) to node[below]{$Y_{n-1}$} ([yshift=-0.1cm]L.east);
    \draw[->] ([yshift=0.1cm]R.east) to node[above]{$X_1$} ([yshift=0.1cm]RR.west);
    \draw[->] ([yshift=-0.1cm]RR.west) to node[below]{$Y_1$} ([yshift=-0.1cm]R.east);
    \draw[dotted,thick] ([yshift=-0.1cm]M.east) to ([yshift=-0.1cm]R.west);
    \draw[dotted,thick] ([yshift=0.1cm]M.east) to ([yshift=0.1cm]R.west);
    \draw[->] ([xshift=0.1cm]RR.south) to node[right]{$J_1$} ([xshift=0.1cm]BB.north);
    \draw[->] ([xshift=-0.1cm]BB.north) to node[left]{$I_1$} 
    ([xshift=-0.1cm]RR.south);
    \draw[->] ([xshift=0.1cm]R.south) to node[right]{$J_2$} ([xshift=0.1cm]B.north);
    \draw[->] ([xshift=-0.1cm]B.north) to node[left]{$I_2$} 
    ([xshift=-0.1cm]R.south);
    \node(hn) at (-2.5,0.7) {$h_{n-1}$};
    \node(hnm) at (0,0.7) {$h_{n-2}$};
    \node(h2) at (2.5,0.7) {$h_2$};
    \node(h1) at (5,0.7) {$h_1$};
    \end{tikzpicture}
    };

    \draw[->,thick] (2.3,1.8) to [bend left=45] node[above]{reduction} (5.5, 1.8);
     
    \node (rightfigure) at (8,0.1){
    \begin{tikzpicture}[scale=0.8, every node/.style={scale=0.8}]
    \node(L) at (-2.5,0) {$\circled{$1$}$};
    \node(M) at (0,0) {$\circled{$2$}$};
    \node(R) at (2.5,0) {$\circled{$n-2$}$};
    \node(RR) at (5,0) {$\rectangled{$n-2$}$};
    \node(B) at (2.5,-2.5) {$\rectangled{$1$}$};
    
    \draw[->] ([yshift=0.1cm]L.east) to node[above]{$X_{n-1}$} ([yshift=0.1cm]M.west);
    \draw[->] ([yshift=-0.1cm]M.west) to node[below]{$Y_{n-1}$} ([yshift=-0.1cm]L.east);
    \draw[->] ([yshift=0.1cm]R.east) to node[above]{$\beta$} ([yshift=0.1cm]RR.west);
    \draw[->] ([yshift=-0.1cm]RR.west) to node[below]{$\gamma$} ([yshift=-0.1cm]R.east);
    \draw[dotted,thick] ([yshift=-0.1cm]M.east) to ([yshift=-0.1cm]R.west);
    \draw[dotted,thick] ([yshift=0.1cm]M.east) to ([yshift=0.1cm]R.west);
    \draw[->] ([xshift=0.1cm]R.south) to node[right]{$J_2$} ([xshift=0.1cm]B.north);
    \draw[->] ([xshift=-0.1cm]B.north) to node[left]{$I_2$} 
    ([xshift=-0.1cm]R.south);
    \node(hn) at (-2.5,0.7) {$h_{n-1}'$};
    \node(hnm) at (0,0.7) {$h_{n-2}'$};
    \node(h2) at (2.5,0.7) {$h_2'$};
    \node(h1) at (5,0.7) {$\alpha$};
    \end{tikzpicture}
    };
    \end{tikzpicture}
    \caption{Reduction procedure of the quiver, as applied to the second node of the $T_{[1^n]}^{[2,1^{n-2}]}[\SU(n)]$ quiver.}
    \label{fig:gl(n), (21..1)}
\end{figure}
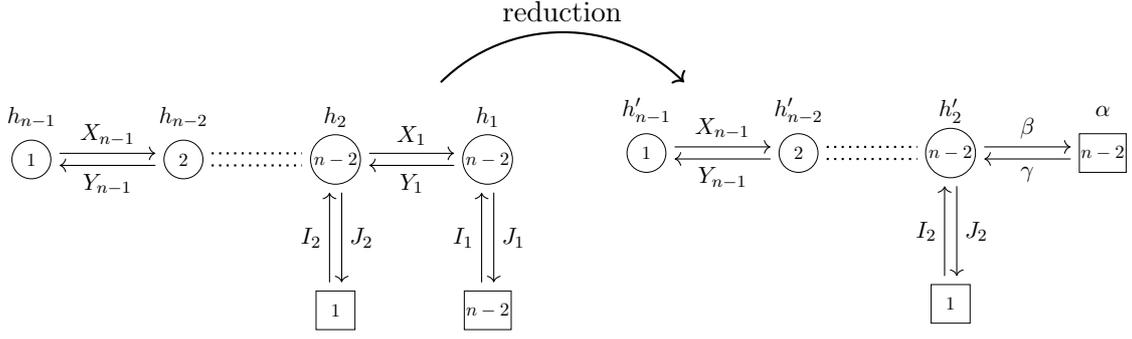

According to our reduction algorithm, we have
\begin{align}
\beta=I^{-1}X_1\;, \quad \gamma=Y_1I\;,
\end{align}
and
\begin{align}
\alpha=h_1+2\mathrm{Tr}(I^{-1}\partial I)\;,
\quad h'_2=h_2+\frac{h_1-\alpha}{2}=h_2-\mathrm{Tr}(I^{-1}\partial I),\quad h'_j=h_j\;, 
\: j>2 \;.
\end{align}
The modified Heisenberg field \eqref{eq:modified Heisenberg 2} reads:
\begin{align}
    \bar\alpha=\alpha+\sum_{j=1}^{n-2}\frac{jh'_{n-j}}{n-1}\;.
\end{align}
 After reduction, we get a tensor product of $T[\SU(n-1)]$ quiver VOA together with one copy of Heisenberg field $\bar\alpha$ of level $\frac{n}{n-1}$. We note that there is a vertex subalgebra $V^{-n+2}(\mathfrak{sl}_{n-1})\subset \CV(Q')$ strongly generated by operators
\begin{align}
J^i_j=\begin{cases}
-(\beta\gamma)^i_j+\sum_{\ell=1}^{n-2}\frac{\ell h'_{n-\ell}}{n-1}\delta^i_j\;,& 1\le i,j\le n-2\;,\\
-(J_2\gamma)_j\;, & i=n-1, 1\le j\le n-2\;,\\
(\beta I_2)^i\;, & 1\le i\le n-2, j=n-1\;,\\
J_2I_2+\sum_{\ell=1}^{n-2}\frac{\ell h'_{n-\ell}}{n-1}\;, & i=j=n-1\;.
\end{cases}
\end{align}
To match with our setting in the linear quiver VOA, we introduce free bosons $b_0,\cdots,b_{n-1}$ such that $b_i(z)b_j(w)\sim\frac{\delta_{ij}}{(z-w)^2}$ and set $h_i=b_i-b_{i-1}$. We note that $b_i$ is related to the free fermions by $b_i=\chi_i\psi_i$. The field $J_B=\sum_{i=0}^{n-1}b_i$ commutes with $h_1,\cdots,h_{n-1}$. We re-arrange the fields $J^i_j,\bar\alpha,J_B$ into the following
\begin{align}
    E^i_j=J^i_j-\frac{\bar\alpha+J_B}{n}\delta^i_j\;,\quad \widetilde\alpha=\frac{n-1}{n}\bar\alpha-\frac{1}{n}J_B\;,
\end{align}
then $E$ commutes with $\widetilde\alpha$ and they generate a vertex subalgebra $V^{-n+2}(\mathfrak{gl}_{n-1})\otimes \mathcal{H}^1\subset \CV(Q')\otimes\bar\alpha\otimes J_B$.

\bigskip Now consider the following operators before reduction: 
\begin{equation}\label{eq:G generators}
\begin{split}
(G^{(1,1)})^i_j&=-(J_1I_1)^i_j+b_0\delta^i_j,\;\\
(G^{(1,2)})^i&=-(J_1X_1I_2)^i\;,\\
(G^{(2,1)})_i&=(J_2Y_1I_1)_i\;,\\
G^{(2,2)}_1&=J_B-\mathrm{Tr}(G^{(1,1)})\;,\\
G^{(2,2)}_2&=g^{(2,2)}-S+\frac{b_0+b_1}{4}(b_0+b_1-2J_2I_2)-(n-2)\mathrm{Tr}(I_2\partial J_2)\;,
\end{split}
\end{equation}
where $1\le i,j\le n-2$. $g^{(2,2)}$ and $S$ are the following fields:
\begin{equation}
\begin{split}
&g^{(2,2)}:=(I_2J_2)^a_{b}\left((Y_1X_1)^b_a+\frac{h_1}{2}\delta^a_b\right)\;,\\
&S:=\mathrm{Tr}(X_1\partial Y_1)-\mathrm{Tr}(J_1\partial I_1)+\frac{h_1^2}{4}-\mathrm{Tr}(\mathsf b_1\partial \mathsf c_1)\;,
\end{split}
\end{equation}
where $\mathsf b_1,\mathsf c_1$ are the $\mathsf b\mathsf c$ ghosts of the rightmost node. We note that $S$ is a stress-energy operator corresponding to the neighboring fields of the rightmost gauge node. It is easy to see that $G^{(1,1)}$, $G^{(1,2)}$, $G^{(2,1)}$, and $G^{(2,2)}_1$ are BRST closed. It is true, but not obvious, that $G^{(2,2)}_2$ is also BRST closed. Nevertheless, it is easy to see that $G^{(2,2)}_2$ is BRST closed with respect to the rightmost gauge node. We will show that $G^{(2,2)}_2$ is also BRST closed with respect to all gauge nodes using the quiver reduction.

Let us compute their images after reduction.
\begin{equation}
\begin{split}
\nu((G^{(1,1)})^i_j)&=-\nu\left((J_1I_1)^i_j+\sum_{\ell=1}^{n-1}\frac{\ell h_{n-\ell}}{n}\delta^i_j\right)+\frac{J_B}{n}\delta^i_j\\
&=(\beta\gamma)^i_j-\left(\sum_{\ell=1}^{n-2}\frac{\ell h'_{n-\ell}}{n}-\frac{\alpha+J_B}{n}\right)\delta^i_j\\
&=-E^i_j\;.\\
\nu((G^{(2,1)})_i)&=(J_2\gamma)_i=-E^{n-1}_i\;, \\
\nu((G^{(1,2)})^i)&=(\beta\gamma)^i_{j,(-1)}(\beta I_2)^j+\alpha(\beta I_2)^i=\widetilde\alpha E^i_{n-1}-\sum_{\ell=1}^{n-2}E^i_{\ell,(-1)} E^\ell_{n-1}.\; \\
\nu(G^{(2,2)}_1)&=J_B+\sum_{i=1}^{n-2}E^i_i=-E^{n-1}_{n-1}-\widetilde\alpha\;.
\end{split}
\end{equation}
The hard one is $\nu(G^{(2,2)}_2)$, and let us compute $\nu(g^{(2,2)})$ first:
\begin{align*}
\nu(g^{(2,2)})&=(I_2J_2)^a_{b}\left((\gamma\beta)^b_a+\frac{\alpha}{2}\delta^a_b\right)=J_2\gamma\beta I_2+\frac{\alpha}{2}J_2I_2\\
&=(J_2\gamma)_{i,(-1)}(\beta I_2)^i+\frac{\alpha}{2}J_2I_2+(n-2)\mathrm{Tr}(I_2\partial J_2) +\mathrm{Tr}(\beta\partial\gamma)\\
&=-\sum_{\ell=1}^{n-2}E^{n-1}_{\ell,(-1)}E^\ell_{n-1}+\left(\widetilde\alpha+\frac{b_0+b_1}{2}\right)J_2I_2+(n-2)\mathrm{Tr}(I_2\partial J_2) +\mathrm{Tr}(\beta\partial\gamma)\\
&=\widetilde\alpha E^{n-1}_{n-1}-\sum_{\ell=1}^{n-2}E^{n-1}_{\ell,(-1)}E^\ell_{n-1}+\frac{\alpha^2}{4}+\mathrm{Tr}(\beta\partial\gamma)\\
&~~ +\frac{b_0+b_1}{2}I_2J_2+(n-2)\mathrm{Tr}(I_2\partial J_2)-\frac{(b_0+b_1)^2}{4}\;.
\end{align*}
According to \eqref{eq:stress-energy operator under reduction}, we have 
\begin{align*}
    \nu(S)=\frac{\alpha^2}{4}+\mathrm{Tr}(\beta\partial\gamma)\;.
\end{align*}
Together we get
\begin{align}
    \nu(G^{(2,2)}_2)=\widetilde\alpha E^{n-1}_{n-1}-\sum_{\ell=1}^{n-2}E^{n-1}_{\ell,(-1)}E^\ell_{n-1}\;.
\end{align}
In particular, we see that $\nu(G^{(2,2)}_2)$ is BRST closed. This implies that $G^{(2,2)}_2$ becomes BRST closed after localizing $\det(I_1)$. Since localization induces an injective map between relative BRST complexes, hence $G^{(2,2)}_2$ is BRST closed.

According to \cite[Appendix A]{ueda2022example}, the BRST cohomology classes $\nu(G^{(1,1)})$, $\nu(G^{(1,2)})$, $\nu(G^{(2,1)})$, $\nu(G^{(2,2)}_1)$, $\nu(G^{(2,2)}_2)$ strongly generate a vertex subalgebra $$\CW^{-n+1}(\mathfrak{gl}_n,f_{\min})\subset V^{-n+2}(\mathfrak{gl}_{n-1})\otimes \mathcal{H}^1\subset \CV(Q')\otimes\bar\alpha\otimes J_B.$$ The map is given by
\begin{equation}
\begin{split}
\nu((G^{(1,1)})^i_j)\mapsto W^{(1)}_{j,i},\quad &\nu((G^{(2,1)})_j)\mapsto W^{(1)}_{j,n-1},\quad \nu(G^{(2,2)}_1)\mapsto W^{(1)}_{n-1,n-1},\\
\nu((G^{(1,2)})^i)&\mapsto W^{(2)}_{n-1,i},\quad 
\nu(G^{(2,2)}_2)\mapsto W^{(2)}_{n-1,n-1}\;.
\end{split}
\end{equation}
Since the quiver reduction map \eqref{eq:reduction map of global sections} for the vertex algebra of the global section is injective, we see that the BRST cohomology classes $G^{(1,1)}$, $G^{(1,2)}$, $G^{(2,1)}$, $G^{(2,2)}_1$, $G^{(2,2)}_2$ strongly generate a vertex subalgebra in $\mathsf D^{\mathrm{ch}}(\widetilde\CM(Q))\otimes J_B$ which is isomorphic to $\CW^{-n+1}(\mathfrak{gl}_n,f_{\min})$.
We summarize it in the following commutative diagram:
\begin{equation}
\begin{tikzcd}
\CW^{-n+1}(\mathfrak{gl}_n,f_{\min})\ar[rr,"\text{Miura map}",hook]\ar[d,hook] & & V^{-n+2}(\mathfrak{gl}_{n-1})\otimes \mathcal{H}^1 \ar[d,hook] \\
\mathsf D^{\mathrm{ch}}(\widetilde\CM(Q))\otimes J_B\ar[rr,"\text{reduction}", hook] & & \mathsf D^{\mathrm{ch}}(\widetilde\CM(Q'))\otimes\bar\alpha \otimes J_B
\end{tikzcd}
\end{equation}
We also note that the $T_{[1^4]}^{[2,1^2]}[\SU(4)]$ quiver satisfies the technical assumption in the Theorem \ref{thm:vanishing and embedding}, then it follows from the theorem that the map $\iota:\CV(Q)\to \mathsf D^{\mathrm{ch}}(\widetilde\CM(Q))$ is injective, whence $W^{(1)},W^{(2)}$ strongly generate the minimal nilpotent $\mathcal{W}$-algebra $\CW^{-3}(\mathfrak{gl}_4,[2,1^2])$ as a vertex subalgebra of $\CV(Q)\otimes J_B$. Thus we also have the commutative diagram
\begin{equation}
\begin{tikzcd}
\CW^{-3}(\mathfrak{gl}_4,[2,1^2])\ar[rr,"\text{Miura map}",hook]\ar[d,hook] & & V^{-2}(\mathfrak{gl}_{3})\otimes \mathcal{H}^1 \ar[d,hook] \\
\CV(Q)\otimes J_B \ar[rr,"\text{reduction}",hook] & & \CV(Q')\otimes\bar\alpha \otimes J_B
\end{tikzcd}
\end{equation}

\bigskip The stress-energy operator $T=T_{\beta\gamma}+T_h+T_{\bc}+T_{J_B}$ is BRST closed, where $T_{\beta\gamma}$, $T_h$, and $T_{\bc}$ are defined in the same way as in the \eqref{def:generalTtensor_general}, and $T_{J_B}=\frac{1}{2n}J_B^2$. $T$ gives rise to a stress-energy operator of $\mathsf D^{\mathrm{ch}}(\widetilde\CM(Q))\otimes J_B$. We claim that $T$ is inside the vertex subalgebra $\CW^{-n+1}(\mathfrak{gl}_n,f_{\min})$, and it can be written in terms of $W^{(1)},W^{(2)}$ as
\begin{align}
    T=-W^{(2)}_{n-1,n-1}+\frac{1}{2}(W^{(1)}_{n-1,n-1})^2+\frac{n-2}{2}\partial W^{(1)}_{n-1,n-1}+\frac{1}{2}\sum_{i,j=1}^{n-2}W^{(1)}_{i,j}W^{(1)}_{j,i}~.
\end{align}
To see this, we compute the image of $T$ under the reduction map $\nu$. By \eqref{eq:stress-energy operator under reduction} we have
\begin{align*}
    \nu(T)&=T_{Q'}-\frac{1}{2}\partial\mathrm{Tr}(\beta\gamma)-\frac{n-2}{2}\partial\alpha+\frac{1}{2(n-1)}(J_B+\widetilde\alpha)^2+\frac{1}{2}{\widetilde\alpha}^2~,
\end{align*}
where $T_{Q'}$ is the stress energy operator \eqref{def:generalTtensor_general} for the quiver $Q'$. Using \eqref{T equals Sugawara}, we have
\begin{align*}
\nu(T)&=\frac{1}{2}\sum_{i,j=1}^{n-1}E^i_jE^j_i+\frac{1}{2}\sum_{i=1}^{n-2}\partial E^i_i-\frac{n-2}{2}\partial\widetilde\alpha+\frac{1}{2}{\widetilde\alpha}^2\\
&=-W^{(2)}_{n-1,n-1}+\frac{1}{2}(W^{(1)}_{n-1,n-1})^2+\frac{n-2}{2}\partial W^{(1)}_{n-1,n-1}+\frac{1}{2}\sum_{i,j=1}^{n-2}W^{(1)}_{i,j}W^{(1)}_{j,i}~.
\end{align*}
Therefore the claim is proven. We note that $T$ agrees with $T_{\mathrm{KRW}}+\frac{1}{2n}J_B^2$, where $T_{\mathrm{KRW}}$ is Kac-Roan-Wakimoto's stress-energy operator of $\CW^{-n+1}(\mathfrak{sl}_n,f_{\mathrm{min}})$ in \cite[(2.5)]{kac2003quantum}, because there is a unique\footnote{Uniqueness follows from the fact that $W^{(1)}$ and $W^{(2)}_{n-1,i}$ for $1\le i\le n-2$ generate the whole $\CW^{-n+1}(\mathfrak{gl}_n,f_{\mathrm{min}})$, cf. \eqref{eq: W([2,1^{n-2}]) OPE}. In fact, if $T_1$ and $T_2$ are two stress-energy operators such that $W^{(1)}$ and $W^{(2)}_{n-1,i}$ for $1\le i\le n-2$ are primary and their weights under the $T_1$ and $T_2$ are equal, then $T_1-T_2$ commutes with $W^{(1)}$ and $W^{(2)}_{n-1,i}$ for $1\le i\le n-2$, whence it commutes with $\CW^{-n+1}(\mathfrak{gl}_n,f_{\mathrm{min}})$. According to Lemma \ref{lem:criterion for two Virasoro coincide_1}, we must have $T_1=T_2$.} stress-energy operator of $\CW^{-n+1}(\mathfrak{gl}_n,f_{\mathrm{min}})$ such that $W^{(1)}_{i,j}$ for $1\le i,j\le n-2$ and $W^{(1)}_{n-1,n-1}$ are primary of weight $1$, and $W^{(1)}_{i,n-1}$ and $W^{(2)}_{n-1,i}$ for $1\le i\le n-2$ are primary of weight $3/2$, with respect to such stress-energy operator. Both $T$ and $T_{\mathrm{KRW}}+\frac{1}{2n}J_B^2$ satisfy such properties, thus
\begin{align}
    T=T_{\mathrm{KRW}}+\frac{1}{2n}J_B^2~.
\end{align}

\bigskip In section \ref{sec:[22]}, we mentioned that $T^{[2,2]}_{[1^4]}[\SU(4)]$ can be obtained from the quiver reduction of $T^{[2,1,1]}_{[1^4]}[\SU(4)]$. As in Fig.~\ref{fig:gl(4), (211) and (22)}, one can notice $T^{[2,1,1]}_{[1^4]}[\SU(4)]$ after the quiver reduction provides a slightly different outcome compared to the quiver reduction of $T^{[2,2]}_{[1^4]}[\SU(4)]$. More precisely,  $V^{-1}(\mathfrak{gl}_2)\otimes D^{\mathrm{ch}}(T^*\mathbb C^{2}) \otimes (\cH^{1})^{\otimes2}$ and $V^{-1}(\mathfrak{gl}_2)\otimes V^{-1}(\mathfrak{gl}_2)$, and the former contains the latter. Further application of Miura map, following Proposition \ref{prop:2}, yield $D^{\mathrm{ch}}(T^*\mathbb C^{3}) \otimes (\cH^{1})^{\otimes4}$ and $D^{\mathrm{ch}}(T^*\mathbb C^{2}) \otimes (\cH^{1})^{\otimes4}$. The gauging of $\U(1)$ reduces the complex dimension of the associated super vector space $D^{\mathrm{ch}}(T^*\BC^{3})\to D^{\mathrm{ch}}(T^*\mathbb C^{2})$.

\bigskip We make a final remark that our choice of generators \eqref{eq:G generators} is related to the previous ones in \eqref{eq: VOA map} by the automorphism $(X^a_b,Y^c_d)\mapsto (Y^b_a,-X^d_c)$, $(I^a_i,J^j_b)\mapsto(J^i_a,-I^b_j)$.

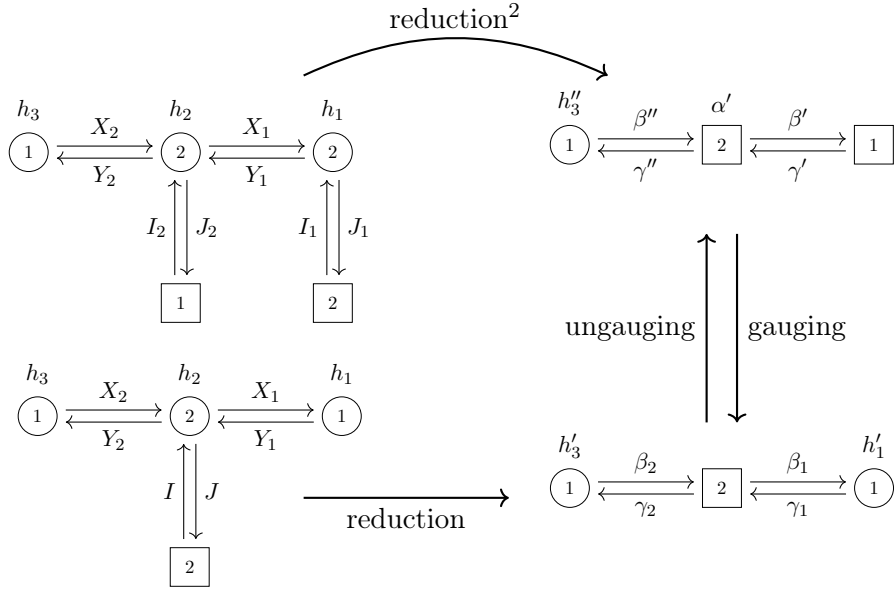
\begin{figure}[ht!]
\centering
    \begin{tikzpicture}
    \node (leftfigure) at (0,0){
    \begin{tikzpicture}[scale=0.8, every node/.style={scale=0.8}]
    \node(M) at (0,0) {$\circled{$1$}$};
    \node(R) at (2.5,0) {$\circled{$2$}$};
    \node(RR) at (5,0) {$\circled{$2$}$};
    \node(B) at (2.5,-2.5) {$\rectangled{$1$}$};
    \node(BB) at (5,-2.5) {$\rectangled{$2$}$};
    \draw[->] ([yshift=0.1cm]M.east) to node[above]{$X_{2}$} ([yshift=0.1cm]R.west);
    \draw[->] ([yshift=-0.1cm]R.west) to node[below]{$Y_{2}$} ([yshift=-0.1cm]M.east);
    \draw[->] ([yshift=0.1cm]R.east) to node[above]{$X_1$} ([yshift=0.1cm]RR.west);
    \draw[->] ([yshift=-0.1cm]RR.west) to node[below]{$Y_1$} ([yshift=-0.1cm]R.east);

    \draw[->] ([xshift=0.1cm]RR.south) to node[right]{$J_1$} ([xshift=0.1cm]BB.north);
    \draw[->] ([xshift=-0.1cm]BB.north) to node[left]{$I_1$} 
    ([xshift=-0.1cm]RR.south);
    \draw[->] ([xshift=0.1cm]R.south) to node[right]{$J_2$} ([xshift=0.1cm]B.north);
    \draw[->] ([xshift=-0.1cm]B.north) to node[left]{$I_2$} 
    ([xshift=-0.1cm]R.south);
    \node(hnm) at (0,0.7) {$h_{3}$};
    \node(h2) at (2.5,0.7) {$h_2$};
    \node(h1) at (5,0.7) {$h_1$};
    \end{tikzpicture}
    };

    \draw[->,thick] (1.5,1.8) to [bend left=25] node[above]{reduction$^2$} (5.5, 1.8);

    \node (rightfigure) at (7,1){
    \begin{tikzpicture}[scale=0.8, every node/.style={scale=0.8}]
    \node(M) at (0,0) {$\circled{$1$}$};
    \node(R) at (2.5,0) {$\rectangled{$2$}$};
    \node(RR) at (5,0) {$\rectangled{$1$}$};
    
    \draw[->] ([yshift=0.1cm]M.east) to node[above]{$\beta''$} ([yshift=0.1cm]R.west);
    \draw[->] ([yshift=-0.1cm]R.west) to node[below]{$\gamma''$} ([yshift=-0.1cm]M.east);
    \draw[->] ([yshift=0.1cm]R.east) to node[above]{$\beta'$} ([yshift=0.1cm]RR.west);
    \draw[->] ([yshift=-0.1cm]RR.west) to node[below]{$\gamma'$} ([yshift=-0.1cm]R.east);

    ([xshift=-0.1cm]R.south);
    \node(hnm) at (0,0.7) {$h_{3}''$};
    \node(h2) at (2.5,0.7) {$\a'$};
    \node(h1) at (5,0.7) {$$};
    \end{tikzpicture}
    };
    \draw[->,thick] (7.2,-0.3) to  node[right]{gauging} (7.2, -2.8);
    \draw[->,thick] (6.8, -2.8) to node[left]{ungauging} (6.8,-0.3) ;

    \node (downrightfigure) at (7,-3.5){
    \begin{tikzpicture}[scale=0.8, every node/.style={scale=0.8}]
     \node(L) at (-2.5,0) {$\circled{$1$}$};
    \node(M) at (0,0) {$\rectangled{$2$}$};
    \node(R) at (2.5,0) {$\circled{$1$}$};
    \draw[->] ([yshift=0.1cm]L.east) to node[above]{$\beta_2$} ([yshift=0.1cm]M.west);
    \draw[->] ([yshift=-0.1cm]M.west) to node[below]{$\gamma_2$} ([yshift=-0.1cm]L.east);
    \draw[->] ([yshift=0.1cm]M.east) to node[above]{$\beta_1$} ([yshift=0.1cm]R.west);
    \draw[->] ([yshift=-0.1cm]R.west) to node[below]{$\gamma_1$} ([yshift=-0.1cm]M.east);
    \node(h1) at (-2.5,0.7) {$h_3'$};
    \node(h3) at (2.5,0.7) {$h_1'$};
    \end{tikzpicture}
    };
    \draw[->,thick] (1.5,-3.8) to node[below]{reduction} (4.2, -3.8);
    
    \node (downleftfigure) at (0,-3.5){
    \begin{tikzpicture}[scale=0.8, every node/.style={scale=0.8}]
    \node(L) at (-2.5,0) {$\circled{$1$}$};
    \node(M) at (0,0) {$\circled{$2$}$};
    \node(R) at (2.5,0) {$\circled{$1$}$};
    \node(B) at (0,-2.5) {$\rectangled{$2$}$};
    \draw[->] ([yshift=0.1cm]L.east) to node[above]{$X_2$} ([yshift=0.1cm]M.west);
    \draw[->] ([yshift=-0.1cm]M.west) to node[below]{$Y_2$} ([yshift=-0.1cm]L.east);
    \draw[->] ([yshift=0.1cm]M.east) to node[above]{$X_1$} ([yshift=0.1cm]R.west);
    \draw[->] ([yshift=-0.1cm]R.west) to node[below]{$Y_1$} ([yshift=-0.1cm]M.east);
    \draw[->] ([xshift=0.1cm]M.south) to node[right]{$J$} ([xshift=0.1cm]B.north);
    \draw[->] ([xshift=-0.1cm]B.north) to node[left]{$I$} ([xshift=-0.1cm]M.south);
    \node(h1) at (-2.5,0.7) {$h_3$};
    \node(h2) at (0,0.7) {$h_2$};
    \node(h3) at (2.5,0.7) {$h_1$};
    \end{tikzpicture}
    };    
    \end{tikzpicture}
    \caption{Subtle difference of quiver reduction of $T_{[1^4]}^{[2,1^{2}]}[\SU(4)]$ quiver and $T_{[1^4]}^{[2,2]}[\SU(4)]$ quiver.}
    \label{fig:gl(4), (211) and (22)}
\end{figure}

\subsection{Example: \texorpdfstring{$T_{[1^n]}^{[n-1,1]}[\SU(n)]$}{T} quiver}\label{sec:quiverred subreg}

In this subsection we recap the $T_{[1^n]}^{[n-1,1]}[\SU(n)]$ quiver which was previously studied by Kuwabara \cite{kuwabara2017vertex}. According to the result in \cite[10.1]{kuwabara2017vertex}, $\mathsf D^{\mathrm{ch}}(\widetilde\CM(Q))\otimes J_B$ contains a vertex subalgebra which is isomorphic to $\CW^{-n+1}(\mathfrak{gl}_n,f_{\mathrm{sub}})$ and is generated by 
\begin{align}\label{generators of subregular W-algebra}
    E=J_{n-1}X_{n-2}\cdots X_1 I_1,\quad F=J_1 Y_1\cdots Y_{n-2} I_{n-1}, \quad J_B=\sum_{i=0}^{n-1}b_i,\quad \mathring H=J_1I_1-b_0.
\end{align}
Here $f_{\mathrm{sub}}$ denotes a sub-regular nilpotent element in $\mathfrak{gl}_n$. We note that the $T_{[1^n]}^{[n-1,1]}[\SU(n)]$ quiver satisfies the technical assumption in the Theorem \ref{thm:vanishing and embedding}, then it follows from the theorem that the map $\iota:\CV(Q)\to \mathsf D^{\mathrm{ch}}(\widetilde\CM(Q))$ is injective. Since the strong generators \eqref{generators of subregular W-algebra} are inside the image of $\iota$, so we see that $\CW^{-n+1}(\mathfrak{gl}_n,f_{\mathrm{sub}})$ is a vertex subalgebra of $\CV(Q)\otimes J_B$.

Let us re-derive Kuwabara's result using the quiver reduction. The quiver reduction at the first node induces an injective vertex algebra map
\begin{align*}
    \nu: \CV(Q)\otimes J_B\hookrightarrow \CV(Q')\otimes J'_B\otimes b'_0,
\end{align*}
where 
\begin{align}
    J'_B=\sum_{i=1}^{n-1}b'_i,\quad b'_j=\begin{cases}
        b_0-I_1^{-1}\partial I_1, & j=0\\
        b_1+I_1^{-1}\partial I_1, & j=1\\
        b_j, & j>1
    \end{cases}.
\end{align}
After reduction, we write down the analog of operators $E,F,\mathring H$ in $\CV(Q')\otimes J'_B$:
\begin{align*}
     E'=J'_{n-1}X'_{n-2}\cdots X'_2 I'_2,\quad F'=J'_2 Y'_2\cdots Y'_{n-2} I'_{n-1}, \quad \mathring H'=J'_2I'_2-b'_1.
\end{align*}
The quiver reduction algorithm gives 
\begin{align}\label{reduction map for subregular}
    \nu(E)=E',\quad \nu(F)=\mathring H'F'+b'_0F',\quad \nu(\mathring H)=\mathring H',\quad \nu(J_B)=J'_B+b'_0.
\end{align}
Comparing with \cite[7.3]{genra2020screening}, we claim that the following map
\begin{align}\label{match of path operators with sub-regular W generators}
    E_{\mathrm{there}}=E_{\mathrm{here}},\quad F_{\mathrm{there}}=F_{\mathrm{here}},\quad Z=J_B,\quad H=\mathring H-\frac{1}{n}J_B,
\end{align}
gives rise to an isomorphism between the vertex subalgebra generated by \eqref{generators of subregular W-algebra} and $\CW^{-n+1}(\mathfrak{gl}_n,f_{\mathrm{sub}})$. When $n=2$, the map $\nu$ gives the Wakimoto realization of $V^{-1}(\mathfrak{gl}_2)$, in this case the sub-regular nilpotent element is zero, so we get $V^{-1}(\mathfrak{gl}_2)=\CW^{-1}(\mathfrak{gl}_2,f_{\mathrm{sub}})$, and \eqref{match of path operators with sub-regular W generators} gives the isomorphism. When $n>2$, we use the induction on $n$, and notice that \eqref{reduction map for subregular} agrees with the formula \cite[(7.4)-(7.7)]{genra2020screening} of the parabolic induction map 
\begin{align*}
    \Delta: \CW^{-n+1}(\mathfrak{gl}_n,f_{\mathrm{sub}})\hookrightarrow \CW^{-n+2}(\mathfrak{gl}_{n-1},f_{\mathrm{sub}})\otimes \mathcal H^1,
\end{align*}
by taking $N_1=n-1$, $N_2=1$, and identifying $W_1=b'_0$. This proves our claim. The above argument also shows that the quiver reduction map is compatible with parabolic induction, i.e. the following diagram commutes:
\begin{equation}
\begin{tikzcd}
\CW^{-n+1}(\mathfrak{gl}_n,f_{\mathrm{sub}})\ar[rr,"\Delta",hook]\ar[d,hook] & & \CW^{-n+2}(\mathfrak{gl}_{n-1},f_{\mathrm{sub}})\otimes \mathcal H^1 \ar[d,hook] \\
\CV(Q)\otimes J_B \ar[rr,"\text{reduction}", hook] & & \CV(Q')\otimes J'_B\otimes b'_0
\end{tikzcd}
\end{equation}

\bigskip $T=T_{\beta\gamma}+T_h+T_{\bc}+T_{J_B}$ is BRST closed, where $T_{\beta\gamma}$, $T_h$, and $T_{\bc}$ are defined in the same way as in the \eqref{def:generalTtensor_general}, and $T_{J_B}=\frac{1}{2n}J_B^2$.

It is shown in \cite[Theorem 6.9]{genra2017screening} that when $k+n\neq 0$, $\CW^k(\mathfrak{sl}_n,f_{\mathrm{sub}})$ is isomorphic to Feigin-Semikhatov's $\CW^{(2)}_n$ algebra of level $k$, which is defined in \cite{feigin2004wn}. Note that in our setting, $\CW^{-n+1}(\mathfrak{sl}_n,f_{\mathrm{sub}})$ is generated by $E$, $F$, and $H=\mathring H-\frac{1}{n}J_B$. It is shown in \cite{feigin2004wn} that when $k+n\neq 0$, $\CW^{(2)}_n$ has a stress-energy operators $T_{\mathrm{FS}}$ such that $E$, $F$, and $H$ are primary of weights $n/2$, $n/2$, and $1$, with respect to $T_{\mathrm{FS}}$. In fact, $T_{\mathrm{FS}}$ is the unique stress-energy operator which has such property\footnote{To see the uniqueness, assume that $T'$ is another stress-energy operator such that $E$, $F$, and $H$ are primary of weights $n/2$, $n/2$, and $1$, with respect to $T'$, then $T_{\mathrm{FS}}-T'$ commutes with $E$, $F$, and $H$, therefore it commutes with the whole $\CW^{(2)}_n$ algebra. According to Lemma \ref{lem:criterion for two Virasoro coincide_1}, we must have $T_{\mathrm{FS}}=T'$.}. Similarly $T_{\mathrm{FS}}+\frac{1}{2n}J_B^2$ is the unique stress-energy operator of $\CW^{-n+1}(\mathfrak{sl}_n,f_{\mathrm{sub}})$ such that $E$, $F$, $Z$ and $\mathring H$ are primary of weights $n/2$, $n/2$, $1$ and $1$ with respect to it. It is easy to see that $T$ also satisfy the aforementioned properties, thus we have
\begin{align}\label{eq:compare with T_FS}
    T=T_{\mathrm{FS}}+\frac{1}{2n}J_B^2~.
\end{align}
\begin{lem}\label{lem:T_FS=T_KRW}
For all $k$ such that $k+n\neq 0$, Kac-Roan-Wakimoto's stress-energy operator $T_{\mathrm{KRW}}$ of $\CW^{k}(\mathfrak{sl}_n,f_{\mathrm{sub}})$ in \cite[(2.5)]{kac2003quantum} is identified with Feigin-Semikhatov's stress-energy operator $T_{\mathrm{FS}}$ along the isomorphism $\CW^{k}(\mathfrak{sl}_n,f_{\mathrm{sub}})\cong \CW^{(2)}_n$ in \cite[Theorem 6.9]{genra2017screening}.
\end{lem}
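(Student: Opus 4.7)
The plan is to reduce the lemma to the uniqueness principle already used in the main text. By the argument in the footnote, any two stress-energy operators on $\CW^{(2)}_n$ that make $E, F, H$ primary of conformal weights $n/2, n/2, 1$ respectively must coincide: their difference commutes with $E, F, H$ and therefore with the whole $\CW^{(2)}_n$ (which is strongly generated by these three fields), and Lemma \ref{lem:criterion for two Virasoro coincide_1} forces the difference to vanish. So it suffices to verify that Genra's isomorphism $\phi: \CW^{(2)}_n \xrightarrow{\sim} \CW^k(\mathfrak{sl}_n, f_{\mathrm{sub}})$ from \cite[Theorem 6.9]{genra2017screening} transports $T_{\mathrm{KRW}}$ to a stress-energy operator of $\CW^{(2)}_n$ that makes $E, F, H$ primary of those weights.

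For this I would invoke the Kac-Roan-Wakimoto grading formula \cite[Theorem 4.1]{kac2003quantum}: each $v \in \mathfrak{g}^f$ of $\mathrm{ad}(x)$-eigenvalue $-j_v$ (with $j_v \ge 0$) yields a $T_{\mathrm{KRW}}$-primary strong generator $G^{\{v\}}$ of conformal weight $1 + j_v$. For $f_{\mathrm{sub}}$ associated with the partition $[n-1,1]$ of $n$, the adjoint decomposes under the corresponding $\mathfrak{sl}_2$-triple as
\[
\mathfrak{sl}_n \cong V_0 \oplus V_{n-2}^{\oplus 2} \oplus \bigoplus_{k=1}^{n-2} V_{2k},
\]
so lowest-weight vectors furnish $T_{\mathrm{KRW}}$-primary strong generators of conformal weights $\{1, n/2, n/2, 2, 3, \ldots, n-1\}$.

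The next step is to check that under $\phi$, the Feigin-Semikhatov generators $\phi(E), \phi(F)$ are nonzero scalar multiples of the two weight-$n/2$ KRW generators, and $\phi(H)$ is a nonzero scalar multiple of the weight-$1$ KRW generator, modulo composites. For $n \neq 4$ the weight-$1$ and weight-$n/2$ subspaces of $\CW^k(\mathfrak{sl}_n, f_{\mathrm{sub}})$ contain no composites of lower-weight strong generators, so these weight subspaces are spanned by KRW primaries and hence $\phi(E), \phi(F), \phi(H)$ are automatically $T_{\mathrm{KRW}}$-primary of the claimed weights. The uniqueness above then gives $\phi^{-1}(T_{\mathrm{KRW}}) = T_{\mathrm{FS}}$, which is the desired identification.

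The main obstacle is the case $n = 4$, where $n/2 = 2$ coincides with the weights of other KRW strong generators and with composites such as $H \cdot H$ and $\partial H$; here one must use Genra's explicit construction via screening operators to pin down the images $\phi(E), \phi(F)$ as $T_{\mathrm{KRW}}$-primaries up to scalar, or perform a direct OPE computation to extract the primary parts. A subsidiary check is that $\CW^{(2)}_n$ really is strongly generated by $E, F, H$, which is classical for $n = 2, 3$ (where the algebras are $\widehat{\mathfrak{sl}}_2$ and the Bershadsky-Polyakov algebra respectively) and follows for larger $n$ by iteratively producing the higher-spin generators from OPEs of $E$ and $F$.
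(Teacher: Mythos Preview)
Your strategy via Lemma~\ref{lem:criterion for two Virasoro coincide_1} is close in spirit to the paper's, but there are two genuine gaps.

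First, you never establish that $\phi(E),\phi(F)$ lie in the $T_{\mathrm{KRW}}$-weight $n/2$ subspace to begin with. A vertex-algebra isomorphism need not intertwine two given conformal gradings, so knowing the KRW weight spaces does nothing until you know where $\phi$ sends $E$ and $F$. The paper fills exactly this gap by invoking the explicit formula from \cite[7.3]{genra2020screening}, which writes $E(z)=e^1_2(z)$ and $F$ as a normal-ordered product in the fields $(e^1_1-e^j_j)(z)$ and $e^2_1(z)$; since each $a\in\mathfrak{g}_j$ has $T_{\mathrm{KRW}}$-weight $1-j$, a direct count gives both weight $n/2$. This establishes that the two gradings coincide, which is the step you defer.

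Second, your dimension argument fails not only for $n=4$ but for every even $n\ge 4$: the strong generator coming from $V_{n-2}\subset\bigoplus_k V_{2k}$ also has weight $n/2$, and so do composites such as $HW_{n/2-1}$, $\partial W_{n/2-1}$, and $\partial^{n/2-1}H$. Thus even if $\phi(E)$ were known to sit at weight $n/2$, it would not follow that it is $T_{\mathrm{KRW}}$-primary. The paper sidesteps this: once the gradings agree, it uses only that the weight-$1$ subspace is one-dimensional (hence $H$ is primary for both operators, by \cite[Theorem 2.1(b)]{kac2004quantum} and \cite{feigin2004wn}) and that the pairing on $V_1$ is nondegenerate when $\ell_n(k)\neq 0$, after which Lemma~\ref{lem:criterion for two Virasoro coincide_2} rather than Lemma~\ref{lem:criterion for two Virasoro coincide_1} applies directly; the remaining value of $k$ is handled by an algebraic-deformation argument.
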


\begin{proof}
When $n=2$ this is obvious because both $T_{\mathrm{KRW}}$ and $T_{\mathrm{FS}}$ are equal to the Sugawara operator in $V^k(\mathfrak{sl}_2)$. So let us assume $n\ge 3$. We recall that the grading induced by $T_{\mathrm{KRW}}$ on $\CW^{k}(\mathfrak{sl}_n,f_{\mathrm{sub}})$ is the one induced from the grading of $x$ from an $\mathfrak{sl}_2$ triple $(e,x,f_{\mathrm{sub}})$. Namely, let us choose
\begin{align*}
    f_{\mathrm{sub}}=e^3_2+\cdots+e^n_{n-1},\quad x=\frac{n-2}{2}e^2_2+\frac{n-4}{2}e^3_3+\cdots+\frac{-n+2}{2}e^n_n~,
\end{align*}
where $e^i_j$ is the $ij$-elementary matrix, then we get decomposition $\mathfrak{sl}_n=\bigoplus_{j\in \frac{1}{2}\mathbb Z}\mathfrak{g}_j$ where $\mathfrak{g}_j$ is the subspace of $\mathfrak{sl}_n$ such that $\mathrm{ad}(x)$ acts with eigenvalue $j$. $T_{\mathrm{KRW}}$ is constructed such that the field $a(z)$ for $a\in \mathfrak{g}_j$ has conformal weight $1-j$, cf. the remark after \cite[(2.3)]{kac2004quantum}. Now according to \cite[7.3]{genra2020screening}, we have
\begin{align*}
    E(z)=e^1_2(z),\quad F(z)=:(\hat{\partial}+(e^1_1-e^n_n)(z))(\hat{\partial}+(e^1_1-e^{n-1}_{n-1})(z))\cdots(\hat{\partial}+(e^1_1-e^3_3)(z))e^2_1(z):
\end{align*}
where $\hat{\partial}=(k+n-1)\partial_z$. Under the conformal grading of $T_{\mathrm{KRW}}$, the conformal weight of $E$ is $1+\frac{n-2}{2}=n/2$, and the conformal weight of $F$ is $(n-2)+1-\frac{n-2}{2}=n/2$. On the other hand, both $E$ and $F$ are primary fields of conformal weight $n/2$ with respect to $T_{\mathrm{FS}}$. Since $E$ and $F$ generates $\CW^{(2)}_n$, we conclude that $T_{\mathrm{KRW}}$ and $T_{\mathrm{FS}}$ induces the same conformal grading on $\CW^{k}(\mathfrak{sl}_n,f_{\mathrm{sub}})$ when $k+n\neq 0$.

According to \cite[Theorem 4.1]{kac2004quantum}, the degree one part of $\CW^{k}(\mathfrak{sl}_n,f_{\mathrm{sub}})$ is one dimensional, which is spanned by $H$. Using \cite[Theorem 2.1(b)]{kac2004quantum}, we see that $H$ is a primary field with respect to $T_{\mathrm{KRW}}$; according to \cite{feigin2004wn}, $H$ is also a primary field with respect to $T_{\mathrm{FS}}$. Note that $H(z)H(w)\sim \frac{\ell_n(k)}{(z-w)^2}$, $\ell_n(k)=\frac{n-1}{n}k+n-2$. Then Lemma \ref{lem:criterion for two Virasoro coincide_2} implies that $T_{\mathrm{KRW}}=T_{\mathrm{FS}}$ when $\ell_n(k)\neq 0$. When $\ell_n(k)= 0$, we notice that all OPE coefficients depend on $k$ algebraically, thus the fact that $T_{\mathrm{KRW}}-T_{\mathrm{FS}}$ is in the center of $\CW^{k}(\mathfrak{sl}_n,f_{\mathrm{sub}})$ for generic $k$ implies that $T_{\mathrm{KRW}}-T_{\mathrm{FS}}$ is in the center of $\CW^{k}(\mathfrak{sl}_n,f_{\mathrm{sub}})$ for all $k$ such that $k\neq -n$, in particular this holds when $\ell_n(k)= 0$. Thus $T_{\mathrm{KRW}}=T_{\mathrm{FS}}$ by Lemma \ref{lem:criterion for two Virasoro coincide_1}.
\end{proof}

It follows from Lemma \ref{lem:T_FS=T_KRW} and \eqref{eq:compare with T_FS} that
\begin{align}\label{eq:compare with T_KRW}
    T=T_{\mathrm{KRW}}+\frac{1}{2n}J_B^2~.
\end{align}

\subsection{Expectation in the case of \texorpdfstring{$T^{\sigma}_{[1^n]}[\SU(n)]$}{TSU(n)} quiver}

Motivated by the examples studied in the previous sections, we make the following conjecture.

\begin{conjecture}\label{conj:W-alg embedding}
Let $(Q,\mathbf v,\mathbf w)$ be a $T^{\sigma}_{[1^n]}[\SU(n)]$ quiver, then there exists an embedding of vertex algebras:
\begin{align}\label{eq:W-alg embedding}
    \Gamma:\CW^{-n+1}(\mathfrak{gl}_n,\sigma)\hookrightarrow \mathsf D^{\mathrm{ch}}(\widetilde\CM(Q,\mathbf v,\mathbf w))\otimes \mathcal H^n\;,
\end{align}
where $\mathcal H^n$ is a Heisenberg algebra of level $n$, denoted by $J_B$. Moreover $\Gamma$ maps the stress-energy operators $T=T_{\mathrm{KRW}}+T_{\mathcal H}$ to the one defined on the quiver side \eqref{def:generalTtensor}, where $T_{\mathrm{KRW}}$ is Kac-Roan-Wakimoto's stress-energy operator for $\CW^{-n+1}(\mathfrak{sl}_n,\sigma)$ in \cite[(2.5)]{kac2003quantum}, and $T_{\mathcal H}=\frac{1}{2n}J_B^2$.
\end{conjecture}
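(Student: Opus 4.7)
The plan is to induct on the total gauge rank $|\mathbf v|=\sum_{i\in Q_0}\mathbf v_i$, using the quiver reduction machinery of \cref{sec:quiver reduction} as the engine and matching it, at each step, with a known parabolic induction (Miura) map on the $\CW$-algebra side. The base cases $\sigma=[1^n]$, $[2,1^{n-2}]$, $[n-1,1]$, and $[2^2]$ in $\mathfrak{gl}_4$ are established in \cref{T[SU(n)]-revisited}--\cref{sec:quiverred subreg} and serve both as anchors and as pattern-setters for the general argument. In each of those computations the identification of the quiver-side stress-energy operator with $T_{\mathrm{KRW}}+\frac{1}{2n}J_B^2$ was obtained by the same uniqueness argument, so we carry that template into the inductive step.

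For the inductive step, \cref{lem:good linear quiver always reduces} guarantees a node $i\in Q_0$ at which the quiver reduction of \cref{subsec:reduction for VOA} applies, yielding an injective map
\[
\bm\nu_i: \mathsf D^{\mathrm{ch}}(\widetilde\CM(Q,\mathbf v,\mathbf w))\otimes J_B\hookrightarrow \mathsf D^{\mathrm{ch}}(\widetilde\CM(Q',\mathbf v',\mathbf w'))\otimes\bar\alpha\otimes J_B'
\]
to a linear quiver $(Q',\mathbf v',\mathbf w')$, possibly disconnected, of strictly smaller total rank. The first task is to recognize $(Q',\mathbf v',\mathbf w')$ as (a tensor product of) $T^{\sigma'}_{[1^{n'}]}[\SU(n')]$ quiver(s) for a combinatorially determined refinement $\sigma'$ of $\sigma$; the natural choice is to reduce at the rightmost node with $\mathbf w_i=\mathbf v_i$, which corresponds to peeling off the shortest part of $\sigma$. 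On the $\CW$-algebra side one then invokes (or proves) a parabolic induction embedding
\[
\Delta:\CW^{-n+1}(\mathfrak{gl}_n,\sigma)\hookrightarrow \CW^{-n'+1}(\mathfrak{gl}_{n'},\sigma')\otimes \mathcal H,
\]
matching the refinement $\sigma\to\sigma'$; such embeddings are known in the rectangular case by Arakawa--Molev and in the subregular case by Genra, and a uniform construction via iterated Miura operators is available for general $\sigma$ in $\mathfrak{gl}_n$. One then checks that the diagram
\[
\begin{tikzcd}
\CW^{-n+1}(\mathfrak{gl}_n,\sigma)\ar[r,"\Delta",hook]\ar[d,dashed,"\Gamma",hook] & \CW^{-n'+1}(\mathfrak{gl}_{n'},\sigma')\otimes \mathcal H\ar[d,"\Gamma'\otimes\mathrm{id}",hook] \\
\mathsf D^{\mathrm{ch}}(\widetilde\CM(Q))\otimes J_B\ar[r,"\bm\nu_i",hook] & \mathsf D^{\mathrm{ch}}(\widetilde\CM(Q'))\otimes\bar\alpha\otimes J_B'
\end{tikzcd}
\]
commutes, which by injectivity of $\bm\nu_i$ determines $\Gamma$; the level and Heisenberg matching is fixed by tracking the data \eqref{eq:general alpha} and Remark \ref{rmk:alpha bar_general}, verified case by case in the examples.

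The main obstacle is a uniform identification of the strong generators of $\CW^{-n+1}(\mathfrak{gl}_n,\sigma)$ as concrete path operators on the $T^\sigma_{[1^n]}[\SU(n)]$ quiver whose images under $\bm\nu_i$ are Miura-images of the strong generators of $\CW^{-n'+1}(\mathfrak{gl}_{n'},\sigma')$. For the worked examples this is done by listing path operators $G_{(i,j)}$ and the fermionic/Heisenberg corrections needed to close OPEs modulo BRST-exact terms, then using the reduction formulae \eqref{eq:YfI}--\eqref{eq:tr(XY)-tr(IJ)} to verify agreement with the Miura side. In general, this requires organizing the strong generators by pairs of boxes in the Young diagram of $\sigma$ (e.g.\ using the Brundan--Kleshchev or Ueda-type presentation), constructing canonical path operators accordingly, and verifying their reduction images; the OPE verification is then a bookkeeping exercise using the moment-map relations and the BRST-exactness criterion of \cref{subsec:Some BRST exact elements}.

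Finally, the stress-energy matching $T=T_{\mathrm{KRW}}+\frac{1}{2n}J_B^2$ is handled by a uniqueness argument of the type used in \cref{sec:quiverred subreg}: the operator $T_{\mathrm{KRW}}+\frac{1}{2n}J_B^2$ is the unique stress-energy operator of $\CW^{-n+1}(\mathfrak{gl}_n,\sigma)$ under which the chosen strong generators carry the conformal weights dictated by the $\mathfrak{sl}_2$-grading of $\sigma$, and the quiver-side $T$ of \eqref{def:generalTtensor} inherits exactly those weights because the path operators have conformal weights given by their lengths plus Heisenberg corrections. Since their difference commutes with all strong generators, \cref{lem:criterion for two Virasoro coincide_1} forces them to agree, completing the induction.
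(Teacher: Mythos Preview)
The statement you are attempting to prove is labeled in the paper as a \emph{Conjecture}, not a theorem: the paper does not claim to have a proof of it. What the paper does is verify the statement in the special cases $\sigma=[1^n]$, $[2,1^{n-2}]$, $[2^2]$ (for $n=4$), and $[n-1,1]$, each by an explicit construction matched against a known Miura/parabolic-induction map, and then extrapolates the pattern into Conjecture~\ref{conj:W-alg embedding}. So there is no ``paper's own proof'' to compare against; you are proposing a route to something the authors leave open.

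Your outline is a reasonable strategy and is essentially the one the examples suggest, but it is not a proof as written. The genuine gap is exactly the one you flag yourself: you need, for arbitrary $\sigma$, (i) a parabolic-induction embedding $\Delta$ compatible with the combinatorial peeling $\sigma\to\sigma'$ at the level $k=-n+1$, and (ii) a uniform dictionary identifying the Brundan--Kleshchev/Ueda-type strong generators of $\CW^{-n+1}(\mathfrak{gl}_n,\sigma)$ with explicit BRST-closed path operators on the quiver, together with a verification that $\bm\nu_i$ carries these to the Miura images. Neither ingredient is supplied here beyond citation, and the second is the substantive content of the conjecture: in the worked examples the correction terms (fermion bilinears, $\mathsf b\partial\mathsf c$ contributions, Heisenberg shifts) were found by hand and checked case by case, and there is no general closed formula in the paper. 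The stress-energy matching via \cref{lem:criterion for two Virasoro coincide_1} is fine once $\Gamma$ exists and the generators are known to be primary of the correct weights, but that presupposes the hard step is already done.
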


The Nakajima quiver variety $\CM(Q,\mathbf v,\mathbf w)$ is an example of conical symplectic resolution, and $\widetilde\CM(Q,\mathbf v,\mathbf w)$ is a family of Poisson deformation of $\CM(Q,\mathbf v,\mathbf w)$. It is shown by Namikawa in \cite{namikawa2011poisson,namikawa2010poisson} that for every conical symplectic resolution $\pi: X\to Y$, one can associate
\begin{enumerate}
    \item a family of universal Poisson deformation $\mathcal X\to \mathbb B_X=H^2(X,\mathbb C)$,
    \item a family of universal Poisson deformation $\mathcal Y\to \mathbb B_Y=H^2(X,\mathbb C)/{\mathbb W}$,
    \item a map between universal Poisson deformations: $\bm{\pi}:\mathcal X\to \mathcal Y$ which is compatible with quotient map $\mathbb B_X\to \mathbb B_Y$, i.e. the diagram commutes
    \begin{equation}\label{cd:universal Poisson deformation}
    \begin{tikzcd}
    \mathcal{X} \ar[r] \ar[d,"\bm{\pi}" '] & \mathbb B_X \ar[d,"/\mathbb W"]\\
     \mathcal Y \ar[r] & \mathbb B_Y
    \end{tikzcd}
    \end{equation}
\end{enumerate}
Here $\mathbb W$ is a finite group with a linear action on $\mathbb B_X$. $\mathbb W$ is called the Namikawa-Weyl group. It is known that $\mathbb C[\mathcal Y]=\mathbb C[\mathcal X]^{\mathbb W}$. Denoting by $\mathbb S$ the one dimensional torus with the conical action on $\pi: X\to Y$, then the $\mathbb S$-action extends naturally to every space in the diagram \eqref{cd:universal Poisson deformation} such that all maps are equivariant. We note that $\mathbb S$ acts on $\mathbb B_X$ with weight $2$.

Moreover, it is also known from the work of Losev that the Namikawa-Weyl group acts on the universal quantization as well \cite{losev2016deformations}. Namely there is a sheaf of $\hbar$-adic associative $\mathbb C[\mathbb B_X]$-algebras $\mathscr D_{\mathcal X,\hbar}$ on $\mathcal X$ such that $\mathscr D_{\mathcal X,\hbar}/\hbar\mathscr D_{\mathcal X,\hbar}\cong \mathscr O_{\mathcal X}$, together with an action $\mathbb W$ on $\mathscr D_{\mathcal X,\hbar}$ which satisfies the following properties:
\begin{enumerate}
    \item The induced $\mathbb W$-action on $\mathscr O_{\mathcal X}$ agrees with the one constructed by Namikawa.
    \item The $\mathbb S$-action naturally extends to $\mathscr D_{\mathcal X,\hbar}$ such that $\deg\hbar=1$. Set 
    $$
    \mathsf D(\mathcal X):=\mathscr D_{\mathcal X,\hbar}(\mathcal X)_{\mathbb S\text{-fin}}/(\hbar-1)\;,
    $$ 
    then every quantization of $Y$ is isomorphic to $\mathsf D(\mathcal X)_{\lambda}:=\mathsf D(\mathcal X)/(\mathfrak m_{\lambda}\mathsf D(\mathcal X))$ for some $\lambda\in \mathbb B_X$, where $\mathfrak m_{\lambda}$ is the maximal ideal of $\mathbb C[\mathbb B_X]$ corresponding to $\lambda$. 
    \item $\mathsf D(\mathcal X)_\lambda\cong \mathsf D(\mathcal X)_{\lambda'}$ if and only if $\lambda'\in\mathbb W\lambda$.
\end{enumerate}

The Namikawa-Weyl group $\mathbb W$ of the Nakajima quiver variety $\CM$ corresponding to the $T^{\sigma}_{[1^n]}[\SU(n)]$ quiver is $\mathfrak S_n$ \cite{mcgerty2019springer}. There is a linear map $\mathbb C^{Q_0}\to H^2(\CM,\mathbb C)$ by the universal property. Precisely, this map is given by sending the base vector of $i^{\rm th}$ node to $c_1(\mathscr V_i)$, where $\mathscr V_i$ is the $i^{\rm th}$ tautological bundle on $\CM$. It is not hard to see that this map is an isomorphism \footnote{By the Kirwan surjectivity \cite{mcgerty2018kirwan}, this map is surjective. On the other hand, in view of the proof of \cite[Lemma 5.1]{mcgerty2019springer}, there is only one codimension $2$ symplectic leave in $\CM^0$, and its transverse slice is an $A_{n-1}$ singularity whose preimage in $\CM$ is the minimal resolution of $A_{n-1}$ singularity, such that $c_1(\mathscr V_i)[\mathbb P^1_j]=\delta_{i,j}$, where $\mathbb P^1_j$ is the $j^{\rm th}$ copy of $\mathbb P^1$ in the central fiber of resolution of $A_{n-1}$ singularity. This implies that $\mathbb C^{Q_0}\to H^2(\CM,\mathbb C)$ is injective. Therefore it is an isomorphism.}. It follows that $\widetilde\CM$ is the universal Poisson deformation of $\CM$. According to \cite[Theorem 5.3.3]{losev2012isomorphisms}, there is an embedding of finite $\mathcal{W}$-algebra $\CW(\mathfrak{sl}_n,\sigma)$ into $\mathsf D(\widetilde\CM)$, moreover the image coincides with $\mathsf D(\widetilde\CM)^{\mathbb W}$.

\begin{conjecture}\label{conj:Namikawa-Weyl group action}
Let $(Q,\mathbf v,\mathbf w)$ be a $T^{\sigma}_{[1^n]}[\SU(n)]$ quiver, then the Namikawa-Weyl group $\mathbb W=\mathfrak S_n$ acts on $\mathsf D^{\mathrm{ch}}(\widetilde\CM(Q,\mathbf v,\mathbf w))$ by a vertex algebra automorphism, such that the orbifold subalgebra $\mathsf D^{\mathrm{ch}}(\widetilde\CM(Q,\mathbf v,\mathbf w))^{\mathbb W}\otimes \mathcal H^n$ is the image of the embedding $\Gamma$ in  Conjecture \ref{conj:W-alg embedding}.
\end{conjecture}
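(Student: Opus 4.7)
The plan is to lift Losev's theorem \cite{losev2012isomorphisms}---which identifies $\mathsf D(\widetilde\CM)^{\mathbb W}$ with the image of the finite $\CW$-algebra $\CW(\mathfrak{sl}_n,\sigma)$---to the chiral setting. First I would construct the $\mathbb W$-action on the sheaf $\mathscr D^{\mathrm{ch}}_{\widetilde\CM,\hbar}$ by leveraging the fact that, on each affine open $U\subset \widetilde\CM$, the chiral sheaf is built by BRST reduction of the $\hbar$-adic $\beta\gamma\bc$ system on a preimage $\widetilde U\subset \mathfrak R$, a construction natural under symplectic automorphisms. Since Losev's $\mathbb W$-action on $\mathscr D_{\widetilde\CM,\hbar}$ arises from the universal property of quantization and is realized locally by symplectomorphisms of $\widetilde\CM$, those symplectomorphisms lift---up to choices that do not affect $\mathrm{GL}(\mathbf v)$-equivariant BRST cohomology---to the prequotient, inducing an action on $\mathscr D^{\mathrm{ch}}_{\widetilde\CM,\hbar}$. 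Taking $\mathbb S$-finite global sections and specializing $\hbar=1$ produces the desired vertex algebra automorphism of $\mathsf D^{\mathrm{ch}}(\widetilde\CM)$.

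\textbf{Identifying the orbifold.} To show $\Gamma(\CW^{-n+1}(\mathfrak{gl}_n,\sigma))\otimes \mathcal H^n$ lands in the invariant subalgebra, I would apply the $C_2$-reduction: by the commuting diagrams in \cref{subsec:reduction for VOA} together with Losev's classical result, the image of $\Gamma$ modulo $F^1$ equals $\mathbb C[\widetilde\CM]^{\mathbb W}\otimes \mathbb C[b_0]$, which is manifestly $\mathbb W$-invariant. Combined with strong generation coming from the iterated Miura realizations in \cref{sec:reduction_examples}, this yields $\mathbb W$-invariance of $\Gamma(\CW)$ at the chiral level. For the reverse inclusion I would exploit the identification $\mathrm{gr}_F\mathsf D^{\mathrm{ch}}(\widetilde\CM)\cong \mathbb C[\jet\widetilde\CM]$ from \cref{subsec:Associated variety}. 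Since taking $\infty$-jets commutes with taking $\mathbb W$-invariants for polynomial actions, one has $\mathrm{gr}_F(\mathsf D^{\mathrm{ch}}(\widetilde\CM)^{\mathbb W})\cong \mathbb C[\jet\widetilde\CM]^{\mathbb W}\cong \jet(\mathbb C[\widetilde\CM]^{\mathbb W})$, which by Losev's theorem matches the associated graded of $\Gamma(\CW)\otimes \mathcal H^n$. A standard filtered argument in the spirit of \cref{prop:injectivity criterion} then upgrades equality of associated gradeds to equality of the filtered vertex algebras.

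\textbf{Main obstacle.} The hardest step is constructing the $\mathbb W$-action in the first place. The Namikawa action on $\widetilde\CM$ is not by global algebraic automorphisms of the universal family---it is built from the monodromy action on the base $\mathbb B_X=H^2(\CM,\mathbb C)$ combined with birational transformations on total space---and lifting it to the $\mathrm{GL}(\mathbf v)$-equivariant prequotient $\mathfrak R$ is not automatic. A cleaner alternative would be to build the $\mathfrak S_n$-action directly from the Miura realization of $\CW^{-n+1}(\mathfrak{gl}_n,\sigma)$ as a product $(\partial_z+E^{[1]})\cdots(\partial_z+E^{[n]})$, where $\mathfrak S_n$ permutes the factors $E^{[i]}$, and then propagate this action across the quiver-reduction embedding $\bm\nu$ back to $\mathsf D^{\mathrm{ch}}(\widetilde\CM)$. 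The difficulty is then to prove that the permutation action on the Miura side descends through $\bm\nu$ in a well-defined way, independent of the cascade of quiver reductions chosen---a compatibility that would require understanding how different reduction orders are related by vertex algebra automorphisms of the free-field side.
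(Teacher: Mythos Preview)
The statement you are attempting to prove is explicitly labeled a \emph{Conjecture} in the paper; there is no proof given, and your task description of comparing to ``the paper's own proof'' does not apply here. The paper presents this as an open expectation motivated by the finite (non-chiral) analogue due to Losev, and offers no argument beyond that analogy.

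Your proposal is therefore not a proof but a plausible strategy sketch, and you are candid about this: you yourself flag the construction of the $\mathbb W$-action as ``the hardest step'' and note that the Namikawa--Weyl action is realized via monodromy and birational transformations rather than by honest algebraic automorphisms of $\widetilde\CM$. That is precisely the gap that keeps this a conjecture. Your suggestion to instead build $\mathfrak S_n$ by permuting Miura factors is natural, but as you also observe, one would then need to show this permutation action preserves the image of $\bm\nu$ and is independent of the reduction cascade; neither is established in the paper.

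Two further issues: your ``identifying the orbifold'' step invokes $\mathrm{gr}_F\mathsf D^{\mathrm{ch}}(\widetilde\CM)\cong \mathbb C[\jet\widetilde\CM]$, but in the paper the relevant map on $C_2$-algebras (equation~\eqref{eq:map from C_2 alg to function ring of quiver variety}) is itself only conjectured to be an isomorphism. And the embedding $\Gamma$ whose image you wish to characterize is the content of Conjecture~\ref{conj:W-alg embedding}, also unproven for general $\sigma$. So your argument stacks one conjecture on two others. As a roadmap this is reasonable and aligned with the paper's expectations, but it is not a proof, and the paper does not claim one.
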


Finally, the embedding $\Gamma$ should be compatible with Miura maps of affine $\mathcal{W}$-algebras in the following sense. For a partition $\sigma=[l^{w_{l}},\cdots,1^{w_1}]$, one can associate a sequence $q_1\ge \cdots\ge q_l>0$ as follows
\begin{align*}
    q_i=\sum_{j=i}^lw_j\;.
\end{align*}
We note that $n=q_1+\cdots+q_l$. There exists an embedding of vertex algebras called \textit{Miura map} \cite{kac2004quantum}:
\begin{align*}
    \Xi:\CW^k(\mathfrak{gl}_n,\sigma)\hookrightarrow \bigotimes_{i=1}^l V^{k_i}(\mathfrak{gl}(q_i))\;,\quad\text{ where }k_i=k+n-q_i \;.
\end{align*}
There is a Wakimoto realization for each component $$\Omega_i:V^{k_i}(\mathfrak{gl}(q_i))\hookrightarrow D^{\mathrm{ch}}(T^*\mathbb C^{q_i(q_i-1)/2})\otimes \mathcal H(\mathbb C^{q_i},\kappa_i),$$
where $\kappa_i$ is the inner form on $\mathbb C^{q_i}=\oplus_{s=1}^{q_i}\mathbb C\cdot a_s$ such that $\kappa_i(a_s,a_t)=(k_i+q_i)\delta_{s,t}$. Composing the Miura map with Wakimoto realizations, we get an embedding 
\begin{align}\label{eq:compose Miura with Wakimoto}
\Omega\circ\Xi: \CW^k(\mathfrak{gl}_n,\sigma)\hookrightarrow D^{\mathrm{ch}}\left(T^*\mathbb C^{\sum_{i=1}^l q_i(q_i-1)/2}\right)\otimes \mathcal H(\mathbb C^{n},\kappa)\;,
\end{align}
where $\kappa$ is the inner form on $\mathbb C^{n}=\oplus_{s=1}^{n}\mathbb C\cdot a_s$ such that $\kappa(a_s,a_t)=(k+n)\delta_{s,t}$. We note that 
\begin{align}
    \sum_{i=1}^l \frac{q_i(q_i-1)}{2}=\frac{1}{2}\dim\CM(Q,\mathbf v,\mathbf w)=\mathbf v\cdot(\mathbf w-\frac{1}{2}\mathsf C\mathbf v)\;.
\end{align}
In our case, where $k=-n+1$, so the Heisenberg factor in \eqref{eq:compose Miura with Wakimoto} is the tensor product of $n$ copies of level $1$ Heisenberg algebras $\mathcal H^1$, so we can rewrite \eqref{eq:compose Miura with Wakimoto} into
\begin{align}\label{eq:compose Miura with Wakimoto,k=-n+1}
    \Omega\circ\Xi: \CW^{-n+1}(\mathfrak{gl}_n,\sigma)\hookrightarrow D^{\mathrm{ch}}\left(T^*\mathbb C^{\mathbf v\cdot(\mathbf w-\frac{1}{2}\mathsf C\mathbf v)}\right)\otimes \bigotimes_{i=1}^n\mathcal H^1.
\end{align}
On the other hand, assuming Conjecture \ref{conj:W-alg embedding}, then composing the embedding $\Gamma$ in \eqref{eq:W-alg embedding} with the free-field realization \eqref{eq:free-field realization}, we also get a vertex algebra embedding:
\begin{align}\label{eq:conjectured free-field realization of W-alg}
    \Theta\circ\Gamma:  \CW^{-n+1}(\mathfrak{gl}_n,\sigma)\hookrightarrow D^{\mathrm{ch}}\left(T^*\mathbb C^{\mathbf v\cdot(\mathbf w-\frac{1}{2}\mathsf C\mathbf v)}\right)\otimes \bigotimes_{i=1}^n\mathcal H^{\lambda_i}.
\end{align}

\begin{conjecture}
Assuming Conjecture \ref{conj:W-alg embedding}, then $\Omega\circ\Xi$ agrees with $\Theta\circ\Gamma$, up to an isomorphism between right hand sides of \eqref{eq:compose Miura with Wakimoto,k=-n+1} and \eqref{eq:conjectured free-field realization of W-alg}.
\end{conjecture}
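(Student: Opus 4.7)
The plan is to prove the matching by induction on the number of parts $l$ in the partition $\sigma = [l^{w_l}, \ldots, 1^{w_1}]$, using the compatibility between the recursive structure of the Miura map (built from parabolic induction) and the recursive structure of the quiver reduction procedure from \cref{sec:quiver reduction}.

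First, I would recall that the Miura map $\Xi$ factors as an iterated composition of parabolic induction maps of the form $\Delta_i : \CW^{k}(\mathfrak{gl}_{N}, \sigma) \hookrightarrow \CW^{k'}(\mathfrak{gl}_{N-q_i}, \sigma') \otimes V^{k_i}(\mathfrak{gl}(q_i))$, and then each $V^{k_i}(\mathfrak{gl}(q_i))$ factor is further free-field realized via the Wakimoto map $\Omega_i$ built in \cref{T[SU(n)]-revisited}. On the quiver side, the $T^\sigma_{[1^n]}[\SU(n)]$ quiver has a natural stratification: gauge nodes can be grouped into ``legs'' of lengths $q_1 \ge q_2 \ge \cdots \ge q_l$, and deleting the longest outermost leg together with the adjacent framing yields the quiver for $T^{\sigma'}_{[1^{n-q_l}]}[\SU(n-q_l)]$ with an extra framing corresponding to the node where the leg was attached. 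The first key step is thus to perform the quiver reductions in an order that collapses one leg at a time, starting from the outermost node and proceeding inward, so that the intermediate reductions produce exactly a $T[\SU(q_i)]$ sub-quiver tensored with the smaller $T^{\sigma'}_{[1^{n-q_i}]}[\SU(n-q_i)]$ quiver plus a Heisenberg field.

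Second, within each block of reductions corresponding to a single leg, the computation reduces to the one already carried out in \cref{T[SU(n)]-revisited}, where successive reductions of the $T[\SU(q_i)]$ sub-quiver reproduce the Wakimoto realization $\Omega_i$ together with $q_i$ independent level one Heisenberg fields. The cross-block matching is controlled by verifying that our reduction algorithm, applied to the junction node linking a leg to the rest of the quiver, reproduces the parabolic induction formula for $\CW$-algebras. The sub-regular and minimal examples in \cref{sec:quiverred subreg} and \cref{sec:quiverred211111} already confirm this matching at the level of individual generators, and their calculations extend structurally: the image $\Gamma(W)$ of a strong generator $W$ of $\CW^{-n+1}(\mathfrak{gl}_n, \sigma)$ under the composition $\Theta \circ \Gamma$ can be computed by tracking the path operator representing $W$ through each reduction step, and at each step the transformation law \eqref{eq:fields under the reduction}, \eqref{eq:general h'_j}, \eqref{eq:general alpha} is precisely what is needed to match the parabolic induction formula of Genra \cite{genra2020screening} followed by Wakimoto.

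Third, the matching between the two targets in \eqref{eq:compose Miura with Wakimoto,k=-n+1} and \eqref{eq:conjectured free-field realization of W-alg} requires identifying, on the one hand, the $\beta\gamma$ systems (both have the same total dimension $\mathbf v \cdot (\mathbf w - \tfrac{1}{2}\mathsf C \mathbf v)$), and, on the other hand, the $n$-dimensional Heisenberg lattices. The $\beta\gamma$ identification is immediate once the order of reductions is fixed, since each Wakimoto $\beta\gamma$ pair is literally one of the pairs produced by the reduction of a node in the corresponding $T[\SU(q_i)]$ sub-quiver.

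The main obstacle will be the precise identification of the Heisenberg data. The Heisenberg fields $\alpha_j$ produced by our reduction algorithm have levels dictated by the local combinatorics of the sub-quivers via Remarks \ref{rmk:alpha bar} and \ref{rmk:alpha bar_general}, while the Heisenberg lattice on the Miura side is the orthogonal lattice with $\kappa(a_s, a_t) = (k_i + q_i)\delta_{s,t}$ on each block. Proving the conjecture therefore reduces to exhibiting an explicit $\mathrm{GL}(n, \mathbb C)$ change of basis on the total Heisenberg lattice that (a) matches the two inner product matrices, and (b) intertwines the two embeddings on the nose. This linear algebra statement can in principle be verified by computing the OPEs of the images of a distinguished set of strong generators of $\CW^{-n+1}(\mathfrak{gl}_n, \sigma)$ with each Heisenberg field in both realizations, and matching coefficients; the injectivity of both $\Omega \circ \Xi$ and $\Theta \circ \Gamma$ then upgrades a match on generators to a match of vertex algebra maps. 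The hardest combinatorial ingredient will be a uniform closed formula for the level matrix of the Heisenberg fields $\bar\alpha_i$ produced by iterated reduction, which I would conjecturally expect to equal the Cartan matrix of the Levi $\prod \mathfrak{gl}(q_i)$ after the aforementioned change of basis.
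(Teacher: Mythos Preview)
The statement you are attempting to prove is stated in the paper as a \emph{conjecture}, not a theorem; the paper provides no proof whatsoever, only motivating evidence from the worked examples in \cref{T[SU(n)]-revisited}, \cref{[2,2]-revisited}, \cref{sec:quiverred211111}, and \cref{sec:quiverred subreg}. There is therefore no paper proof to compare against.

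Regarding your proposal as a standalone attempt: the inductive strategy via parabolic induction is natural, and is indeed the mechanism underlying the paper's verification in the subregular case \eqref{reduction map for subregular}. However, your outline has substantive gaps beyond those you acknowledge. First, you invoke Conjecture~\ref{conj:W-alg embedding}, which merely asserts the \emph{existence} of an embedding $\Gamma$; to track $\Gamma$ through the quiver reduction you need explicit formulas for $\Gamma$ on strong generators of $\CW^{-n+1}(\mathfrak{gl}_n,\sigma)$ for general $\sigma$, and the conjecture provides none --- in the paper, such formulas are constructed by hand case by case. Second, your description of the $T^\sigma_{[1^n]}[\SU(n)]$ quiver as having ``legs of lengths $q_1 \ge \cdots \ge q_l$'' is not accurate: this is a single linear $A_{n-1}$ quiver with gauge dimensions \eqref{v_i in terms of w_i}, and any block structure only emerges \emph{after} a carefully chosen sequence of reductions. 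Establishing that some sequence of reductions produces precisely an intermediate $T^{\sigma'}_{[1^{n-q_i}]}[\SU(n-q_i)]$ quiver tensored with a $T[\SU(q_i)]$ factor is itself a nontrivial combinatorial claim you do not verify; in the $[2^2]$ example the paper instead reduces the \emph{middle} node to split the quiver, which does not fit your ``peel off the outermost leg'' description. Third, as you yourself concede, the Heisenberg identification remains open, and your expectation about the level matrix is itself conjectural. Your proposal is a plausible sketch of a program, not a proof.
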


\section{Comparison with 4d \texorpdfstring{$\mathcal{N}=2$}{N=2} VOAs}\label{sec:4d-VOA}

The 3d $\mathcal{N}=4$ balanced linear quiver gauge theories considered in this paper
can be obtained as a dimensional reduction of 4d $\mathcal{N}=2$  quiver gauge theories.
When the quiver is balanced, 
the resulting 4d $\mathcal{N}=2$ quiver gauge theory is superconformal, and 
the work of \cite{Beem:2013sza,Beem:2014kka,Beem:2014rza}
associated a VOA with the 4d theory.
It is therefore a natural question to ask if there are any relations between the two VOAs.
Indeed, the 3d and 4d theories have the same Higgs branches,
and it is natural to conjecture that the two VOAs share the same associated varieties.

There are, however, important differences between the two. 
In 4d, the VOA is defined by 
taking cohomologies with respect to a nilpotent supercharge, which involves a generator of the superconformal charge.
For this reason, we need to impose superconformality of the 4d theory, and this leads to the balanced condition on the quiver. The gauge groups of the quivers are taken to be $\SU$ gauge groups.

In 3d, in contrast, the gauge groups at the quiver vertices are taken to be $\U$ instead of $\SU$ from 4d; this is because the $\U(1)$ part has non-trivial dynamics in 3d. Moreover, the VOA is defined even for non-balanced quivers, as we have seen above. However, 
we have to include extra fermionic matter to cancel the anomalies on the 2d boundary.
As we have seen in the discussion above,
the presence of such fermionic modes for 3d VOAs play a crucial roles in the formulation of our VOAs; for example, some of our currents are not BRST-invariant without such fermionic modes.

In order to make a comparison between 3d and 4d VOAs, let us assume that $Q$ is a balanced quiver, and for simplicity we also assume that the Cartan matrix $\mathsf C$ is invertible. One can then associate three VOAs to $Q$:
\begin{enumerate}
    \item $\CV(Q)$: the BRST reduction of $\beta\gamma$ system $\otimes$ Heisenberg fields $\{h_i,i\in Q_0\}$.
    \item $\CVdel(Q)$: boundary H-twisted VOA of 3d $\CN=4$ gauge theory associated to $Q$. We use the canonical choice of free fermions in \eqref{def:Ff} to cancel the $\SU$ anomalies, and the definition of $\CVdel(Q)$ involves a choice of another set of free fermions that cancel $\U(1)$ anomalies.
    \item $\CV_{\mathrm{4d}}(Q)$: the VOA of the 4d $\CN=2$ gauge theory associated to $Q$ with gauge group $\prod_{i\in Q_0}\SU(\mathbf v_i)$. By definition $\CV_{\mathrm{4d}}(Q)=H^{\infty/2+0}_{\mathrm{BRST}}(\mathfrak{sl}(\mathbf v),\beta\gamma)$\;.
\end{enumerate}
The remaining $(\mathbb C^{\times})^{Q_0}$-symmetry gives a weight decomposition on $\CV_{\mathrm{4d}}(Q)$:
\begin{align*}
    \CV_{\mathrm{4d}}(Q)\cong \bigoplus_{\lambda\in \mathbb Z^{Q_0}}\CV_{\mathrm{4d}}(Q)_{\lambda}\;,
\end{align*}
where $\CV_{\mathrm{4d}}(Q)_{\lambda}$ is the weight $\lambda$ eigenspace. Now consider the following module of $\CV(Q)$:
\begin{align*}
    \CV_\lambda(Q)=H^{\infty/2+0}_{\mathrm{BRST}}(\mathfrak{gl}(\mathbf v),\beta\gamma\otimes \mathcal H_{\lambda})\;,
\end{align*}
$\mathcal H_{\lambda}$ is the Verma module of the Heisenberg algebra $\mathcal H=\{h_i,i\in Q_0\}$ generated from a vector $|\lambda\rangle$ such that $h_{i,(0)}|\lambda\rangle=\lambda_i|\lambda\rangle$. $\CV(Q)$, $\CVdel(Q)$, and $\CV_{\mathrm{4d}}(Q)$ are related by the followings:
\begin{itemize}
    \item[(1)] There is a vertex algebra map $\CV(Q)\to \CV_{\mathrm{4d}}(Q)_{0}$, and there are $\CV(Q)$-module maps $\CV_\lambda(Q)\to \CV_{\mathrm{4d}}(Q)_{-\lambda}$. To see this map, one can consider the vertex algebra extension $\mathcal H\subset \mathcal H\ltimes \mathbb C[{\jet} (\mathbb C^{\times})^{Q_0}]$, where $$\mathbb C[{\jet} (\mathbb C^{\times})^{Q_0}]=\mathbb C[A_{i,(-1)}^{-1},A_{i,(-n)}\:|\:i\in Q_0, n\in \mathbb Z_{\ge 1}]|0\rangle$$ is the function ring on the $\infty$-jet bundle of the central tori $(\mathbb C^{\times})^{Q_0}\subset \GL(\mathbf v)$, regarded as a commutative vertex algebra, and $\mathcal H$ acts on $\mathbb C[{\jet} (\mathbb C^{\times})^{Q_0}]$ via OPEs
    \begin{align}
        h_i(z)(A_j)^n(w)\sim \frac{n\delta_{i,j}(A_j)^n(w)}{z-w}\;, \quad \forall n\in \mathbb Z\;.
    \end{align}
    Then we have a spectral sequence $E^{p,q}$ with second page
    \begin{align}
        E^{p,q}_2=H^{\infty/2+q}_{\mathrm{BRST}}(\mathfrak{sl}(\mathbf v),H^{\infty/2+p}_{\mathrm{BRST}}(\mathfrak{gl}(1)^{Q_0},\beta\gamma\otimes \mathcal H\ltimes \mathbb C[{\jet} (\mathbb C^{\times})^{Q_0}]))\;.
    \end{align}
    Notice that the semidirect product $\widetilde\mu_{\mathrm{ch}}({\mathfrak{gl}(1)^{Q_0}})\ltimes \mathbb C[{\jet} (\mathbb C^{\times})^{Q_0}]))$ between the current of $\mathfrak{gl}(1)^{Q_0}$ and its module $\mathbb C[{\jet} (\mathbb C^{\times})^{Q_0}]$ is a quantization of the $\infty$-jet bundle of $T^*(\mathbb C^{\times})^{Q_0}$; then we have \footnote{Consider $\varepsilon:\beta\gamma\hookrightarrow\beta\gamma\otimes \mathcal H\ltimes \mathbb C[{\jet} (\mathbb C^{\times})^{Q_0}]$, $\varepsilon(x_\lambda)=\exp\left(-\sum_{i\in Q_0}\sum_{n=1}^{\infty}\mu_{\mathrm{ch},i,(n)}A_{i,(-n-1)}A_{i,(-1)}^{-1}\right)(x_\lambda\otimes |-\lambda\rangle)$, where $\mu_{\mathrm{ch}}$ is the non-extended chiral moment map, and $x_\lambda$ belongs to the weight $\lambda$ piece of $\beta\gamma$, and $|-\lambda\rangle=\prod_{i\in Q_0}A_{i,(-1)}^{-\lambda_i}|0\rangle$. $\varepsilon$ is a vertex algebra map, an intuitive way to see this is that $\varepsilon$ can be regarded as the restriction of the vertex algebra ``automorphism'' $\exp\left(-\sum_{i\in Q_0}:\mu_{\mathrm{ch},i}\log A_i:_{(0)}\right)$ of $\beta\gamma\otimes \mathcal H\ltimes \mathbb C[{\jet} (\mathbb C^{\times})^{Q_0}]$ to $\beta\gamma$. Then $\beta\gamma\otimes \mathcal H\ltimes \mathbb C[{\jet} (\mathbb C^{\times})^{Q_0}]=\varepsilon(\beta\gamma)\otimes \widetilde\mu_{\mathrm{ch}}\ltimes \mathbb C[{\jet} (\mathbb C^{\times})^{Q_0}]$, thus $H^{\infty/2+p}_{\mathrm{BRST}}(\mathfrak{gl}(1)^{Q_0},\beta\gamma\otimes \mathcal H\ltimes \mathbb C[{\jet} (\mathbb C^{\times})^{Q_0}])\cong \beta\gamma\otimes H^{\infty/2+p}_{\mathrm{BRST}}(\mathfrak{gl}(1)^{Q_0},\widetilde\mu_{\mathrm{ch}}\ltimes  \mathbb C[{\jet} (\mathbb C^{\times})^{Q_0}])$, which is isomorphic to $\beta\gamma$ when $p=0$ and zero otherwise.}
    \begin{align*}
        H^{\infty/2+p}_{\mathrm{BRST}}(\mathfrak{gl}(1)^{Q_0}, \beta\gamma\otimes \mathcal H\ltimes \mathbb C[{\jet} (\mathbb C^{\times})^{Q_0}])\cong \begin{cases}
            \beta\gamma\;, & q=0\;\\
            0\;, & q\neq 0\;.
        \end{cases}
    \end{align*}
    Therefore $E^{p,q}$ collapse at the second page and converges to $H^{\infty/2+p+q}_{\mathrm{BRST}}(\mathfrak{gl}(\mathbf v),\beta\gamma\otimes \mathcal H\ltimes \mathbb C[{\jet} (\mathbb C^{\times})^{Q_0}])$, thus we get an isomorphism:
    \begin{align*}
        H^{\infty/2+\bullet}_{\mathrm{BRST}}(\mathfrak{gl}(\mathbf v),\beta\gamma\otimes \mathcal H\ltimes \mathbb C[{\jet} (\mathbb C^{\times})^{Q_0}])\cong H^{\infty/2+\bullet}_{\mathrm{BRST}}(\mathfrak{sl}(\mathbf v),\beta\gamma)\;.
    \end{align*}
    Composing with the vertex algebra map $\mathcal H\to \mathcal H\ltimes \mathbb C[{\jet} (\mathbb C^{\times})^{Q_0}]$ the module map $\mathcal H_\lambda\to \mathcal H\ltimes \mathbb C[{\jet} (\mathbb C^{\times})^{Q_0}]_\lambda$, we get the desired maps between BRST reductions:
    \begin{align}
        \CV(Q)\to \CV_{\mathrm{4d}}(Q)_0\;,\quad \CV_\lambda(Q)\to \CV_{\mathrm{4d}}(Q)_{-\lambda}\;.
    \end{align}
    Since we assume that the Cartan matrix $\mathsf C$ is invertible, $\CV(Q)$ becomes a VOA and we can talk about its character. To make an explicit comparison between characters, we add the nonzero cohomology degrees, i.e. with $\mathsf b\mathsf c$ ghosts included, and consider the characters $\chi_{\CV^\bullet}(q):=\mathrm{Tr}_{\CV^\bullet}(-1)^Fq^{L_0-c/24}$. Then the characters of $\CV^\bullet_\lambda(Q)$ and $\CV^\bullet_{\mathrm{4d}}(Q)$ are related by
    \begin{align*}
        \chi_{\CV^\bullet_\lambda(Q)}(q)=\eta(q)^{|Q_0|} \chi_{\CV^\bullet_{\mathrm{4d}}(Q)_{-\lambda}}(q)\;.
    \end{align*}
    \item[(2)] There is an isomorphism
\begin{align*}
    \CVdel(Q)\cong \bigoplus_{\lambda\in \mathbb Z^r}\CV_{\sigma(\lambda)}(Q)\otimes \mathcal H^{\perp}_{\sigma^{\perp}(\lambda)}\;,
\end{align*}
where $r$ is the number of pairs of $\chi\psi$ fermions and $\sigma:\mathbb Z^r\to \mathbb C^{Q_0}$ is the map of charges which is induced from the choice of anomaly cancellation map $\{h_i\}_{i\in Q_0}=\mathcal H\to (\mathcal H^1)^{\otimes r}=\{\chi_j\psi_j\}_{j=1,\cdots,r}$. $\mathcal H^{\perp}$ is the Heisenberg algebra strongly generated by the subspace of $\bigoplus_{j=1}^r\mathbb C\cdot\chi_j\psi_j$ perpendicular to the image of $\mathcal H$, and $\sigma^{\perp}: \mathbb Z^r\to \mathbb C^{r-\Qzero}$ is the induced map for charges.
\end{itemize}

\section{Discussions}\label{sec:discussions}

In this paper we have studied VOAs arising from non-Abelian quiver gauge theories, regarded as ``chiralizations'' of extended quiver varieties. In particular, by localization of quiver vertex algebras, we developed a systematic algorithm for the  reduction of general quiver diagrams which produces a free field realization of the associated VOAs. While some of the mathematical points raised here will be discussed
in our companion paper \cite{math_draft}, there are still many open questions left for future research.
Let us comment here on some of them.

So far, we have concentrated on the H-twisted VOAs of 3d quiver gauge theories. It would nevertheless be interesting to establish a systematic construction of C-twisted VOAs and to then verify the mirror symmetry for the resulting VOAs (which exchanges $\rho$ and $\sigma$ for the $T_{\rho}^{\sigma}[\SU(N)]$ theories).

Another natural direction is to consider 3d $\mathcal{N}=4$ SQFTs with more general Lagrangians. Examples include quiver gauge theories with non-linear quivers, 
quiver gauge theories with more general gauge groups, such as $T[G]$ theories, or 
gauge theories with Chern-Simons terms. 
In relation to this point, 
there have been recent discussions \cite{Gang:2021hrd,Gang:2022kpe,Gang:2023rei,Gang:2023ggt,Ferrari:2023fez} of rational VOAs associated with ``rank $0$'' theories \cite{Terashima:2011qi,Gang:2015wya,Gang:2018huc,Assel:2018vtq,Gang:2021hrd},
whose Coulomb and Higgs branches both have  dimension $0$, and it would be nice to connect our discussion here more directly to the rational VOAs discussed in these references.

While in this paper we defined boundary VOAs from the data of quiver gauge theories, it would be interesting to similarly study 
VOAs with other infinite-dimensional algebras associated with quivers, such as (shifted)
quiver Yangians \cite{Li:2020rij,Galakhov:2020vyb,Galakhov:2021xum}
and quiver W-algebras \cite{Kimura:2015rgi}.

We have seen a close parallel between the geometry of the Higgs branch and the algebra of the VOA, and indeed some information of the latter can be recovered from the former. We expect, however, that some subtle details of the VOA, such as the detailed coefficients of the OPE of the generators, are not readily recovered from the Higgs branch. This suggests that the VOA contains more information than the Higgs branch. It would be interesting to clarify on this point.

\acknowledgments{We would like to thank Toshiro Kuwabara, Hiraku Nakajima, Shigenori Nakatsuka, and Satoshi Nawata for stimulating discussions, and Davide Gaiotto for comments on our manuscript. MS was supported in part by National Research Foundation (NRF) of Korea grant NRF-2022R1A6A3A03053466 and by the Shuimu Scholar program of Tsinghua University.
MY was supported in part by the JSPS Grant-in-Aid for Scientific Research (No.\ 19H00689, 19K03820, 20H05860, 23H01168), and by JST, Japan (PRESTO Grant No. JPMJPR225A, Moonshot R\&D Grant No.\ JPMJMS2061).
}

\appendix

\section{Glossary}\label{sec:glossary}

\begin{itemize}

\item $G$: gauge group

\item $M$: a symplectic representation specifying the matter content

\item $R$: a symplectic representation for Fermi multiplets needed for gauge anomaly cancellations on the boundary

\item $\CVdel=\CVdel[G,M,R]$: H-twisted boundary VOA for 3d $\mathcal{N}=4$ theory 

\item $\CVsdel=\CVsdel[M, R]$: VOA before BRST reduction (so that $\CV=\CVs \Kquotient G$)

\item $\CV=\CV(Q,\mathbf v,\mathbf w)$: vertex algebra associated to quiver data $(Q,\mathbf v,\mathbf w)$

\item $\CVs=\CVs(Q,\mathbf v,\mathbf w)$: vertex algebra associated to quiver data $(Q,\mathbf v,\mathbf w)$ before BRST reduction

\item $D^{\mathrm{ch}}(M)$: $\beta\gamma\mathsf b\mathsf c$ system associated to a symplectic super vector space $M$

\item $\mathcal H^k$: Heisenberg algebra of level $k$, i.e. the OPE is $h(z)h(w)\sim {k}/{(z-w)^2}$. 

\item $k_c$: critical level for a simple Lie algebra $\mathfrak{g}$, which equals to $-\mathsf h^{\vee}_{\mathfrak{g}}$

\item $\QBRST$: BRST charge (BRST differential)

\item $\JBRST$: BRST current 

\item $\widetilde C(\mathfrak g,V)$: BRST complex associated to a vertex algebra $V$ with a $\widehat{\mathfrak g}$-symmetry at the level $k_V=2k_c$

\item $C(\mathfrak g,V)$: relative BRST complex associated to $\widetilde C(\mathfrak g,V)$

\item $H^{\infty/2+\bullet}_{\mathrm{BRST}}(\mathfrak g,V)$: relative BRST cohomology associated to a vertex algebra $V$ with a $\widehat{\mathfrak g}$-symmetry at the level $k_V=2k_c$

\item $\CM=\CM(Q,\mathbf v,\mathbf w)$: Nakajima quiver variety associated to a quiver $Q$ with gauge dimension $\mathbf v$ and framing dimension $\mathbf w$

\item $\CM^0=\CM^0(Q,\mathbf v,\mathbf w)$: affine quiver variety associated to a quiver $Q$ with gauge dimension $\mathbf v$ and framing dimension $\mathbf w$

\item $\widetilde\CM=\widetilde\CM(Q,\mathbf v,\mathbf w)$: extended Nakajima quiver variety associated to a quiver $Q$ with gauge dimension $\mathbf v$ and framing dimension $\mathbf w$

\item $\widetilde\CM^0=\widetilde\CM^0(Q,\mathbf v,\mathbf w)$: extended affine quiver variety associated to a quiver $Q$ with gauge dimension $\mathbf v$ and framing dimension $\mathbf w$

\item $\jet X$: $\infty$-jet scheme of a variety $X$

\item $\mathbb C[X]$: the global ring of functions on a variety $X$

\item $\Zhu(V)$: Zhu's $C_2$ algebra for a vertex algebra $V$

\end{itemize}

\section{BRST reduction}\label{sec:BRST reduction}

In this appendix we recall the mathematical construction of the BRST reduction. 

Let $\mathfrak g$ be a reductive Lie algebra and $\kappa$ be a $\mathfrak g$-invariant symmetric inner form on $\mathfrak g$, then the current algebra $V^\kappa(\mathfrak g)$, as a vector space, is isomorphic to 
$$\mathbb C[J_\infty\mathfrak g]=\mathbb C[E^\alpha_{(-n)}\:|\:n\in \mathbb Z_{\ge 1}, E^\alpha\in\text{a basis of }\mathfrak g]|0\rangle$$ 
endowed with OPE
\begin{align*}
    E^{\alpha}(z)E^\beta(w)\sim \frac{\kappa^{\alpha,\beta}}{(z-w)^2}+\frac{f^{\alpha\beta}_\gamma E^\gamma(w)}{z-w}\;,
\end{align*}
where $\kappa^{\alpha,\beta}=\kappa(E^\alpha,E^\beta)$ is the matrix element of the inner form $\kappa$, and $f^{\alpha\beta}_\gamma E^\gamma=[E^\alpha,E^\beta]$ is the structure constant of $\mathfrak g$.

Denote by $\kappa_{\mathfrak g}$ the killing form on $\mathfrak g$. Suppose that $V$ is a vertex superalgebra such that there exists a vertex superalgebra map
\begin{align*}
    J_V: V^{-\kappa_{\mathfrak g}}(\mathfrak g)\to V \;.
\end{align*}
Then we define vertex superalgebra 
\begin{align}
    \widetilde C(\mathfrak g, V):=V\otimes \CV_{\mathsf b\mathsf c}(\mathfrak g) \;,
\end{align}
where $\CV_{\mathsf b\mathsf c}(\mathfrak g)$ is the $\mathsf b\mathsf c$ system of $\mathfrak g$, i.e.\ chiral differential operators $D^{\mathrm{ch}}(\Pi T^*\mathfrak g)$ of the odd symplectic vector space $\Pi T^*\mathfrak g$. Note that $\CV_{\mathsf b\mathsf c}(\mathfrak g)$ admits a vertex superalgebra map
\begin{align*}
    J_{\mathsf b\mathsf c}: V^{\kappa_{\mathfrak g}}(\mathfrak g)\to \CV_{\mathsf b\mathsf c}(\mathfrak g)\;, \quad J_{\mathsf b\mathsf c}(E^\alpha)=f^{\alpha\beta}{}_{\gamma}\mathsf c_{\beta}\mathsf b^\gamma.
\end{align*}
$\CV_{\mathsf b\mathsf c}(\mathfrak g)$ is graded by ghost numbers, i.e. $\deg \mathsf c=+1,\deg\mathsf b=-1$, so the image of $J$ is contained in the degree zero piece $\CV^0_{\mathsf b\mathsf c}(\mathfrak g)$. Consider the ghost number $1$ element 
\begin{align}
    \JBRST:=\mathsf c_\alpha\left(J_V(E^\alpha)+\frac{1}{2}J_{\mathsf b\mathsf c}(E^\alpha)\right)\in  \widetilde C^1(\mathfrak g, V)\;.
\end{align}
The operator $\mathcal{Q}:=\JBRST_{(0)}$ acts on $\widetilde C(\mathfrak g, V)$, and it squares to zero $\mathcal{Q}^2=0$, therefore $(\widetilde C(\mathfrak g, V), \mathcal{Q})$ is a chain complex of vertex superalgebra, which is graded by ghost numbers. We call it the BRST complex. 

Note that the linear map 
\begin{align}\label{level zero currents in BRST complex}
    E^\alpha\mapsto J(E^\alpha):=\mathcal{Q}(\mathsf b^\alpha)=J_V(E^\alpha)+J_{\mathsf b\mathsf c}(E^\alpha) \;,
\end{align}
is a vertex superalgebra chain complex map $$J: (V^0(\mathfrak g),0)\to (\widetilde C(\mathfrak g, V), \mathcal{Q})$$ where the LHS is endowed with trivial differential. In particular the Lie algebra $\mathfrak g$ acts on the complex $(\widetilde C(\mathfrak g, V), \mathcal{Q})$ by $E^\alpha\mapsto J(E^\alpha)_{(0)}$. 

The relative BRST complex is defined to be the subspace 
\begin{align}
    C(\mathfrak g, V):=\{v\in \widetilde C(\mathfrak g, V)\:|\: J(E^\alpha)_{(0)}v=\mathsf b^\alpha_{(0)}v=0,\forall \alpha\}\;,
\end{align}
which is a sub-complex because $[J(E^\alpha)_{(0)},\mathcal{Q}]=0$ and $[\mathsf b^\alpha_{(0)},\mathcal{Q}]=J(E^\alpha)_{(0)}$. We define the relative BRST cohomology to be
\begin{align}
    H^{\infty/2+\bullet}_{\mathrm{BRST}}(\mathfrak g,V):=H^{\bullet}(C(\mathfrak g, V), \mathcal{Q})\;.
\end{align}
This is a $\mathbb Z$-graded vertex superalgebra.

\subsection{Some BRST exact elements}\label{subsec:Some BRST exact elements}

For the sake of computation in the main contents of this paper, we present some BRST exact elements, i.e. $x\in C(\mathfrak g,V)$ such that $x=\mathcal Q(y)$ for some $y\in C(\mathfrak g,V)$. The first obvious one is $J(E^\alpha)$ in \eqref{level zero currents in BRST complex}.

Next, let us assume that $X\in \mathfrak g$ is an element in the center of $\mathfrak g$, then we take the corresponding ghost field $X_\alpha\mathsf b^\alpha$ and take its BRST variation:
\begin{align}
    \QBRST(X_\alpha\mathsf b^\alpha)=J_V(X).
\end{align}
Thus $J_V(X)$ is BRST exact. For example in the quiver setting, for every gauge node $i\in Q_0$, the operator $\mathrm{Tr}(\widetilde\mu_{\mathrm{ch},i})$ is BRST exact.

Next, let us consider a multiplet $\{\mathcal O_{\alpha}\in V\}_{\alpha=1,\cdots,\dim\mathfrak g}$ living in adjoint representation of the currents of $\mathfrak g$, i.e. 
\begin{align}
    J_V(E^\alpha)(z)\mathcal O_{\beta}(w)\sim\frac{f^{\alpha\gamma}{}_{\beta}\mathcal O_{\gamma}(w)}{z-w}~,
\end{align}
then we have
\begin{align}
    \mathcal Q(\mathcal O_{\alpha}\mathsf b^{\alpha})=:J_V(E^\alpha)\mathcal O_{\alpha}:
\end{align}
Thus $:J_V(E^\alpha)\mathcal O_{\alpha}:$ is BRST exact, or equivalently $:\mathcal O_{\alpha}J_V(E^\alpha):$ is BRST exact.

Furthermore, we can consider a multiplet $\{\mathcal S_{\alpha}\in V\}_{\alpha=1,\cdots,\dim\mathfrak g}$ living in adjoint representation of the currents of $\mathfrak g$ with anomalies, i.e. 
\begin{align}
    J_V(E^\alpha)(z)\mathcal S_{\beta}(w)\sim \frac{\lambda \delta^\alpha_\beta}{(z-w)^2}+\frac{f^{\alpha\gamma}{}_{\beta}\mathcal S_{\gamma}(w)}{z-w}~, \quad \lambda\in \mathbb C~,
\end{align}
then we have
\begin{align}
    \mathcal Q(\mathcal S_{\alpha}\mathsf b^{\alpha})=:J_V(E^\alpha)\mathcal S_{\alpha}:- \lambda :\mathsf b^\alpha\partial \mathsf c_\alpha:
\end{align}
Thus $:J_V(E^\alpha)\mathcal S_{\alpha}:$ equals to $\lambda :\mathsf b^\alpha\partial \mathsf c_\alpha:$ modulo a BRST exact term.

\section{Proof of \texorpdfstring{\eqref{cc agree}}{central charge matching}}
\label{sec:c_W}

To verify \eqref{cc agree},
let us write down the explicit form of each term on the right-hand-side of \eqref{eqn: central charge in Kac-Roan-Wakimoto} in terms of the partition $((N-1)^{w_{N-1}}\cdots i^{w_i}\cdots 1^{w_1})$ as follows (note $\mathsf{h}^{\vee}=-N$ for $\SU(N)$):
\begin{align}\label{eqn: cc_aux_1}
\frac{k\dim\mathfrak{g}}{k+\mathsf{h}^\vee}=-(N-1)(N^2-1)\;,
\end{align}
\begin{align}\label{eqn: cc_aux_2}
(x|x)=\sum_{i=1}^{N-1}w_i\frac{i^3-i}{12}\;,
\end{align}
\begin{equation}\label{eqn: cc_aux_3}
\begin{split}
&\sum_{\alpha\in S_+}(12m_{\alpha}^2-12m_{\alpha}+2)=\sum_{\substack{i>j\\i-j\text{ even}}}\left(w_iw_j\times\sum_{\substack{k\in[i-j,i+j)\\k\equiv 0\mod{2}}}(k^3-2k)\right)\\
&+\sum_{\substack{i>j\\i-j\text{ odd}}}\left(w_iw_j\times\sum_{\substack{k\in[i-j,i+j)\\k\equiv 1\mod{2}}}(k^3-2k-1)\right)+\sum_{i=1}^{N-1}\left(w_i^2 \times\sum_{\substack{k\in[0,2i)\\k\equiv 0\mod{2}}}\frac{k^3-2k}{2}\right)\;,
\end{split}
\end{equation}
\begin{align}\label{eqn: cc_aux_4}
\dim\mathfrak{g}_{1/2}=\sum_{\substack{i>j\\i-j\text{ odd}}}2jw_iw_j\;.
\end{align}
Plug \eqref{eqn: cc_aux_1}-\eqref{eqn: cc_aux_4} into \eqref{eqn: central charge in Kac-Roan-Wakimoto}, and use the identity $N=\sum_{i=1}^{N-1}iw_i$, and then we compute that
\begin{equation}\label{RHS of cc agree}
\begin{split}
\text{RHS of }\eqref{cc agree}=&-N^3+\sum_{i=1}^{N-1}\left(w_i(2i-i^3)+w_i^2(2i^3-i)\right)\\
&+\sum_{i>j}^{N-1}w_iw_j(j^3+3i^2j-2j) \;.
\end{split}
\end{equation}
On the other hand, we can use \eqref{v_i in terms of w_i} to rewrite \eqref{eqn: central charge for linear quivers} in terms of $w_i$. After doing some gymnastics we arrive at the equation
\begin{equation}
\text{LHS of }\eqref{cc agree}=\text{RHS of }\eqref{RHS of cc agree}\;.
\end{equation}
Thus \eqref{cc agree} holds.

\section{Minimal generators in terms of paths}\label{sec:reduction}

We have seen in section \ref{sec:BRST_operators} that we can restrict to operators of the form
\eqref{J_Phi_I} when searching for generators of the VOAs considered in this paper which are associated to paths on linear quivers and which are classically gauge invariant.
For the A-type quiver in Fig.~\ref{fig:quiver for T^[2,1^{n-2}]_[1^n][SU(n)] theory}, this means that there are effectively four types of paths and corresponding operators to consider: 
i) paths which begin and end at the node $w_1$ and loop around the node $v_{\ell+1}$ for all $\ell\in\{0,1,\ldots , k-2\}$, following the notation in \cref{fig:quiver for Trs[SU(N)] theory}, which translate to operators 
\begin{equation}\label{eqdef:typeIpath}
    J_1X_1X_2\cdots X_{\ell}Y_{\ell}\cdots Y_2Y_1I_1 ~,
\end{equation}
ii) operators associated to paths which begin at the node $w_1$, loop around a node $v_{\ell+1}$ with $\ell\geq 1$ and end at $w_2$ 
\begin{equation} \label{eqdef:typeIIpath}
    J_2X_2\cdots X_{\ell}Y_{\ell}\cdots Y_2Y_1I_1 ~,
\end{equation}
iii) operators associated to paths which begin and end at the node $w_2$, looping around $v_{\ell+1}$ for all $\ell\geq 1$
\begin{equation} \label{eqdef:typeIIIpath}
   J_2X_2\cdots X_{\ell}Y_{\ell}\cdots Y_2I_2  ~,
\end{equation}
and iv) operators associated to paths which begin at the node $w_2$, loop around $v_{\ell+1}$ for all $\ell\geq 1$ and end at node $w_1$
\begin{equation} \label{eqdef:typeIVpath}
   J_1X_1X_2\cdots X_{\ell}Y_{\ell}\cdots Y_2I_2  ~.
\end{equation}
We demonstrate now that for the A-type quiver depicted in Fig.~\ref{fig:quiver for T^[2,1^{n-2}]_[1^n][SU(n)] theory}, it is sufficient to consider the operators defined in \eqref{eqdef:genericBRSTbgen} as generators for the corresponding H-twisted VOA which are associated to paths. 
Note that setting $\ell=0$ in \eqref{eqdef:typeIpath} reduces this to $J_1I_1$, which enters the definition of the BRST current $G^{(1,1)}$ in \eqref{eqdef:genericBRSTbgen}, while $\ell=1$ in \eqref{eqdef:typeIIpath} gives precisely $G^{(2,1)}$. 
Using the moment map relations $\hat{\m}^{(i)}{}^{a}_{\ph{a}b} = 0$ from \eqref{eqdef:genericmomentmaprelations}, it is possible to reduce all of the bosonic currents associated to the paths of type  i)-vi) to linear combinations of products of the BRST-invariant currents \eqref{eqdef:genericBRSTbgen} and fermion bilinears of the form $\chi_\ell\psi_\ell$. 
Starting for example with the type i) current \eqref{eqdef:typeIpath} and using the $\ell^{th}$-moment map relation to replace the inner-most product $X_\ell Y_\ell$ with $Y_{\ell-1}X_{\ell-1}$ and the fermion bilinears $\chi_{\ell-1}\psi_{\ell-1}-\chi_\ell\psi_\ell$ reduces this to\footnote{The order of index contractions is shown only for the inner-most products to keep notation light. This carries through to the following equations.} 
\begin{align}\label{eq:typeIpath-Intermediate1}
   & (J_1X_1X_2\cdots (X_{\ell})^a_{\ph{b}b}(Y_{\ell})^b_{\ph{c}c}\cdots Y_2Y_1I_1)^i_{\ph{j}j} = \nonumber\\
   &\qquad\qquad
    (J_1X_1X_2\cdots (X_{\ell-1})^e_{\ph{a}a}(Y_{\ell-1})^a_{\ph{b}b}(X_{\ell-1})^b_{\ph{c}c}(Y_{\ell-1})^c_{\ph{d}d}\cdots Y_2Y_1I_1)^i_{\ph{j}j} \nonumber\\
   &\qquad\qquad -
    (J_1X_1X_2\cdots (X_{\ell-1})^e_{\ph{a}a}(Y_{\ell-1})^a_{\ph{d}d} \cdots Y_2Y_1I_1)^i_{\ph{j}j} (\chi_{\ell-1}\psi_{\ell-1}-\chi_\ell\psi_\ell) ~.
\end{align}
The path corresponding to the second line encircles the node $v_\ell$ and loops once around the segment connecting this to $v_{\ell-1}$, whereas that corresponding to the third line is of type i), but shorter than the initial path. Repeating the procedure for the pairs $X_{\ell-1} Y_{\ell-1}$ and continuing recursively will further reduce equation \eqref{eq:typeIpath-Intermediate1} to 
\begin{align}\label{eq:typeIpath-Intermediate2}
   & (J_1X_1X_2\cdots (X_{\ell})^a_{\ph{b}b}(Y_{\ell})^b_{\ph{c}c}\cdots Y_2Y_1I_1)^i_{\ph{j}j} = 
    (J_1X_1(X_2Y_2)^{\ell-1}Y_1I_1)^i_{\ph{j}j} \nonumber\\
   &\qquad\qquad +
    (J_1X_1(X_2Y_2)^{\ell-2}Y_1I_1)^i_{\ph{j}j} (-(\ell-2)\chi_2\psi_2 + \chi_3\psi_3 + \cdots + \chi_\ell\psi_\ell) \nonumber\\
   &\qquad\qquad + 
   (J_1X_1(X_2Y_2)^{\ell-3}Y_1I_1)^i_{\ph{j}j} ( (\ell-1)(\chi_2\psi_2 - \chi_3\psi_3)^2 + \mathcal{O}(\mathrm{fermion~bilinears})^2) \nonumber\\
   &\qquad\qquad + \cdots + 
   (J_1X_1 X_2Y_2 Y_1I_1)^i_{\ph{j}j} (  \mathcal{O}(\mathrm{fermion~bilinears})^{\ell-1})  ~.
\end{align}
Examining now the generic terms $J_1X_1(X_2Y_2)^n Y_1I_1$, the $2^{nd}$ moment map relation will replace $X_2Y_2$ with $Y_1X_1-I_2J_2$ and the fermion contribution $\chi_1\psi_1-\chi_2\psi_2$, thus leading to linear combinations of products  
\begin{align}\label{expr:genericbosoniccurrent}
J_1X_1(Y_1X_1)^{k_1} (I_2J_2)^{l_1} (Y_1X_1)^{k_2} (I_2J_2)^{l_2} \cdots  (Y_1X_1)^{k_r} (I_2J_2)^{l_r} Y_1 I_1 (\chi_1\psi_1-\chi_2\psi_2)^m
\end{align}
where the $k_\bullet$, $l_\bullet$ and $m$ exponents sum to $n$. The $1^{st}$ moment map relation then replaces all products $X_1Y_1$ with $I_1J_1$ and $\chi_0\psi_0-\chi_1\psi_1$ and thereby generates combinations of bosonic fields of the form 
\begin{align}\label{expr:genericbosoniccurrent2}
    (J_1I_1)^{p_1} J_1X_1I_2 (J_2I_2)^{p_2} J_2Y_1I_1 (J_2I_2)^{p_3} J_2 \cdots  J_2 Y_1 I_1 
\end{align}
or 
\begin{align}\label{expr:genericbosoniccurrent3}
    (J_1I_1)^{p_1} J_1X_1I_2 (J_2I_2)^{p_2} J_2Y_1I_1 (J_2I_2)^{p_3}J_2 \cdots  J_1 X_1 Y_1 I_1  ~.
\end{align}
In the latter case it is necessary to use $1^{st}$ moment map to replace the final remaining product $X_1 Y_1$, but apart from that, the combinations of bosonic fields that appear are precisely 
\begin{equation}\label{expr:simpleBRSTcurrents}
    J_1I_1~, \qquad J_1X_1I_2 ~, \qquad J_2Y_1I_1 ~, 
\end{equation}
which enter the BRST-invariant currents \eqref{eqdef:genericBRSTbgen}, and the $U(1)$-valued current $J_2I_2$. Considering together the traces of the moment map relations at all of the gauge nodes allows to express $J_2I_2$ in terms of $\Tr(J_1I_1)$, showing that it is not an additional independent current. When restricting the quiver depicted in Fig.~\ref{fig:quiver for T^[2,1^{n-2}]_[1^n][SU(n)] theory} to have only three gauge nodes, the most simple example of its type, one can also track the fermion bilinear contributions explicitly
\begin{align}\label{rel:G11squared}
    & (J_1X_1Y_1I_1)^i_{\ph{j}j} = -  (G^{(1,1)}G^{(1,1)})^i_{\ph{j}j} + (G^{(1,1)})^i_{\ph{j}j} (\chi_0\psi_0 + \chi_1\psi_1) - \delta^i_{\ph{j}j} \chi_0\psi_0  \chi_1\psi_1 \\
    & (J_1X_1X_2Y_2Y_1I_1)^i_{\ph{j}j} = -(G^{(1,2)} G^{(2,1)})^i_{\ph{j}j} + ((G^{(1,1)})^3)^i_{\ph{j}j} + ((G^{(1,1)})^2)^i_{\ph{j}j}(1-\chi_0\psi_0 - \chi_1\psi_1) \nonumber \\
    & \qquad \qquad \qquad ~~  + 
    (G^{(1,1)})^i_{\ph{j}j}(- (\chi_0\psi_0)^2 + (\chi_0\psi_0 - \chi_1\psi_1)^2 -4 \chi_0\psi_0  \chi_1\psi_1 - \chi_0\psi_0 + \chi_1\psi_1  )
    \nonumber \\
    & \qquad \qquad \qquad ~~  +  \delta^i_{\ph{j}j} \chi_0\psi_0 (\chi_1\psi_1 - (\chi_1\psi_1)^2) ~.  \label{rel:G11G12G21}
\end{align}
The currents \eqref{eqdef:typeIIpath}-\eqref{eqdef:typeIIIpath} of type ii), iii) and iv) are treated analogously to the type i) examples \eqref{eqdef:typeIpath}. The only variation arises when examining the generic terms $J_2(X_2Y_2)^n Y_1I_1$, $J_2(X_2Y_2)^n I_2$ and $J_1X_1(X_2Y_2)^n I_2$, which replace $J_1X_1(X_2Y_2)^n Y_1I_1$, that led to expressions of the form \eqref{expr:genericbosoniccurrent}. In the type ii) case, the counterpart to \eqref{expr:genericbosoniccurrent2}-\eqref{expr:genericbosoniccurrent3} is 
\begin{align}
    & J_2Y_1I_1 (J_1I_1)^{p_1} J_1X_1I_2 (J_2I_2)^{l_1}J_2 \cdots J_2Y_1I_1 ~\mathrm{or} ~ 
    J_2Y_1I_1 (J_1I_1)^{p_1} J_1X_1I_2 (J_2I_2)^{l_1}J_2 \cdots J_1X_1Y_1I_1 , \nonumber
\end{align}
for type iii), this becomes
\begin{align}
    & J_2Y_1I_1 (J_1I_1)^{p_1} J_1X_1I_2 (J_2I_2)^{l_1}J_2 \cdots J_2I_2 ~\mathrm{or} ~ 
    J_2Y_1I_1 (J_1I_1)^{p_1} J_1X_1I_2 (J_2I_2)^{l_1}J_2 \cdots J_1X_1I_2 ~, \nonumber
\end{align}
and for type iv), 
\begin{align}
    & (J_1I_1)^{p_1} J_1X_1I_2 (J_2I_2)^{l_1}J_2 \cdots J_2I_2 ~\mathrm{or} ~ 
     (J_1I_1)^{p_1} J_1X_1I_2 (J_2I_2)^{l_1}J_2 \cdots J_1X_1I_2 ~. \nonumber
\end{align}
The basic combinations of bosonic fields that appear are still only those listed in \eqref{expr:simpleBRSTcurrents}. In the simple example where the quiver has only three gauge nodes, we thus find 
\begin{equation}\label{rel:G11G21}
    (J_2X_2Y_2Y_1I_1)^1_{\ph{j}j} = - (G^{(2,1)} G^{(1,1)})^1_{\ph{j}j} + (G^{(2,1)})^1_{\ph{j}j} (\chi_2\psi_2 - J_2 I_2) ~.
\end{equation}
It follows that generic operators associated to paths of the type \eqref{J_Phi_I} on the quiver diagram in Fig.~\ref{fig:quiver for T^[2,1^{n-2}]_[1^n][SU(n)] theory}
are generated by those listed in \eqref{expr:simpleBRSTcurrents} using the moment map relations \eqref{eqdef:genericmomentmaprelations} and accounting for fermion bilinear-$\chi\psi$ contributions
\begin{itemize}
    \item $J_1I_1$ is the simplest form of \eqref{eqdef:typeIpath} and enters the definition of $G^{(1,1)}$ \eqref{three_Gs},
    \item $J_2Y_1I_1$ is the simplest form of \eqref{eqdef:typeIIpath} and enters the definition of $G^{(2,1)}$ \eqref{three_Gs}, 
    \item $J_1X_1I_2$ is the simplest form of \eqref{eqdef:typeIVpath} and enters the definition of $G^{(1,2)}$ \eqref{three_Gs}, 
\end{itemize}
with all other operators \eqref{eqdef:typeIpath}-\eqref{eqdef:typeIVpath} generated from these by relations of the form \eqref{rel:G11squared}, \eqref{rel:G11G12G21} and \eqref{rel:G11G21}.

\section{Criteria for two Virasoro operators coincide}
\begin{lem}\label{lem:criterion for two Virasoro coincide_1}
Suppose that $T_1$ and $T_2$ are two Virasoro elements in a vertex algebra $V$, i.e. $T_i, i=1,2$ satisfy the OPE $$T_i(z)T_i(w)\sim \frac{c_i/2}{(z-w)^4}+\frac{2T_i(w)}{(z-w)^2}+\frac{\partial T_i(w)}{z-w},\quad c_i\in \mathbb C.$$ Assume that $T_1-T_2$ is in the center of $V$, then $T_1=T_2$.
\end{lem}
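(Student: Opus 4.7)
The plan is to set $S := T_1 - T_2$ and simply compare the two Virasoro OPEs after expanding one of them in terms of the other. The hypothesis that $S$ is central means by definition that $S(z) v(w) \sim 0$ for every $v \in V$; in particular $S(z) S(w) \sim 0$, $S(z) T_2(w) \sim 0$, and $T_2(z) S(w) \sim 0$ (centrality is a symmetric relation via skew-symmetry of the OPE).

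Now I would write $T_1 = T_2 + S$ and use bilinearity of the OPE to expand
\begin{equation*}
T_1(z) T_1(w) = T_2(z) T_2(w) + S(z) T_2(w) + T_2(z) S(w) + S(z) S(w).
\end{equation*}
By the previous paragraph the last three terms are regular, so $T_1(z) T_1(w) \sim T_2(z) T_2(w)$. Invoking the Virasoro OPE on each side,
\begin{equation*}
\frac{c_1/2}{(z-w)^4} + \frac{2 T_1(w)}{(z-w)^2} + \frac{\partial T_1(w)}{z-w} \sim \frac{c_2/2}{(z-w)^4} + \frac{2 T_2(w)}{(z-w)^2} + \frac{\partial T_2(w)}{z-w}.
\end{equation*}
Matching the coefficient of $(z-w)^{-2}$ yields $2 T_1(w) = 2 T_2(w)$ as $V$-valued formal series, hence $T_1 = T_2$ (and as a byproduct $c_1 = c_2$, which is also visible from the $(z-w)^{-4}$ pole).

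There is no real obstacle here: the whole argument is a one-line consequence of centrality plus OPE bilinearity, and the conclusion is read off from a single pole order. The only thing to be slightly careful about is the symmetry $T_2(z) S(w) \sim 0$, which follows from $S(z) T_2(w) \sim 0$ and the skew-symmetry axiom $A(z)B(w) \sim (-1)^{|A||B|} B(w) A(z)$ (modulo derivatives) for vertex algebras; since both $S$ and $T_2$ are even, this is immediate.
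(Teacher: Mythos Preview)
Your argument is correct. Both your proof and the paper's use centrality of $S=T_1-T_2$ to compare OPEs and then read off an equality from the second-order pole; the only difference is organizational: the paper computes the mixed OPE $T_1(z)T_2(w)\sim T_2(z)T_2(w)$, extracts $T_{1,(1)}T_2=2T_2$, and then invokes the Borcherds (skew-symmetry) identity together with the $1\leftrightarrow 2$ swap to get $T_{2,(1)}T_1=2T_2=2T_1$, whereas you expand $T_1(z)T_1(w)$ directly via $T_1=T_2+S$ and match pole coefficients in one step. Your route is slightly more economical since it bypasses the Borcherds step entirely, at the cost of using the symmetry of centrality (that $v(z)S(w)\sim 0$ follows from $S(z)v(w)\sim 0$), which you correctly justify.
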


\begin{proof}
Since $T_1-T_2$ is in the center of $V$, we have $(T_1(z)-T_2(z))T_2(w)\sim 0$, which implies that
\begin{align*}
    T_1(z)T_2(w)\sim \frac{c_2/2}{(z-w)^4}+\frac{2T_2(w)}{(z-w)^2}+\frac{\partial T_2(w)}{z-w}.
\end{align*}
In particular $T_{1,(1)}T_2=2T_2$. Then Borcherds identity implies that $T_{2,(1)}T_1=2T_2$. However, if we swap $1\leftrightarrow 2$, we get $T_{2,(1)}T_1=2T_1$, thus we must have $T_1=T_2$.
\end{proof}

\begin{lem}\label{lem:criterion for two Virasoro coincide_2}
Let $V=\bigoplus_{\Delta\ge 0}V_{\Delta}$ be a graded vertex algebra (assume $V_{0}=\mathbb C\cdot|0\rangle$), and assume that the following symmetric pairing $V_1\otimes V_1\to \mathbb C$ is non-degenerate:
\begin{align*}
    a\otimes b\mapsto a_{(1)}b~.
\end{align*}
Suppose that $T_1$ and $T_2$ are two stress-energy operators in $V$ such that both of them are compatible with the grading, i.e. $L_0$ of the $T_1$ and $T_2$ agree and equal to the grading operator on $V$. Assume moreover that $V_1$ is primary with respect to both $T_1$ and $T_2$, then $T_1=T_2$.
\end{lem}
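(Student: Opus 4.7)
The plan is to set $\Delta := T_1 - T_2 \in V_2$ and to show that $\Delta$ lies in the center of $V$, so that \Cref{lem:criterion for two Virasoro coincide_1} applies. Two elementary observations come for free: the translation operator $L_{-1}$ of any conformal structure is the canonical derivation $\partial$ of the vertex algebra, so $\Delta_{(0)} = L_{-1}^{T_1} - L_{-1}^{T_2} = 0$ on $V$; and since $L_0^{T_1}$ and $L_0^{T_2}$ both equal the grading operator, $\Delta_{(1)} = 0$ on $V$. The hypothesis that every $a \in V_1$ is primary of weight $1$ with respect to both $T_i$ gives the OPE $T_i(z) a(w) \sim \frac{a(w)}{(z-w)^2} + \frac{\partial a(w)}{z-w}$, so that $\Delta_{(n)} a = 0$ for all $a \in V_1$ and $n \geq 0$. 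Applying skew-symmetry $Y(a,z)b = e^{z\partial} Y(b, -z) a$ then yields $a_{(n)} \Delta = 0$ for $a \in V_1$, $n \geq 0$, and Borcherds' commutator formula gives $[\Delta_{(n)}, a_{(m)}] = \sum_{j \geq 0} \binom{n}{j} (\Delta_{(j)} a)_{(n+m-j)} = 0$ for all $a \in V_1$, $n \geq 0$, $m \in \mathbb{Z}$. Hence the operators $\Delta_{(n)}$ with $n \geq 0$ commute with every mode of $V_1$.

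Next I would establish the key claim that $\Delta_{(n)} v = 0$ for all $v \in V$ and $n \geq 0$, by induction on the grading $k$ of $v$. The base cases $k=0$ and $k=1$ have just been dealt with. For the inductive step with $v \in V_k$, $k \geq 2$, the critical case is $n = k$, where $\Delta_{(k)} v \in V_1$: for any $a \in V_1$ the vanishing of the commutator gives $a_{(1)}(\Delta_{(k)} v) = \Delta_{(k)}(a_{(1)} v)$, and by the induction hypothesis applied to $a_{(1)} v \in V_{k-1}$ the right-hand side vanishes. Thus $\pi(a, \Delta_{(k)} v) = 0$ for every $a \in V_1$, and the assumed non-degeneracy of $\pi$ forces $\Delta_{(k)} v = 0$. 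For the remaining range $2 \leq n \leq k+1$, I would apply the operators $a_{(m)}$ with $m \geq 1$ to $\Delta_{(n)} v$: the commutator $[\Delta_{(n)}, a_{(m)}]$ vanishes, and $a_{(m)} v \in V_{k-m}$ lies in a lower grading, so by the induction hypothesis $a_{(m)} (\Delta_{(n)} v) = \Delta_{(n)} (a_{(m)} v) = 0$. Iterating $a_{(1)}$ drives $\Delta_{(n)} v$ into $V_1$, where the pairing argument again closes the loop.

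Once the claim is established, $\Delta$ is central in $V$, and \Cref{lem:criterion for two Virasoro coincide_1} yields $T_1 = T_2$. The main obstacle will be the general $2 \leq n < k$ case of the induction: pushing the element $\Delta_{(n)} v \in V_{k+1-n}$ down to $V_1$ using positive modes of $V_1$ requires that $V$ be sufficiently generated by $V_1$ so that no nonzero element of $V_{\geq 2}$ is simultaneously annihilated by all positive $V_1$-modes. The non-degeneracy of $\pi$ enters at precisely the point where the iterated application of $V_1$-modes has driven $\Delta_{(n)} v$ into $V_1$, at which point the pairing argument delivers the vanishing.
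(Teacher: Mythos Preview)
Your approach has a genuine gap, which you yourself flag in the final paragraph: the inductive step for $2\le n\le k-1$ requires that no nonzero element of $V_{\ge 2}$ is annihilated by all positive modes of $V_1$. This is an extra hypothesis on $V$ that is \emph{not} in the statement of the lemma, and without it the induction does not close. (There is also the untreated case $n=k+1$, where $\Delta_{(k+1)}v\in V_0=\mathbb C$; applying $a_{(m)}$ for $m\ge 1$ to a scalar gives zero trivially, and $a_{(0)}$ does not lower degree, so you get no information.) So the strategy of proving that $\Delta$ is central on all of $V$ and then invoking \Cref{lem:criterion for two Virasoro coincide_1} is more than the hypotheses can support.

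The paper's proof sidesteps this entirely with a much more direct computation. The key observation you are missing is this: since both $T_i$ are stress-energy operators compatible with the grading, one has $T_{1,(1)}T_2=2T_2$ and $T_{2,(1)}T_1=2T_1$; skew-symmetry applied to the first of these gives $T_{2,(1)}T_1=2T_2-\partial a$ where $a:=T_{1,(2)}T_2\in V_1$. Comparing, one obtains immediately
\[
T_1-T_2=\tfrac{1}{2}\,\partial a,\qquad a\in V_1.
\]
Now the primary hypothesis on $V_1$ alone suffices: for $d\in V_1$ one has $0=\Delta_{(2)}d=\tfrac{1}{2}(\partial a)_{(2)}d=-a_{(1)}d$, and non-degeneracy of the pairing on $V_1$ forces $a=0$. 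No induction, no appeal to \Cref{lem:criterion for two Virasoro coincide_1}, and crucially no assumption about how $V_1$ generates $V$. The point is that the two Virasoro structures already constrain their difference to lie in $\partial V_1$, so the non-degenerate pairing on $V_1$ is exactly what is needed and nothing more.
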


\begin{proof}
Since $T_1$ and $T_2$ are compatible with the grading, we have $T_1,T_2\in V_2$. Since $T_1$ is a stress-energy operator, we have
\begin{align*}
    T_1(z)T_2(w)\sim \frac{\lambda}{(z-w)^4}+\frac{a(w)}{(z-w)^3}+\frac{2T_2(w)}{(z-w)^2}+\frac{\partial T_2(w)}{z-w},\quad \lambda\in \mathbb C,\;a\in V_1~.
\end{align*}
Since $T_2$ is a stress-energy operator, we have
\begin{align*}
    T_2(z)T_1(w)\sim \frac{\lambda'}{(z-w)^4}+\frac{b(w)}{(z-w)^3}+\frac{2T_1(w)}{(z-w)^2}+\frac{\partial T_1(w)}{z-w},\quad \lambda'\in \mathbb C,\;b\in V_1~.
\end{align*}
Using the Borcherds identity we find $\lambda=\lambda'$, and $a=-b$, and $T_1=T_2+\frac{1}{2}\partial a$. Now pick an arbitrary $d\in V_1$, it is primary of weight $1$ with respect to both $T_1$ and $T_2$, thus $(T_1(z)-T_2(z))d(w)\sim 0$. On the other hand,
\begin{align*}
    (T_1(z)-T_2(z))d(w)=\frac{1}{2}\partial a(z)d(w)\sim\frac{-a_{(1)}d}{(z-w)^3}+\text{lower order poles},
\end{align*}
thus we must have $a_{(1)}d=0$ for all $d\in V_1$. It follows from assumption on the nondegeneracy of the pairing $a\otimes d\mapsto a_{(1)}d$ that $a=0$, whence $T_1=T_2$. 
\end{proof}

\bibliographystyle{JHEP}
\bibliography{Bibliography}

\end{document}